\def\BibTeX{{\rm B\kern-.05em{\sc i\kern-.025em b}\kern-.08em
    T\kern-.1667em\lower.7ex\hbox{E}\kern-.125emX}}
\newcommand{\revision}[1]{\textcolor{black}{#1}}
\newcommand{\arXivtitle}{Query the model: precomputations for efficient inference with Bayesian Networks}
\newcommand{\reviewtitle}{Workload-aware Materialization for Efficient Variable Elimination on Bayesian Networks}
\newcommand{\hide}[1]{\iffalse{#1}\fi}
\newcommand{\CA}[1]{\textcolor{black}{#1}}
\newcommand{\eat}[1]{} 
\newtheorem{theorem}{Theorem}
\newtheorem{definition}{Definition}
\newtheorem{problem}{Problem}
\newtheorem{lemma}{Lemma}
\newcommand{\np}{\ensuremath{\mathbf{NP}}}
\newcommand{\nphard}{{{\np}-hard}}
\newcommand{\preorderTraversal}{{\tt pot}\xspace}
\newcommand{\DAG}{{\sc\large dag}\xspace}
\newcommand{\separated}{{\ensuremath{m}-separated}\xspace}
\newcommand{\isseparated}[3]{\ensuremath{\mathrm{sep}\!\left({#1},{#2},{#3}\right)}\xspace}
\newcommand{\bayesianNet}{\ensuremath{\mathcal{N}}\xspace}
\newcommand{\otherBayesianNet}{\ensuremath{\bayesianNet^{\prime}}\xspace}
\newcommand{\yetAnotherBayesianNet}{\ensuremath{\bayesianNet^{\prime\prime}}\xspace}
\newcommand{\shrunkBayesianNet}{\ensuremath{\mathcal{S}}\xspace}
\newcommand{\resultBN}{\ensuremath{\bayesianNet_{s}}\xspace}
\newcommand{\moralGraph}{\ensuremath{\mathcal{M}}\xspace}
\newcommand{\bn}{Bayesian network\xspace}
\newcommand{\bigtable}{\ensuremath{{H}}\xspace}
\newcommand{\ordergraph}{\ensuremath{\mathcal{H}}\xspace}
\newcommand{\domainsize}{\ensuremath{\mathfrak{\nu}}\xspace}
\newcommand{\eliminationOrder}{\ensuremath{\sigma}\xspace}
\newcommand{\query}{\ensuremath{q}\xspace}
\newcommand{\queriesset}{\ensuremath{\mathcal{Q}}\xspace}
\newcommand{\qvarfree}{\ensuremath{X_{\query}}\xspace}
\newcommand{\qvarbound}{\ensuremath{Y_{\query}}\xspace}
\newcommand{\qvarvals}{\ensuremath{\mathbf{y}_{\query}}\xspace}
\newcommand{\qvarmiss}{\ensuremath{Z_{\query}}\xspace}
\newcommand{\qsize}{\ensuremath{r_{\query}}\xspace}
\newcommand{\allvars}{\ensuremath{X}\xspace}
\newcommand{\somevars}{\ensuremath{U}\xspace}
\newcommand{\vars}[1]{\ensuremath{\allvars_{#1}}\xspace}
\newcommand{\redundantmarginal}{\ensuremath{R_m}\xspace}
\newcommand{\redundantconditional}{\ensuremath{R_c}\xspace}
\newcommand{\redundantvars}{\ensuremath{R}\xspace}
\newcommand{\lowqvarval}{\ensuremath{\mathbf{y}_{\ell}}\xspace}
\newcommand{\highqvarval}{\ensuremath{\mathbf{y}_{u}}\xspace}
\newcommand{\vara}{\ensuremath{a}\xspace}
\newcommand{\varb}{\ensuremath{b}\xspace}
\newcommand{\varbvalue}{\ensuremath{{b_0}}\xspace}
\newcommand{\varc}{\ensuremath{c}\xspace}
\newcommand{\varxi}{\ensuremath{\xi}\xspace}
\newcommand{\varz}{\ensuremath{z}\xspace}
\newcommand{\varomega}{\ensuremath{\omega}\xspace}
\newcommand{\treebudgetallocation}{\ensuremath{\partialbudget_\nodeu^*}\xspace}
\newcommand{\tree}{\ensuremath{T}\xspace}
\newcommand{\otherTree}{\ensuremath{T'}\xspace}
\newcommand{\coreTree}{\ensuremath{T_{in}}\xspace}
\newcommand{\otherCoreTree}{\ensuremath{T'_{in}}\xspace}
\newcommand{\nodes}{\ensuremath{V}\xspace}
\newcommand{\edges}{\ensuremath{E}\xspace}
\newcommand{\nonodes}{\ensuremath{n}\xspace}
\newcommand{\height}{\ensuremath{h}\xspace}
\newcommand{\nodeu}{\ensuremath{u}\xspace}
\newcommand{\nodew}{\ensuremath{w}\xspace}
\newcommand{\nodev}{\ensuremath{v}\xspace}
\newcommand{\nodex}{\ensuremath{x}\xspace}
\newcommand{\rout}{\ensuremath{r}\xspace}
\newcommand{\subtree}[1]{\ensuremath{{\tree}_{#1}}\xspace}
\newcommand{\ancestors}[1]{\ensuremath{A_{#1}}} 
\newcommand{\extendedancestors}[1]{\ensuremath{\bar{A}_{#1}}}
\newcommand{\lsa}[2]{\ensuremath{{a}_{#1}^{#2}}} 
\newcommand{\hsd}[2]{\ensuremath{\bar{D}_{#1}^{#2}}} 
\newcommand{\rchild}{\ensuremath{r}\xspace}
\newcommand{\lchild}{\ensuremath{\ell}\xspace}
\newcommand{\totalweight}[1]{\ensuremath{{S}_{#1}}\xspace}
\newcommand{\weight}[1]{\ensuremath{{s}_{#1}}\xspace}
\newcommand{\rightsub}[1]{\ensuremath{{r}({#1})}\xspace}
\newcommand{\leftsub}[1]{\ensuremath{{\ell}({#1})}\xspace}
\newcommand{\solution}{\ensuremath{R}\xspace}
\newcommand{\budget}{\ensuremath{k}\xspace}
\newcommand{\spacebudget}{\ensuremath{K}\xspace}
\newcommand{\partialbudget}{\ensuremath{\kappa}\xspace}
\newcommand{\partialbudgetl}{\ensuremath{\kappa_\ell}\xspace}
\newcommand{\partialbudgetr}{\ensuremath{\kappa_r}\xspace}
\newcommand{\DPtable}[3]{\ensuremath{F({#1},{#2},{#3})}\xspace}
\newcommand{\DPtableInc}[3]{\ensuremath{F^+({#1},{#2},{#3})}\xspace}
\newcommand{\DPtableExc}[3]{\ensuremath{F^-({#1},{#2},{#3})}\xspace}
\newcommand{\nocap}{\ensuremath{\epsilon}\xspace}
\newcommand{\bigO}[1]{\ensuremath{\mathcal{O}\!\left({#1}\right)}\xspace}
\newcommand{\probability}{\ensuremath{\mathrm{Pr}}\xspace}
\newcommand{\prob}[1]{\ensuremath{\probability\!\left({#1}\right)}\xspace} 
\newcommand{\traindistr}{\ensuremath{\mathrm{\mathcal{P}}}\xspace}
\newcommand{\testdistr}{\ensuremath{\mathrm{\mathcal{P}'}}\xspace}
\newcommand{\netprob}[2]{\ensuremath{\mathrm{Pr}_{#1}\!\left({#2}\right)}\xspace}
\newcommand{\factor}[2]{\ensuremath{\psi_{#1}\!\left({#2}\right)}\xspace}
\newcommand{\isuseful}[3]{\ensuremath{\delta_{#1}\!\left({#2};{#3}\right)}\xspace}
\newcommand{\utility}[1]{\ensuremath{b\!\left({#1}\right)}\xspace}
\newcommand{\cost}[1]{\ensuremath{c\!\left({#1}\right)}\xspace}
\newcommand{\benefit}[1]{\ensuremath{B\!\left({#1}\right)}\xspace}
\newcommand{\benefitfn}{\ensuremath{B}\xspace}
\newcommand{\partialbenefit}[2]{\ensuremath{B_{#1}\!\left({#2}\right)}\xspace}
\newcommand{\mgbenefit}[2]{\ensuremath{B\!\left({#1}\mid {#2}\right)}\xspace}
\newcommand{\posreals}{\mathbb{R}_{\ge 0}}
\newcommand{\expuseful}[2]{\ensuremath{{\mathrm{E}}\!\left[{\isuseful{\query}{{#1}}{{#2}}}\right]}\xspace}
\newcommand{\optimalbenefit}[2]{\ensuremath{B_{#1}\!\left({#2}\right)}\xspace}
\newcommand{\globalbenefit}[1]{\ensuremath{G\!\left({#1}\right)}\xspace}  
\newcommand{\globallyoptimalbenefit}[1]{\ensuremath{{OPT}_{#1}\!}\xspace}
\newcommand{\shrink}[1]{\ensuremath{\mathrm{shrink}\!\left({#1}\right)}\xspace}
\newcommand{\deque}[1]{\ensuremath{\mathrm{deque}\!\left({#1}\right)}\xspace}
\newcommand{\enque}[1]{\ensuremath{\mathrm{enque}\!\left({#1}\right)}\xspace}
\newcommand{\lattice}{\ensuremath{\mathcal{L}}\xspace}
\newcommand{\fullLattice}{\ensuremath{\mathcal{L}^{+}}\xspace}
\newcommand{\latticeSize}{\ensuremath{{\ell}}\xspace}
\newcommand{\children}[1]{\ensuremath{\mathrm{children}\!\left({#1}\right)}\xspace}
\newcommand{\mapping}{\ensuremath{M}\xspace}
\newcommand{\bnProb}{\ensuremath{\pi}\xspace}
\newcommand{\fullBNProb}{\ensuremath{\rho}\xspace}
\newcommand{\queue}{\ensuremath{Q}\xspace}
\newcommand{\nodepath}[2]{\ensuremath{\mathit{path}({#1}, {#2})}}
\DeclareMathOperator*{\assign}{:=}
\newcommand{\survey}{\ensuremath{\text{\tt survey}}\xspace}
\newcommand{\age}{\ensuremath{\text{\tt A}}\xspace}
\newcommand{\sex}{\ensuremath{\text{\tt S}}\xspace}
\newcommand{\education}{\ensuremath{\text{\tt E}}\xspace}
\newcommand{\occupation}{\ensuremath{\text{\tt O}}\xspace}
\newcommand{\residence}{\ensuremath{\text{\tt R}}\xspace}
\newcommand{\transport}{\ensuremath{\text{\tt T}}\xspace}
\newcommand{\cpcs}{\ensuremath{\text{\sc\large cpcs}}\xspace}
\newcommand{\diabetes}{\textsc{Diabetes}\xspace}
\newcommand{\bnpathfinder}{\textsc{Pathfinder}\xspace}
\newcommand{\andes}{\textsc{Andes}\xspace}
\newcommand{\link}{\textsc{Link}\xspace}
\newcommand{\muninb}{\textsc{Munin}\xspace} 
\newcommand{\munins}{\textsc{Munin}{\small\#1}\xspace} 
\newcommand{\muninm}{\textsc{Munin}{\small\#2}\xspace} 
\newcommand{\mildew}{\textsc{Mildew}\xspace}
\newcommand{\tpch}{\textsc{TpcH}\xspace}
\newcommand{\tpchsmall}{\textsc{\tpch}{\small\#1}\xspace}
\newcommand{\tpchmedium}{\textsc{\tpch}{\small\#2}\xspace}
\newcommand{\tpchverylarge}{\textsc{\tpch}{\small\#3}\xspace}
\newcommand{\tpchlarge}{\textsc{\tpch}{\small\#4}\xspace}
\newcommand{\qtm}[1]{{VE-{#1}}\xspace}
\newcommand{\jt}{\textsc{JT}\xspace}
\newcommand{\kanagal}{\textsc{IND}\xspace}
\newcommand{\wmf}{\textsc{wmf}\xspace}
\newcommand{\mf}{\textsc{mf}\xspace}
\newcommand{\mw}{\textsc{mw}\xspace}
\newcommand{\mn}{\textsc{mn}\xspace}
\newcommand{\spara}[1]{\smallskip\noindent{\bf{#1}}}
\newcommand{\mpara}[1]{\medskip\noindent{\bf{#1}}}
\newcommand{\squishlist}{\begin{list}{$\bullet$}
  {\setlength{\itemsep}{0pt}
    \setlength{\parsep}{3pt}
    \setlength{\topsep}{3pt}
    \setlength{\partopsep}{0pt}
    \setlength{\leftmargin}{1.5em}
    \setlength{\labelwidth}{1em}
    \setlength{\labelsep}{0.5em}}}
\newcommand{\squishend}{
\end{list}}
\newcommand{\varelim}{\textsc{Variable Elimination}\xspace}
\renewcommand{\cleardoublepage}{\clearpage \ifodd\c@page\else
	\hbox{}\newpage\if@twocolumn\hbox{}\newpage\fi\fi}
\begin{document}

\ReviewOnly{\title{\reviewtitle}}
\FullOnly{\title{\arXivtitle}}

\author{\IEEEauthorblockN{Cigdem Aslay}
\IEEEauthorblockA{\textit{Aarhus University} \\
Aarhus, Denmark \\
cigdem@cs.au.dk}
\and
\IEEEauthorblockN{Martino Ciaperoni}
\IEEEauthorblockA{\textit{Aalto University} \\
Espoo, Finland \\
martino.ciaperoni@aalto.fi}
\and
\IEEEauthorblockN{Aristides Gionis}
\IEEEauthorblockA{
\textit{KTH Royal Institute of Technology}\\
Stockholm, Sweden \\
argioni@kth.se}
\and
\IEEEauthorblockN{Michael Mathioudakis}
\IEEEauthorblockA{\textit{University of Helsinki} \\
Helsinki, Finland \\
michael.mathioudakis@helsinki.fi}
}


\maketitle

\ReviewOnly{
	\begin{abstract}
	Bayesian networks are general, well-studied probabilistic models that capture dependencies among a set of variables. 
	{\em Variable Elimination} is a fundamental algorithm for probabilistic inference over Bayesian networks.
	In this paper, we propose a novel materialization method, which can lead to significant efficiency gains when processing inference queries using the Variable Elimination algorithm.
	In particular, we address the problem of choosing a set of intermediate results to precompute and materialize, so as to maximize the expected efficiency gain over a given query workload.
	For the problem we consider, we provide an optimal polynomial-time algorithm and discuss alternative methods.
	We validate our technique using real-world Bayesian networks.
	Our experimental results confirm that a modest amount of materialization can lead to significant improvements in the running time of queries, with an average gain of $70\%$, and reaching up to a gain of $99\%$, for a uniform workload of queries.
	Moreover, in comparison with existing junction tree methods that also rely on materialization, our approach achieves competitive efficiency during inference using significantly lighter materialization.
	\end{abstract}
}
\FullOnly{
	\begin{abstract}
	Variable Elimination is a fundamental algorithm for probabilistic inference over Bayesian networks.
	In this paper, we propose a novel materialization method for Variable Elimination, which can lead to significant efficiency gains when answering inference queries.
	We evaluate our technique using real-world Bayesian networks.
	Our results show that a modest amount of materialization can lead to significant improvements in the running time of queries.
	Furthermore, in comparison with junction tree methods that also rely on materialization, our approach achieves comparable efficiency during inference using significantly lighter materialization.
	\end{abstract}
}

\section{Introduction}
\label{sec:introduction}

Research in \emph{machine learning} has led to powerful methods for building probabilistic models 
which are used to carry out general predictive tasks~\cite{bishop2013model}.
%
In a typical setting, a model is trained off\-line from available data and it is subsequently employed online for processing probabilistic inference queries. 
For example, a joint-probability model trained off\-line over the attributes of a relational database can be employed by the DBMS at query time to produce selectivity estimates for each query, for the purpose of query optimization~\cite{getoor2001selectivity}.
For efficiency in such a setting, it is important to consider computational costs not only for 
training the model, but also for performing inference at query time.
In this paper, we focus on the efficiency of inference in Bayesian networks.

Bayesian networks are probabilistic models 
that capture the joint distribution of a set of variables through local 
dependencies between small groups of variables~\cite{pearl2014probabilistic}. 
They have an intuitive representation in terms of a directed acyclic graph (each directed edge between two variables represents a dependency of the child on the parent variable) and probabilities 
can be evaluated with compact sum-of-products computations. 
Being intuitive and modular makes Bayesian networks suitable for general settings where one wishes to represent the joint distribution of a large number of variables, e.g., the column attributes of a large relational table as the one shown in Figure~\ref{figure:introduction}. 
Nevertheless, inference in Bayesian networks is \nphard\
and one cannot preclude the possibility that the evaluation of some queries becomes expensive for cases of interest.

\begin{figure*}
\begin{center}
\includegraphics[width=0.9\textwidth]{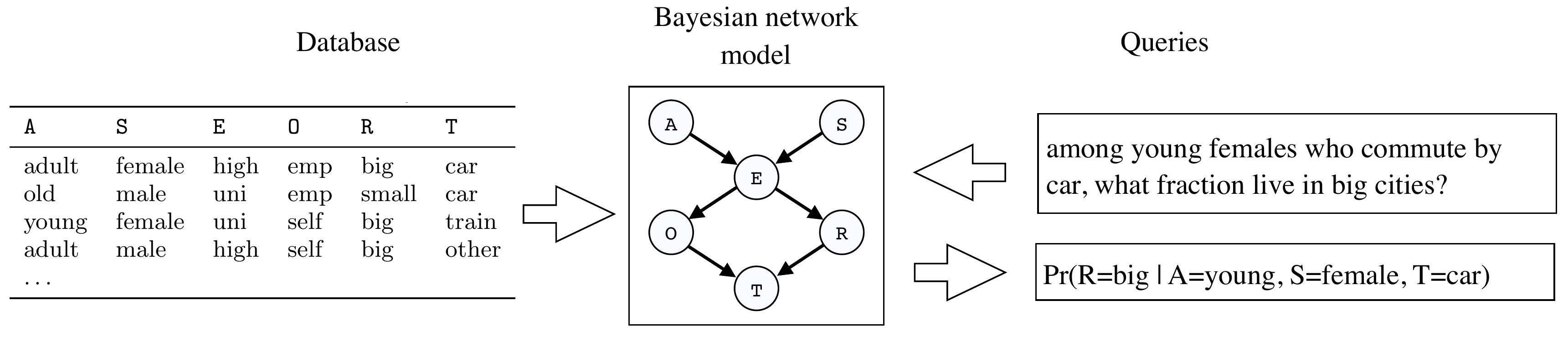}
\end{center}
\caption{\label{figure:introduction}
The Bayesian network learned from the \survey dataset~\protect\cite{scutari2014bayesian}.
The database consists of a single table with variables 
\age~(age)
, \sex~(sex)
, \education~(education level)
, \occupation~(occupation)
, \residence~(size of residence city)
, and \transport~(means of transportation)
.
The corresponding Bayesian network model can be used to answer probabilistic queries over the data, such as the query shown above.
}
\end{figure*}

How can we mitigate this risk?
Our approach is to consider the anticipated workload of probabilistic queries, and materialize the results of selected computations, so as to enable fast answers to costly inference queries.
Moreover, we choose to work with a specific inference algorithm for Bayesian networks, namely variable elimination~\cite{zhang1994simple, zhang1996exploiting}.
While there exist several inference algorithms for Bayesian networks (e.g., junction trees, arithmetic circuits, and others~\cite{jensen1996introduction, 
	lauritzen1988local, 
	darwiche2003differential, 
	chavira2007compiling, 
	poon2011sum}), variable elimination is arguably the most straightforward algorithm for inference with Bayesian networks and is used as sub-routine in other algorithms (e.g., junction trees~\cite{jensen1996introduction}).
Since, to the best of our knowledge, this is the first work on {\it workload-aware materialization} for inference in Bayesian networks, we opt to use variable elimination as the inference algorithm, and demonstrate results to motivate the same approach for other inference algorithms as future work.

Our key observations are the following: 
first, the execution of variable elimination involves intermediate results, in the form of relational tables, some of which might be very costly to compute every time a query requires them;  
second, the same intermediate tables can be used for the evaluation of many different queries.
Therefore, we set to pre\-compute and materialize the intermediate tables that bring the largest computational benefit, i.e., those that are involved in the evaluation of many expensive queries.
The problem formulation is general enough to accommodate arbitrary query workloads and takes as input a budget constraint on the number or space requirements of the materialized tables (Section~\ref{sec:setting}).

Our contributions are the following:
 \squishlist
 \item 
An exact poly\-no\-mial-time algorithm to choose a materialization for the variable elimination algorithm under cardinality constraints; the choice is optimal for a fixed ordering of variables. (Section~\ref{sec:algorithms})
 \item 
A greedy approximation algorithm. (Section~\ref{sec:algorithms})
 \item 
A pseudo-polynomial algorithm for the problem under a budget constraint on the space used for materialization and an extension to deal with redundant variables. (Section \ref{sec:extensions})
 \item 
Experiments over real data, showing that modest materialization can significantly improve the running time  of variable elimination, with an average gain of $70\%$, and reaching up to a gain of $99\%$, over a uniform workload of queries.
In addition, for large datasets and query sizes, our approach achieves competitive inference speeds using much lighter materialization compared with junction tree  methods, which also rely on materialization. (Section~\ref{sec:experiments})
 \squishend
%
\ReviewOnly{
Proofs of lemmas and theorems, omitted due to page limit, can be found in an extended version of the paper\cite{aslay2020query}.
}

\section{Related work}
\label{section:related}

For exact inference on Bayesian networks, i.e., exact computation of marginal and conditional probabilities, the conceptually simplest algorithm is \emph{\varelim}~\cite{zhang1994simple, zhang1996exploiting}. 
Another common algorithm for exact inference is the \emph{junction tree} algorithm~\cite{jensen1996introduction, lauritzen1988local}.
%
	%
	Obtaining the junction tree of a \bn requires building the triangulated moral graph of the \bn,
	extracting the maximal cliques of that graph, and assigning each clique to one node of the junction tree. 
	The junction tree is then ``calibrated" via a message-passing procedure that precomputes and materializes potentials that correspond to joint probability distributions of the variables belonging to each of its 
	nodes. 
	This way, a query that involves variables in the same 
	node can be answered directly by marginalizing those variables from the potential of that node. 
	For queries whose query variables reside in more than one node, 
	referred to as ``out-of-clique'' queries, 
	additional message passing is performed between the nodes containing the query variables. 
	To improve the efficiency of handling out-of-clique queries, Kanagal and Deshpande~\cite{kanagal2009indexing} proposed to materialize additional probability distributions arising from a hierarchical partitioning of the calibrated junction tree. 
	We compare experimentally our proposal with the aforementioned junction tree algorithms \cite{lauritzen1988local,kanagal2009indexing}.

Additionally, there have been several data structures proposed for efficient exact inference with graphical models in the knowledge-compilation literature~\cite{dechter1999bucket, darwiche2003differential, chavira2005compiling, chavira2007compiling, kanagal2009indexing, poon2011sum}. 
Darwiche~\cite{darwiche2003differential} showed that every Bayesian network can be interpreted as an exponentially-sized multi-linear function whose evaluation and differentiation solves the exact inference problem. Accordingly, several techniques were proposed to efficiently factor these multi-linear functions into more compact representations, such as arithmetic circuits and sum-product networks, by taking advantage of possible sparsity in conditional probabilities and context-specific independence for a given query~\cite{chavira2005compiling, chavira2007compiling, poon2011sum}. We note that optimizing for a given workload is orthogonal to these efforts: our cost-benefit frame\-work could be extended to account for other concise representations of the conditional-probability tables used by inference algorithms. As this paper is the first work to address budget-constrained and workload-aware materialization for query evaluation on Bayesian networks, we opt to work with the variable-elimination algorithm~\cite{zhang1994simple} over tabular-factor representations due to its conceptual simplicity. 

\revision{
Model-based inference finds application in many settings, including tasks related to data management.
For example, models like Bayesian networks can naturally be used for selectivity estimation, as explained by Getoor et al.~\cite{getoor2001selectivity} and Tzoumas et al.~\cite{tzoumas2013adapting}, 
and today there is renewed interest in approximate query processing (AQP) based on probabilistic approaches~\cite{Chaudhuri:2017:AQP, Kraska:2017:AQP, ma2019dbest, Mozafari:2017:AQE, hilprecht2019deepdb}.
For tasks such as selectivity estimation and AQP, where inference queries are repeatedly executed over a learned model, it is important to consider the efficiency of online inference and the role that careful materialization can play to increase efficiency.
Our work can be viewed as a step in this direction.
Note, however, that we focus on the efficiency of \emph{\varelim} for Bayesian network inference 
and not on any specific application such as AQP.}%
\section{Setting and problem statement}
\label{sec:setting}

In this section we formally define the materialization problem we study. 
We first provide an overview of Bayesian-network inference
and \varelim algorithm, 
which is central to our contribution.

A Bayesian network \bayesianNet is a directed acyclic graph (\DAG), 
where nodes represent variables
and edges represent dependencies among variables.
Each node 
is associated with 
a table quantifying the probability that the node takes a particular value, 
conditioned on the values of its parent nodes.
For instance, if a node associated with a \domainsize-ary variable \vara 
has~$\ell$ parents, all of which are \domainsize-ary variables, 
the associated probability distribution for \vara is 
a table with $\domainsize^{\ell+1}$ entries.

A key property of Bayesian networks is that each variable is conditionally independent 
of all its non-descendants given the value of its parents. 
Due to this property, the joint probability of all the variables can be factorized into marginal and conditional probabilities associated with the nodes of the network.
\eat{A key property of Bayesian networks is that, conditional on the  values of its parents, a variable is independent of other variables.
This leads to simple formulas for the evaluation of
marginal and conditional probabilities.}
Using the network of Figure~\ref{figure:introduction} as our running example,
the joint probability of all the variables is given by
\begin{eqnarray}
\Pr(\age,\sex,\education,\occupation,\residence,\transport) 
& = & 
\Pr(\transport \mid \occupation,\residence)
\Pr(\occupation \mid \education)
\Pr(\residence \mid \education) \nonumber \nonumber \\
&   & 
\Pr(\education \mid \age,\sex)
\Pr(\age)
\Pr(\sex).
\label{equation:survey}
\end{eqnarray}
Each factor on the right-hand side of Equation~(\ref{equation:survey})
is part of the specification of the Bayesian network and represents the marginal and conditional probabilities of its variables. 

In what follows, we assume we are given a Bayesian network \bayesianNet that is discrete, i.e., all variables are categorical;
numerical variables can be discretized in categorical intervals. 


\spara{Querying the Bayesian network.}
We consider the task of answering
probabilistic queries over a Bayesian network \bayesianNet.
Specifically, we consider queries of the form 
\begin{equation}
\label{equation:query} 
\query = \probability(\qvarfree,\qvarbound=\qvarvals),
\end{equation} 
where $\qvarfree\subseteq\allvars$ is a set of free variables
and $\qvarbound\subseteq\allvars$ is a set of bound variables
with corresponding values~\qvarvals.
Notice that free variables \qvarfree indicate that 
the query requests a probability 
for {\it each} of their possible values.
In the example of Figure~\ref{figure:introduction},
$\prob{\sex=\text{female},\education=\text{uni}}$ and
$\prob{\transport,\age=\text{young}}$
are instances of such queries.

We denote by $\qvarmiss = \allvars \setminus (\qvarfree\cup\qvarbound)$
the set of variables that do not appear in the query \query. 
The variables in the set \qvarmiss are those that need to be {\it summed out}
in order to compute the query~\query.
Specifically, query \query is computed via the summation
\begin{equation}
\label{equation:summation}
\prob{\qvarfree, \qvarbound=\qvarvals} = 
\sum_{\qvarmiss} \prob{\qvarfree, \qvarbound=\qvarvals, \qvarmiss}.
\end{equation}
The answer to the query $\prob{\qvarfree, \qvarbound=\qvarvals}$
is a table indexed by 
the combinations of values of variables in \qvarfree. Note that conditional probabilities of the form
$\probability(\qvarfree\mid\qvarbound=\qvarvals)$
can be computed from the corresponding joint probabilities 
by
\begin{eqnarray*}
\probability(\qvarfree\mid\qvarbound=\qvarvals)
& = & \frac{\probability(\qvarfree,\qvarbound=\qvarvals)}{\sum_{\qvarfree}\probability(\qvarfree,\qvarbound=\qvarvals)}. \nonumber
\end{eqnarray*}
In addition, for variables with discrete numerical domain, {\it range-queries} 
of the form $\probability(\qvarfree, \lowqvarval\leq \qvarbound\leq\highqvarval)$
can be computed from the corresponding joint probabilities by
\begin{equation*}
\probability(\qvarfree, \lowqvarval\leq \qvarbound\leq\highqvarval) = 
\sum_{\lowqvarval\leq \qvarvals\leq\highqvarval}{\probability(\qvarfree,\qvarbound=\qvarvals)},
\end{equation*}
where `$\le$' here denotes an element-wise operator over variable vectors.
So without loss of generality, we focus on queries of type~(\ref{equation:query}).


\spara{Answering queries.} A query \query can be computed by {\em brute-force} approach as follows. 
First, compute the joint probability of all the variables in the network via a natural join over the tables of the factors of Equation~(\ref{equation:survey}). Let \bigtable be the resulting table, indexed by the combination of values of all variables. Then, select those entries of \bigtable that satisfy the corresponding equality condition $\qvarbound=\qvarvals$. Finally, for each $\vara\in\qvarmiss$, compute a sum over each {\em group} of values of \vara (i.e., ``sum out'' \vara). The table that results from this process is the answer to query \query.

\eat{First, compute into a table~\bigtable the joint probability for each 
combination of values of all variables via a natural join over the tables of the factors of Equation~(\ref{equation:survey}), and 
select those entries of \bigtable that satisfy the  
equality condition in $\qvarbound=\qvarvals$. Then, for each $\vara\in\qvarmiss$, 
compute a sum over each {\em group} of values of \vara (i.e., we ``sum out'' \vara).
The table that results from this process is the answer to query \query. } 

\spara{Variable elimination.} 
The \varelim algorithm proposed by Zhang et al.~\cite{zhang1994simple}  
improves upon the brute-force approach by 
taking advantage of the factorization of the joint probability, 
hence, avoiding the computation of \bigtable.
%
Given a total order~\eliminationOrder\ on variables,
the \varelim algorithm computes the quary \query by summing out the variables in \qvarmiss, 
one at a time, from the intermediate tables obtained via natural join of the \emph{relevant} factors.

For example, 
consider the Bayesian network of Figure~\ref{figure:introduction}, 
the query $\query = \prob{\transport,\age=\text{young}}$, 
and the order
$\eliminationOrder = \langle\age, \sex, \transport, \education, \occupation, \residence\rangle$. 
Note that for this query \query, we have $\qvarfree=\{\transport\}$, $\qvarbound=\{\age\}$, and $\qvarmiss=\{\sex, \education, \occupation, \residence\}$.
The first variable in \eliminationOrder is $\age\in\qvarbound$.
The \varelim algorithm considers only the tables of those factors that include variable \age and select the rows that satisfy the equality condition $(\age=\text{young})$; 
then it performs the natural join over the resulting tables.
This computation corresponds to the following two equations:
\eat{The brute-force algorithm would first compute a natural join over all factors
in Equation~(\ref{equation:survey}) and 
then select those rows that match the condition $(\age=\text{young})$.
An equivalent but more efficient computation is to first
consider only the tables of those factors
that include variable \age and select 
only the rows that satisfy the equality condition $(\age=\text{young})$; then perform the natural join over 
the resulting tables. 
This computation corresponds to the following two equations.}
\begin{align}
\factor{\age}{\sex, \education; \age=\text{young}} =
\Pr(\education \mid \age  = \text{young},\sex)
\Pr(\age  = \text{young}) \nonumber, 
\end{align}
\begin{align}
\Pr(\age = \text{young},\sex,\education,\occupation,\residence,\transport) 
=\, & 
\factor{\age}{\sex, \education; \age=\text{young}}
\Pr(\transport \mid \occupation,\residence) \nonumber \\
& 
\Pr(\occupation \mid \education) 
\Pr(\residence \mid \education)
\Pr(\sex).  \nonumber 
\end{align}
Note that the computation involves only two factors that contain the variable \age, 
resulting in a table $\factor{\age}{\sex, \education; \age=\text{young}}$, 
which is indexed by \sex and \education, and replaces these factors in the factorization of the joint probability.

\eat{The crucial observation is that the equality condition $(\age=\text{young})$
concerns only two factors of the joint probability formula. 
After processing variable \age 
these two factors can be replaced by a factor $\factor{\age}{\sex, \education; \age=\text{young}}$,
a table indexed by \sex and \education
and containing only entries with $(\age=\text{young})$.}

The algorithm then proceeds with the next variables in the elimination order \eliminationOrder, each time considering the current set of relevant factors in the factorization,
until there are no more variables left to eliminate. 
The answer to the query is given by the final remaining factor upon all the variables in \eliminationOrder are processed this way.

\eat{
	We continue with the remaining variables.
	Let us consider how to sum out $\sex\in\qvarmiss$, the second variable in \eliminationOrder.
	Instead of a brute-force approach,
	a more efficient computation 
	is to compute a new factor by
	summing out \sex
	over only 
	those factors 
	that include \sex; and use the new factor to perform a natural join 
	with the remaining factors.
	\begin{align}
	\factor{\sex}{\education; \age=\text{young}} 
	=\, &
	\sum_{\sex}\factor{\age}{\sex, \education; \age=\text{young}} \Pr(\sex) \nonumber \\
	\Pr(\age = \text{young},\education,\occupation,\residence,\transport)
	=\, & \sum_{\sex} \Pr(\age = \text{young},\sex,\education,\occupation,\residence,\transport)  \nonumber \\
	=\, &
	\factor{\sex}{\education; \age=\text{young}}
	\Pr(\transport \mid \occupation,\residence) \nonumber \\
	& 
	\Pr(\occupation \mid \education)
	\Pr(\residence \mid \education). 	\label{eq:aftersecondelimination}
	\end{align}
	Again, the crucial observation is that \sex appears in 
	only two factors of Equation~(\ref{eq:afterfirstelimination}),
	which, after the summation over~\sex, can be replaced 
	by a factor $\factor{\sex}{\education; \age=\text{young}}$,
	a table indexed by \education
	and containing only entries with $(\age=\text{young})$.

	The third variable in \eliminationOrder is the free variable $\transport\in\qvarfree$.
	As with the previous two cases, 
	the processing of a free variable corresponds to the computation of 
	a new factor from the natural join over all factors that involve it.
	Unlike the previous cases, however, where the natural join was followed by a
	selection of a subset of entries or a summation,
	no such operation is applied on the natural join in this case.
	This computation corresponds to the equations below.
	\begin{align}
	\factor{\transport}{\occupation, \residence; \transport} 
	=\, &
	\Pr(\transport \mid \occupation,\residence) \nonumber \\
	\Pr(\age = \text{young},\education,\occupation,\residence,\transport)
	=\, &
	\factor{\sex}{\education; \age=\text{young}}
	\factor{\transport}{\occupation, \residence; \transport} \nonumber \\
	& 
	\Pr(\occupation \mid \education)
	\Pr(\residence \mid \education). 	\label{eq:afterthirdelimination}
	\end{align}
	As in the previous cases, the factors that involve \transport
	(in this example it is only $\Pr(\transport \mid \occupation,\residence)$)
	are replaced with a factor $\factor{\transport}{\occupation, \residence; \transport}$,
	the table of which is indexed by variables \occupation and~\residence,
	but also contains a column for free variable \transport.

	The procedure described above for the first three variables of~\eliminationOrder
	is repeated for the remaining variables, 
	and constitutes the variable-elimination algorithm~\cite{zhang1994simple}.

	To summarize, the \varelim algorithm considers variables \vara
	in the order of~\eliminationOrder.
	If $\vara\in\qvarbound$ or $\vara\in\qvarmiss$, 
	the algorithm computes a natural join over the factors that involve \vara,
	it performs a selection or group summation, respectively,
	and uses the result to replace the factors that involve \vara.
}

\begin{figure}
\begin{center}
\includegraphics[width=0.95\columnwidth]{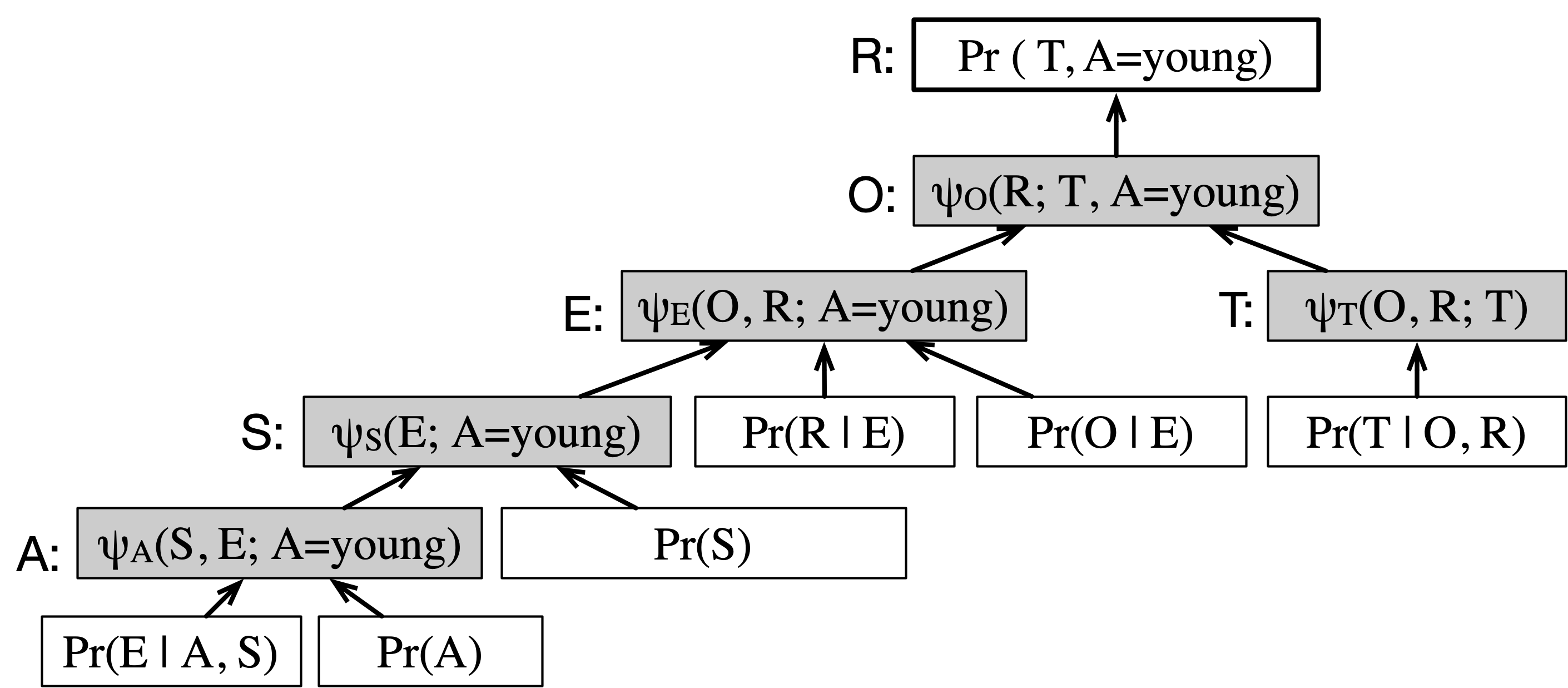}
\end{center}
\caption{The elimination tree \tree for the query $\query = \prob{\transport,\age=\text{young}}$ 
and order of variables
$\eliminationOrder = (\age, \sex, \transport, \education, \occupation, \residence)$, 
for the dataset and Bayesian network shown in Figure~\ref{figure:introduction}.}
\label{fig:elimination_tree}
\end{figure}

\spara{Elimination tree.} The \varelim algorithm gives rise to a graph, like the one shown
in Figure~\ref{fig:elimination_tree} for the example of Figure~\ref{figure:introduction}.
Each node is associated with a factor and there is a directed edge between two factors if one is used for the computation of the other.
In particular, each \emph{leaf node} corresponds a factor, which is a conditional probability table
in the Bayesian network.
In our running example, these are the factors that appear in Equation~(\ref{equation:survey}).
Each \emph{internal node} corresponds to a factor that is computed from 
its children (i.e., the factors of its incoming edges),
and is computed during the execution of the \varelim algorithm.
Moreover, 
each internal node corresponds to one variable.
The last factor computed is the answer to the query.

Notice that the graph constructed in this manner 
is either a tree or a forest. 
It is not difficult to see that {\em the elimination graph 
is a tree if and only if the corresponding Bayesian network is a weakly 
connected \DAG}. 
To simplify our discussion, we will focus
on connected Bayesian networks
and deal with an \emph{elimination tree} \tree for each query. 
All results can be directly extended to the case of forests.

We note that the exact form of the factor for each
internal node of \tree depends on the query.
For the elimination tree in Figure~\ref{fig:elimination_tree},
we have factor $\factor{\age}{\sex, \education; \age=\text{young}}$
on the node that corresponds to variable \age.
However, if the query contained variable \age as a free 
variable rather than bound to value $(\age=\text{young})$, then
the same node in \tree would contain a factor $\factor{\age}{\sex, \education; \age}$.
And if the query did not contain variable \age, then the same node in \tree
would contain a factor $\factor{\age}{\sex, \education}$.
On the other hand, the structure of the tree, the factors that 
correspond to leaf nodes and the 
variables that index the variables of the factors 
that correspond to internal nodes are query independent.

\FullOnly{
	\spara{Note}. The elimination algorithm we use here differs slightly from the one 
	presented by Zhang et al.~\cite{zhang1994simple}. 
	Specifically, the variable-elimination algorithm of Zhang et al.\ 
	computes the factors associated with the bound variables \qvarbound 
	at a special initialization step, 
	which leads to benefits in practice (even though the running time remains super-polynomial in the worst case). 
	On the other hand, we compute factors in absolute accordance with the elimination order. 
	This allows us to consider the variable-elimination order fixed for all variables independently of the query.
}

\spara{Materialization of factors.}
Materializing 
factor tables for
internal nodes of the elimination tree \tree
can speed up the computation of queries that require those factors.
As we saw earlier, factors 
are computed in a sequence of steps, one for each variable.
Each step involves the natural join over other factor tables,
followed by:
(i) either variable summation (to sum-out variables \qvarmiss);
or (ii) row selection (for variables \qvarbound);
or (iii) no operation (for variables \qvarfree).
In what follows, {\em we focus on materializing factors
that involve only variable summation}, the first out of these three types of operations.
Materializing such factors
is often useful for multiple queries \query
and sufficient to make the case for the materialization of factors
that lead to the highest performance gains over a given query workload.
Dealing with the materialization of general factors is a straightforward but somewhat tedious extension which is left for future work.

To formalize our discussion, we introduce some notation.
Given a node $\nodeu\in\nodes$ in an elimination tree \tree,
we write \subtree{\nodeu}
to denote the subtree of \tree that is rooted at node \nodeu. 
We also write \vars{\nodeu} to denote the subset of variables of \allvars
that are associated with the nodes of \subtree{\nodeu}.
Finally, we write $\ancestors{\nodeu}$ to denote 
the set of ancestors of \nodeu in \tree, 
that is, all nodes between \nodeu and the root of the tree~\tree, 
excluding \nodeu. 

Computing a factor for a query \query incurs a computational cost.
We distinguish two notions of cost: first, if the children factors of a node \nodeu in the elimination tree~\tree are given as input, computing \nodeu incurs a {\em partial} cost of computing the factor of \nodeu from its children; second, starting from the factors associated with conditional probability tables in the Bayesian network, the {\em total} cost of computing a node includes the partial costs of computing all intermediate factors, from the {\em leaf} nodes to \nodeu.
Formally, we have the following definitions.
\begin{definition}[Partial-Cost]
\label{definition:cost}
The partial cost \cost{\nodeu} of a node $\nodeu\in\nodes$ in the  elimination tree $\tree=(\nodes,\edges)$ 
is the computational cost required to compute the corresponding factor 
given the factors of its children nodes.
\end{definition}
\begin{definition}[Total-Cost]
\label{definition:utility}
The total cost of a node $\nodeu\in\nodes$ in the  elimination tree $\tree=(\nodes,\edges)$ 
is the total cost of computing the factor at node \nodeu, i.e., 
\[
\utility{\nodeu} = \sum_{\substack{\nodex\in\subtree{\nodeu}}} \cost{\nodex},
\]
where \cost{\nodex} is the partial cost of node \nodex.
\end{definition}

When we say that we materialize a node $\nodeu\in\nodes$, we mean that we
precompute and store the factor that is the result of summing out
all variables below it on \tree.
{\em When is a materialized factor useful for a query \query?}
Intuitively, it is useful if it is one of the factors computed during the evaluation
of \query, in which case we save the total cost of computing it from scratch, 
provided that there is no other materialized factor that could be used in its place, with greater savings in cost.
The following definition of usefulness formalizes this intuition.
\begin{definition}[Usefulness]
\label{definition:usefulness}
Let 
$\query =$
$\prob{\qvarfree, \qvarbound=\qvarvals}$ 
be a query, and 
$\solution\subseteq\nodes$ a set of nodes of the elimination tree \tree
that are materialized.
We say that a node $\nodeu\in\nodes$ is useful 
for the query \query with respect to the set of nodes \solution, if 
{\em (}$i${\em )} $\nodeu\in\solution$; 
{\em (}$ii${\em )} $\vars{\nodeu}\subseteq\qvarmiss$;  and
{\em (}$iii${\em )} there is no other node $\nodev\in\ancestors{\nodeu}$ 
for which conditions {\em (}$i${\em )} and {\em (}$ii${\em )} hold.
\end{definition}
To indicate that a node \nodeu is {\em useful} 
for the query \query with respect to a set of nodes \solution with materialized factors,
we use the indicator function $\isuseful{\query}{\nodeu}{\solution}$.
That is, $\isuseful{\query}{\nodeu}{\solution}=1$
if node $\nodeu\in\nodes$ is useful 
for the query \query with respect to the set of nodes \solution, 
and $\isuseful{\query}{\nodeu}{\solution}=0$ otherwise.
When a materialized node is {useful} for a query \query, it saves us the total cost of
computing it from scratch. Considering a query workload, 
where different queries appear with different probabilities,
we define the benefit of a set of materialized nodes \solution as the total cost we save in expectation.
\begin{definition}[Benefit]
\label{definition:benefit}
Consider an elimination tree $\tree=(\nodes,\edges)$, 
a set of nodes $\solution\subseteq\nodes$, and query probabilities $\prob{\query}$
for the set of all possible queries \query.
The benefit \benefit{\solution} of the node set \solution is defined as:
\begin{eqnarray*}
\benefit{\solution} 
& = & \sum_{\query} \prob{\query} \sum_{\nodeu\in\solution} \isuseful{\query}{\nodeu}{\solution} \utility{\nodeu} \\
& = & \sum_{\nodeu\in\solution} \prob{\isuseful{\query}{\nodeu}{\solution}=1} \utility{\nodeu} \\
& = & \sum_{\nodeu\in\solution}  \expuseful{\nodeu}{\solution} \utility{\nodeu}.
\end{eqnarray*}
\end{definition}

\spara{Problem definition.}
We can now define formally the problem we consider:
for a space budget \spacebudget, 
our goal is to select a set of factors to materialize to achieve optimal benefit.

\begin{problem}
Given a Bayesian network \bayesianNet,
an elimination tree $\tree=(\nodes,\edges)$
for answering probability queries over~\bayesianNet, and budget \spacebudget, 
select a set of nodes $\solution\subseteq\nodes$ to materialize, 
whose total size is at most \spacebudget, 
so as to optimize~\benefit{\solution}.
\label{problem:qtm-space}
\end{problem}

For simplicity, 
we also consider a version of the problem where 
we are given a total budget \budget on the number of nodes
that we can materialize. 
We first present algorithms for Problem~\ref{problem:qtm} in Section~\ref{sec:algorithms},
and discuss how to address the more general Problem~\ref{problem:qtm-space} in Section~\ref{sec:extensions}.

\begin{problem}
Given a Bayesian network \bayesianNet,
an elimination tree $\tree=(\nodes,\edges)$
for answering probability queries over~\bayesianNet, and an integer \budget, 
select at most \budget nodes $\solution\subseteq\nodes$ to materialize
so as to optimize~\benefit{\solution}.
\label{problem:qtm}
\end{problem}

\revision{Note that the factors to materialize are chosen under the assumption that the queries will follow a given probability distribution $\traindistr = \left\{\prob{\query}\right\}$. 
If the queries that are eventually encountered follow a different distribution \testdistr, 
then query answering may be faster or slower on average than under \traindistr;
this depends on whether \testdistr assigns higher or lower probability to queries that are favorable or not 
for the chosen materialization.
Notice, however, that even when the materialization has been optimized for a different workload, 
the runtime cost of queries can never be worse than the runtime without materialization.
The worst case arises when the chosen materialization 
is useful for no query with non-zero probability in \testdistr.
In such a case the benefit of the materialization is zero.}%

\section{Algorithms}
\label{sec:algorithms}

In this section we present our algorithms for Problem~\ref{problem:qtm}:
Section~\ref{sec:dynamic-programming} presents an exact 
polynomial-time dynamic-program\-ming algorithm;
Section~\ref{sec:greedy} presents a greedy algorithm, 
which yields improved time complexity but provides only an approximate solution, 
yet with quality guarantee.

\subsection{Dynamic programming}  
\label{sec:dynamic-programming}

We discuss our dynamic-programming algorithm 
in three steps.
First, we introduce the notion of {\it partial benefit} that allows us to explore 
partial solutions for the problem.
Second, 
we demonstrate the optimal-substructure property of the problem, 
and third, we present the algorithm. 

\mpara{Partial benefit.}
In Definition~\ref{definition:benefit} we defined the 
(total)~{\em benefit} of a subset of nodes $\solution\subseteq\nodes$ (i.e., a potential solution) 
for the whole elimination tree \tree. 
Here we define the {\em partial benefit} 
of a subset of nodes~\solution for a subtree \subtree{\nodeu}
of a given node \nodeu of the elimination tree~\tree.

\begin{definition}[partial benefit]
\label{definition:partial-benefit}
Consider an elimination tree $\tree=(\nodes,\edges)$, 
a subset of nodes $\solution\subseteq\nodes$, and probabilities $\prob{\query}$
for the set of all possible queries \query. 
The partial benefit \partialbenefit{\nodeu}{\solution} of the node set \solution 
at a given node $\nodeu \in \nodes$ is 
\begin{align*}
\partialbenefit{\nodeu}{\solution} = \sum_{\nodev\in\solution \cap \subtree{\nodeu}} \expuseful{\nodev}{\solution} \utility{\nodev}.
\end{align*}
\end{definition}

The following lemma 
states that, 
given a set of nodes \solution, and a node $\nodeu\in\solution$, 
the probability that \nodeu is useful for a random query with respect to \solution 
depends only on the lowest ancestor of \nodeu in \solution.

\begin{lemma}
\label{lemma:lowestAncestor}
Consider an elimination tree $\tree=(\nodes,\edges)$ and a set $\solution \subseteq \nodes$ of nodes. 
Let $\nodeu, \nodev \in \solution$ such that $\nodev \in \ancestors{\nodeu}$ and 
$\nodepath{\nodeu}{\nodev} \cap \solution = \emptyset$. 
Then we have:
\begin{align*}
\expuseful{\nodeu}{\solution} = \expuseful{\nodeu}{\nodev}, 
\end{align*}
where the expectation is taken over a distribution of queries~\query.
\end{lemma}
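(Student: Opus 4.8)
The plan is to unfold Definition~\ref{definition:usefulness} into an explicit Boolean predicate on the query~\query, and then exploit the fact that the variable sets $\vars{\cdot}$ are nested along every root-to-leaf path of the elimination tree, so that the ``no blocking ancestor'' clause collapses to a single condition on~\nodev.

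First I would fix $\nodeu\in\solution$ and rewrite usefulness concretely: since clause~(i) of Definition~\ref{definition:usefulness} holds by assumption, $\isuseful{\query}{\nodeu}{\solution}=1$ holds exactly when $\vars{\nodeu}\subseteq\qvarmiss$ and no node $\nodew\in\ancestors{\nodeu}\cap\solution$ satisfies $\vars{\nodew}\subseteq\qvarmiss$ (clause~(iii) quantifies over ancestors in \nodes obeying~(i)--(ii), but clause~(i) restricts these to ancestors lying in~\solution). The key structural fact is monotonicity: if $\nodew\in\ancestors{\nodex}$ then $\subtree{\nodex}\subseteq\subtree{\nodew}$, hence $\vars{\nodex}\subseteq\vars{\nodew}$; so the sets $\vars{\cdot}$ only shrink as one descends the tree, and in particular $\vars{\nodew}\not\subseteq\qvarmiss$ forces $\vars{\nodew'}\not\subseteq\qvarmiss$ for every ancestor $\nodew'$ of~\nodew.

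Next I would use the hypothesis $\nodepath{\nodeu}{\nodev}\cap\solution=\emptyset$, which says no element of \solution lies strictly between \nodeu and \nodev, to obtain
\[
\ancestors{\nodeu}\cap\solution=\{\nodev\}\cup\left(\ancestors{\nodev}\cap\solution\right).
\]
Combining this with monotonicity, the clause ``no $\nodew\in\ancestors{\nodeu}\cap\solution$ has $\vars{\nodew}\subseteq\qvarmiss$'' is equivalent to the single condition $\vars{\nodev}\not\subseteq\qvarmiss$: the forward direction is immediate because $\nodev\in\ancestors{\nodeu}\cap\solution$, and conversely, once $\vars{\nodev}\not\subseteq\qvarmiss$, every strict ancestor of~\nodev --- hence every remaining member of $\ancestors{\nodeu}\cap\solution$ --- also fails the containment. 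Therefore $\isuseful{\query}{\nodeu}{\solution}=1$ iff $\vars{\nodeu}\subseteq\qvarmiss$ and $\vars{\nodev}\not\subseteq\qvarmiss$, a predicate that depends on \solution only through \nodev; this is precisely the event whose indicator is written $\isuseful{\query}{\nodeu}{\nodev}$. Taking expectations over the query distribution gives $\expuseful{\nodeu}{\solution}=\expuseful{\nodeu}{\nodev}$.

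I do not expect a genuine obstacle here. The only points needing care are the bookkeeping behind the displayed identity for $\ancestors{\nodeu}\cap\solution$ (the path hypothesis is exactly what guarantees nothing in \solution is skipped between \nodeu and \nodev), and making the monotonicity of $\vars{\cdot}$ explicit so that a single blocking ancestor propagates all the way up. It is also worth stressing that usefulness refers to the \emph{lowest} materialized ancestor, which is why \nodev --- rather than any higher node of \solution --- is the one that determines the probability.
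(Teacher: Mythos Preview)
Your proof is correct and follows essentially the same approach as the paper's: both arguments establish the pointwise identity $\isuseful{\query}{\nodeu}{\solution}=\isuseful{\query}{\nodeu}{\nodev}$ for every query \query by combining the monotonicity of $\vars{\cdot}$ along ancestor chains with the decomposition $\ancestors{\nodeu}\cap\solution=\{\nodev\}\cup(\ancestors{\nodev}\cap\solution)$ that the path hypothesis guarantees, and then take expectations. Your version makes the monotonicity step and the decomposition slightly more explicit, but the logic is the same.
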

\FullOnly{
  \begin{proof}
  To prove the lemma, we will show that for any query \query, 
  it is 
  $\isuseful{\query}{\nodeu}{\solution}=1$ if and only if  
  $\isuseful{\query}{\nodeu}{\nodev} = 1$. 

  We first show that $\isuseful{\query}{\nodeu}{\solution}=1$ implies  $\isuseful{\query}{\nodeu}{\nodev} = 1$. 
  From Definition~\ref{definition:usefulness}, 
  we have that $\isuseful{\query}{\nodeu}{\solution}=1$ if $\vars{\nodeu}\subseteq\qvarmiss$ and 
  there is no $\nodew \in \ancestors{\nodeu} \cap \solution$ such that $\vars{\nodew}\subseteq\qvarmiss$. 
  Given that $\isuseful{\query}{\nodeu}{\solution}=1$ and $\nodev\in\ancestors{\nodeu} \cap \solution$, 
  it follows that $\vars{\nodev}\not\subseteq\qvarmiss$, hence, $\isuseful{\query}{\nodeu}{\nodev} = 1$. 
  Reversely,  
  we show that $\isuseful{\query}{\nodeu}{\nodev} = 1$ implies $\isuseful{\query}{\nodeu}{\solution}=1$. 
  Notice that $\isuseful{\query}{\nodeu}{\nodev} = 1$ if 
  $\vars{\nodeu}\subseteq\qvarmiss$ and $\vars{\nodev}\not\subseteq\qvarmiss$. 
  This means that, for all $\nodew \in \solution \cap \ancestors{\nodev}$ 
  we have $\vars{\nodew}\not\subseteq\qvarmiss$ 
  since $\vars{\nodev} \subseteq \vars{\nodew}$. 
  Given also that $\nodepath{\nodeu}{\nodev} \cap \solution = \emptyset$, we have that 
  for all $\nodew \in \solution \cap \ancestors{\nodeu}$ it is $\vars{\nodew}\not\subseteq\qvarmiss$, 
  hence, $\isuseful{\query}{\nodeu}{\solution}=1$. 

  Given the one-to-one correspondence between the set of queries in which $\isuseful{\query}{\nodeu}{\solution}=1$ and the set of queries in which $\isuseful{\query}{\nodeu}{\nodev} = 1$, we have
  $\sum_{\query} \prob{\query} \isuseful{\query}{\nodeu}{\solution}  = \sum_{\query} \prob{\query} \isuseful{\query}{\nodeu}{\nodev}$, hence, the result follows.   
  \end{proof}
}

Building on Lemma~\ref{lemma:lowestAncestor},
we arrive to Lemma~\ref{lemma:optimal-substructure-prep} below,
which states that
the partial benefit 
\partialbenefit{\nodeu}{\solution} of a
node-set \solution 
at a node \nodeu
depends only on ($i$) the nodes of \subtree{\nodeu} that are included
in \solution, and ($ii$) 
the lowest ancestor \nodev of \nodeu in \solution, and therefore it does not
depend on what other nodes ``above''~\nodev are included in \solution. 

\FullOnly{
  For the proof of Lemma~\ref{lemma:optimal-substructure-prep},
  we introduce some additional notation. 
  Let $\tree=(\nodes,\edges)$ be an elimination tree, 
  \nodeu a node of \tree, and 
  \subtree{\nodeu} the subtree of \tree rooted at \nodeu.
  Let $\solution \subseteq \nodes$ be a set of nodes.
  For each node $\nodew \in \subtree{\nodeu} \cap \solution$, 
  we define 
  $\lsa{\nodew}{\solution}$ 
  to be the lowest ancestor of $\nodew$ that is included in $\solution$.
}

\begin{lemma}
\label{lemma:optimal-substructure-prep}
Consider an elimination tree $\tree=(\nodes,\edges)$
and a node $\nodeu\in\nodes$. Let $\nodev\in\ancestors{\nodeu}$ be an ancestor of \nodeu. 
Consider two sets of nodes \solution and $\solution'$ for which
  (i) 
  $\nodev\in\solution$ and $\nodev\in\solution'$;
  (ii) $\subtree{\nodeu} \cap \solution = \subtree{\nodeu} \cap \solution'$; and
  (iii) $\nodepath{\nodeu}{\nodev} \cap \solution = \nodepath{\nodeu}{\nodev} \cap \solution' = \emptyset$.
Then, 
we have:
$\partialbenefit{\nodeu}{\solution} = \partialbenefit{\nodeu}{\solution'}$.
\end{lemma}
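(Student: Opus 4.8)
The plan is to reduce the statement to Lemma~\ref{lemma:lowestAncestor} by means of a short combinatorial observation about ancestor paths. Since $\partialbenefit{\nodeu}{\solution}$ sums only over $\solution \cap \subtree{\nodeu}$ and hypothesis~(ii) gives $\solution \cap \subtree{\nodeu} = \solution' \cap \subtree{\nodeu}$, while the total cost $\utility{\nodew}$ of a node is query-independent, it suffices to prove $\expuseful{\nodew}{\solution} = \expuseful{\nodew}{\solution'}$ for every $\nodew \in \solution \cap \subtree{\nodeu}$. I would first note that $\nodev$ is an ancestor of $\nodew$ --- it is an ancestor of $\nodeu$, which is either $\nodew$ or an ancestor of $\nodew$ --- and $\nodev \in \solution \cap \solution'$ by~(i), so the lowest materialized ancestors $\lsa{\nodew}{\solution}$ and $\lsa{\nodew}{\solution'}$ both exist. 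Applying Lemma~\ref{lemma:lowestAncestor} with $\nodew$ as the lower node and $\lsa{\nodew}{\solution}$ (resp.\ $\lsa{\nodew}{\solution'}$) as its lowest materialized ancestor --- the lemma's hypotheses hold because nothing strictly between $\nodew$ and its lowest materialized ancestor is materialized --- it is then enough to show $\lsa{\nodew}{\solution} = \lsa{\nodew}{\solution'}$.

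The core step is to localize $\lsa{\nodew}{\cdot}$ to the part of $\nodew$'s ancestor chain that conditions~(ii) and~(iii) pin down. I would split $\ancestors{\nodew}$ into the nodes lying in $\subtree{\nodeu}$ (those strictly between $\nodew$ and $\nodeu$, together with $\nodeu$ itself) and the nodes above $\nodeu$, which form exactly $\ancestors{\nodeu} = \nodepath{\nodeu}{\nodev} \cup \{\nodev\} \cup \ancestors{\nodev}$; here one uses that $\nodev \notin \subtree{\nodeu}$, so this is an honest partition. Then I would argue by two cases. If some ancestor of $\nodew$ inside $\subtree{\nodeu}$ belongs to $\solution$, then --- since the path from $\nodew$ up to any node of $\subtree{\nodeu}$ stays inside $\subtree{\nodeu}$ --- $\lsa{\nodew}{\solution}$ is just the lowest node of $\ancestors{\nodew} \cap \subtree{\nodeu} \cap \solution$; by~(ii) this set equals $\ancestors{\nodew} \cap \subtree{\nodeu} \cap \solution'$, and the same reasoning shows it also determines $\lsa{\nodew}{\solution'}$, so the two coincide. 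Otherwise, no ancestor of $\nodew$ in $\subtree{\nodeu}$ lies in $\solution$ (equivalently, by~(ii), in $\solution'$); walking upward, the next nodes are those of $\nodepath{\nodeu}{\nodev}$, which by~(iii) meet neither $\solution$ nor $\solution'$, and then $\nodev$, which by~(i) is in both; hence $\lsa{\nodew}{\solution} = \nodev = \lsa{\nodew}{\solution'}$. The degenerate case $\nodew = \nodeu$ sits in the second case, since then $\ancestors{\nodew} \cap \subtree{\nodeu} = \emptyset$.

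Putting the pieces together, for each $\nodew \in \solution \cap \subtree{\nodeu} = \solution' \cap \subtree{\nodeu}$ Lemma~\ref{lemma:lowestAncestor} yields $\expuseful{\nodew}{\solution} = \expuseful{\nodew}{\lsa{\nodew}{\solution}} = \expuseful{\nodew}{\lsa{\nodew}{\solution'}} = \expuseful{\nodew}{\solution'}$, so the defining sums of $\partialbenefit{\nodeu}{\solution}$ and $\partialbenefit{\nodeu}{\solution'}$ agree term by term, which is the claim. I expect the main (though still routine) obstacle to be the bookkeeping around the partition of $\ancestors{\nodew}$ --- verifying $\nodev \notin \subtree{\nodeu}$, and that ``lowest ancestor in a set'' is unchanged when the two sets are only assumed to agree on an initial segment of the upward path from $\nodew$ --- rather than anything conceptually new beyond Lemma~\ref{lemma:lowestAncestor}.
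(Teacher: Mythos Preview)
Your proposal is correct and follows essentially the same approach as the paper: apply Lemma~\ref{lemma:lowestAncestor} to each $\nodew \in \solution \cap \subtree{\nodeu}$, then argue that $\lsa{\nodew}{\solution} = \lsa{\nodew}{\solution'}$ from conditions~(i)--(iii), and conclude term-by-term equality of the two sums. Your explicit two-case analysis (lowest materialized ancestor of $\nodew$ either lies in $\subtree{\nodeu}$ or equals $\nodev$) is exactly the bookkeeping the paper leaves implicit when it asserts $\lsa{\nodew}{\solution} = \lsa{\nodew}{\solution'}$.
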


\FullOnly{
  \begin{proof}
  From direct application of Lemma~\ref{lemma:lowestAncestor}, we have
  \[ 
  \expuseful{\nodew}{\solution} = \expuseful{\nodew}{\lsa{\nodew}{\solution}}, 
  \text{  for all } \nodew \in \subtree{\nodeu} \cap \solution,\] 
  and similarly,
  \[ 
  \expuseful{\nodew}{\solution'} = \expuseful{\nodew}{\lsa{\nodew}{\solution'}},
  \text{  for all }\nodew \in \subtree{\nodeu} \cap \solution'.
  \]
  Now, given that $\subtree{\nodeu} \cap \solution = \subtree{\nodeu} \cap \solution'$ and  
    $\nodepath{\nodeu}{\nodev} \cap \solution = \nodepath{\nodeu}{\nodev} \cap \solution' = \emptyset$, 
    we have $\lsa{\nodew}{\solution} = \lsa{\nodew}{\solution'}$, 
    for all $\nodew \in \subtree{\nodeu} \cap \solution$;
    and similarly $\lsa{\nodew}{\solution} = \lsa{\nodew}{\solution'}$, 
    for all $\nodew \in \subtree{\nodeu} \cap \solution'$.
    It then follows that for all $\nodew \in \subtree{\nodeu} \cap \solution$, 
    we have $\expuseful{\nodew}{\solution} = \expuseful{\nodew}{\solution'}.$
    Putting everything together, we get
    \begin{align*}
    \partialbenefit{\nodeu}{\solution}  
    & = \sum_{\nodew\in\solution \cap \subtree{\nodeu}} \expuseful{\nodew}{\solution} \utility{\nodew} \\ 
    & = \sum_{\nodew\in\solution' \cap \subtree{\nodeu}} \expuseful{\nodew}{\solution'} \utility{\nodew} = \partialbenefit{\nodeu}{\solution'}. 
    \end{align*}
  \end{proof}
}

Let \nocap be a special node, which we will use to denote that no ancestor of a node \nodeu is included in a solution \solution. We define $\extendedancestors{\nodeu} = \ancestors{\nodeu} \cup \{\nocap\}$ as the {\em extended set of ancestors} of \nodeu, which adds \nocap into \ancestors{\nodeu}. Notice that $\nodepath{\nodeu}{\nocap}$ corresponds to the set of ancestors of \nodeu including the root \rout, i.e., $\nodepath{\nodeu}{\nocap}=\ancestors{\nodeu}$. 

\mpara{Optimal substructure.}
Next, 
we present the optimal-substru\-cture property for Problem~\ref{problem:qtm}.
Lemma~\ref{lemma:optimal-substructure}
states that 
the subset of nodes of an optimal solution that fall in a given subtree 
depends only on the nodes of the subtree
and the lowest ancestor of the subtree that is included 
in the optimal solution.

\begin{lemma}[Optimal Substructure]
\label{lemma:optimal-substructure}
Given an elimination tree $\tree=(\nodes,\edges)$ and an integer budget $\budget$, 
let $\solution^*$ denote the optimal solution to Problem~\ref{problem:qtm}. 
Consider a node $\nodeu \in \nodes$ and let $\nodev \in \extendedancestors{\nodeu}$ 
be the lowest ancestor of $\nodeu$ that is included in $\solution^*$. 
Let $\solution_\nodeu^* = \subtree{\nodeu} \cap \solution^*$ 
denote the set of nodes in the optimal solution that reside in $\subtree{\nodeu}$ and 
let $\treebudgetallocation = \lvert\subtree{\nodeu} \cap \solution^*\rvert$. 
Then, 
\[
\solution_\nodeu^* = 
\underset{\substack{\solution_\nodeu \subseteq \subtree{\nodeu} \\ 
\lvert\solution_\nodeu\rvert = \treebudgetallocation}}
{\arg\max} \left\{ \partialbenefit{\nodeu}{\solution_\nodeu \cup \{\nodev\}} \right\}.
\]
\end{lemma}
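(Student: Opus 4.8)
The plan is to argue by an exchange argument: suppose for contradiction that $\solution_\nodeu^*$ is not the arg-max on the right-hand side, and derive a contradiction with the optimality of $\solution^*$. Concretely, let $\hat{\solution}_\nodeu$ denote a maximizer of $\partialbenefit{\nodeu}{\solution_\nodeu \cup \{\nodev\}}$ over all $\solution_\nodeu \subseteq \subtree{\nodeu}$ with $\lvert\solution_\nodeu\rvert = \treebudgetallocation$, and suppose $\partialbenefit{\nodeu}{\hat{\solution}_\nodeu \cup \{\nodev\}} > \partialbenefit{\nodeu}{\solution_\nodeu^* \cup \{\nodev\}}$. I would then form the swapped solution $\hat{\solution} = (\solution^* \setminus \solution_\nodeu^*) \cup \hat{\solution}_\nodeu$, which has the same cardinality as $\solution^*$ (we removed $\treebudgetallocation$ nodes inside $\subtree{\nodeu}$ and added $\treebudgetallocation$ nodes inside $\subtree{\nodeu}$), so it is feasible for Problem~\ref{problem:qtm}, and show that $\benefit{\hat{\solution}} > \benefit{\solution^*}$, contradicting optimality.

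First I would decompose $\benefit{\solution^*}$ into the contribution of nodes inside $\subtree{\nodeu}$ and the contribution of nodes outside, i.e. $\benefit{\solution^*} = \partialbenefit{\nodeu}{\solution^*} + \sum_{\nodew \in \solution^* \setminus \subtree{\nodeu}} \expuseful{\nodew}{\solution^*}\utility{\nodew}$, and similarly for $\hat{\solution}$. The heart of the argument is to show that the two pieces decouple. For the ``inside'' piece, I would invoke Lemma~\ref{lemma:optimal-substructure-prep}: since $\nodev$ is the lowest ancestor of $\nodeu$ in $\solution^*$, we have $\nodev \in \solution^*$ and $\nodepath{\nodeu}{\nodev} \cap \solution^* = \emptyset$; the same two conditions hold for $\solution^* \cup \{\nodev\}$ restricted appropriately, and by construction they hold for $\hat{\solution}$ with the same $\nodev$ (adding $\hat{\solution}_\nodeu \subseteq \subtree{\nodeu}$ does not touch $\nodepath{\nodeu}{\nodev}$, and $\nodev \in \hat{\solution}$ since $\nodev \notin \subtree{\nodeu}$ is untouched by the swap — here I should be careful about whether $\nodev = \nocap$, in which case $\nodev \notin \solution^*$ but the ``lowest ancestor = $\nocap$'' convention means $\ancestors{\nodeu} \cap \solution^* = \emptyset$, and Lemma~\ref{lemma:lowestAncestor} / the partial-benefit definition should be applied with that convention). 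Thus $\partialbenefit{\nodeu}{\solution^*} = \partialbenefit{\nodeu}{\solution_\nodeu^* \cup \{\nodev\}}$ and $\partialbenefit{\nodeu}{\hat{\solution}} = \partialbenefit{\nodeu}{\hat{\solution}_\nodeu \cup \{\nodev\}}$.

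For the ``outside'' piece, I need to show $\sum_{\nodew \in \solution^* \setminus \subtree{\nodeu}} \expuseful{\nodew}{\solution^*}\utility{\nodew} = \sum_{\nodew \in \hat{\solution} \setminus \subtree{\nodeu}} \expuseful{\nodew}{\hat{\solution}}\utility{\nodew}$. The two sums range over the identical node set $\solution^* \setminus \subtree{\nodeu}$, so it suffices to show that for each such $\nodew$, $\expuseful{\nodew}{\solution^*} = \expuseful{\nodew}{\hat{\solution}}$. By Lemma~\ref{lemma:lowestAncestor}, $\expuseful{\nodew}{\solution^*}$ depends only on the lowest ancestor of $\nodew$ in $\solution^*$; since $\solution^*$ and $\hat{\solution}$ differ only inside $\subtree{\nodeu}$, and for $\nodew \notin \subtree{\nodeu}$ every ancestor of $\nodew$ is also outside $\subtree{\nodeu}$ (ancestors only go up, never into a disjoint subtree), the lowest ancestor of $\nodew$ is the same in both sets. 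Combining the two pieces, $\benefit{\hat{\solution}} - \benefit{\solution^*} = \partialbenefit{\nodeu}{\hat{\solution}_\nodeu \cup \{\nodev\}} - \partialbenefit{\nodeu}{\solution_\nodeu^* \cup \{\nodev\}} > 0$, the contradiction.

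The main obstacle I anticipate is the careful bookkeeping around the ``lowest ancestor'' and the special symbol $\nocap$: making sure that in all cases ($\nodev \in \ancestors{\nodeu}$ genuinely, versus $\nodev = \nocap$ meaning no ancestor is selected) the hypotheses of Lemmas~\ref{lemma:lowestAncestor} and~\ref{lemma:optimal-substructure-prep} are literally satisfied, and that the decoupling of inside/outside contributions is airtight — in particular that no node outside $\subtree{\nodeu}$ can have its lowest selected ancestor lying inside $\subtree{\nodeu}$, and that no node inside $\subtree{\nodeu}$ is affected in its usefulness probability by anything above $\nodev$. Everything else is a routine rewriting of $\benefit{\cdot}$ via Definition~\ref{definition:benefit} and Definition~\ref{definition:partial-benefit}.
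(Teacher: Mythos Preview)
Your proposal is correct and follows essentially the same exchange argument as the paper: form the swapped solution, apply Lemma~\ref{lemma:optimal-substructure-prep} to identify the inside-$\subtree{\nodeu}$ partial benefits with $\partialbenefit{\nodeu}{\solution_\nodeu^* \cup \{\nodev\}}$ and $\partialbenefit{\nodeu}{\hat{\solution}_\nodeu \cup \{\nodev\}}$, and use Lemma~\ref{lemma:lowestAncestor} together with the observation that nodes outside $\subtree{\nodeu}$ have all ancestors outside $\subtree{\nodeu}$ to see that the outside contributions coincide. The paper's proof is organized identically (with $\solution_\nodeu'$ in place of your $\hat{\solution}_\nodeu$), and your caution about the $\nodev=\nocap$ case is appropriate but handled by the paper's convention that $\nocap$ always belongs to the solution without counting toward its size.
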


\FullOnly{
  \begin{proof}
  First, notice that the sets $\solution^*$ and $\solution_\nodeu^* \cup \{\nodev\}$
  satisfy the pre-conditions of Lemma~\ref{lemma:optimal-substructure-prep}, and thus, 
  \begin{align}
  \label{eq:optStPrep1}
  \partialbenefit{\nodeu}{\solution^*} = \partialbenefit{\nodeu}{\solution_\nodeu^* \cup \{\nodev\}}.
  \end{align}
  Now, to achieve a contradiction, assume that there exists a set 
  $\solution_\nodeu' \neq \solution_\nodeu^*$ 
  such that $\lvert\solution_\nodeu'\rvert = \treebudgetallocation$ and 
  \begin{align}
  \partialbenefit{\nodeu}{\solution_\nodeu^* \cup \{\nodev\}} < 
  \partialbenefit{\nodeu}{\solution_\nodeu' \cup \{\nodev\}}.
  \end{align}
  Let $\solution' = (\solution^* \setminus \solution_\nodeu^*) \cup \solution_\nodeu'$ 
  denote the solution obtained by replacing the node set $\solution_\nodeu^*$ in $\solution^*$ by $\solution_\nodeu'$.
  Again $\solution'$ and $\solution_\nodeu' \cup \{\nodev\}$
  satisfy the preconditions of Lemma~\ref{lemma:optimal-substructure-prep}, and thus,
  \begin{align}
  \label{eq:optStPrep2}
  \partialbenefit{\nodeu}{\solution'} =  \partialbenefit{\nodeu}{\solution_\nodeu' \cup \{\nodev\}}.
  \end{align}
  As before, 
  for $\nodew \in \solution^* \setminus \solution_\nodeu^*$
  we define $\lsa{\nodew}{\solution^*}$ and $\lsa{\nodew}{\solution'}$ 
  to be the lowest ancestor of $\nodew$ in $\solution^*$ and in $\solution'$, respectively. 
  Given that $\nodew \not\in \subtree{\nodeu}$, we have $\lsa{\nodew}{\solution^*} = \lsa{\nodew}{\solution'}$, hence, for all $\nodew \in \solution^* \setminus \solution_\nodeu^*$:
  \begin{align}
  \label{eq:optStPrep3}
  \expuseful{\nodew}{\solution^*} = \expuseful{\nodew}{\solution'}.
  \end{align}
  Putting together Equations (\ref{eq:optStPrep1}-\ref{eq:optStPrep3}) we get
  \begin{eqnarray*}
  \benefit{\solution'} 
  & = & \sum_{\nodew \in\solution'}\expuseful{\nodew}{\solution'} \utility{\nodew} \\
  & = & \sum_{\nodew \in\solution_\nodeu'} \expuseful{\nodew}{\solution'} \utility{\nodew} + 
        \!\!\!\sum_{\nodew \in \solution^* \setminus \solution_\nodeu^*}\!\!\! 
        \expuseful{\nodew}{\solution'} \utility{\nodew} \\
  & = & \partialbenefit{\nodeu}{\solution_\nodeu' \cup \{\nodev\}}  + 
        \!\!\sum_{\nodew \in \solution^* \setminus \solution_\nodeu^*}\!\! 
        \expuseful{\nodew}{\solution'} \utility{\nodew} \\
  & = & \partialbenefit{\nodeu}{\solution_\nodeu' \cup \{\nodev\}} + 
        \!\!\sum_{\nodew \in \solution^* \setminus \solution_\nodeu^*}\!\! 
        \expuseful{\nodew}{\solution^*} \utility{\nodew} \\
  & > & \partialbenefit{\nodeu}{\solution_\nodeu^* \cup \{\nodev\}} + 
        \!\!\sum_{\nodew \in \solution^* \setminus \solution_\nodeu^*}\!\! 
        \expuseful{\nodew}{\solution^*} \utility{\nodew} \\
  & = & \sum_{\nodew \in\solution_\nodeu^*} \expuseful{\nodew}{\solution^*} \utility{\nodew} + 
        \!\!\!\sum_{\nodew \in \solution^* \setminus \solution_\nodeu^*}\!\!\! 
        \expuseful{\nodew}{\solution^*} \utility{\nodew} \\
  & = & \benefit{\solution^*}
  \end{eqnarray*}
  which is a contradiction since $\solution^*$ is the optimal solution of Problem~\ref{problem:qtm}, 
  and thus, 
  $\benefit{\solution'} \le \benefit{\solution^*}$. 
  \end{proof}
}

The following lemma provides a bottom-up approach to combine partial solutions computed on subtrees. 
We note that in the rest of the section,
we present our results on \emph{binary trees}.
This assumption is made \emph{without any loss of generality}
as any $d$-ary tree can be converted into a binary tree by introducing dummy nodes; 
furthermore, by assigning appropriate cost to dummy nodes,
we can ensure that they will not be selected by the algorithm.

\begin{lemma}[Additivity]
\label{lemma:additivity}
Consider an elimination tree $\tree=(\nodes,\edges)$, a node $\nodeu\in\nodes$, and a set $\solution_{\nodeu}$ of nodes in $\subtree{\nodeu}$. 
Let \rightsub{\nodeu} and \leftsub{\nodeu} be the right and left children of \nodeu,
and let $\solution_{\rchild(\nodeu)} =  \subtree{\rightsub{\nodeu}} \cap \solution_{\nodeu} $ and $\solution_{\lchild(\nodeu)} = \subtree{\leftsub{\nodeu}} \cap \solution_{\nodeu}$.
Then, for $\nodev \in \extendedancestors{\nodeu}$:
\begin{align*}
& \partialbenefit{\nodeu}{\solution_{\nodeu} \cup \{\nodev\}} \\
&~~ = \begin{cases} 
   \partialbenefit{\nodeu}{\{\nodeu,\nodev\}} + 
   \partialbenefit{\rightsub{\nodeu}}{\solution_{\rchild(\nodeu)} \cup \{\nodeu\}}  & \\ 
   \quad\quad\quad\quad~\, + \partialbenefit{\leftsub{\nodeu}}{\solution_{\lchild(\nodeu)} \cup \{\nodeu\}}, & 
   \!\!\!\!\text{if } \nodeu \in \solution_{\nodeu} \\
   \partialbenefit{\rightsub{\nodeu}}{\solution_{\rchild(\nodeu)} \cup \{\nodev\}} + 
   \partialbenefit{\leftsub{\nodeu}}{\solution_{\lchild(\nodeu)} \cup \{\nodev\}}, & 
   \!\!\!\!\text{otherwise}.
  \end{cases}
\end{align*}
\end{lemma}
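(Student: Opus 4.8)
The plan is to establish the Additivity lemma by a straightforward case analysis on whether $\nodeu \in \solution_{\nodeu}$, in each case expanding the definition of partial benefit (Definition~\ref{definition:partial-benefit}) and regrouping the sum over $\solution_\nodeu \cup \{\nodev\}$ according to the tree structure. The key structural fact I would use is that $\subtree{\nodeu}$ decomposes as the disjoint union $\{\nodeu\} \cup \subtree{\rightsub{\nodeu}} \cup \subtree{\leftsub{\nodeu}}$ (using the binary-tree assumption, which the excerpt permits without loss of generality), so that any node set $\solution_\nodeu \subseteq \subtree{\nodeu}$ splits as the disjoint union of $\solution_{\rchild(\nodeu)}$, $\solution_{\lchild(\nodeu)}$, and possibly $\{\nodeu\}$. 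The benefit sum then splits accordingly, and the work is to identify, for each summand node $\nodew$, which ``anchor'' ancestor governs $\expuseful{\nodew}{\cdot}$ via Lemma~\ref{lemma:lowestAncestor}, and then recognize each piece as a partial benefit of the appropriate child's subtree augmented with that anchor.

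First I would treat the case $\nodeu \notin \solution_\nodeu$. Here $\solution_\nodeu = \solution_{\rchild(\nodeu)} \sqcup \solution_{\lchild(\nodeu)}$, so
\[
\partialbenefit{\nodeu}{\solution_\nodeu \cup \{\nodev\}} = \sum_{\nodew \in \solution_{\rchild(\nodeu)}} \expuseful{\nodew}{\solution_\nodeu\cup\{\nodev\}}\utility{\nodew} + \sum_{\nodew \in \solution_{\lchild(\nodeu)}} \expuseful{\nodew}{\solution_\nodeu\cup\{\nodev\}}\utility{\nodew}.
\]
For $\nodew \in \subtree{\rightsub{\nodeu}}$, no node of $\subtree{\leftsub{\nodeu}}$ is an ancestor of $\nodew$, and $\nodeu \notin \solution_\nodeu$, so the lowest ancestor of $\nodew$ in $\solution_\nodeu \cup \{\nodev\}$ coincides with its lowest ancestor in $\solution_{\rchild(\nodeu)} \cup \{\nodev\}$; by Lemma~\ref{lemma:lowestAncestor} the expectation is unchanged, and the right-hand sum is exactly $\partialbenefit{\rightsub{\nodeu}}{\solution_{\rchild(\nodeu)} \cup \{\nodev\}}$. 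The symmetric argument handles the left child, giving the ``otherwise'' branch. For the case $\nodeu \in \solution_\nodeu$, I would split $\partialbenefit{\nodeu}{\solution_\nodeu \cup \{\nodev\}}$ into the $\nodew = \nodeu$ term, the sum over $\solution_{\rchild(\nodeu)}$, and the sum over $\solution_{\lchild(\nodeu)}$. The $\nodew=\nodeu$ term is $\expuseful{\nodeu}{\solution_\nodeu\cup\{\nodev\}}\utility{\nodeu}$, which by Lemma~\ref{lemma:lowestAncestor} equals $\expuseful{\nodeu}{\nodev}\utility{\nodeu} = \partialbenefit{\nodeu}{\{\nodeu,\nodev\}}$ (noting $\subtree{\nodeu}\cap\{\nodeu,\nodev\} = \{\nodeu\}$). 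For $\nodew \in \solution_{\rchild(\nodeu)}$, since $\nodeu \in \solution_\nodeu$ lies strictly above $\nodew$ and below $\nodev$, the lowest ancestor of $\nodew$ within $\solution_\nodeu \cup \{\nodev\}$ is the same as within $\solution_{\rchild(\nodeu)} \cup \{\nodeu\}$, so that sum is $\partialbenefit{\rightsub{\nodeu}}{\solution_{\rchild(\nodeu)} \cup \{\nodeu\}}$, and symmetrically for the left child; summing the three pieces yields the first branch.

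The main obstacle, such as it is, is bookkeeping rather than mathematics: I must be careful that Lemma~\ref{lemma:lowestAncestor} applies in each invocation, i.e., that the node playing the role of $\nodev$ in that lemma really is the lowest ancestor in the relevant set with empty path-intersection, and that $\nocap \in \extendedancestors{\nodeu}$ is handled uniformly (the convention $\nodepath{\nodeu}{\nocap} = \ancestors{\nodeu}$ from the excerpt makes the ``no ancestor selected'' subcase go through verbatim). I would also note explicitly that when $\nodeu \in \solution_\nodeu$ the anchor for the child subtrees switches from $\nodev$ to $\nodeu$ precisely because $\nodeu$ is then the lowest selected ancestor of every node in those subtrees — this is the one place where the two branches genuinely differ, and it is exactly what the lemma statement records.
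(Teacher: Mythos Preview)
Your proposal is correct and follows essentially the same approach as the paper's own proof: both arguments perform a case split on whether $\nodeu \in \solution_\nodeu$, expand the partial-benefit sum, use the disjointness of the left and right subtrees to split the sum, and invoke Lemma~\ref{lemma:lowestAncestor} to replace the full solution set by the appropriate anchor ($\nodeu$ or $\nodev$) for each piece. The paper treats only the case $\nodeu \in \solution_\nodeu$ explicitly and omits the other as similar; you spell out both, with slightly more care about the $\nocap$ convention, but there is no substantive difference.
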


\FullOnly{
  \begin{proof}
  We show the result in the case of $\nodeu \in \solution_{\nodeu}$:
  notice that since $\nodeu \in \solution_{\nodeu}$ 
  the node $\nodev$ cannot be the lowest solution ancestor of any node in 
  $\solution_{\rchild(\nodeu)} \cup \solution_{\lchild(\nodeu)}$. 
  Given also that no node in $\solution_{\rchild(\nodeu)}$ can have an ancestor in $\solution_{\lchild(\nodeu)}$ and vice versa, following Lemma~\ref{lemma:lowestAncestor}, we have: 
  \begin{align*}
  & \partialbenefit{\nodeu}{\solution_{\nodeu}  \cup \{\nodev\}} \\
  & \quad =~ \sum_{\nodew \in \solution_{\nodeu}} \expuseful{\nodew}{\solution_{\nodeu} \cup \{\nodev\}} \utility{\nodew} & \\
  & \quad =~ \expuseful{\nodeu}{\nodev} \utility{\nodeu} + \sum_{\nodew \in \solution_{\rchild(\nodeu)}} \expuseful{\nodew}{\solution_{\rchild(\nodeu)} \cup \{\nodeu\}} \utility{\nodew} & \\ 
  & \quad\quad\quad + \sum_{\nodew \in \solution_{\lchild(\nodeu)}} \expuseful{\nodew}{\solution_{\lchild(\nodeu)}\cup \{\nodeu\}} \utility{\nodew}&  \\
  & \quad =~ \partialbenefit{\nodeu}{\{\nodeu,\nodev\}} + \partialbenefit{\rightsub{\nodeu}}{\solution_{\rchild(\nodeu)} \cup \{\nodeu\}} + \partialbenefit{\leftsub{\nodeu}}{\solution_{\lchild(\nodeu)} \cup \{\nodeu\}}.
  \end{align*}
  The case $\nodeu \not\in \solution_{\nodeu}$ is similar 
  and we omit the details for~brevity. 
  \end{proof}
}

\mpara{Dynamic programming.}
Finally, we discuss how to use the structural properties
shown above in order to devise the dynamic-programming algorithm.
We first define the data structures that we use. 
Consider a node \nodeu in the elimination tree, 
a node $\nodev \in\extendedancestors{\nodeu}$, and 
an integer \partialbudget between 1 and $\min\{\budget,|\subtree{\nodeu}|\}$. 
We define
\DPtable{\nodeu}{\partialbudget}{\nodev}
to be the optimal value of partial benefit
\partialbenefit{\nodeu}{\solution} 
over all sets of nodes \solution that
satisfy the following three conditions:
(i) $|\subtree{\nodeu} \cap \solution| \le \partialbudget$;
(ii) $\nodev\in\solution$; and
(iii) $\nodepath{\nodeu}{\nodev} \cap \solution = \emptyset$.
Condition ($i$) states that the node set \solution has at most \partialbudget nodes in the subtree \subtree{\nodeu};
condition ($ii$) states that node \nodev is contained in \solution; and 
condition ($iii$) states that no other node between \nodeu and \nodev is contained in \solution, 
i.e., node \nodev is the lowest ancestor of \nodeu in \solution. 

For all \nodeu, \nodev, \partialbudget, and sets \solution that satisfy conditions ($i$)--($iii$) 
we also define \DPtableInc{\nodeu}{\partialbudget}{\nodev} and 
\DPtableExc{\nodeu}{\partialbudget}{\nodev} 
to denote the optimal partial benefit \partialbenefit{\nodeu}{\solution} 
for the cases when $\nodeu \in \solution$ and $\nodeu \not\in \solution$, respectively. 
Hence, we have 
\begin{align*}
\DPtable{\nodeu}{\partialbudget}{\nodev} = 
\max \left\{\DPtableInc{\nodeu}{\partialbudget}{\nodev}, \DPtableExc{\nodeu}{\partialbudget}{\nodev}  \right\}.
\end{align*}

We assume that the special node \nocap belongs in all solution sets \solution
but does not count towards the size of \solution. 
Notice that \DPtable{\nodeu}{\partialbudget}{\nocap}
is the optimal partial benefit 
\partialbenefit{\nodeu}{\solution} 
for all sets \solution that have at most \partialbudget nodes in \subtree{\nodeu}
and no  ancestor of \nodeu belongs to \solution. We now show how to compute 
\DPtable{\nodeu}{\partialbudget}{\nodev}
for all $\nodeu\in\nodes$, 
$\partialbudget \in\{1,\ldots,\min\{\budget,|\subtree{\nodeu}|\}\}$, 
and $\nodev\in\extendedancestors{\nodeu}$ 
by a bottom-up dynamic-programming algorithm:

\squishlist
\item[1.]
If \nodeu is a leaf of the elimination tree then
\begin{align*}
\DPtableExc{\nodeu}{1}{\nodev} = 0,  
\mbox{ for all } \nodev\in\extendedancestors{\nodeu},
\\
\DPtableInc{\nodeu}{1}{\nodev} = -\infty,  
\mbox{ for all } \nodev\in\extendedancestors{\nodeu}. 
\end{align*}
This initialization enforces leaf nodes not to be selected, 
as they correspond to factors that define the Bayesian network and are part of the input.
\item[2.]
If \nodeu is not a leaf of the elimination tree then
\begin{align*}
& \DPtableInc{\nodeu}{\partialbudget}{\nodev} ~=~ \partialbenefit{\nodeu}{\{\nodeu,\nodev\}} + \\
& \quad\quad\quad\quad  \max_{\partialbudgetl+\partialbudgetr = \partialbudget-1} 
\left\{
\DPtable{\leftsub{\nodeu}}{\partialbudgetl}{\nodeu} +
\DPtable{\rightsub{\nodeu}}{\partialbudgetr}{\nodeu}
\right\}, 
\end{align*}
and
\[
\DPtableExc{\nodeu}{\partialbudget}{\nodev} = \max_{\partialbudgetl+\partialbudgetr = \partialbudget} 
\left\{
\DPtable{\leftsub{\nodeu}}{\partialbudgetl}{\nodev} +
\DPtable{\rightsub{\nodeu}}{\partialbudgetr}{\nodev}
\right\}.
\]
\squishend
%
The value of the optimal solution is returned by 
$\DPtable{\rout}{\budget}{\nocap} = \max \{\DPtableInc{\rout}{\budget}{\nocap}, \DPtableExc{\rout}{\budget}{\nocap} \}$,
where \rout is the root of the elimination tree.
To compute the entries of the table 
\DPtable{\nodeu}{\partialbudget}{\nodev}, 
for all 
$\nodeu\in\nodes$, 
$\partialbudget \in\{1,\ldots,\min\{\budget,|\subtree{\nodeu}|\}\}$, 
and $\nodev\in\extendedancestors{\nodeu}$, 
we proceed in a bottom-up fashion.
For each node \nodeu, 
once all entries for the nodes in the subtree of \nodeu have been computed, 
we compute 
\DPtable{\nodeu}{\partialbudget}{\nodev}, 
for all $\partialbudget \in\{1,\ldots,\min\{\budget,|\subtree{\nodeu}|\}\}$, 
and all $\nodev\in\extendedancestors{\nodeu}$. Hence, computing each entry \DPtable{\nodeu}{\partialbudget}{\nodev}, requires only entries that are already computed.
Once all the entries \DPtable{\nodeu}{\partialbudget}{\nodev} are computed, we 
construct the optimal solution by backtracking --
specifically, invoking the subroutine 
ConstructSolu\-tion$({\rout},{\budget},{\nocap})$;
pseudocode as Algorithm~\ref{alg:constructSolution}.  

\begin{theorem}
\label{theorem:DP-optimality}
The dynamic-programming algorithm 
desc\-rib\-ed above computes correctly the optimal solution $\solution^*$.
\end{theorem}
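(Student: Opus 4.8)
The plan is to establish, by structural induction on the subtree $\subtree{\nodeu}$, the following strengthened invariant: for every node $\nodeu \in \nodes$, every $\partialbudget \in \{1, \dots, \min\{\budget, |\subtree{\nodeu}|\}\}$, and every $\nodev \in \extendedancestors{\nodeu}$,
\[
\DPtable{\nodeu}{\partialbudget}{\nodev} = \max\bigl\{ \partialbenefit{\nodeu}{\solution} \;:\; \solution \subseteq \nodes,\; |\subtree{\nodeu} \cap \solution| \le \partialbudget,\; \nodev \in \solution,\; \nodepath{\nodeu}{\nodev} \cap \solution = \emptyset \bigr\},
\]
and moreover that $\DPtableInc{\nodeu}{\partialbudget}{\nodev}$ and $\DPtableExc{\nodeu}{\partialbudget}{\nodev}$ equal the same maximum, further restricted to sets with $\nodeu \in \solution$, respectively $\nodeu \notin \solution$. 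I would first note that this maximum is well defined: by Lemma~\ref{lemma:optimal-substructure-prep}, $\partialbenefit{\nodeu}{\solution}$ depends on $\solution$ only through $\subtree{\nodeu} \cap \solution$ and the lowest ancestor $\nodev$ of $\nodeu$ in $\solution$, so the objective is genuinely a function of $(\nodeu, \partialbudget, \nodev)$, which is why the table can be indexed this way; the theorem will follow by specializing the invariant to the root.

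For the base case, let $\nodeu$ be a leaf, so $\subtree{\nodeu} = \{\nodeu\}$. Since leaf nodes correspond to the input conditional-probability tables and are never materialized, the only admissible value of $\subtree{\nodeu} \cap \solution$ is $\emptyset$, with partial benefit $0$; this is exactly $\DPtableExc{\nodeu}{1}{\nodev}$, while setting $\DPtableInc{\nodeu}{1}{\nodev} = -\infty$ correctly records that no admissible set contains $\nodeu$. Hence the invariant holds at leaves.

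For the inductive step, let $\nodeu$ be internal with children $\leftsub{\nodeu}$ and $\rightsub{\nodeu}$, and assume the invariant for both children. Fix $\partialbudget$ and $\nodev \in \extendedancestors{\nodeu}$, and take an admissible $\solution$; write $\solution_{\lchild(\nodeu)} = \subtree{\leftsub{\nodeu}} \cap \solution$ and $\solution_{\rchild(\nodeu)} = \subtree{\rightsub{\nodeu}} \cap \solution$. For fixed membership of $\nodeu$, the map $\solution \mapsto (\solution_{\lchild(\nodeu)}, \solution_{\rchild(\nodeu)})$ is a bijection onto pairs admissible for the two children, with budgets summing to at most $\partialbudget$ if $\nodeu \notin \solution$, and to at most $\partialbudget - 1$ (with $\nodeu$ spending one unit) if $\nodeu \in \solution$, and where the ancestor passed to each child is $\nodev$ if $\nodeu \notin \solution$ and $\nodeu$ if $\nodeu \in \solution$ — in the latter case $\nodeu$ becomes the lowest solution ancestor of every node in either child subtree. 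The Additivity Lemma (Lemma~\ref{lemma:additivity}) then says precisely that $\partialbenefit{\nodeu}{\solution}$ decomposes as the sum of the two children's partial benefits (plus the term $\partialbenefit{\nodeu}{\{\nodeu,\nodev\}}$ when $\nodeu \in \solution$), so maximizing over $\solution$ splits into two independent maximizations, each solved optimally by the induction hypothesis, and the optimal budget split is found by the $\max_{\partialbudgetl+\partialbudgetr}$ in the recurrences. This is exactly how $\DPtableInc{\nodeu}{\partialbudget}{\nodev}$ and $\DPtableExc{\nodeu}{\partialbudget}{\nodev}$ are defined, and taking their maximum gives $\DPtable{\nodeu}{\partialbudget}{\nodev}$; the invariant follows. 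Applying it at the root $\rout$ with the sentinel $\nodev = \nocap$ — which imposes the empty ancestor constraint, $\nodepath{\rout}{\nocap} = \ancestors{\rout} = \emptyset$, and does not count toward $\budget$ — and using $\partialbenefit{\rout}{\solution} = \benefit{\solution}$ since $\subtree{\rout} = \nodes$, we obtain $\DPtable{\rout}{\budget}{\nocap} = \max\{\benefit{\solution} : |\solution| \le \budget\}$, the optimum of Problem~\ref{problem:qtm}; and ConstructSolution, which retraces the same recurrences, outputs a set $\solution^*$ attaining this value.

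The main obstacle is the bookkeeping in the inductive step: one must verify that conditions (ii)--(iii) propagate correctly to the children — in particular that the ancestor argument switches from $\nodev$ to $\nodeu$ exactly when $\nodeu$ is selected, and that all feasibility constraints are preserved under the bijection above — and that the decomposition of $\partialbenefit{\nodeu}{\solution}$ into left- and right-subtree contributions is genuinely additive, with no node of one child subtree having a solution ancestor in the other. Both points are supplied by Lemma~\ref{lemma:lowestAncestor} and Lemma~\ref{lemma:additivity}, so once those are invoked the remainder is routine case analysis and budget accounting.
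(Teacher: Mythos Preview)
Your proof is correct and follows essentially the same approach as the paper: both establish that each table entry $\DPtable{\nodeu}{\partialbudget}{\nodev}$ equals the optimal partial benefit over admissible sets, relying on the Additivity Lemma (Lemma~\ref{lemma:additivity}) for the decomposition into left and right subtrees, and then read off the global optimum at the root. Your version is more explicit---you spell out the structural induction and the bijection between admissible sets and pairs of child-admissible sets, whereas the paper's proof is terse and simply cites Lemmas~\ref{lemma:optimal-substructure} and~\ref{lemma:additivity}---but the underlying argument is the same.
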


\FullOnly{
  \begin{proof}
  The correctness of the bottom-up computation of 
  $\DPtableInc{\nodeu}{\partialbudget}{\nodev}$ and 
  $\DPtableExc{\nodeu}{\partialbudget}{\nodev}$ follows from Lemmas~\ref{lemma:optimal-substructure} 
  and~\ref{lemma:additivity}. 
  Once we fill the table, we have, for each node $\nodeu$, 
  the optimal partial benefit for all possible combinations of partial solution size 
  $\partialbudget \in\{1,\ldots,\min\{\budget,|\subtree{\nodeu}|\}\}$ and 
  lowest solution ancestor $\nodev\in\extendedancestors{\nodeu}$ of $\nodeu$
  \begin{align*}
  \DPtable{\nodeu}{\partialbudget}{\nodev} = \underset{\substack{\solution_\nodeu \subseteq \subtree{\nodeu} \\ 
  \lvert\solution_\nodeu\rvert = \partialbudget}}{\max~} \partialbenefit{\nodeu}{\solution_\nodeu \cup \{\nodev\}}.
  \end{align*}
  Moreover, each entry $\DPtable{\nodeu}{\partialbudget}{\nodev}$ indicates whether $\nodeu$ would be included in any solution $\solution$ in which ($i$) $\partialbudget$ nodes are selected from $\subtree{\nodeu}$ into $\solution$ and ($ii$) $\nodev \in \solution$  and $\nodepath{\nodeu}{\nodev} \cap \solution = \emptyset$. Once we fill all the entries of the table, the optimal solution is constructed by Algorithm~\ref{alg:constructSolution} that performs a BFS traversal of the tree: the decision to select each visited node into $\solution^*$ is given based on its inclusion state indicated by the entry $\DPtable{\nodeu}{\partialbudget_\nodeu^*}{\lsa{\nodeu}{\solution^*}}$, where $\lsa{\nodeu}{\solution^*}$ is the lowest ancestor of $\nodeu$ in solution $\solution^*$ that is added to the solution before visiting $\nodeu$ and $\partialbudget_\nodeu^*$  is the optimal partial budget allowance for $\subtree{\nodeu}$, which are both determined by the decisions taken in previous layers before visiting node $\nodeu$. 
  \end{proof}
}

We note that optimality holds for the elimination order \eliminationOrder which, as explained in Section~\ref{sec:setting} is given as an input to the problem.  
Notice that for each node \nodeu
the computation of the entries \DPtable{\nodeu}{\partialbudget}{\nodev} 
requires the computation of partial benefit values \partialbenefit{\nodeu}{\{\nodeu, \nodev\}} 
for pairs of nodes (\nodeu, \nodev), which in turn, 
require access to or computation of values \expuseful{\nodeu}{\nodev}. 
As Lemma~\ref{lemma:usefulnessRelation} below shows, the latter quantity can be computed from 
$\expuseful{\nodeu}{\emptyset}$ and $\expuseful{\nodev}{\emptyset}$, 
for all $\nodeu \in \nodes$ and $\nodev \in \ancestors{\nodeu}$.
In practice, it is reasonable to consider a setting 
where one has used historical query logs to learn empirical values 
for $\expuseful{\nodeu}{\emptyset}$ 
and thus for \expuseful{\nodeu}{\nodev}. 

\begin{algorithm}[!t]
\begin{small}
\begin{algorithmic}[1]
\If{$\DPtable{\nodeu}{\partialbudget}{\nodev} = \DPtableInc{\nodeu}{\partialbudget}{\nodev}$}
	\State {\bf print } $\nodeu$
	\If{$\partialbudget = 1$}
		\State \Return 
	\EndIf
	\State $(\partialbudgetl^*,\partialbudgetr^*) \leftarrow \underset{\partialbudgetl + \partialbudgetr = \partialbudget - 1}{\arg\max~} \DPtable{\leftsub{\nodeu}}{\partialbudgetl}{\nodeu} + \DPtable{\rightsub{\nodeu}}{\partialbudgetr}{\nodeu}$
	\State	ConstructSolution$({\leftsub{\nodeu}},{\partialbudgetl^*},{\nodeu})$
	\State	ConstructSolution$({\rightsub{\nodeu}},{\partialbudgetr^*},{\nodeu})$ 
\Else
	\State 
	$(\partialbudgetl^*,\partialbudgetr^*) \leftarrow \underset{\partialbudgetl + \partialbudgetr = \partialbudget}{\arg\max~} \DPtable{\leftsub{\nodeu}}{\partialbudgetl}{\nodev} + 	\DPtable{\rightsub{\nodeu}}{\partialbudgetr}{\nodev}$ 
	\State	ConstructSolution$({\leftsub{\nodeu}},{\partialbudgetl^*},{\nodev})$ 
	\State ConstructSolution$({\rightsub{\nodeu}},{\partialbudgetr^*},{\nodev})$ 
\EndIf
\end{algorithmic}
\caption{\label{alg:constructSolution}ConstructSolution$({\nodeu},{\partialbudget},{\nodev})$}
\end{small}
\end{algorithm}

%

\begin{lemma}
\label{lemma:usefulnessRelation} 
Let $\nodeu \in \nodes$ be a given node in an elimination tree \tree and 
let $\nodev \in \ancestors{\nodeu}$ denote an ancestor of $\nodeu$. Then,
\begin{align*}
\expuseful{\nodeu}{\nodev} = \expuseful{\nodeu}{\emptyset} - \expuseful{\nodev}{\emptyset}.
\end{align*}
\end{lemma}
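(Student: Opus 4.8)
The plan is to translate both sides of the identity into probabilities of events defined purely by the usefulness conditions of Definition~\ref{definition:usefulness}, and then observe that along a root-path the variable sets are nested, so that one event is contained in the other and ordinary inclusion–exclusion collapses the difference. No assumption on the query distribution is needed.

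First I would unfold the three quantities involved. Recall that $\expuseful{\nodeu}{\solution}$ is the probability that a random query $\query$ makes $\nodeu$ useful with respect to $\solution$. For the reference ``set'' $\emptyset$ (equivalently, the special node \nocap, so that $\nodeu$ has no materialized ancestor), conditions~($i$) and~($iii$) of Definition~\ref{definition:usefulness} are vacuous, hence $\expuseful{\nodeu}{\emptyset}=\prob{\vars{\nodeu}\subseteq\qvarmiss}$ and likewise $\expuseful{\nodev}{\emptyset}=\prob{\vars{\nodev}\subseteq\qvarmiss}$. For $\expuseful{\nodeu}{\nodev}$, we are in the situation of Lemma~\ref{lemma:lowestAncestor} with $\nodev$ the lowest materialized ancestor of $\nodeu$: here $\nodeu$ is useful exactly when $\vars{\nodeu}\subseteq\qvarmiss$ and, in addition, $\nodev$ itself fails condition~($ii$), i.e.\ $\vars{\nodev}\not\subseteq\qvarmiss$ (any ancestor of $\nodeu$ in the solution contains $\vars{\nodev}$, so $\vars{\nodev}\not\subseteq\qvarmiss$ already guarantees condition~($iii$)). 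Thus $\expuseful{\nodeu}{\nodev}=\prob{\vars{\nodeu}\subseteq\qvarmiss,\ \vars{\nodev}\not\subseteq\qvarmiss}$.

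Next comes the structural step. Since $\nodev\in\ancestors{\nodeu}$, we have $\subtree{\nodeu}\subseteq\subtree{\nodev}$ and therefore $\vars{\nodeu}\subseteq\vars{\nodev}$; consequently the event $\{\vars{\nodev}\subseteq\qvarmiss\}$ is contained in the event $\{\vars{\nodeu}\subseteq\qvarmiss\}$. Splitting on whether $\vars{\nodev}\subseteq\qvarmiss$,
\[
\prob{\vars{\nodeu}\subseteq\qvarmiss} = \prob{\vars{\nodeu}\subseteq\qvarmiss,\ \vars{\nodev}\subseteq\qvarmiss} + \prob{\vars{\nodeu}\subseteq\qvarmiss,\ \vars{\nodev}\not\subseteq\qvarmiss},
\]
and by the containment the first term on the right equals $\prob{\vars{\nodev}\subseteq\qvarmiss}$. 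Rearranging yields $\prob{\vars{\nodeu}\subseteq\qvarmiss,\ \vars{\nodev}\not\subseteq\qvarmiss} = \prob{\vars{\nodeu}\subseteq\qvarmiss} - \prob{\vars{\nodev}\subseteq\qvarmiss}$, which is precisely $\expuseful{\nodeu}{\nodev} = \expuseful{\nodeu}{\emptyset} - \expuseful{\nodev}{\emptyset}$ after the identifications above.

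The only genuinely delicate point is the bookkeeping in the second paragraph: carefully checking that ``$\nodeu$ useful with respect to $\{\nodeu,\nodev\}$'' is equivalent to the event $\{\vars{\nodeu}\subseteq\qvarmiss\}\setminus\{\vars{\nodev}\subseteq\qvarmiss\}$ and that $\expuseful{\cdot}{\emptyset}$ really drops conditions~($i$) and~($iii$). Once those translations are in hand, the remainder is the single containment $\vars{\nodeu}\subseteq\vars{\nodev}$ plus the one-line inclusion–exclusion, so I do not expect any further obstacle.
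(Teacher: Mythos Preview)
Your proposal is correct and follows essentially the same route as the paper: both arguments identify $\expuseful{\nodeu}{\emptyset}$ with $\prob{\vars{\nodeu}\subseteq\qvarmiss}$, use the containment $\vars{\nodeu}\subseteq\vars{\nodev}$ coming from $\subtree{\nodeu}\subseteq\subtree{\nodev}$ to get $\{\vars{\nodev}\subseteq\qvarmiss\}\subseteq\{\vars{\nodeu}\subseteq\qvarmiss\}$, and then subtract. Your presentation is slightly more explicit in writing out the probability split, but the underlying idea is identical.
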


\FullOnly{
  \begin{proof}
  Notice that for any possible query $\query$, 
  whenever $\vars{\nodev} \subseteq \qvarmiss$, 
  we also have $\vars{\nodeu} \subseteq \qvarmiss$, 
  since $\subtree{\nodeu} \subseteq \subtree{\nodev}$. 
  This suggests that given any query $\query$ for which $\isuseful{\query}{\nodev}{\emptyset} = 1$, 
  we also have $\isuseful{\query}{\nodeu}{\emptyset} = 1$. 
  On the other hand, when $\isuseful{\query}{\nodeu}{\emptyset} = 1$, 
  there can be two cases: 
  ($i$) $\vars{\nodev} \subseteq \qvarmiss$, which implies ${\isuseful{\query}{\nodeu}{\nodev} = 0}$, and 
  ($ii$) there exists a node $\nodew \in \subtree{\nodev} \setminus \subtree{\nodeu}$ 
  such that $ \vars{\nodew} \not\subseteq \qvarmiss$, 
  which implies ${\isuseful{\query}{\nodeu}{\nodev} = 1}$. 
  The latter suggests that the event $[\isuseful{\query}{\nodev}{\emptyset} = 1]$ 
  occurs for a subset of queries \query for which the event $[\isuseful{\query}{\nodeu}{\emptyset} = 1]$ occurs. 
  The lemma follows. 
  \end{proof}
}

Finally, the running time of the algorithm can be easily derived by the time needed to compute all entries of the dynamic-programming table.
We note that the efficiency of the algorithm, as analyzed in Theorem~\ref{theorem:DP-running-time}, makes it practical to update the materialized factors whenever the input Bayesian network is updated. 


\begin{theorem}
\label{theorem:DP-running-time}
The running time of the dynamic-program\-ming algorithm 
is $\bigO{\nonodes\height\budget^2}$, 
where \nonodes is the number of nodes in the elimination tree,
\height is its height, 
and \budget is the number of nodes to materialize.
\end{theorem}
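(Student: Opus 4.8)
The plan is the standard \emph{(number of table entries)} $\times$ \emph{(work per entry)} accounting for the dynamic program, plus a check that the pre- and post-processing do not dominate. First I would bound how many distinct table cells the algorithm ever fills. An entry $\DPtable{\nodeu}{\partialbudget}{\nodev}$ (and likewise $\DPtableInc{\cdot}{\cdot}{\cdot}$ and $\DPtableExc{\cdot}{\cdot}{\cdot}$) is indexed by a node $\nodeu\in\nodes$, for which there are \nonodes\ choices; a partial budget $\partialbudget\in\{1,\dots,\min\{\budget,|\subtree{\nodeu}|\}\}$, for which there are at most \budget\ choices; and a node $\nodev\in\extendedancestors{\nodeu}=\ancestors{\nodeu}\cup\{\nocap\}$. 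The key observation is that every ancestor of $\nodeu$ lies on the path from $\nodeu$ to the root \rout, which has length at most \height, so $|\extendedancestors{\nodeu}|\le\height+1$. Hence the table has $\bigO{\nonodes\,\height\,\budget}$ cells.

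Next I would bound the cost of computing one cell from cells already available. For a leaf $\nodeu$, the initializations $\DPtableExc{\nodeu}{1}{\nodev}=0$ and $\DPtableInc{\nodeu}{1}{\nodev}=-\infty$ cost $\bigO{1}$ each, and there are $\bigO{\height}$ values of $\nodev$. For an internal node, the recurrence for $\DPtableInc{\nodeu}{\partialbudget}{\nodev}$ maximizes over ordered pairs $(\partialbudgetl,\partialbudgetr)$ with $\partialbudgetl+\partialbudgetr=\partialbudget-1$, and the one for $\DPtableExc{\nodeu}{\partialbudget}{\nodev}$ over pairs with $\partialbudgetl+\partialbudgetr=\partialbudget$; in both cases there are at most $\budget+1$ such pairs, and each summand is a lookup into previously computed entries of $\DPtable{\leftsub{\nodeu}}{\cdot}{\cdot}$ and $\DPtable{\rightsub{\nodeu}}{\cdot}{\cdot}$ together with the additive term $\partialbenefit{\nodeu}{\{\nodeu,\nodev\}}$. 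Assuming the $\partialbenefit{\nodeu}{\{\nodeu,\nodev\}}$ values are precomputed, each cell costs $\bigO{\budget}$. Multiplying, the table is filled in $\bigO{\nonodes\,\height\,\budget}\cdot\bigO{\budget}=\bigO{\nonodes\,\height\,\budget^2}$ time, which is the claimed bound.

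It then remains to argue the auxiliary work is absorbed. The total costs $\utility{\nodeu}$ for all $\nodeu$ are obtained by one bottom-up pass over \tree\ in $\bigO{\nonodes}$ time (Definition~\ref{definition:utility}). Starting from the values $\expuseful{\nodeu}{\emptyset}$ — estimated from query logs, or computed once by scanning the query distribution — Lemma~\ref{lemma:usefulnessRelation} yields $\expuseful{\nodeu}{\nodev}=\expuseful{\nodeu}{\emptyset}-\expuseful{\nodev}{\emptyset}$ in $\bigO{1}$ per $(\nodeu,\nodev)$ pair, so all the $\bigO{\nonodes\,\height}$ relevant values $\expuseful{\nodeu}{\nodev}$, and hence the terms $\partialbenefit{\nodeu}{\{\nodeu,\nodev\}}=\expuseful{\nodeu}{\nodev}\,\utility{\nodeu}$, are produced in $\bigO{\nonodes\,\height}$ time. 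Finally, ConstructSolution (Algorithm~\ref{alg:constructSolution}) visits each node once and performs one further $\bigO{\budget}$ split-maximization per node, totalling $\bigO{\nonodes\,\budget}$. All of these terms are $\bigO{\nonodes\,\height\,\budget^2}$, giving the stated running time; here \nonodes\ and \height\ refer to the binary elimination tree actually processed by the algorithm, into which a $d$-ary elimination tree is converted with only a constant-factor increase in the number of nodes.

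This is essentially a bookkeeping argument, so I do not anticipate a real obstacle. The two points that need care are: (i) the bound $|\extendedancestors{\nodeu}|\le\height+1$, which is precisely what replaces a naive $\nonodes^2$ dependence on the ancestor index by the tighter $\nonodes\,\height$ factor; and (ii) charging the $\bigO{\budget}$ split-maximization once per filled cell rather than once per $(\partialbudgetl,\partialbudgetr)$ pair, which requires noting that each $\DPtable{\leftsub{\nodeu}}{\partialbudgetl}{\cdot}+\DPtable{\rightsub{\nodeu}}{\partialbudgetr}{\cdot}$ summand is a constant-time lookup.
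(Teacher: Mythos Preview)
Your argument is correct and follows essentially the same approach as the paper: count the $\bigO{\nonodes\,\height\,\budget}$ table entries, charge $\bigO{\budget}$ work per entry for the split-maximization, and multiply. Your additional accounting for the precomputation of $\utility{\cdot}$ and $\expuseful{\cdot}{\cdot}$ and for the backtracking pass is more detailed than the paper's own proof, but it is accurate and only strengthens the presentation.
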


\FullOnly{
\begin{proof}
Notice that we have $\bigO{\nonodes \height \budget}$ subproblems, 
where each subproblem corresponds to an entry \DPtable{\nodeu}{\partialbudget}{\nodev} of the three-dimensional table. To fill each entry of the table, we need to compute the two distinct values of \partialbudgetr and \partialbudgetl that maximize \DPtableInc{\nodeu}{\partialbudget}{\nodev} (subject to $\partialbudgetr + \partialbudgetl = \partialbudget - 1$) and \DPtableExc{\nodeu}{\partialbudget}{\nodev} (subject to $\partialbudgetr + \partialbudgetl = \partialbudget$), respectively. 
Thus, it takes $\bigO{\budget}$ time to fill each entry of the table in a bottom-up fashion, hence, 
the overall running time is $\bigO{\nonodes\height\budget^2}$.  
\end{proof}
}

\ReviewOnly{
\eat{\subsection{Greedy algorithm}
\label{sec:greedy}
\CA{In this section, 
we first point out that the benefit function 
$\benefitfn: 2^{\nodes} \rightarrow \posreals$ is monotone and submodular. 
We then exploit these properties to provide a greedy algorithm 
that achieves an approximation guarantee of $(1 - \frac{1}{e})$. 
In the discussion that follows, we'll be using the notion of 
{\it marginal benefit} to refer to the benefit we gain by adding 
one extra node to the solution set.}

\CA{\begin{definition}[Marginal Benefit]
\label{definition:partial-benefit}
Consider an elimination tree $\tree=(\nodes,\edges)$, 
a set of nodes $\solution\subseteq\nodes$, 
a node $\nodeu \in \nodes \setminus \solution$, and 
a probability distribution $\prob{\query}$
over the set of all possible queries. 
The marginal benefit $\mgbenefit{\nodeu}{\solution}$ of the node $\nodeu$ 
with respect to the solution set $\solution$ is defined as:
\[
\mgbenefit{\nodeu}{\solution} = \benefit{\solution \cup \{\nodeu\}} - \benefit{\solution}.
\]
\end{definition}
Marginal benefits can be computed via the closed-form expression provided by the following Lemma. 
\begin{lemma}\label{lemma:mgBenefit}
Consider an elimination tree $\tree=(\nodes,\edges)$, 
a set of nodes $\solution\subseteq\nodes$, 
a node $\nodeu \in \nodes \setminus \solution$, 
and a probability distribution $\prob{\query}$
over the set of all possible queries. 
Let $\hsd{\nodeu}{\solution} = \{\nodev \mid \nodev \in \subtree{\nodeu} \cap \solution 
\text{ and } \nodepath{\nodev}{\nodeu} \cap \solution = \emptyset \}$ 
denote the set of descendants of \nodeu in set \solution 
whose lowest ancestor in $\solution \cup \{\nodeu\}$ is \nodeu, and let 
$\lsa{\nodeu}{\solution} \in \extendedancestors{\nodeu}$ denote the lowest ancestor of \nodeu in \solution. 
Then, the marginal benefit $\mgbenefit{\nodeu}{\solution}$ of node \nodeu 
with respect to the set \solution is given by: 
\begin{align}\label{eq:mgBenefit}
\mgbenefit{\nodeu}{\solution} = 
\expuseful{\nodeu}{\lsa{\nodeu}{\solution}} \left(\utility{\nodeu} - 
\sum_{\nodev \in  \hsd{\nodeu}{\solution}} \utility{\nodev} \right). 
\end{align}
\end{lemma}}

\CA{The main result of this section is the following.
\begin{lemma}\label{lemma:submodular}
The benefit function $\benefitfn: 2^{\nodes} \rightarrow \posreals$ is monotone and submodular. 
\end{lemma}}

\CA{Consider now the greedy algorithm that creates a solution set incrementaly,
each time adding the node with the highest marginal benefit into the solution set 
until the cardinality budget is consumed. Then, it follows from Lemma~\ref{lemma:submodular} and the classic result of Nemhauser et al.~\cite{nemhauser1978analysis} that the greedy algorithm for Problem~\ref{problem:qtm} offers a $(1 - 1/e)$ approximation guarantee.}

}
\subsection{Greedy algorithm}
\label{sec:greedy}

The benefit function is non-negative, non-decreasing, 
and submodular. Hence, it follows from the classic result of Nemhauser et al.~\cite{nemhauser1978analysis} that the greedy algorithm for Problem~\ref{problem:qtm} offers a $(1 - 1/e)$ approximation guarantee.
Technical details and proofs are in the extended version of the paper~\cite{aslay2020query}.

}
\FullOnly{
  \subsection{Greedy algorithm}
\label{sec:greedy}

In this section, 
we first point out that the benefit function 
$\benefitfn: 2^{\nodes} \rightarrow \posreals$ is monotone and submodular. 
We then exploit these properties to provide a greedy algorithm 
that achieves an approximation guarantee of $(1 - \frac{1}{e})$. 
In the discussion that follows, we'll be using the notion of 
{\it marginal benefit} to refer to the benefit we gain by adding 
one extra node to the solution set.

\begin{definition}[Marginal Benefit]
\label{definition:partial-benefit}
Consider an elimination tree $\tree=(\nodes,\edges)$, 
a set of nodes $\solution\subseteq\nodes$, 
a node $\nodeu \in \nodes \setminus \solution$, and 
a probability distribution $\prob{\query}$
over the set of all possible queries. 
The marginal benefit $\mgbenefit{\nodeu}{\solution}$ of the node $\nodeu$ 
with respect to the solution set $\solution$ is defined as:
\[
\mgbenefit{\nodeu}{\solution} = \benefit{\solution \cup \{\nodeu\}} - \benefit{\solution}.
\]
\end{definition}
Marginal benefits can be computed via the closed-form expression provided by the following Lemma. 
\begin{lemma}\label{lemma:mgBenefit}
Consider an elimination tree $\tree=(\nodes,\edges)$, 
a set of nodes $\solution\subseteq\nodes$, 
a node $\nodeu \in \nodes \setminus \solution$, 
and a probability distribution $\prob{\query}$
over the set of all possible queries. 
Let $\hsd{\nodeu}{\solution} = \{\nodev \mid \nodev \in \subtree{\nodeu} \cap \solution 
\text{ and } \nodepath{\nodev}{\nodeu} \cap \solution = \emptyset \}$ 
denote the set of descendants of \nodeu in set \solution 
whose lowest ancestor in $\solution \cup \{\nodeu\}$ is \nodeu, and let 
$\lsa{\nodeu}{\solution} \in \extendedancestors{\nodeu}$ denote the lowest ancestor of \nodeu in \solution. 
Then, the marginal benefit $\mgbenefit{\nodeu}{\solution}$ of node \nodeu 
with respect to the set \solution is given by: 
\begin{align}\label{eq:mgBenefit}
\mgbenefit{\nodeu}{\solution} = 
\expuseful{\nodeu}{\lsa{\nodeu}{\solution}} \left(\utility{\nodeu} - 
\sum_{\nodev \in  \hsd{\nodeu}{\solution}} \utility{\nodev} \right). 
\end{align}
\end{lemma}

\FullOnly{
\begin{proof}
  Notice that for all nodes $\nodev \in \solution \setminus \hsd{\nodeu}{\solution}$ 
  the lowest ancestor $\lsa{\nodev}{\solution}$ of \nodev  in \solution remains unchanged in 
  $\solution \cup \{\nodeu\}$. 
  On the other hand, for each node $\nodev \in \hsd{\nodeu}{\solution}$ 
  we have $\lsa{\nodev}{\solution} = \lsa{\nodeu}{\solution}$. 
  Thus, using Lemmas~\ref{lemma:lowestAncestor} and~\ref{lemma:usefulnessRelation}, we have:  
  \begin{align*}
  \mgbenefit{\nodeu}{\solution} 
  & = \benefit{\solution \cup \{\nodeu\}} - \benefit{\solution} \\
  & = \sum_{\nodev \in \solution \cup \{\nodeu\}} \expuseful{\nodev}{\solution \cup \{\nodeu\}} \utility{\nodev} \\
  & \;\;- \sum_{\nodev \in\solution} \expuseful{\nodev}{\solution} \utility{\nodev} \\ 
  & =\expuseful{\nodeu}{\solution \cup \{\nodeu\}} \utility{\nodeu} \\ 
  & \;\;+ \sum_{\nodev \in  \hsd{\nodeu}{\solution}} \left(\expuseful{\nodev}{\solution \cup \{\nodeu\}} - \expuseful{\nodev}{\solution}\right) \utility{\nodev}  \\
  & \;\;+ \sum_{\nodev \in  \solution \setminus \hsd{\nodeu}{\solution}} \left(\expuseful{\nodev}{\solution \cup \{\nodeu\}} - \expuseful{\nodev}{\solution}\right) \utility{\nodev} \\
  & \stackrel{L.\ref{lemma:lowestAncestor}}{=} \expuseful{\nodeu}{\lsa{\nodeu}{\solution}} \utility{\nodeu} \\ 
  & \;\;+ \sum_{\nodev \in  \hsd{\nodeu}{\solution}} \left(\expuseful{\nodev}{\nodeu} - \expuseful{\nodev}{\lsa{\nodeu}{\solution}}\right) \utility{\nodev} \\
  & \stackrel{L.\ref{lemma:usefulnessRelation}}{=} 
   \expuseful{\nodeu}{\lsa{\nodeu}{\solution}} \left(\utility{\nodeu} - \sum_{\nodev \in  \hsd{\nodeu}{\solution}} \utility{\nodev}\right).
  \end{align*}
  \end{proof}
}

The main result of this section is the following.
\begin{lemma}\label{lemma:submodular}
The benefit function $\benefitfn: 2^{\nodes} \rightarrow \posreals$ is monotone and submodular. 
\end{lemma}

\FullOnly{
  \begin{proof}
  We will first show that the benefit function $\benefitfn$ is monotone, i.e., $\mgbenefit{\nodeu}{\solution} \ge 0$ for any given $\solution \subseteq \nodes$ and $\nodeu \in \nodes \setminus \solution$. 
  In light of Lemma~\ref{lemma:mgBenefit},
  it suffices to show that $\utility{\nodeu} \ge \sum_{\nodev \in  \hsd{\nodeu}{\solution}} \utility{\nodev}$ for any $\nodeu$ and for any possible set $\hsd{\nodeu}{\solution}$ of its descendants whose lowest ancestor in $\solution \cup \{\nodeu\}$ is $\nodeu$. Notice that for any node $\nodev \in \hsd{\nodeu}{\solution}$, by definition we have $\nodepath{\nodev}{\nodeu} \cap \solution = \emptyset$, which means that no ascendant or descendant of $\nodev$ can be in $\hsd{\nodeu}{\solution}$. Now, remember that by definition we have: 
  $$\utility{\nodeu} = \sum_{\substack{\nodex\in\subtree{\nodeu}}} \cost{\nodex}.$$
  Using this definition, we equivalently have 
  \begin{align} \label{eq:utilityRelation}
  \utility{\nodeu} = \cost{\nodeu} + \utility{\leftsub{\nodeu}} + \utility{\rightsub{\nodeu}},
  \end{align}
  which implies that the utility of a parent node is always greater than the sum of the utilities of its children. Given also that for any $\nodev \in \hsd{\nodeu}{\solution}$, no ascendant or descendant of $\nodev$ can be in $\hsd{\nodeu}{\solution}$, we have 
  $$\sum_{\nodev \in \hsd{\nodeu}{\solution} \cap \subtree{\leftsub{\nodeu}}} \utility{\nodev} \le \utility{\leftsub{\nodeu}} \text{ and } \sum_{\nodev \in \hsd{\nodeu}{\solution} \cap \subtree{\rightsub{\nodeu}}} \utility{\nodev} \le \utility{\rightsub{\nodeu}}.$$
  and therefore
  \begin{align*}
  \utility{\nodeu} 
  & \ge \utility{\leftsub{\nodeu}} + \utility{\rightsub{\nodeu}} \\
  & \ge \sum_{\nodev \in \hsd{\nodeu}{\solution} \cap \subtree{\leftsub{\nodeu}}} \utility{\nodev} + \sum_{\nodev \in \hsd{\nodeu}{\solution} \cap \subtree{\rightsub{\nodeu}}} \utility{\nodev} 
    = \sum_{\nodev \in  \hsd{\nodeu}{\solution}} \utility{\nodev},
  \end{align*}
  concluding the proof of monotonicity.

  We proceed to show that $\benefitfn$ is submodular, i.e., that for any $\solution \subseteq S \subseteq \nodes$ and $\nodeu \in \nodes \setminus S$, we have $\mgbenefit{\nodeu}{\solution} \ge \mgbenefit{\nodeu}{S}$. For any given $\solution$ and node $\nodew \not \in \solution$, let $S = \solution \cup \{\nodew\}$. We consider two cases: (i) $\nodew \in \ancestors{\nodeu}$, or (ii) $\nodew \in \subtree{\nodeu}$. Notice that the case of $\nodew$ being neither an ancestor or descendant of $\nodeu$ is trivial, since we would then have $\mgbenefit{\nodeu}{\solution} = \mgbenefit{\nodeu}{S}$. 

  First consider the case $\nodew \in \ancestors{\nodeu}$. In this case, it could be that either (i) $\nodew \in \nodepath{\nodeu}{\lsa{\nodeu}{\solution}}$, which implies $\nodew = \lsa{\nodeu}{S}$, or (ii) $\nodew \in \nodepath{\lsa{\nodeu}{\solution}}{\nocap}$ which means that the lowest ancestor of node $\nodeu$ in $S$ is the same as in $\solution$. It is easy to see that in the latter case we have $\mgbenefit{\nodeu}{\solution} = \mgbenefit{\nodeu}{S}$, hence, we only consider the case in former. Notice that $\nodew = \lsa{\nodeu}{S}$ implies that $\lsa{\nodeu}{\solution} \in \ancestors{\nodew}$, which, by Lemma~\ref{lemma:usefulnessRelation}, further implies that $\expuseful{\lsa{\nodeu}{\solution}}{\emptyset} \le \expuseful{\nodew}{\emptyset}$. Then we have:
  \begin{eqnarray*}
  \expuseful{\nodeu}{\lsa{\nodeu}{\solution}} 
  &  =  &  \expuseful{\nodeu}{\emptyset} - \expuseful{\lsa{\nodeu}{\solution}}{\emptyset} \\ 
  & \ge & \expuseful{\nodeu}{\emptyset}  - \expuseful{\nodew}{\emptyset} \\
  &  =  & \expuseful{\nodeu}{\nodew}. 
  \end{eqnarray*}
  Hence, by Lemma~\ref{lemma:mgBenefit} we have: 
  \begin{eqnarray*}
  \mgbenefit{\nodeu}{\solution} 
  &  =  & \expuseful{\nodeu}{\lsa{\nodeu}{\solution}} \left(\utility{\nodeu} - \sum_{\nodev \in  \hsd{\nodeu}{\solution}} \utility{\nodev} \right) \\ 
  & \ge &\expuseful{\nodeu}{\nodew} \left(\utility{\nodeu} - \sum_{\nodev \in  \hsd{\nodeu}{\solution}} \utility{\nodev} \right) \\ 
  & = &  \mgbenefit{\nodeu}{S}.
  \end{eqnarray*}

  Now we consider the case $\nodew \in \subtree{\nodeu}$. In this case, if $\nodew \not\in \hsd{\nodeu}{S}$ then it trivially follows that $\mgbenefit{\nodeu}{\solution} = \mgbenefit{\nodeu}{S}$, hence, we only consider the case in which $\nodew \in \hsd{\nodeu}{S}$. Notice that if $\nodew \in \hsd{\nodeu}{S}$ then either (i) $\subtree{\nodew} \cap \hsd{\nodeu}{\solution} = \emptyset$, or (ii) $\subtree{\nodew} \cap \hsd{\nodeu}{\solution} \neq \emptyset$. First, consider the case $\subtree{\nodew} \cap \hsd{\nodeu}{\solution} = \emptyset$ which implies that $\hsd{\nodeu}{S} = \hsd{\nodeu}{\solution} \cup \{\nodew\}$. Then we have: 
  \begin{eqnarray*}
  \mgbenefit{\nodeu}{\solution}  
  &  =  & \expuseful{\nodeu}{\lsa{\nodeu}{\solution}} \left(\utility{\nodeu} - \sum_{\nodev \in  \hsd{\nodeu}{\solution}} \utility{\nodev} \right) \\ 
  & \ge & \expuseful{\nodeu}{\lsa{\nodeu}{\solution}} \left(\utility{\nodeu} - \sum_{\nodev \in  \hsd{\nodeu}{\solution} \cup \{\nodew\}} \utility{\nodev} \right) \\
  & = & \mgbenefit{\nodeu}{S}. 
  \end{eqnarray*}
  Next, consider the case $\subtree{\nodew} \cap \hsd{\nodeu}{\solution} \neq \emptyset$. 
  In this case, it holds that $\hsd{\nodew}{\solution} = \{\nodev \in \hsd{\nodeu}{\solution}: \nodepath{\nodev}{\nodeu} \cap S = \nodew\}$ 
  and $\hsd{\nodeu}{S} = (\hsd{\nodeu}{\solution} \setminus \hsd{\nodew}{\solution}) \cup \{\nodew\}$. Remember that, as given by Eq.~\ref{eq:utilityRelation}, the utility of a parent node is always greater than the sum of utilities of its children. Given also that, for any $\nodev \in \hsd{\nodew}{\solution}$, no ascendant or descendant of $\nodev$ can be in $\hsd{\nodew}{\solution}$, we have 
  $$\utility{\nodew} \ge \sum_{\nodev \in \hsd{\nodew}{\solution}} \utility{\nodev},$$ which implies that $\sum_{\nodev \in  \hsd{\nodeu}{\solution}} \utility{\nodev} \le \sum_{\nodev \in  \hsd{\nodeu}{S}} \utility{\nodev}$, since we have $\hsd{\nodeu}{S} = (\hsd{\nodeu}{\solution} \setminus \hsd{\nodew}{\solution}) \cup \{\nodew\}$. Thus, we have: 
  \begin{eqnarray*}
  \mgbenefit{\nodeu}{\solution}  
  &  =  & \expuseful{\nodeu}{\lsa{\nodeu}{\solution}} \left(\utility{\nodeu} - \sum_{\nodev \in  \hsd{\nodeu}{\solution}} \utility{\nodev} \right) \\ 
  & \ge & \expuseful{\nodeu}{\lsa{\nodeu}{\solution}} \left(\utility{\nodeu} - \sum_{\nodev \in  \hsd{\nodeu}{S}} \utility{\nodev} \right) \\ 
  &  =  &\mgbenefit{\nodeu}{S}.
  \end{eqnarray*}
  This concludes the proof.
  \end{proof}
}

Consider now the greedy algorithm that creates a solution set incrementaly,
each time adding the node with the highest marginal benefit into the solution set 
until the cardinality budget is consumed, as shown in Algorithm~\ref{alg:greedy}. 
It is easy to show that the algorithm comes with a constant factor approximation guarantee.
\begin{theorem} 
\label{theo:greedy}
Algorithm~\ref{alg:greedy} achieves an approximation guarantee of $(1 - 1/e)$. 
\end{theorem}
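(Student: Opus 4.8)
The plan is to reduce Theorem~\ref{theo:greedy} directly to the classic guarantee of Nemhauser et al.~\cite{nemhauser1978analysis} for greedy maximization of a monotone submodular function under a cardinality constraint. First I would observe that Problem~\ref{problem:qtm} is exactly the problem of maximizing the set function $\benefitfn : 2^{\nodes} \to \posreals$ over all $\solution \subseteq \nodes$ with $|\solution| \le \budget$, and that $\benefitfn$ satisfies every hypothesis this guarantee requires: it is non-negative (its codomain is $\posreals$, which is also immediate from Definition~\ref{definition:benefit}), it is normalized since $\benefit{\emptyset} = 0$ because the empty materialization is useful for no query, and it is non-decreasing and submodular by Lemma~\ref{lemma:submodular}.

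Next I would check that Algorithm~\ref{alg:greedy} is precisely the standard greedy procedure analyzed in~\cite{nemhauser1978analysis}: starting from $\solution = \emptyset$, it repeats $\budget$ times the step of adding to $\solution$ a node $\nodeu \in \nodes \setminus \solution$ of maximum marginal benefit $\mgbenefit{\nodeu}{\solution}$, where these marginal benefits are computable in closed form via Lemma~\ref{lemma:mgBenefit} (this makes the step efficient but plays no role in the approximation bound). Monotonicity ensures every chosen marginal benefit is non-negative, so the algorithm never gains by stopping early and the returned set has size $\min\{\budget, |\nodes|\}$; should fewer than $\budget$ nodes carry positive marginal benefit, the remaining picks contribute zero and do not hurt the value.

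Finally I would invoke the theorem of Nemhauser, Wolsey, and Fisher~\cite{nemhauser1978analysis}: for a non-negative, normalized, monotone, submodular function maximized under a cardinality budget $\budget$, the set $\solution$ produced by greedy satisfies
\[
\benefit{\solution} \ge \left(1 - \left(1 - \frac{1}{\budget}\right)^{\budget}\right) \benefit{\solution^*} \ge \left(1 - \frac{1}{e}\right) \benefit{\solution^*},
\]
where $\solution^*$ is the optimal solution of Problem~\ref{problem:qtm}; this is exactly the claimed bound. I do not expect a genuine obstacle: all the technical content is already carried by Lemma~\ref{lemma:submodular} (monotonicity and submodularity) together with the closed form of Lemma~\ref{lemma:mgBenefit}, so the only care needed is to confirm that the hypotheses of~\cite{nemhauser1978analysis} hold verbatim — in particular non-negativity and $\benefit{\emptyset}=0$, both of which are immediate from Definition~\ref{definition:benefit}.
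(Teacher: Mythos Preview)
Your proposal is correct and follows essentially the same approach as the paper: the paper's proof simply observes that, by Lemma~\ref{lemma:submodular}, the non-negative benefit function $\benefitfn$ is monotone and submodular, and then invokes the classic result of Nemhauser et al.~\cite{nemhauser1978analysis} for greedy maximization under a cardinality constraint. Your write-up is more explicit in verifying the hypotheses (normalization $\benefit{\emptyset}=0$, that Algorithm~\ref{alg:greedy} is the standard greedy procedure), but the argument is the same.
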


\FullOnly{
\begin{proof}
As shown in Lemma~\ref{lemma:submodular}, the non-negative
benefit function \benefitfn is monotone and submodular. 
Hence, the $(1 - 1/e)$ approximation guarantee for the greedy method follows from the classic result of 
Nemhauser et al.~\cite{nemhauser1978analysis}.
\end{proof}
}

\begin{algorithm}[t]
\begin{algorithmic}[1]
\State $\solution \leftarrow \emptyset$
\While{$|\solution| < k$}
	\State $\nodeu \leftarrow \underset{\nodev \in \nodes \setminus \solution}{\arg\max~} 
  \benefit{\solution \cup \{\nodev\}} -  \benefit{\solution}$ 
	\State $\solution \leftarrow \solution \cup \{ \nodeu \}$
\EndWhile
\State \Return \solution
\end{algorithmic}
\caption{\label{alg:greedy}Greedy Algorithm}
\end{algorithm}

}

\section{Extensions}
\label{sec:extensions}

\subsection{Space budget constraints}
\label{sec:space-budget}

The algorithms presented in the previous section 
address Problem~\ref{problem:qtm}, 
where budget \budget specifies the number of nodes to materialize. 
A more practical scenario is Problem~\ref{problem:qtm-space}, 
where a budget \spacebudget is given on the total space required to materialize the selected nodes. 
In this case, for each node \nodeu of the elimination tree \tree
the space~$\weight{\nodeu}$ required to materialize the probability table at node \nodeu
is given as input. 
Both algorithms, 
dynamic-programming and greedy, 
are extended to address this more general version of the problem.
The extension is fairly standard, 
and thus we describe it here only briefly.

For the dynamic-programming algorithm the idea is to create an entry
\DPtable{\nodeu}{\partialbudget}{\nodev}
for nodes \nodeu and \nodev,
and with index \partialbudget taking values from 1 to $\min\{\spacebudget,\totalweight{\nodeu}\}$, 
where \totalweight{\nodeu} is the total space required to materialize 
the probability tables of all nodes in~\subtree{\nodeu}.
We then evaluate 
\DPtable{\nodeu}{\partialbudget}{\nodev}
as the maximum benefit over all possible values 
\partialbudgetl and \partialbudgetr such that 
$\partialbudgetl+\partialbudgetr = \partialbudget-\weight{\nodeu}$, 
where \weight{\nodeu} is the space required to materialize node \nodeu.

The modified algorithm provides the exact solution in $\bigO{\nonodes\height\spacebudget^2}$ running time.
Note, however, that unlike the previous case (Problem~\ref{problem:qtm}) 
where \budget is bounded by \nonodes, 
the value of \spacebudget is not bounded by \nonodes. 
As the running time is polynomial in the \emph{value} of \spacebudget, 
which is specified by $\bigO{\log\spacebudget}$ bits, 
it follows that the algorithm is \emph{pseudo-polynomial}.
However, the technique can be used to obtain
a fully-polynomial approximation scheme (FPTAS)
by rounding all space values into a set of smaller granularity 
and executing the dynamic programming algorithm using these rounded values. 

For the greedy algorithm, 
in each iterative step
we select to materialize the node \nodeu that maximizes the 
\emph{normalized marginal gain}
$\left(\benefit{\solution \cup \{\nodeu\}} - \benefit{\solution}\right)/\weight{\nodeu}$.
The modified greedy algorithm has the same running time
and approximation guarantee
$\left(1-\frac 1e\right)$~\cite{sviridenko2004note}.

\subsection{Accounting for redundant variables}
\label{sec:redundant}
\ReviewOnly{
So far we have considered 
a \emph{fixed} elimination tree \tree and elimination order \eliminationOrder.
The elimination~tree~\tree specifies the order in which sum-of-product operations are performed, 
with one summation for \emph{every variable} in the Bayesian network~\bayesianNet.
One can observe, however,  that it is \emph{not} necessary to involve \emph{every variable} in the evaluation of a query: previous work~\cite{geiger1990d,lauritzen1988local,zhang1994simple}
provides methods to determine the variables that are \emph{redundant} for the evaluation of a query \query\  
allowing to perform computations based on a ``shrunk'' Bayesian network.

Accordingly, one can devise a \emph{redundancy-aware scheme} by materializing different probability tables for a set of ``shrunk'' Bayesian networks obtained through removal of redundant variables. {Here, we provide a brief discussion for it and provide the details in the extended version of the paper\cite{aslay2020query}.} Let \lattice be the set that contains these shrunk Bayesian networks and the input Bayesian network. The set \lattice can be represented as a lattice with edges between each network and its maximal subnetworks in \lattice. When a query arrives, it is efficiently mapped to one network in the lattice from which its value can be computed exactly. In this scheme, optimization considerations involve the choice of networks to include in the lattice, as well as the materialized factors for each network. 
}

\FullOnly{
In our algorithms so far we have considered 
a \emph{fixed} elimination tree \tree and elimination order \eliminationOrder.
The elimination~tree \tree specifies 
the order in which sums-of-products evaluations are performed, 
with one summation for \emph{every variable} in~\bayesianNet.
One can observe, however,  that 
it is \emph{not} necessary to involve \emph{every variable} in the evaluation of a query.
For example, for the Bayesian network \bayesianNet shown in Figure~\ref{fig:redundant}, 
the query $\query_1 = \prob{\varb = \varbvalue, \varc}$
can be computed from the sub-network $\bayesianNet_1 \subseteq \bayesianNet$, 
while the query $\query_2 = \prob{\varc \mid \varb = \varbvalue}$
can be computed from the sub-network $\bayesianNet_2 \subseteq \bayesianNet$.

\begin{figure}
\begin{tabular}{c||c|c|c|c}
$\mathcal N$ & $\mathcal N_1$ &  $\mathcal N_2$ &$\mathcal N_3$ & $\mathcal N_4$\\ 
\includegraphics[width=0.14\columnwidth]{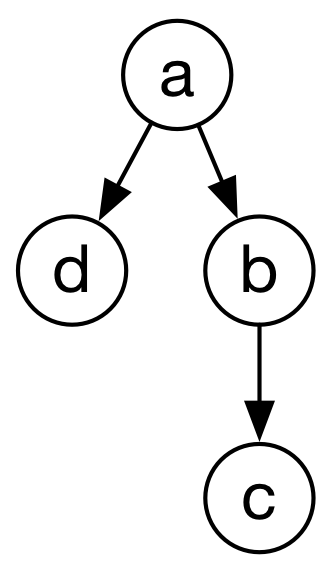} &
\includegraphics[width=0.14\columnwidth]{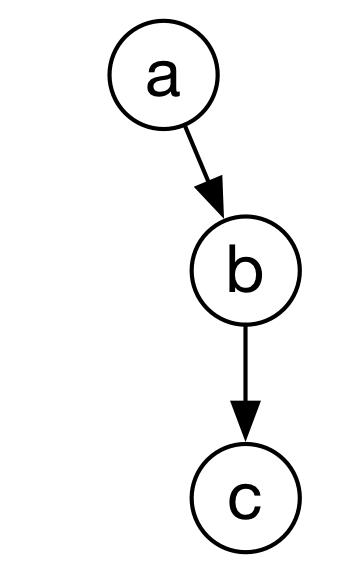} &
\includegraphics[width=0.14\columnwidth]{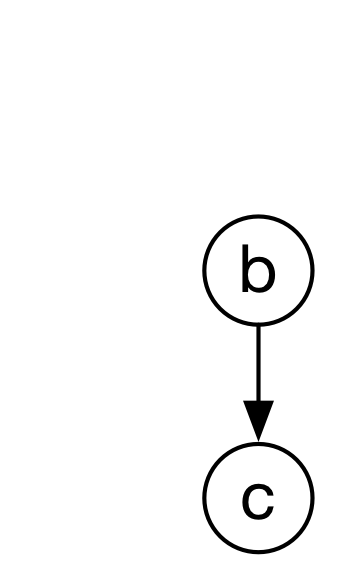} &
\includegraphics[width=0.14\columnwidth]{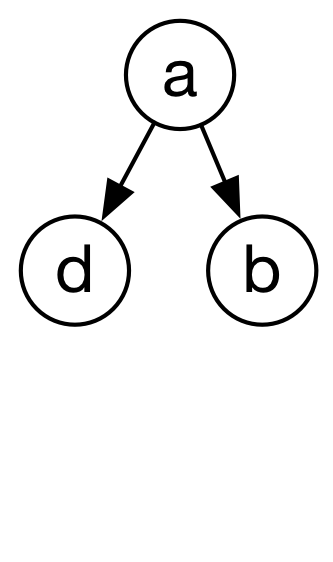} &
\includegraphics[width=0.14\columnwidth]{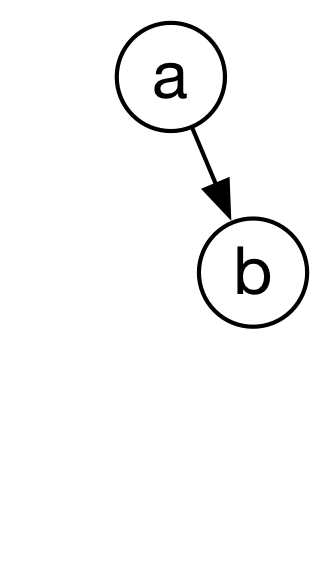} 
\end{tabular}
\caption{The input Bayesian network \bayesianNet and some of its sub-networks. Some variables in \bayesianNet may be redundant for the evaluation of some queries, allowing us to perform computations over ``shrunk'' networks.}
\label{fig:redundant}
\end{figure}

Previous work~\cite{geiger1990d,lauritzen1988local,zhang1994simple}
provides methods to determine the variables that are \emph{redundant} for the evaluation of a query \query\  
allowing us to perform computations based on a ``shrunk'' Bayesian network.
The characterization of variables into redundant and non-redundant is given in Theorem~\ref{thm:redundant}, 
based on the following two definitions.
\begin{definition}[Moral graph~\cite{zhang1994simple}] 
The moral graph \moralGraph of a Bayesian network \bayesianNet 
is the undirected graph that results from \bayesianNet 
after dropping edge directions and 
adding one edge for all pairs of nodes that share a common child. 
\end{definition} 
\begin{definition}[\separated variables~\cite{zhang1994simple}] 
Two\\vari\-ables \vara and \varb in a Bayesian network \bayesianNet
are said to be \separated by variables \somevars 
if removing \somevars from the moral graph \moralGraph of \bayesianNet leaves no 
(undirected) path between \vara and \varb in \moralGraph. 
This property is denoted as \isseparated{\vara}{\varb}{\somevars}.
\end{definition}
\begin{theorem}[Redundant Variables~\cite{zhang1994simple}]
Let \bayesianNet be~a Bayesian network  and 
$\query = \prob{\qvarfree, \qvarbound=\qvarvals \mid \qvarbound'=\qvarvals'}$ a query. 
Let \ancestors{} be the union of ancestors in \bayesianNet of all variables in \query:
\[
\ancestors{} = \cup_{x\in\qvarfree\cup\qvarbound\cup\qvarbound'}\ancestors{x}.
\]
Also, let 
\redundantmarginal be all variables outside \ancestors{}, 
i.e., $\redundantmarginal = \allvars \setminus \ancestors{}$,
and \redundantconditional all ancestor nodes \ancestors{} that are 
\separated from $\qvarfree\cup\qvarbound$ by $\qvarbound'$, i.e., 
\[
\redundantconditional = \{\vara\in\ancestors{} \mid \isseparated{\vara}{\varb}{\qvarbound'} \text{ for all } \varb\in\qvarfree\cup\qvarbound\}.
\]
The variables in $\redundantvars = \redundantmarginal\cup\redundantconditional$ are redundant, 
and no other variables are redundant.
\label{thm:redundant}
\end{theorem}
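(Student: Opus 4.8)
The plan is to prove the two implications separately: (i)~\emph{soundness} --- every variable in $\redundantvars=\redundantmarginal\cup\redundantconditional$ is redundant, i.e.\ the query \query\ can be answered from the Bayesian network induced on $\ancestors{}\setminus\redundantconditional$; and (ii)~\emph{completeness} --- no variable of $\ancestors{}\setminus\redundantconditional$ is redundant. Throughout I would abbreviate $T=\qvarfree\cup\qvarbound$ (the target variables), $C=\qvarbound'$ (the conditioning variables), and read $\ancestors{}$ as the ancestral closure of $T\cup C$, so that $T\cup C\subseteq\ancestors{}$, the three sets $\qvarfree,\qvarbound,\qvarbound'$ are pairwise disjoint, and $\redundantmarginal=\allvars\setminus\ancestors{}$ contains no query variable (and, by the natural reading, $\redundantconditional\cap(T\cup C)=\emptyset$).

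For soundness I would handle $\redundantmarginal$ and $\redundantconditional$ in turn. For $\redundantmarginal$: eliminate the variables of $\allvars\setminus\ancestors{}$ from the factorized joint in reverse topological order; at each step the current sink $x$ occurs in a single factor $\prob{x\mid\mathrm{pa}(x)}$ and $\sum_{x}\prob{x\mid\mathrm{pa}(x)}=1$, so $\sum_{\redundantmarginal}\prod_{x\in\allvars}\prob{x\mid\mathrm{pa}(x)}=\prod_{x\in\ancestors{}}\prob{x\mid\mathrm{pa}(x)}$ and the joint law of the variables of $\ancestors{}$ --- hence the query \query\ --- coincides with the one of the network induced on $\ancestors{}$. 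For $\redundantconditional$: I would first check that $\ancestors{}\setminus\redundantconditional$ is again ancestral (a parent $p$ of $x\notin\redundantconditional$ is adjacent to $x$ in the moral graph \moralGraph\ of $\ancestors{}$; if $p\notin C$ it lies in $x$'s connected component of $\moralGraph\setminus C$, which meets $T$, so $p\notin\redundantconditional$; if $p\in C$ it is in $\ancestors{}\setminus\redundantconditional$ anyway). Since \moralGraph\ has no edge between $\redundantconditional$ and $(\ancestors{}\setminus\redundantconditional)\setminus C$, and every family $\{x\}\cup\mathrm{pa}(x)$ is a clique of \moralGraph, each factor over $\ancestors{}$ is supported either inside $\redundantconditional\cup C$ or inside $\ancestors{}\setminus\redundantconditional$; writing the product as $\phi(\redundantconditional,C)\,\psi(\ancestors{}\setminus\redundantconditional)$ and summing out everything outside $T\cup C$ (using $\redundantconditional\cap(T\cup C)=\emptyset$) gives $\prob{T,C=\qvarvals'}=g(\qvarvals')\,h(T,\qvarvals')$ with $g(\qvarvals')=\sum_{\redundantconditional}\phi(\redundantconditional,\qvarvals')$, while the normalizer $\prob{C=\qvarvals'}$ equals $g(\qvarvals')\sum_{T}h(T,\qvarvals')$; the factor $g(\qvarvals')$ cancels, so $\probability(T\mid C=\qvarvals')$ --- and therefore the query --- does not depend on the tables of the variables in $\redundantconditional$.

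For completeness, query variables are not redundant by definition, so I would take $x\in\ancestors{}\setminus(\redundantconditional\cup T\cup C)$ and exhibit a parameterization of \bayesianNet\ under which the query provably depends on the table of $x$. Since $x\in\ancestors{}$, $x$ is an ancestor of a query variable, and since $x\notin\redundantconditional$, $x$ is connected to some $t\in T$ in $\moralGraph\setminus C$; by the moralization theorem this is equivalent to $x$ and $t$ being $d$-connected given $C$ in \bayesianNet\ (descendants outside $\ancestors{}$ can only help activate colliders on the path). Completeness of $d$-separation~\cite{geiger1990d} then supplies conditional-probability tables along the active path that make $\probability(t\mid C=\qvarvals')$ a non-constant function of the table of $x$, and since marginalizing $\allvars\setminus\ancestors{}$ leaves $\prob{\ancestors{}}$ unchanged the same dependence holds in \bayesianNet\ itself. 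Hence the query cannot be computed from any network that omits $x$, i.e.\ $x$ is not redundant.

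I expect the completeness step to be the main obstacle: converting ``$x$ is connected to $t$ in $\moralGraph\setminus C$'' into an explicit parameterization witnessing genuine dependence requires the moralization theorem together with the completeness of $d$-separation, and one must verify that the dependence is not washed out by the normalization that passes from a joint probability to the conditional query, and that it is pinned to the table of $x$ rather than to some other table. A more routine, but still delicate, point on the soundness side is justifying the split of the factorization across $C$ --- this rests on $\ancestors{}\setminus\redundantconditional$ being ancestral and on every conditional-probability factor being supported on a clique of the moral graph.
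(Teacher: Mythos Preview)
The paper does not prove this theorem; it is quoted from Zhang and Poole~\cite{zhang1994simple} as a known result and then used to motivate the redundancy-aware scheme. There is therefore no proof in the paper to compare your argument against.

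For what it is worth, your outline follows the standard route for establishing this result: the barren-node argument for $\redundantmarginal$, a factorization split across the conditioning set for $\redundantconditional$, and an appeal to the completeness of $d$-separation for the converse. The soundness half is essentially complete as sketched. On the completeness side you have correctly located the delicate points yourself: turning ``not $m$-separated'' into genuine dependence of the \emph{conditional} query on the \emph{table} of $x$ (rather than merely on the value of $x$) needs more than the bare Geiger--Pearl completeness statement, and one must be careful about which moral graph is meant --- the paper's Definition~2 uses the moral graph of the full network~\bayesianNet, whereas the Lauritzen correspondence with $d$-separation goes through the moral graph of the ancestral set~$\ancestors{}$. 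These are real subtleties that a full write-up would have to address, but none of them indicates a flaw in your overall plan.
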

Given a Bayesian network \bayesianNet and a query \query
we write $\shrink{\query, \bayesianNet}$ 
(or $\shrink{\query}$ when \bayesianNet is understood from the context)
to denote the Bayesian network 
that results from the removal of all redundant variables as per Theorem~\ref{thm:redundant}.
We can evaluate the query \query on a shrunk Bayesian network \resultBN, 
such that $\shrink{\query} \subseteq \resultBN \subseteq \bayesianNet$,
by building an elimination tree $\tree'$ on \resultBN 
and obtain immediate efficiency gains.
However, the elimination tree $\tree'$ that is built on $\resultBN$ 
codifies different computations than the tree \tree built on \bayesianNet, 
even if $\tree'\subseteq\tree$.
Therefore, the tables of factors we materialize for \tree 
using the algorithms of Section~\ref{sec:algorithms} 
do not generally correspond to factor tables for $\tree'$.
In the next section we discuss how to address
the issue of evaluating different queries while accounting for redundant variables.

\spara{Redundancy-aware scheme.} The main idea of our \emph{redundancy-aware scheme} 
is to materialize different probability tables for a set of ``shrunk'' Bayesian networks 
obtained through removal of redundant variables.
The scheme consists of the following components:

\begin{figure}
\begin{center}
\includegraphics[width=0.3\columnwidth]{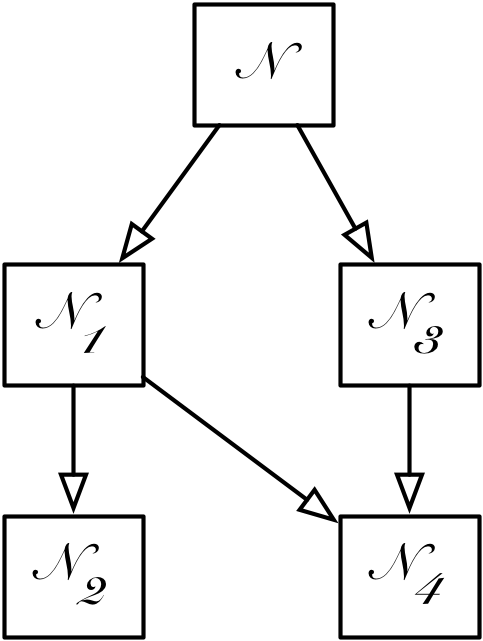}
\end{center}
\caption{Instance of a lattice of Bayesian networks, for the networks that appear in Figure~\ref{fig:redundant}.}
\label{fig:lattice}
\end{figure}

\spara{Lattice of Bayesian networks}. Consider a set of Bayesian networks 
$\lattice = \{\bayesianNet_0, \bayesianNet_1, \bayesianNet_2, \ldots, \bayesianNet_\latticeSize\}$ that includes the input Bayesian network $\bayesianNet = \bayesianNet_0$ and $\latticeSize$ of its subnetworks, 
each of which is induced by a subset of variables:
\[
\bayesianNet_0 = \bayesianNet;\; \bayesianNet_i \subseteq \bayesianNet, \text{ for all } i = 1,\ldots,\latticeSize.
\]
The set \lattice can be represented as a lattice where edges are added
between each network and its maximal subnetworks in \lattice 
(see an example in Figure~\ref{fig:lattice}).

\spara{Query-network mapping}. 
Consider a function $\mapping: \queriesset\rightarrow \lattice$
(where \queriesset is the set of all possible queries)
that maps a query $\query\in\queriesset$ to a Bayesian network $\bayesianNet_i\in\lattice$
from which the answer to \query can be computed exactly. 
Notice that there is always such a Bayesian network in the lattice, namely the input Bayesian network $\bayesianNet_0 = \bayesianNet$.

\spara{Query workloads}. 
Each Bayesian network $\bayesianNet_i\in\lattice$ 
is associated with a query workload, characterized by: 
($i$) the probability $\bnProb_i$ that a random query is mapped to $\bayesianNet_i$;
($ii$) a probability distribution $\netprob{i}{\query} = \prob{\query \mid \bayesianNet_i}$ 
over the queries that are mapped to~$\bayesianNet_i$.

\smallskip
We now discuss how the scheme operates and how its components are built.
When a query arrives it is mapped to one network in the lattice 
from which its value is computed exactly. 
As we discuss below, this mapping operation can be performed efficiently. 
In this scheme, offline optimization considerations include 
the choice of networks to include in the lattice, 
as well as the materialization of factors for each network. 
We discuss them below.

\spara{An algorithm for query-network mapping.}
Algorithm~\ref{algo:map} finds the smallest 
Bayesian network \resultBN 
in the lattice such that $\shrink{\query} \subseteq \resultBN \subseteq \bayesianNet$
that can be used to answer a query \query. 
The algorithm proceeds as follows: 
first, at line~\ref{line:shrink}, it computes the smallest 
shrunk network \shrunkBayesianNet = \shrink{\query} 
that can be used to answer query \query exactly; 
then, it performs a breadth-first-search on the lattice \lattice 
starting from the top element 
but does not extend search paths on which it encounters networks \otherBayesianNet 
that do not include $\shrunkBayesianNet$ as subnetwork.

To test whether $\shrunkBayesianNet\subseteq\otherBayesianNet$, it is sufficient to test whether the intersection of the (labeled) edge-sets of the two networks is not empty,
which can be done in time $\bigO{|\edges|\log(|\edges|)}$.
The algorithm finds the correct network in the lattice since, by construction, 
if it has visited a Bayesian network \otherBayesianNet that contains the target \resultBN as subnetwork, 
there is a path from \otherBayesianNet to \resultBN, 
and this condition holds for the best (smallest Bayesian network that contains \shrunkBayesianNet) 
discovered up to any point during the execution of the algorithm.
The total running time in terms of subnetwork tests
is $\bigO{\latticeSize\,|\edges|\log(|\edges|)}$.

\begin{algorithm}[t]
\begin{small}
\begin{algorithmic}[1]
\caption{\label{algo:map}Map(\lattice, \query)}
\State Let $\shrunkBayesianNet \assign \shrink{\query}$\label{line:shrink}
\State Let $\queue \assign [\bayesianNet]$
\State Let $\resultBN \assign \bayesianNet$
\While{$\queue\not = \emptyset$}
	\State \otherBayesianNet = \deque{\queue}
	\If{$\|\otherBayesianNet\|<\resultBN$ \label{line:noDuplicates} }
		\If{$\shrunkBayesianNet\subseteq\otherBayesianNet$ \label{line:subnetworkTest} }
			\State $\resultBN\assign\otherBayesianNet$
			\For{$\yetAnotherBayesianNet\in\children{\otherBayesianNet}$}
				\State \enque{\yetAnotherBayesianNet, \queue}
			\EndFor
		\EndIf
	\EndIf
\EndWhile
\State \Return \resultBN
\end{algorithmic}
\caption{\label{algo:map}Map(\lattice, \query)}
\end{small}
\end{algorithm}

\mpara{Building the lattice.} 
We build the lattice \lattice off-line, in three phases.
During the first phase, we consider the full lattice \fullLattice that includes all sub-networks of \bayesianNet\ 
and estimate the probability $\fullBNProb_i$ that a random query \query has 
$\bayesianNet_i = \shrink{\query} \in\fullLattice$ as its corresponding ``shrunk'' network. 
Notice that, for the full lattice, $\fullBNProb_i$ is also the probability that a random query is mapped by Algorithm~\ref{algo:map} to Bayesian network $\bayesianNet_i\in\fullLattice$.
In practice, we consider a sample of queries \query 
(either from a query-log or a probabilistic model) and estimate $\fullBNProb_i$ 
as the relative frequency with which network $\bayesianNet_i$ is the ``shrunk'' network 
that can be used to evaluate \query.

During the second phase, we choose a small number \latticeSize of networks from \fullLattice 
to form lattice \lattice. 
We want to build a lattice of networks that captures well the distribution \fullBNProb.
In practice, we use a greedy approach, 
successively choosing to add to \lattice the network that optimizes the utility of the lattice. 
%
During the third phase, 
we follow an approach similar to the first phase to estimate anew the probability $\bnProb_i$ 
that a random query \query has $\bayesianNet_i = \shrink{\query} \in\lattice$ 
as its corresponding ``shrunk'' Bayesian network, 
as well as the probability distribution $\netprob{i}{\query} = \prob{\query \mid \bayesianNet_i}$ 
over the queries \query that are mapped to $\bayesianNet_i$.

\mpara{Optimal materialization.} 
Given the set \lattice of networks contained in the lattice, 
a query workload $(\bnProb_i, \netprob{i}{\query})$ over the networks, and a budget $\budget$, 
we wish to materialize $\budget_i$ factors for Bayesian network $\bayesianNet_i$, 
with $\sum_{i=0}^{\latticeSize}\budget_i\leq\budget$,
so that
$\globalbenefit{\{\budget_i\}}=\sum_{i=0}^{\latticeSize}\bnProb_i \optimalbenefit{i}{\budget_i}$ is maximized, 
where $\optimalbenefit{i}{\budget_i}$ is the optimal benefit obtained by solving problem~\ref{problem:qtm} for Bayesian network $\bayesianNet_i$ with budget $\budget_i$.
Let \globallyoptimalbenefit{m, \budget} be the optimal value of 
\globalbenefit{\cdot}
for the first $m$ networks of \lattice, with budget \budget. Then the following equation holds
\[
\globallyoptimalbenefit{m+1, \budget} = \max_\partialbudget\{\bnProb_{m+1}\optimalbenefit{i}{\partialbudget} + \globallyoptimalbenefit{m, \budget - \partialbudget}\}\,,
\]
and defines a dynamic-programming algorithm to compute the optimal materialization over a set of networks \lattice.
}
\section{Experiments}
\label{sec:experiments}

Our empirical evaluation has two parts.
In the first part, we evaluate the benefits of materialization for variable elimination.
As explained in Section~\ref{sec:setting}, 
we consider materializing only factors resulting from joins and variable summations. 
In the second part, we compare our approach with two junction tree-based inference algorithms, which also rely on materialization.
In what follows, we first describe the experimental setup (Sec.~\ref{sec:setup}) 
and then the results (Sec.~\ref{sec:results}).

\subsection{Setup}
\label{sec:setup}

\begin{table}[t]
\fontsize{8}{9}\selectfont
\centering
\caption{\label{table:dataset}Statistics of Bayesian networks.}
\begin{tabular}{ l  r  r  r  r  }
\toprule
Network 
& nodes 
& edges 
& parameters
& avg.\ degree \\ 
\midrule
\mildew & $35$ & $46$ & $547$\,K & $2.63$ \\ 
\bnpathfinder~\cite{heckerman1992toward} & $109$ & $195$ & $98$\,K & $ 2.96$   \\ 
\munins & $186$ & $273$ & $19$\,K & $2.94$ \\ 
\andes~\cite{conati1997line}  & $220$ & $338$ & $2.3$\,K & $ 3.03$  \\ 
\diabetes~\cite{andreassen1991model} & $413$ & $602$ & $461$\,K & $2.92$ \\ 
\link~\cite{jensen1999blocking}  & $714$ & $1\,125$ & $20$\,K & $3.11$  \\ 
\muninm~\cite{andreassen1989munin} & $1\,003$ & $1\,244$ & $84$\,K & $2.94$ \\ 
\muninb~\cite{andreassen1989munin} & $1\,041$ & $1\,397$ & $98$\,K & $ 2.68$    \\ 
\revision{\tpchsmall} & \revision{$17$} & \revision{$17$} & \revision{$1.5$\,K} & \revision{$2.00$} \\ 
\revision{\tpchmedium} & \revision{$31$} & \revision{$31$} & \revision{$7.4$\,K} & \revision{$2.00$} \\ 
\revision{\tpchverylarge} & \revision{$38$} & \revision{$39$} & \revision{$355$\,K} & \revision{$2.05$} \\ 
\revision{\tpchlarge} & \revision{$35$} & \revision{$37$} & \revision{$27$\,K} & \revision{$2.11$} \\ 
\bottomrule
\end{tabular}
\end{table}

\spara{Datasets.}
We use real-world Bayesian networks (see Table~\ref{table:dataset} for statistics).
Column ``parameters'' refers to the number of entries 
of the factors that define the corresponding Bayesian network.
\bnpathfinder~\cite{heckerman1992toward} is used in an expert system that assists surgical pathologists with the diagnosis of lymph-node diseases. 
\diabetes~\cite{andreassen1991model} models insulin dose adjustment. 
\mildew is used to predict the necessary amount of fungicides against mildew in wheat.
\link~\cite{jensen1999blocking} models the linkage between a gene associated with a rare heart disease 
(the human LQT syndrome) and a genetic marker gene. 
\muninb~\cite{andreassen1989munin} is used in an expert electromyography assistant. 
\munins and \muninm are two subnetworks of \muninb. 
\andes~\cite{conati1997line} is used in an intelligent tutoring system that teaches Newtonian physics to students. 
\revision{The \tpch Bayesian networks were learned from \tpch data, following Tzoumas et al.~\cite{tzoumas2013adapting}}.
\revision{When necessary due to the page limit, we show results only on a subset of datasets, implying that the results are similar on the rest.
\ReviewOnly{The full plots are placed in the extended version~\cite{aslay2020query} of the paper.}}
All datasets are publicly available online.%
\footnote{\revision{See \url{https://github.com/aslayci/qtm} for the \tpch datasets and \url{http://www.bnlearn.com/bnrepository/}~\cite{scutari2014bayesian} for the rest.}}

\spara{Elimination order.} 
As explained in Section~\ref{sec:setting}, \revision{elimination trees are determined by the given variable-elimination order}.
However, finding the optimal order is \nphard~\cite{koller2009probabilistic},
and several heuristics have been proposed.
Among these heuristics, greedy algorithms 
perform well in practice~\cite{fishelson2004optimizing}. 
Given a Bayesian network~\bayesianNet,  
a greedy algorithm begins by initializing a graph \ordergraph from the ``moralization'' of \bayesianNet, 
i.e., by connecting the parents of each node and dropping the direction of the edges. 
Then at the $i$-th iteration, 
a node that minimizes a heuristic cost function is selected as the $i$-th variable in the ordering. 
The selected variable is then removed from \ordergraph and 
undirected edges are introduced between all its neighbors in \ordergraph.
In this paper, we consider heuristics where the cost of a node is:
{\em min-neighbors} (\mn): the number of neighbors it has in \ordergraph; 
{\em min-weight} (\mw): the product of domain cardinalities of its neighbors in \ordergraph;
{\em min-fill} (\mf): the number of edges that need to be added to \ordergraph due to its removal; 
and
{\em weighted-min-fill} (\wmf): the sum of the weights of the edges that need to be added to \ordergraph due to its removal, where the weight of an edge is the product of the domain cardinalities of its endpoints~\cite{koller2009probabilistic}. 

Table~\ref{table:factorTableSizes} shows statistics for the factors in the elimination trees created under orders generated by the aforementioned heuristics.
For each Bayesian network, we select the elimination order that induces the smallest average parameter size and use the maximum parameter size as a tie-breaker. 
\ReviewOnly{\revision{In separate experiments,
we have confirmed that this heuristic works well, 
i.e., it chooses an elimination order with the best or close-to-the-best performance and avoids very costly orders~\cite{aslay2020query}.}}
Table~\ref{table:treeStats} reports statistics of the elimination trees obtained from the chosen elimination order for each dataset. 

\begin{table}[t]
\setlength\tabcolsep{4pt}
\fontsize{7.5}{8.5}\selectfont
\begin{center}
\caption{\label{table:factorTableSizes}Parameter size of factors created
with different elimination orders (K: thousand, M: million, T: trillion).}
\begin{tabular}{lrrrrrrrr}
\toprule
 & \multicolumn{2}{c}{\mn} & \multicolumn{2}{c}{\mf} & \multicolumn{2}{c}{\wmf} & \multicolumn{2}{c}{\mw} \\
\cmidrule(lr{0.7em}){2-3}
\cmidrule(lr{0.7em}){4-5}
\cmidrule(lr{0.7em}){6-7}
\cmidrule(lr{0.7em}){8-9}
Network & avg & max & avg & max  & avg & max & avg & max  \\ \midrule
\mildew & 15\,K & 170\,K & {\bf 10\,K} & {\bf 170\,K} & 57\,K & 1\,M & 966\,K & 19\,M \\ 
\bnpathfinder\!\!\!\! &  570~~\, & 16\,K & {\bf 568~~\,} & {\bf 16\,K} &  643~~\, & 16\,K & $>1$\,T & $>1$\,T  \\ 
\munins & 749\,K & 59\,M & 375\,K & 39\,M & {\bf 367\,K} & {\bf 39\,M} & $>1$\,T & $>1$\,T   \\ 
\andes  & 2\,K & 131\,K & {\bf 1.4\,K}  & {\bf 66\,K} & {\bf 1.4\,K}  & {\bf  66\,K} & $>1$\,T & $>1$\,T   \\
\diabetes & 9\,K & 	194\,K &  {\bf 4\,K}  & {\bf 194\,K} & 325\,K & 33\,M & $>1$\,T & $>1$\,T  \\
\link & 109\,K & 17\,M  & {\bf 31\,K}  & {\bf 4\,M} &  633\,K & 268\,M & $>1$\,T & $>1$\,T  \\
\muninm & 40\,K & 31\,M  & {\bf 1.7\,K}  & {\bf 168\,K} & 1.8\,K  & 168\,K & $>1$\,T & $>1$\,T  \\
\muninb & 9.5\,K & 588\,K  & 5\,K  & 392\,K & {\bf 3\,K}  & {\bf 112\,K} & $>1$\,T & $>1$\,T \\
\revision{\tpchsmall} & \revision{144}~~\, & \revision{48}~~\, & \revision{144}~~\, & \revision{48}~~\, & \revision{144}~~\, & \revision{48}~~\, & \revision{{\bf 144}}~~\, & \revision{{\bf 39}}~~\,  \\ 
\revision{\tpchmedium} & \revision{400}~~\, & \revision{60}~~\, & \revision{400}~~\, & \revision{60}~~\, & \revision{400}~~\, & \revision{60}~~\, &  \revision{{\bf 280}}~~\, & \revision{{\bf 39}}~~\, \\ 
\revision{\tpchverylarge} & \revision{937\,K} & \revision{30\,K} & \revision{937\,K} & \revision{30\,K} & \revision{112\,K} & \revision{6\,K} & \revision{{\bf 4\,K}} & \revision{{\bf 300}} \\ 
\revision{\tpchlarge} & \revision{1\,K} & \revision{230} & \revision{1\,K}  & \revision{230}  & \revision{1\,K} & \revision{195} & \revision{{\bf 1\,K}} & \revision{{\bf 176}}   \\ 
\bottomrule
\end{tabular}
\end{center}
\end{table}


\begin{table}[t]
\setlength\tabcolsep{4pt}
\fontsize{8}{9}\selectfont
\centering
\caption{\label{table:treeStats}Statistics of elimination trees.}
\begin{tabular}{lrrr}
\toprule
Tree & nodes & height & max.\,\# \\ 
     &       &        & children \\ \midrule 
\mildew (\mf) &  70 & 17 & 3 \\
\bnpathfinder (\mf) & 218 & 12 & 54  \\ 
\munins (\wmf)  &  372  & 23   & 7  \\
\andes (\mf)  & 440  & 38  & 5 \\
\diabetes (\mf)  & 826  & 77 & 4  \\ 
\link (\mf)   & 1\,428  & 56  & 15 \\
\muninm (\mf)  & 2\,006  & 23  & 8  \\
\muninb (\wmf)  &  2\,082 &  24 & 8 \\ 
\revision{\tpchsmall (\mw)} & \revision{34} & \revision{8} &  \revision{3} \\ 
\revision{\tpchmedium (\mw)} & \revision{62} & \revision{11} & \revision{5} \\ 
\revision{\tpchverylarge (\mw)} & \revision{76} & \revision{13} & \revision{5}  \\ 
\revision{\tpchlarge (\mw)} & \revision{70} & \revision{11} & \revision{4} \\ 
\bottomrule
\end{tabular}
\end{table}

\spara{Query workload.} 
The problem we consider assumes a query work\-load, i.e., 
a probability distribution $\prob{\query}$ of queries~\query.
In practice, it is reasonable to consider a setting where one has access to a historical query log. 
In the absence of such a log for the networks of Table~\ref{table:dataset}, 
we consider queries $\query = \probability(\qvarfree,\allowbreak\qvarbound=\qvarvals)$, 
where $\qvarbound = \emptyset$
and all variables are either free (\qvarfree) or summed-out (\qvarmiss). 
Note that the setting $\qvarbound = \emptyset$ is a worst-case scenario that leads to computationally intensive queries, since, by not selecting any subsets of rows associated with $\qvarbound=\qvarvals$
we need to generate larger factors. 
We consider two workload schemes. 
In the first scheme we consider \emph{uniform} workloads, 
where each variable has equal probability to be a member of \qvarfree. 
For each dataset, we generate a total of $250$ random queries, 
with $50$ queries for each query size \qsize, i.e., $\qsize = |\qvarfree| \in [1,5]$. 
In the second scheme we consider \emph{skewed} workloads, 
where the variables appearing earlier in the elimination ordering
are more likely to appear among the summed-out variables \qvarmiss. 
Specifically,  
a variable that appears $\ell$ levels above another in \tree is $\ell$ times more likely 
to be placed among the free variables \qvarfree.
We do not show results for workload distributions with opposite skew: since we focus on materialized tables that involve only summed-out variables, it is easy to see that having free variables close to the leaves of the elimination tree \tree would usually render them not useful. 

\spara{Cost values.} To solve Problem~\ref{problem:qtm}, 
we must assign partial cost values \cost{\nodeu} to the nodes \nodeu of elimination trees.
Cost must represent the running time of computing the corresponding factor 
from its children in the elimination tree.
Following the time-complexity analysis of Koller et al.~\cite{koller2009probabilistic} 
for tabular-factor representations, 
we estimate \cost{\nodeu} to be proportional to the cost of the corresponding natural join operation. 
We implement joins using the one-dimensional representation of factor tables described 
by Murphy~\cite{murphy2002fast}.
For this implementation, the cost of the natural join operation is twice the resulting size of join, 
which can be calculated from the sizes of the joined tables without actually performing the join.
We confirm empirically that the cost estimates align almost perfectly with the corresponding execution times 
(Pearson $\rho \ge 0.99$).

\spara{Algorithms.} We compare with other inference algorithms:

\spara{(i) Junction tree (\jt).} 
Following Lauritzen et al.~\cite{lauritzen1988local} 
the tree is calibrated by precomputing and materializing the joint probability distributions for each of its 
nodes. As discussed in Section~\ref{section:related}, this supports efficient evaluation of all the queries in which the query variables belong to the same tree node. 

\spara{(ii) Indexed junction tree (\kanagal).} 
We implement the approach of Kanagal and Deshpande~\cite{kanagal2009indexing}, 
which makes use of a hierarchical index built on the calibrated junction tree. 
As discussed in Section~\ref{section:related}, this index contains additional (materialized) joint probability distributions for speeding up ``out-of-clique'' queries. 
%
Index construction requires to specify a parameter denoting the maximum possible size for a potential to be materialized.  
%
We set this parameter in terms of total number of entries in a potential by considering candidate values $\{250, 10^3, 10^5\}$ and selecting the one resulting in smallest median query processing cost for each dataset under uniform workload. 
%
We compute the cost of these algorithms as done for variable elimination following Koller et al.~\cite{koller2009probabilistic} and confirm that the cost estimates align perfectly with the execution times (Pearson $\rho \ge 0.98$).

\FullOnly{
\begin{figure*}[t]
\begin{center}
\begin{tabular}{cccc}
    \includegraphics[width=.20\textwidth]{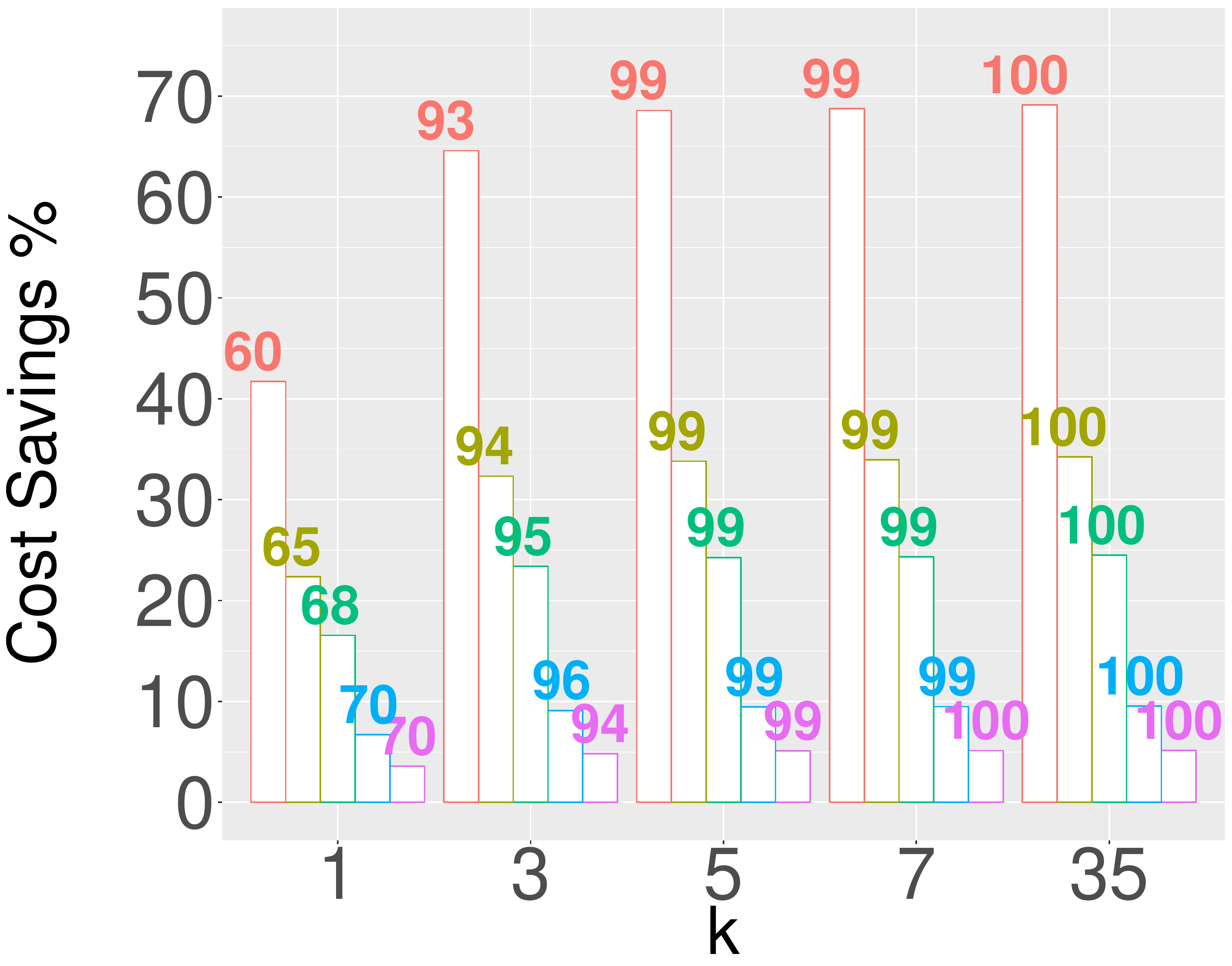}&
    \includegraphics[width=.20\textwidth]{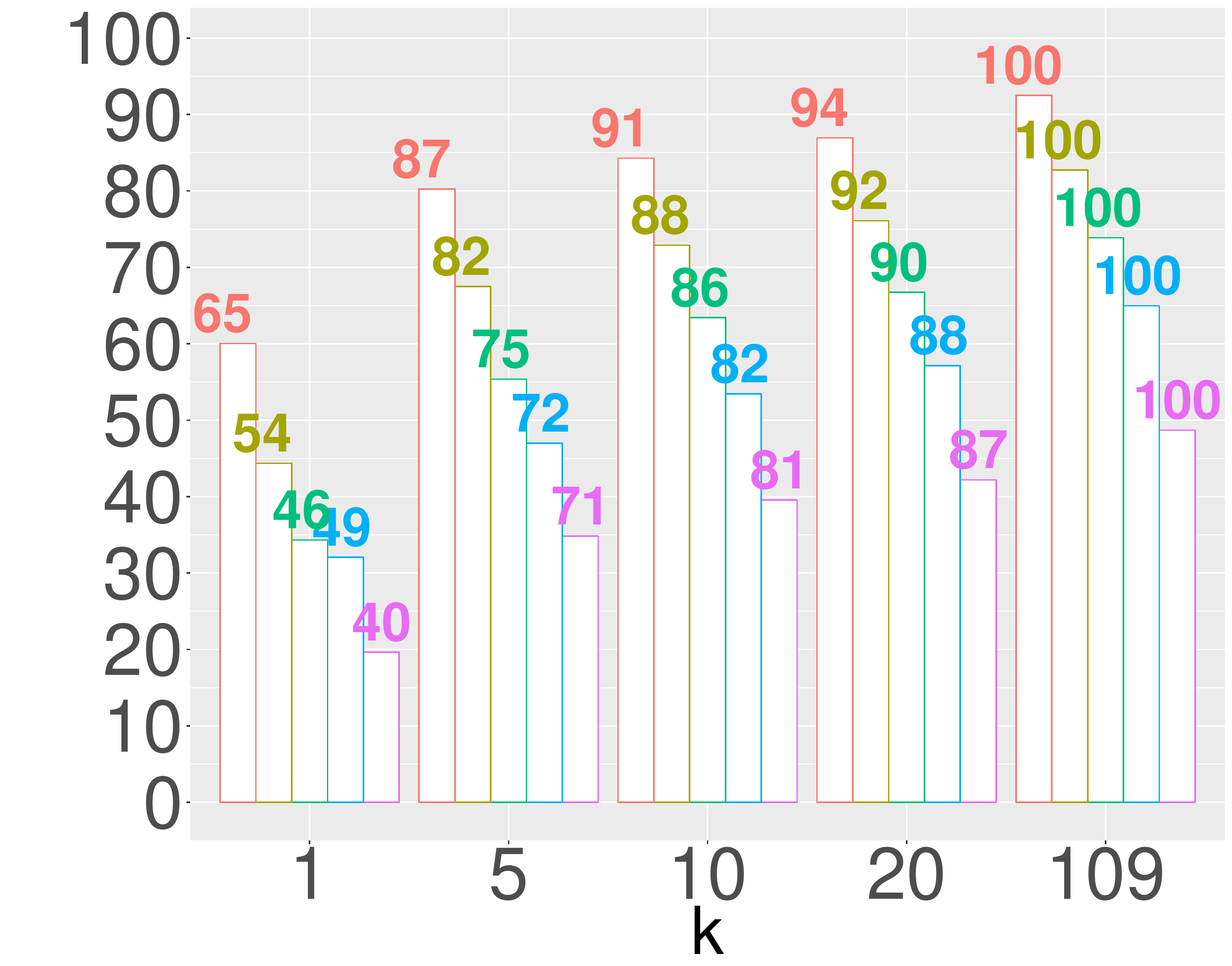}&
    \includegraphics[width=.20\textwidth]{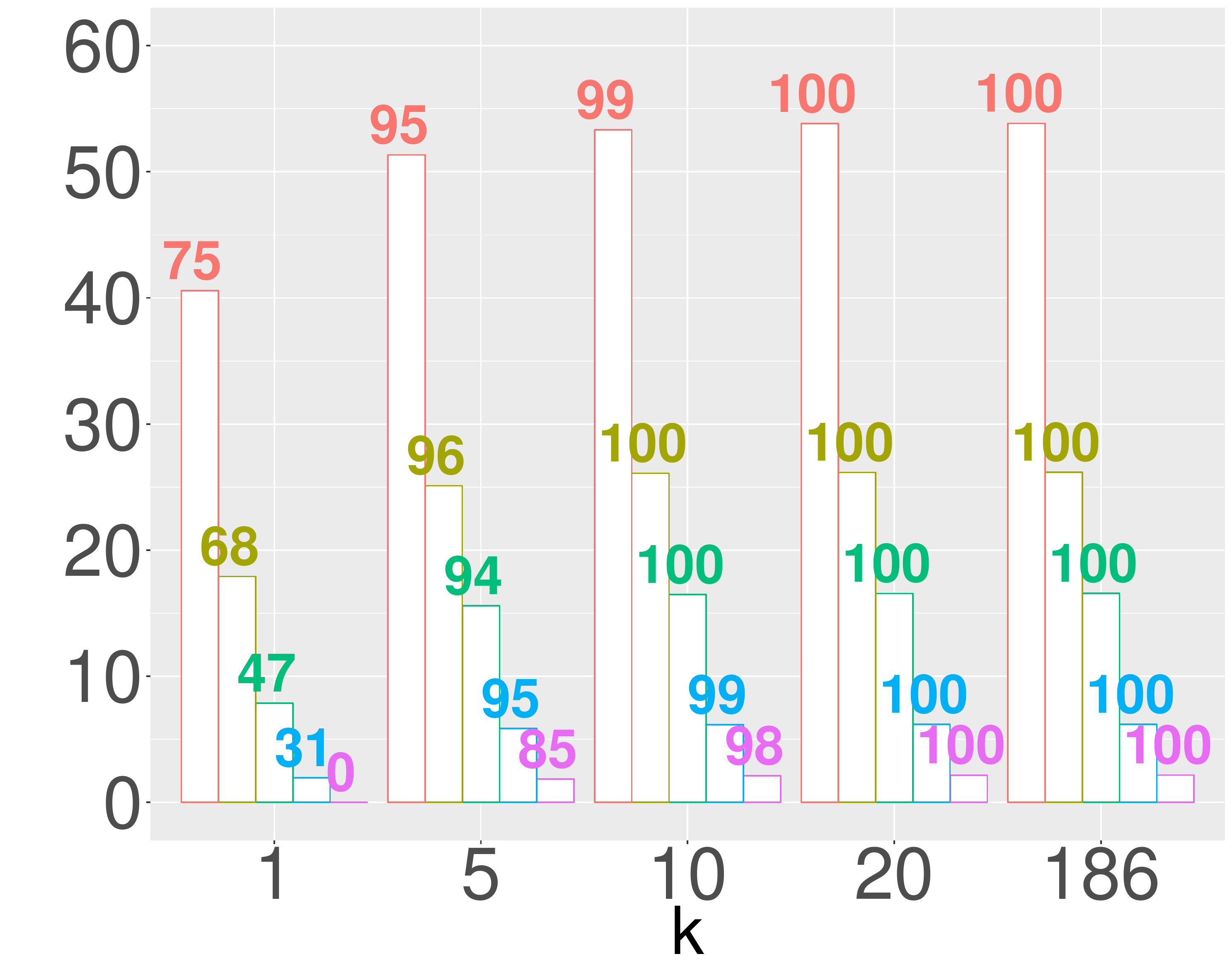}&
    \includegraphics[width=.20\textwidth]{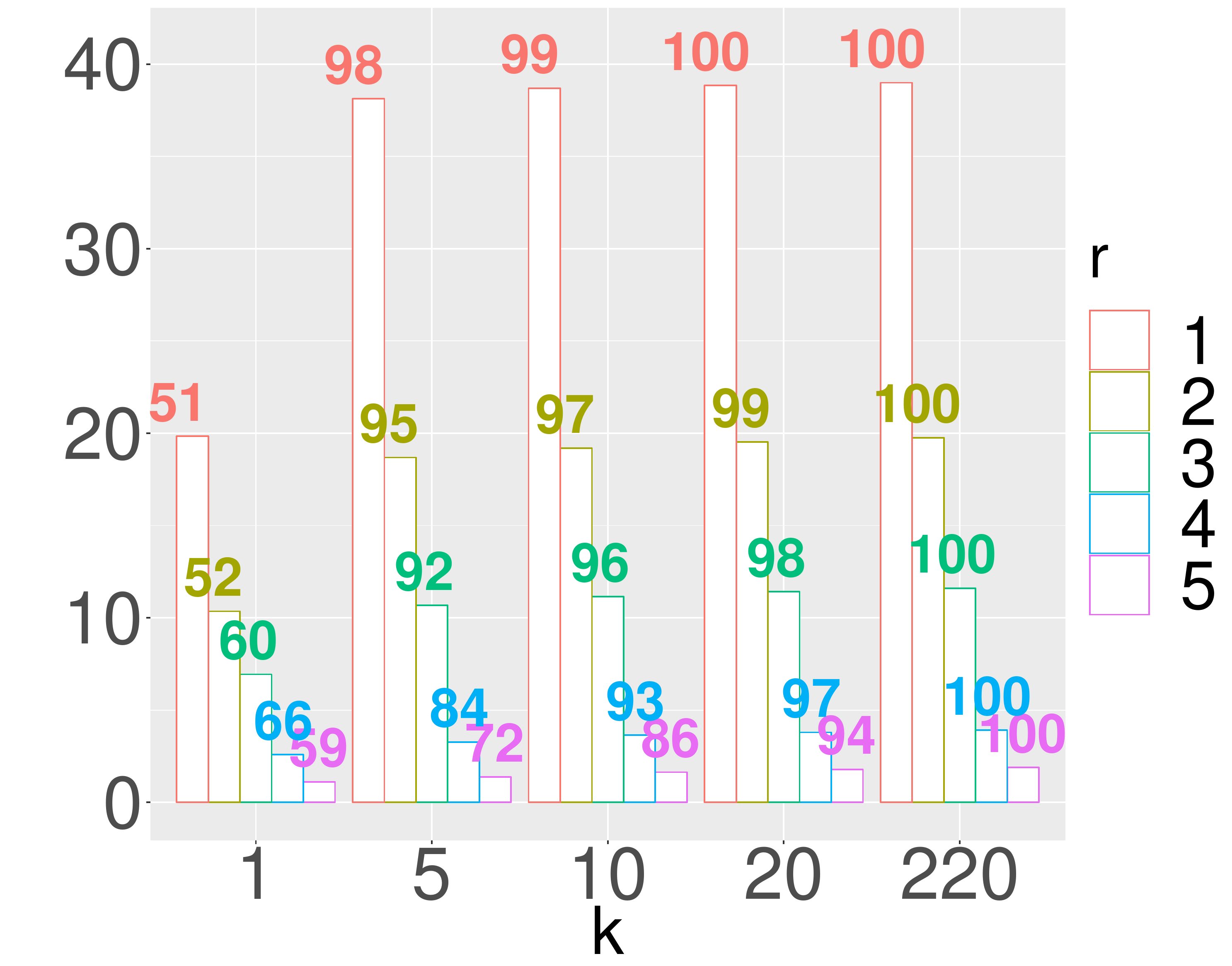}\\
	(a) \mildew (\mf) & (b) \bnpathfinder (\mf)  & (c) \munins (\wmf) & (d) \andes (\mf)   \\
	\includegraphics[width=.20\textwidth]{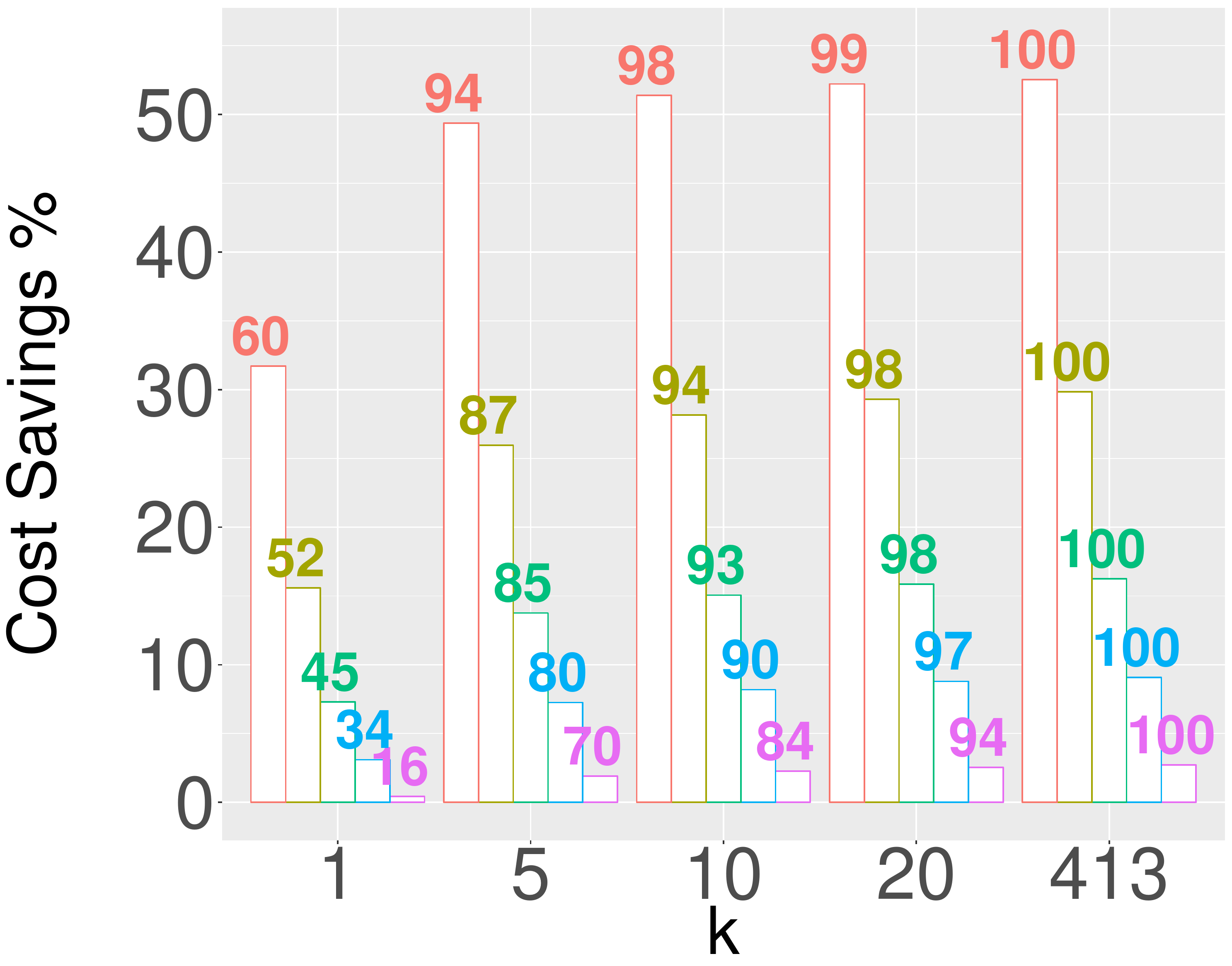} &
    \includegraphics[width=.20\textwidth]{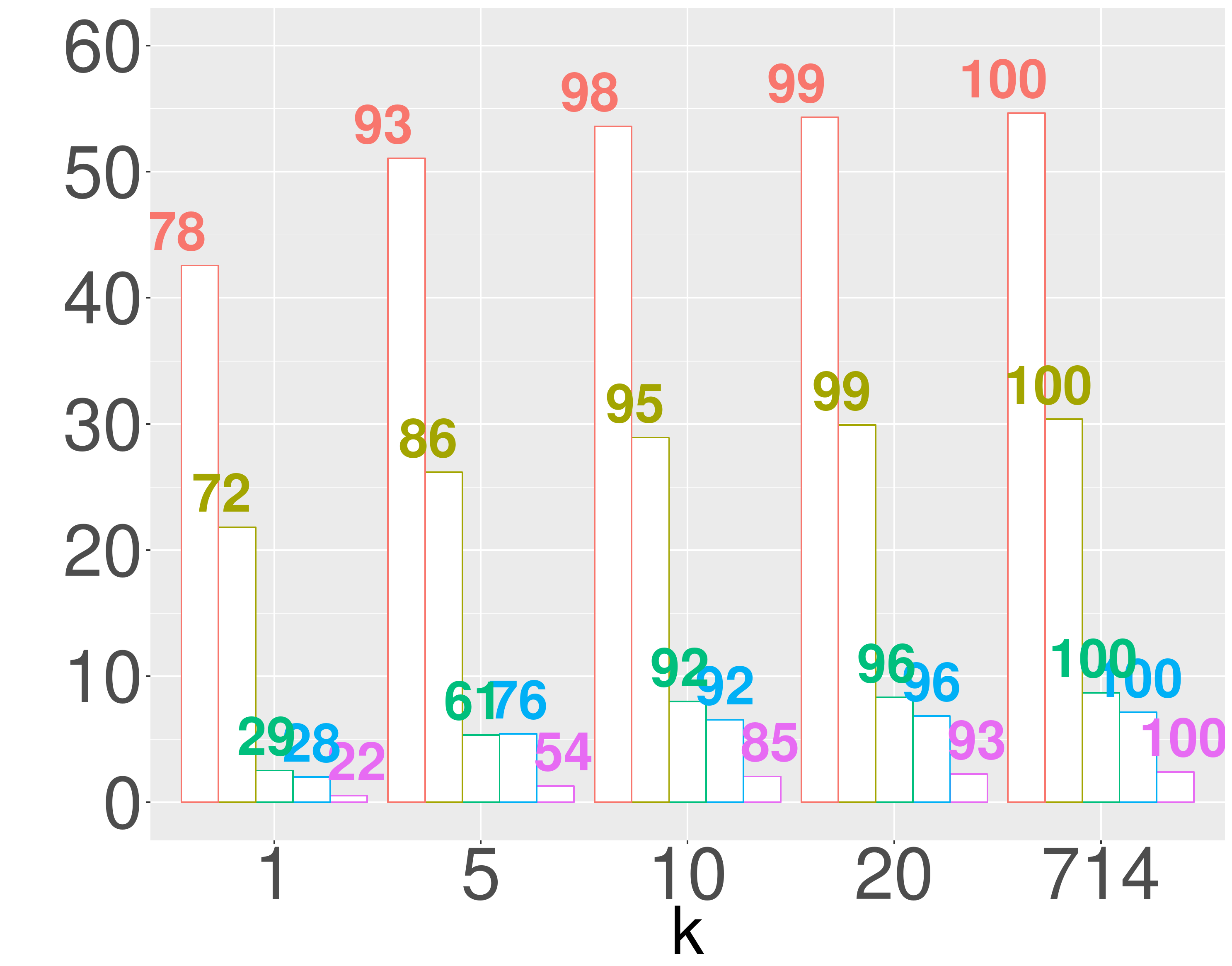} & 
    \includegraphics[width=.20\textwidth]{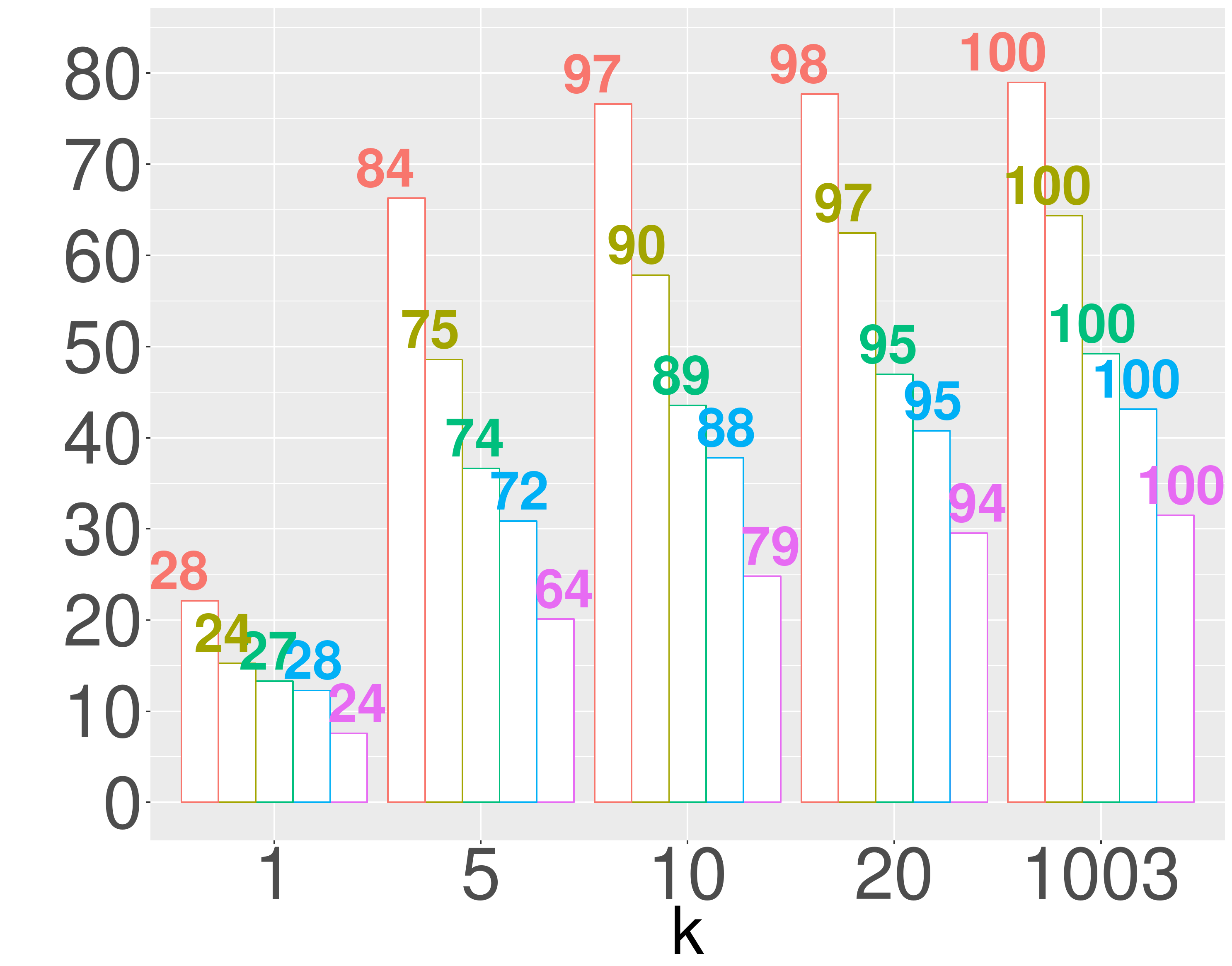} & 
    \includegraphics[width=.20\textwidth]{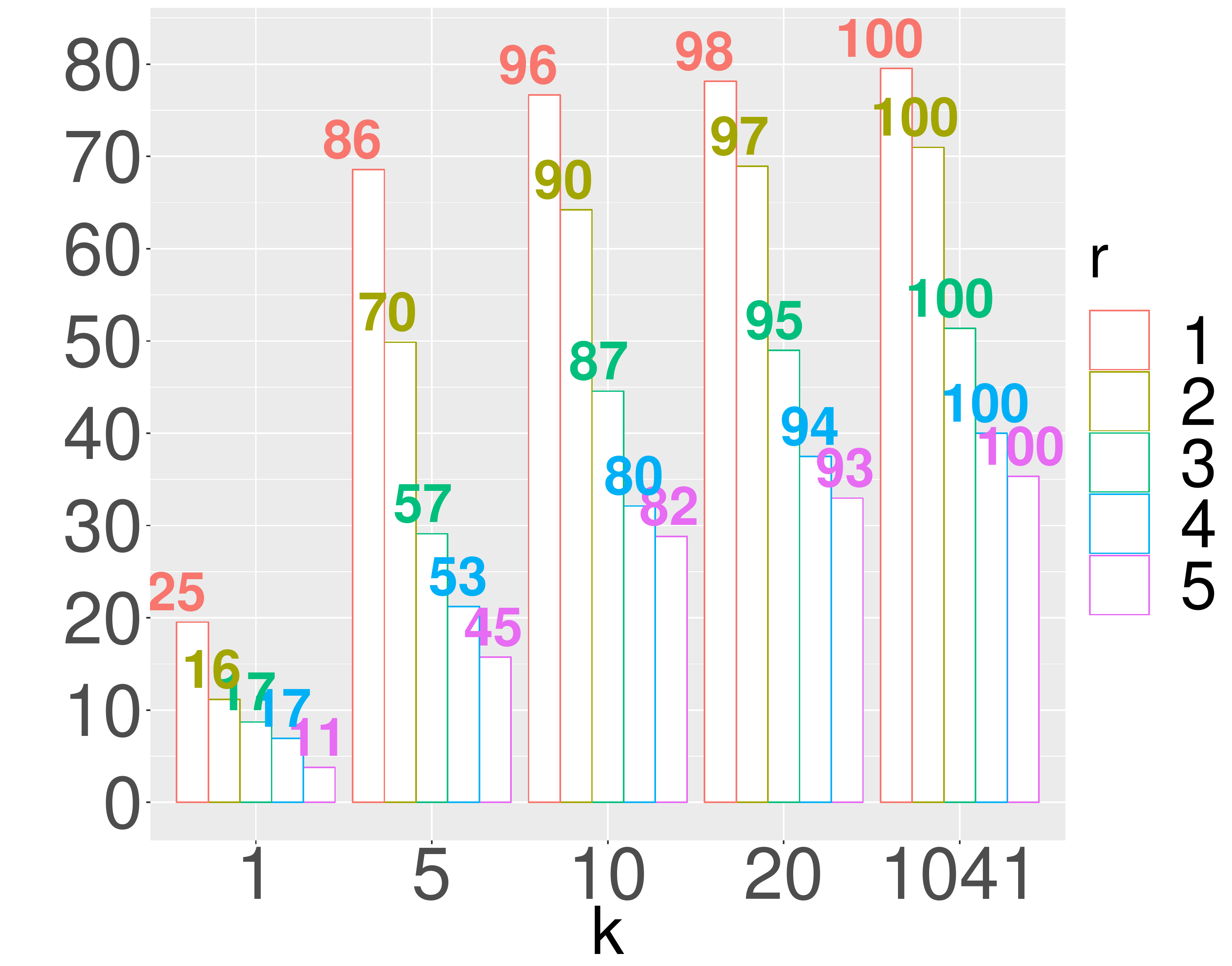}
    \\
   (e) \diabetes (\mf)  & (f) \link (\mf) & (g)  \muninm (\mf)   & (h) \muninb (\wmf)  \\ 
   \includegraphics[width=.20\textwidth]{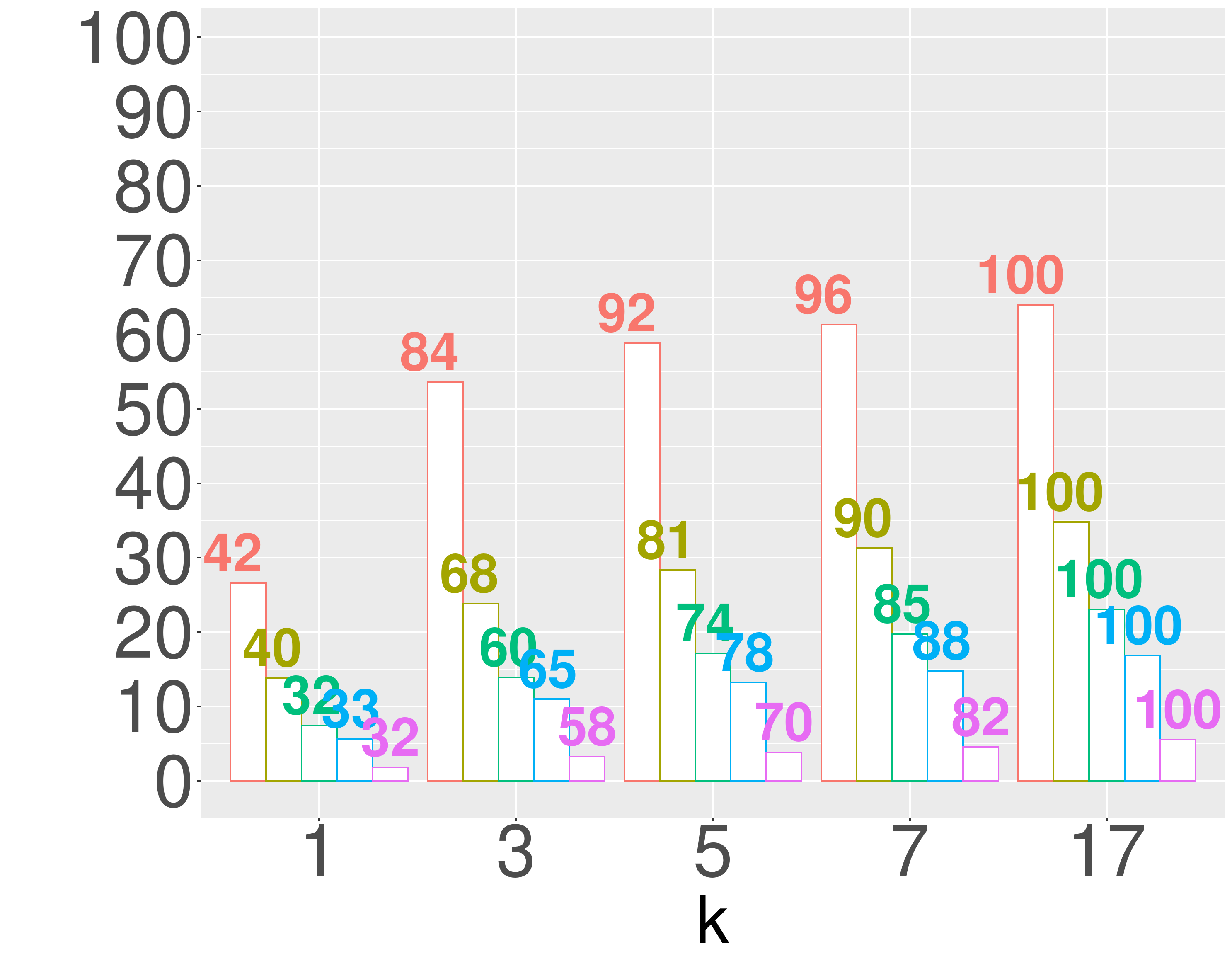} &
    \includegraphics[width=.20\textwidth]{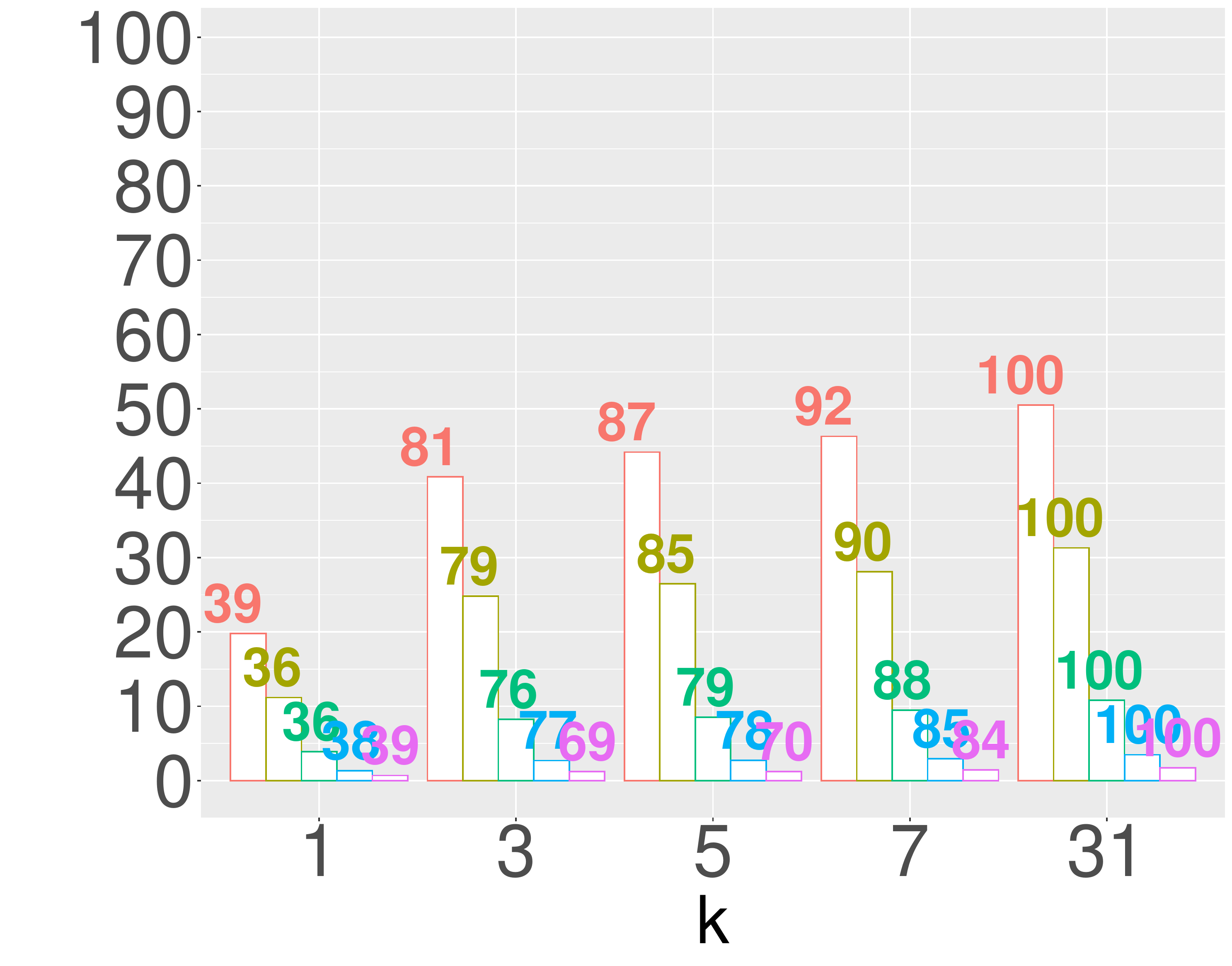} & 
    \includegraphics[width=.20\textwidth]{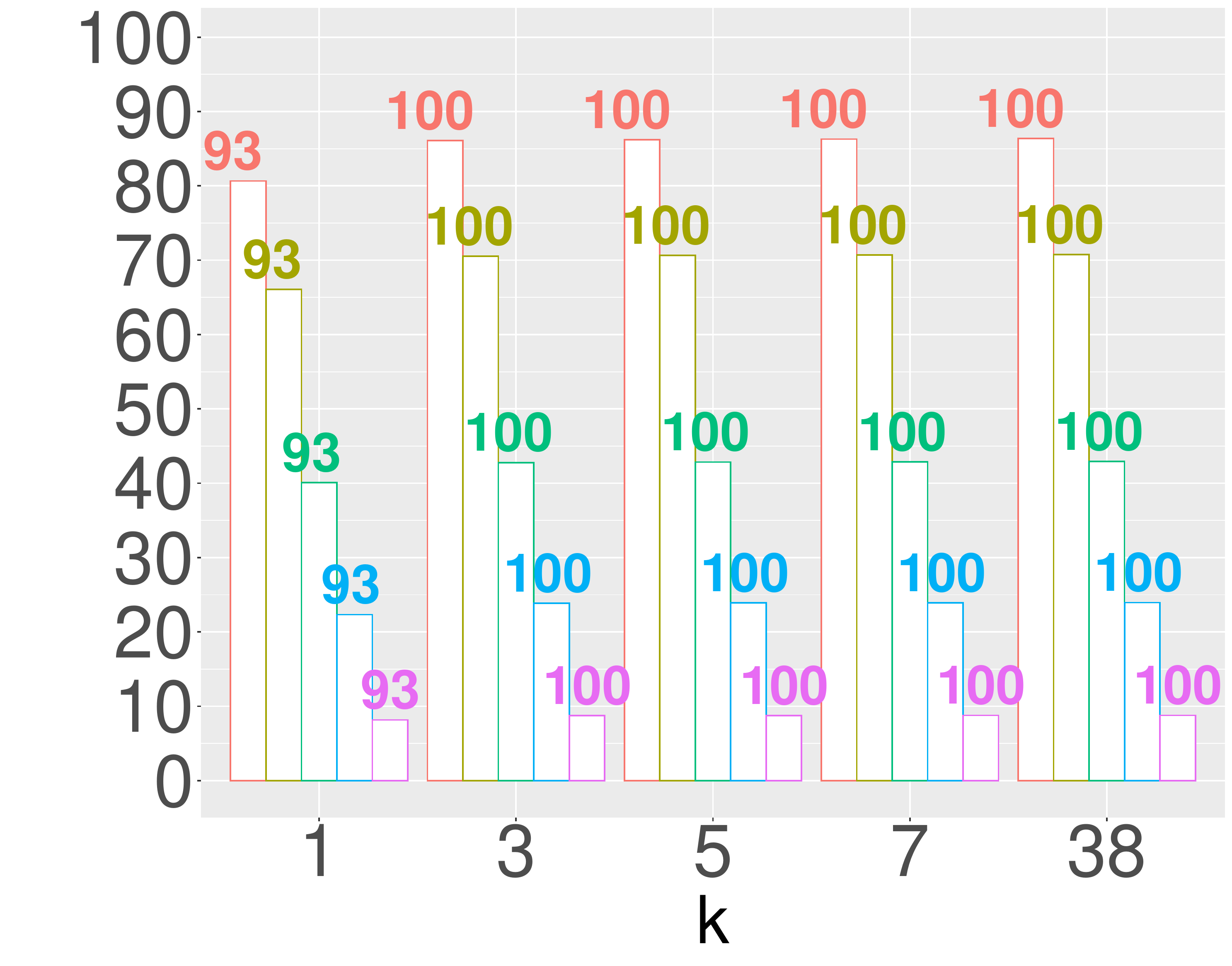} & 
    \includegraphics[width=.20\textwidth]{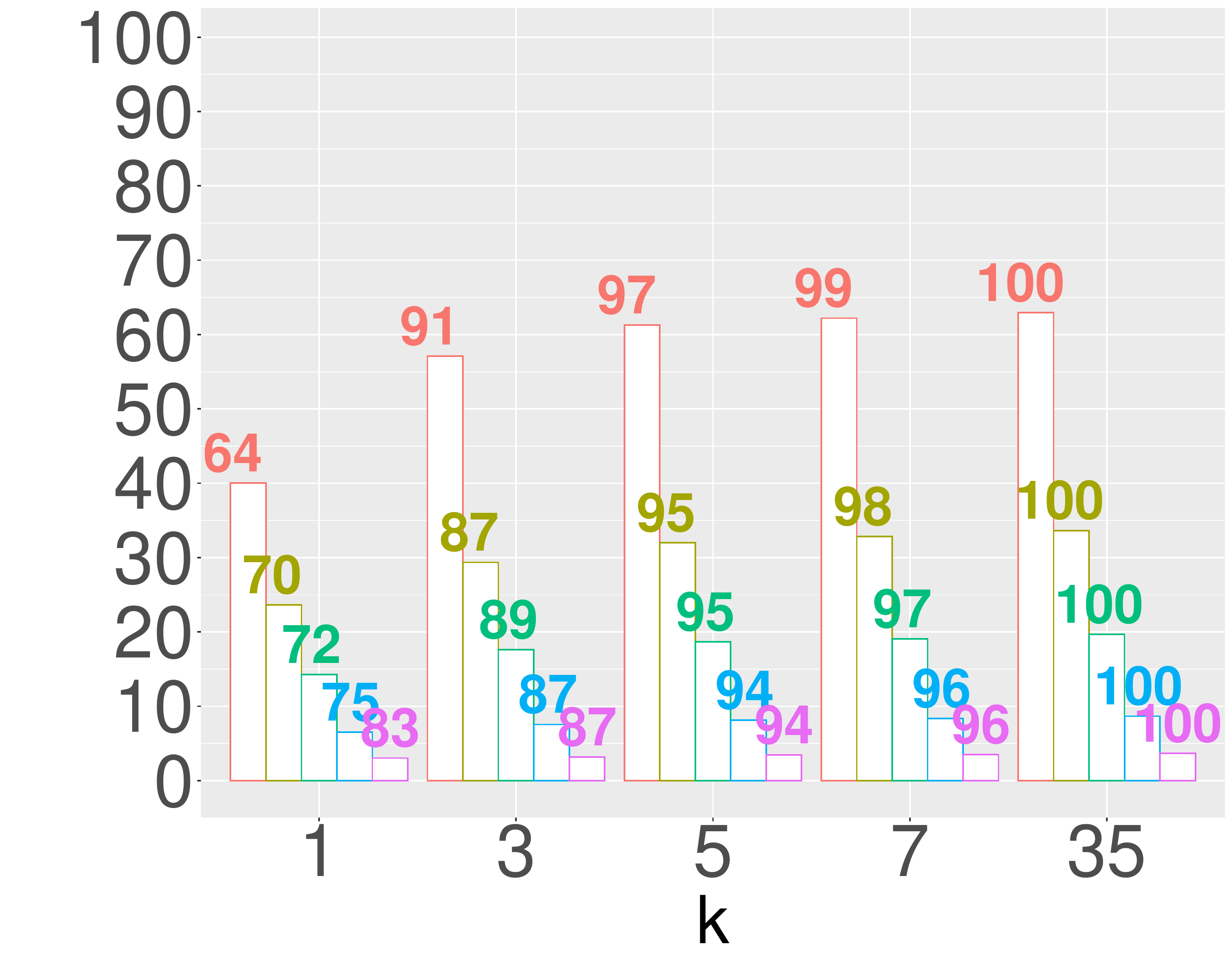}
    \\
   \revision{(i) \tpchsmall (\mw)}  & \revision{(j) \tpchmedium (\mw)} & \revision{(k) \tpchverylarge (\mw)}   & \revision{(l) \tpchlarge (\mw)}  \\ 
\end{tabular}
\caption{\label{fig:unif_per_r}Cost savings per query size $\qsize$ in uniform-workload scheme.
$x$-axis: number of materialized factors (budget \budget). $y$-axis: cost savings in query running time compared to no materialization. Numbers on the bars: the percentage of cost savings relative to the materialization of all factors.}
\end{center}
\end{figure*}
}

\FullOnly{
	\begin{figure*}[t]
		\begin{center}
			\begin{tabular}{cccc}
				\includegraphics[width=.20\textwidth]{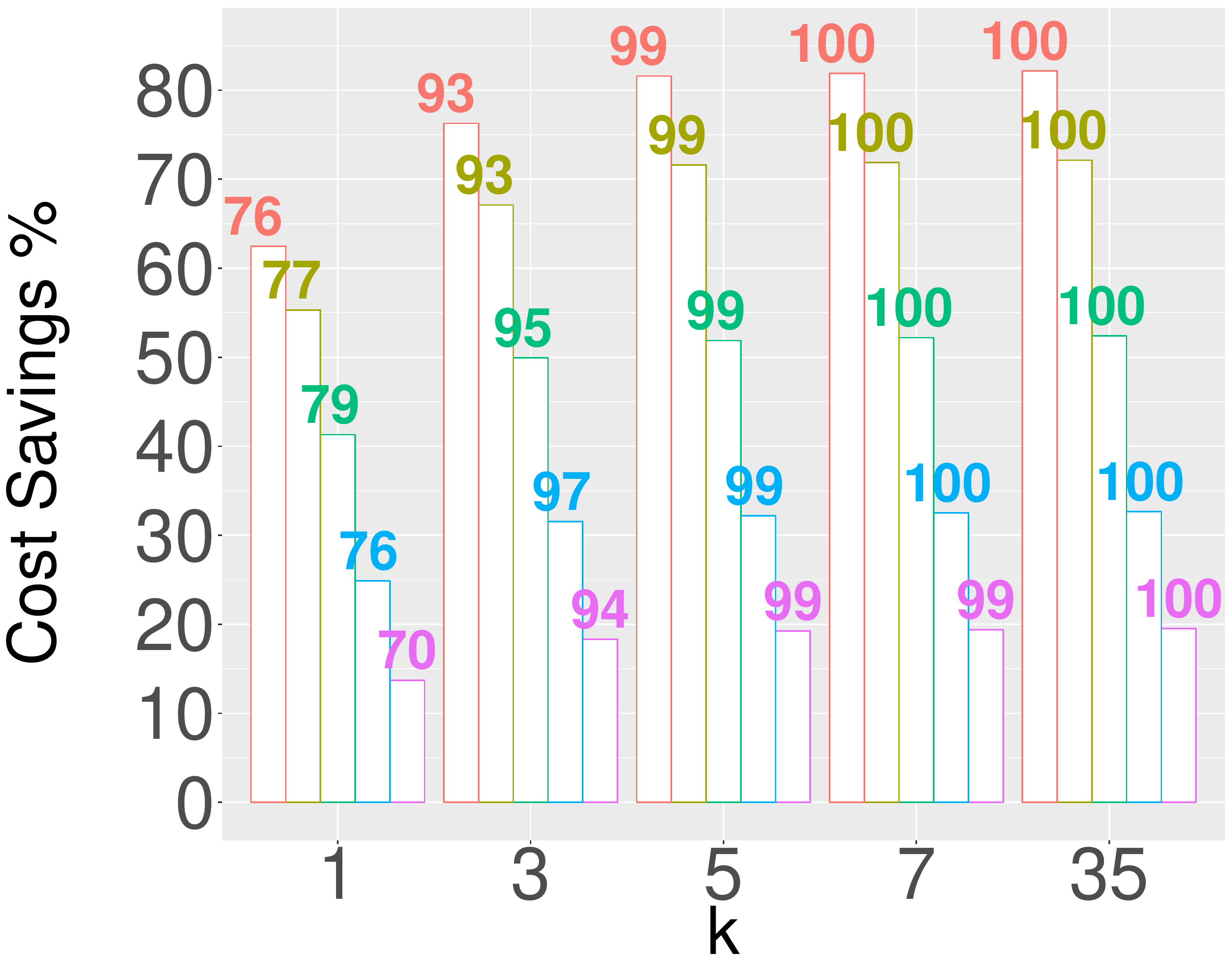}&
				\includegraphics[width=.20\textwidth]{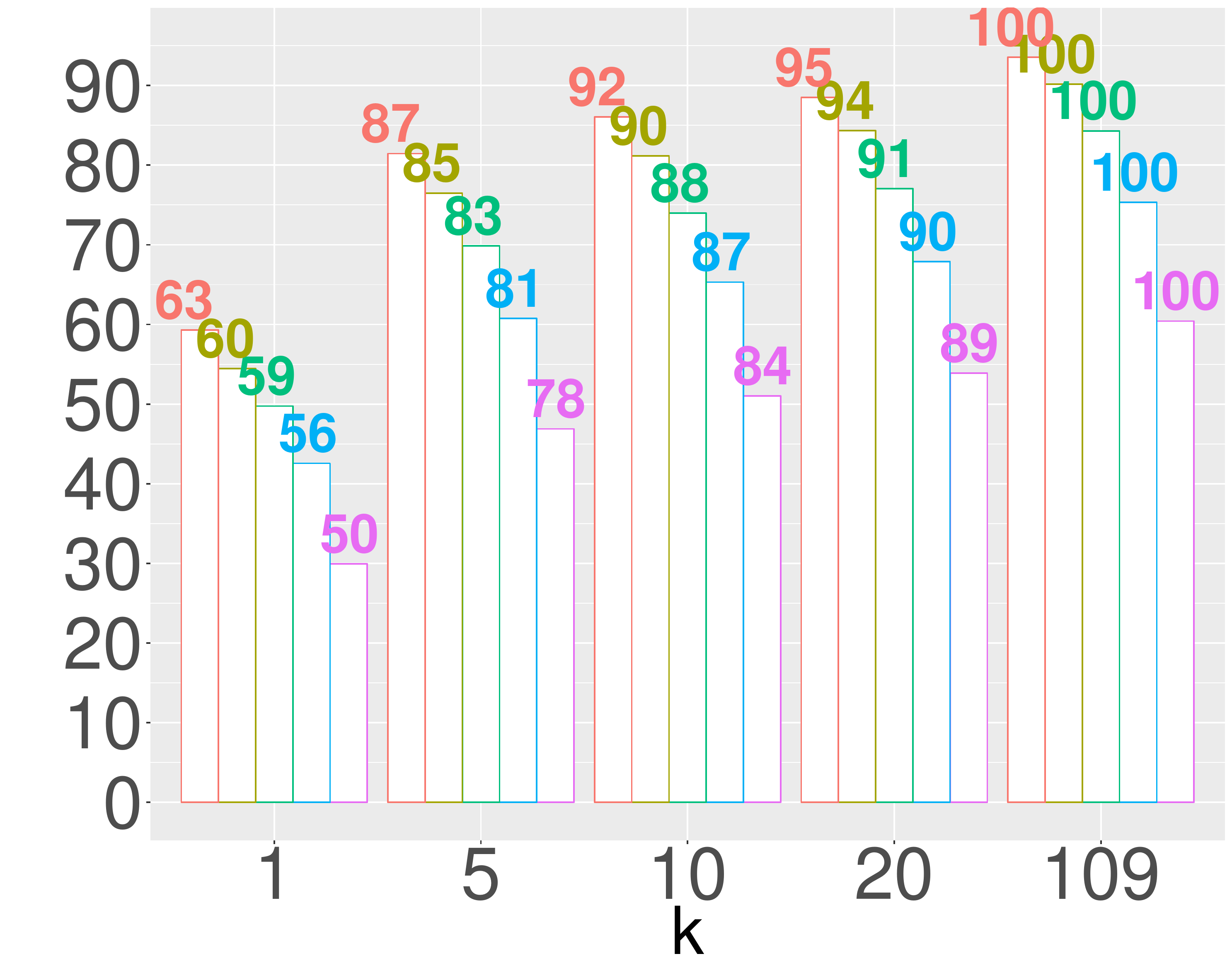}&
				\includegraphics[width=.20\textwidth]{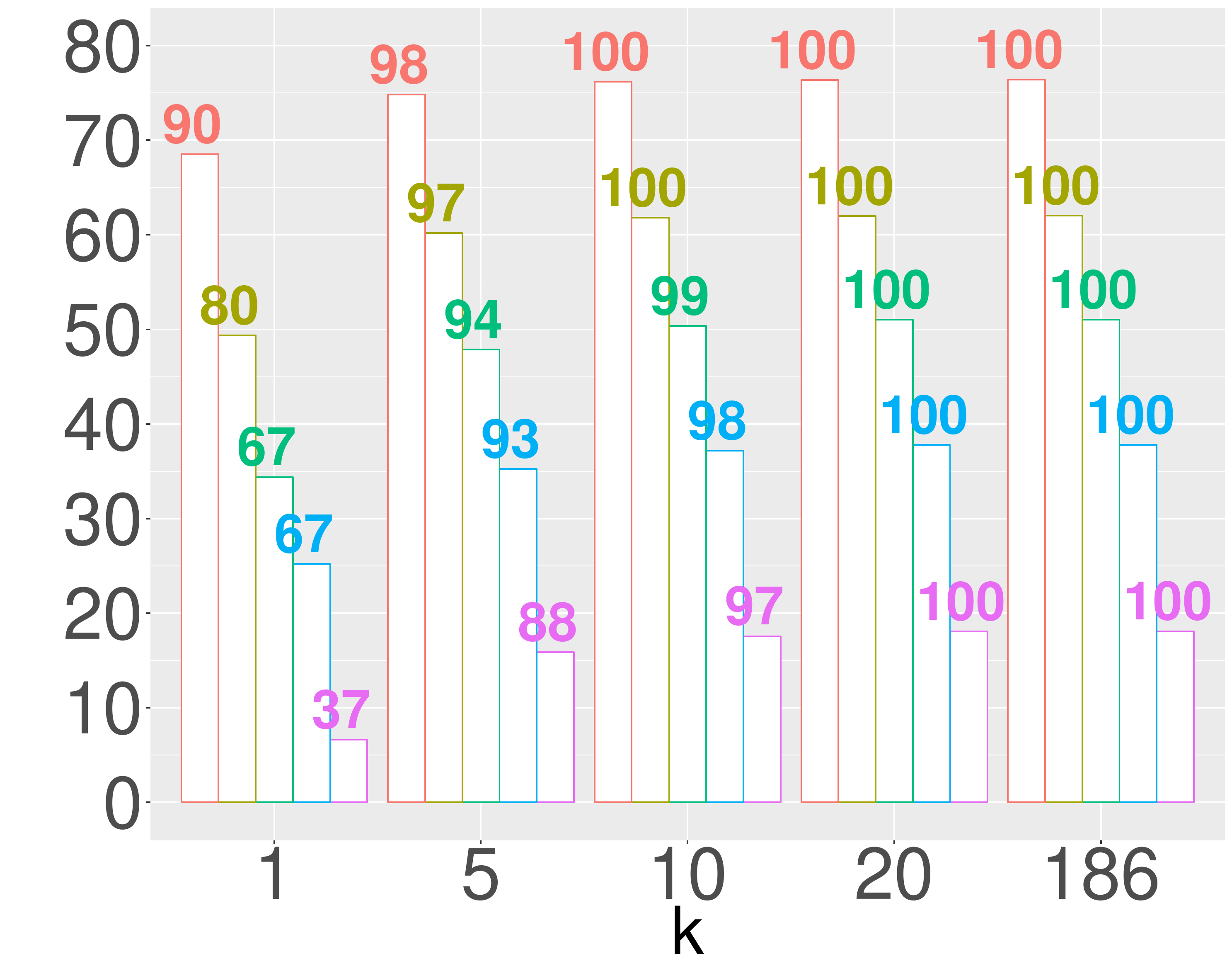}&
				\includegraphics[width=.20\textwidth]{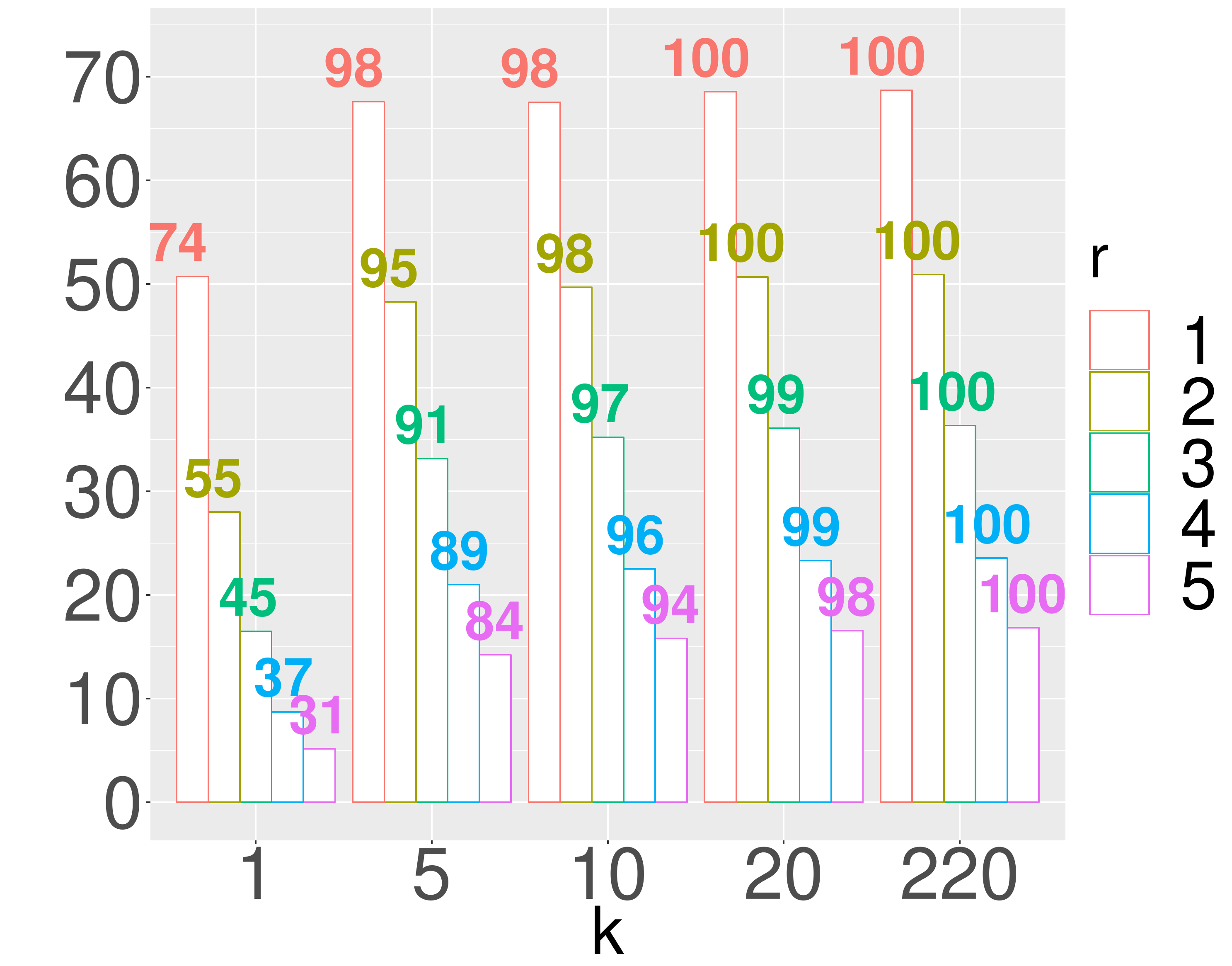}\\
				(a) \mildew (\mf) & (b) \bnpathfinder (\mf)  & (c) \munins (\wmf) & (d) \andes (\mf)   \\
				\includegraphics[width=.20\textwidth]{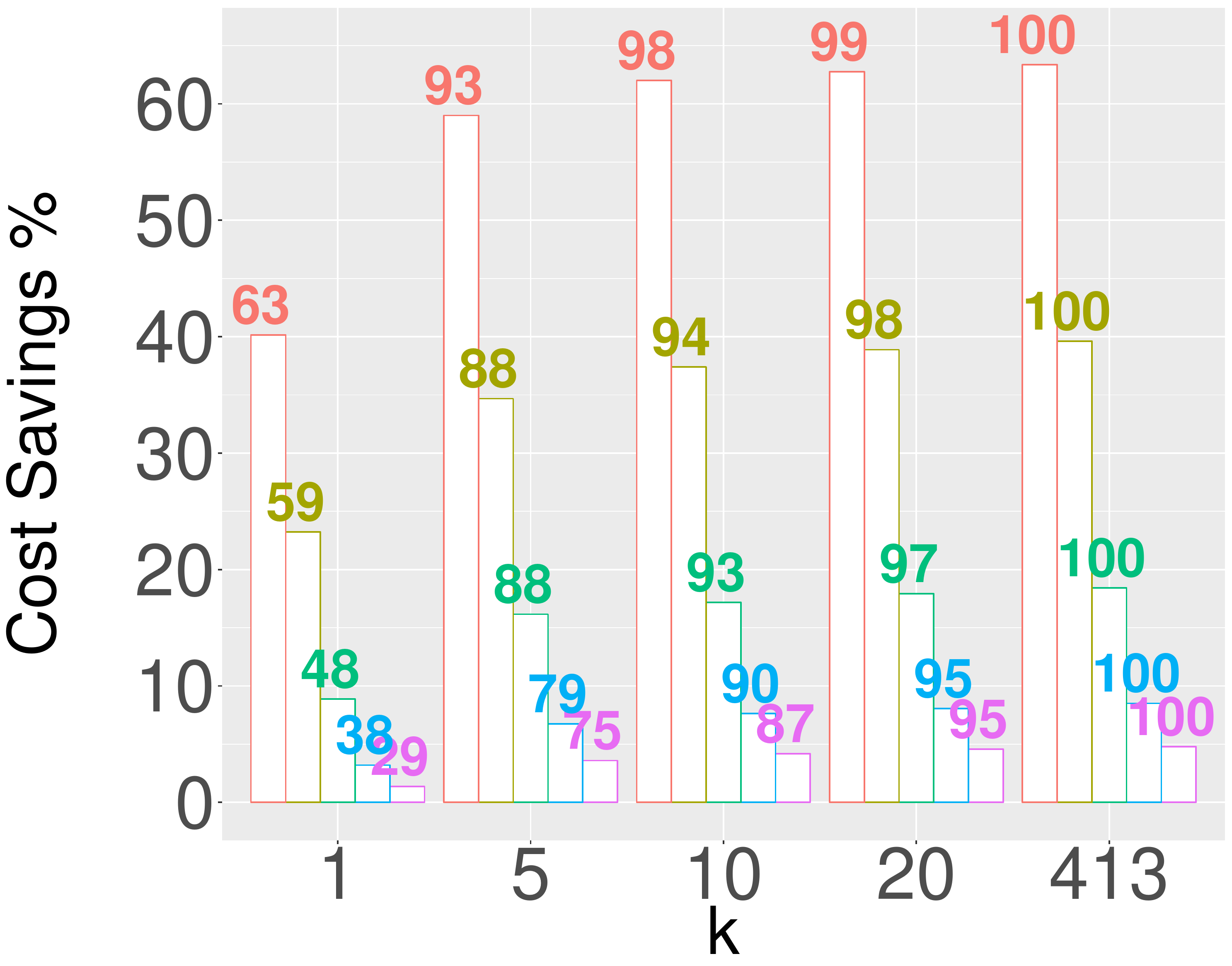} &
				\includegraphics[width=.20\textwidth]{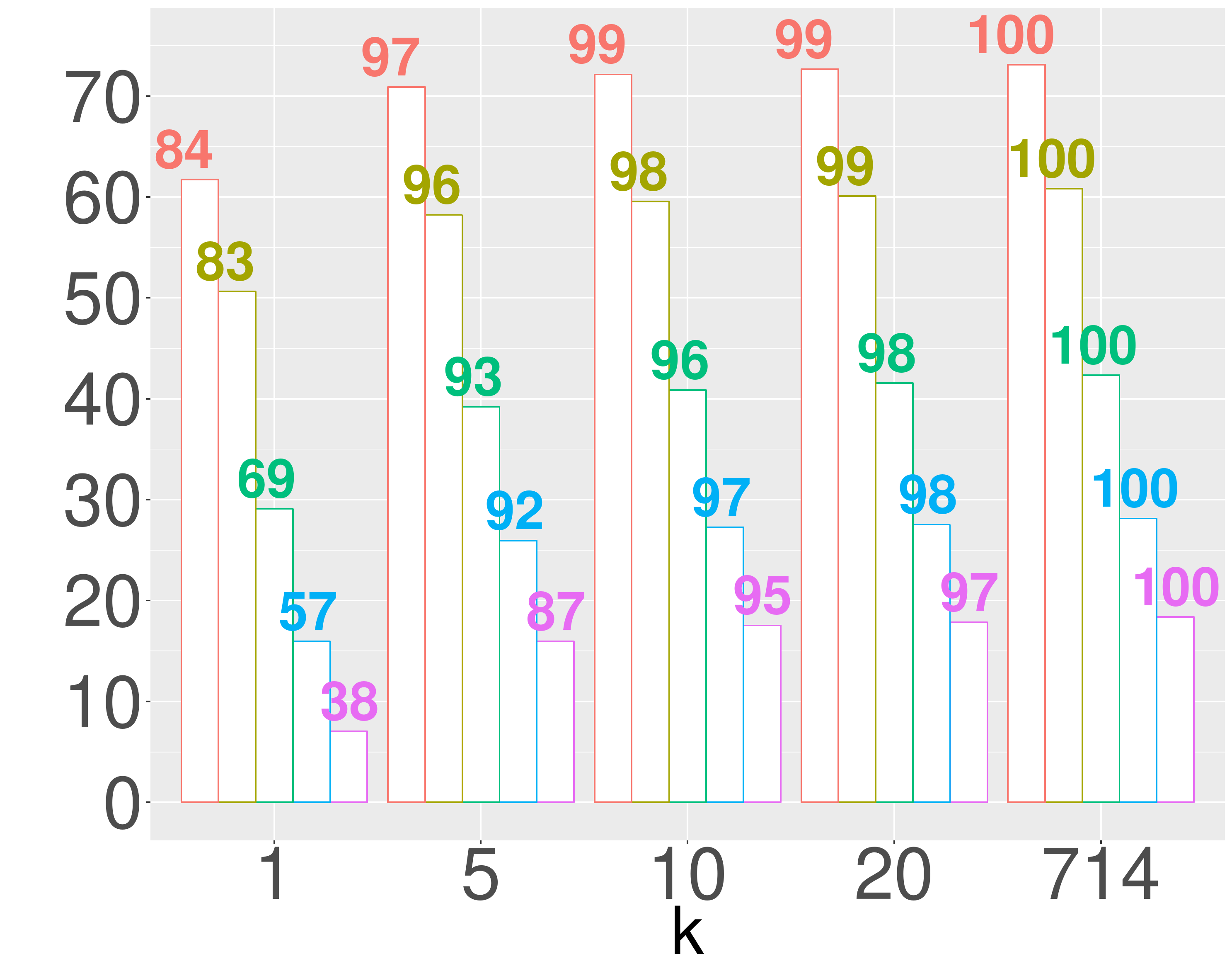} & 
				\includegraphics[width=.20\textwidth]{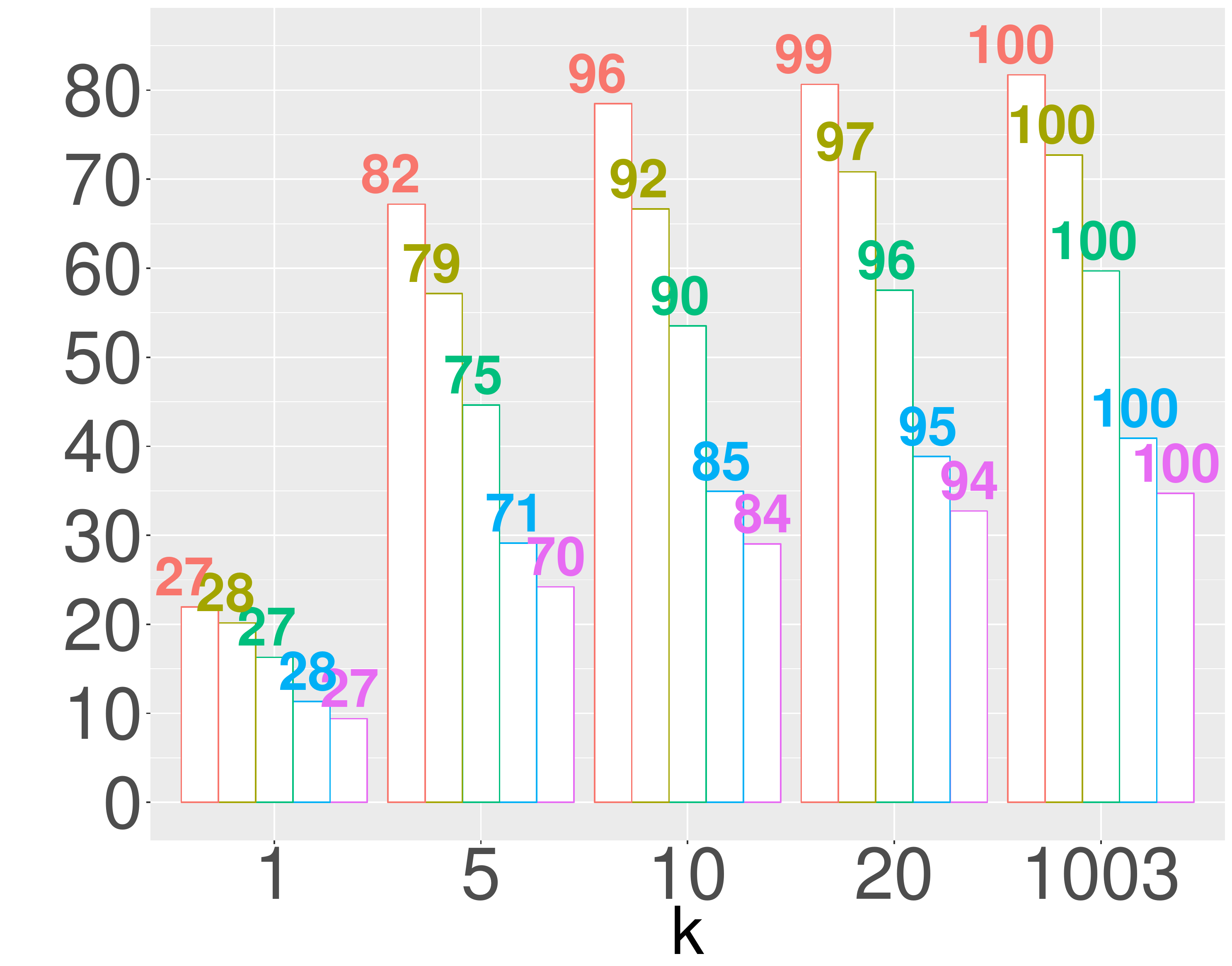} & 
				\includegraphics[width=.20\textwidth]{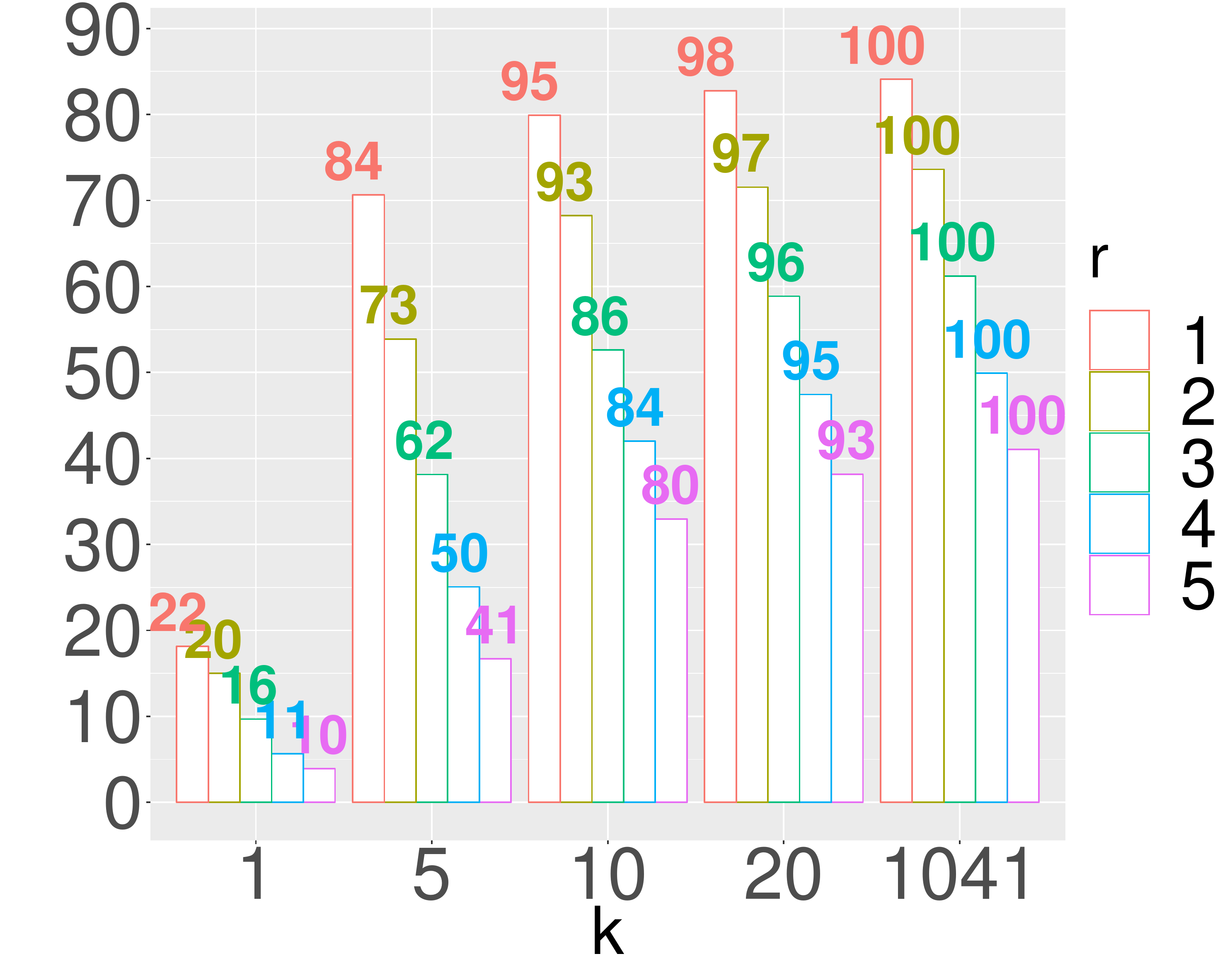}
				\\
				(e) \diabetes (\mf)  & (f) \link (\mf) & (g)  \muninm (\mf)   & (h) \muninb (\wmf)  \\ 
				\includegraphics[width=.20\textwidth]{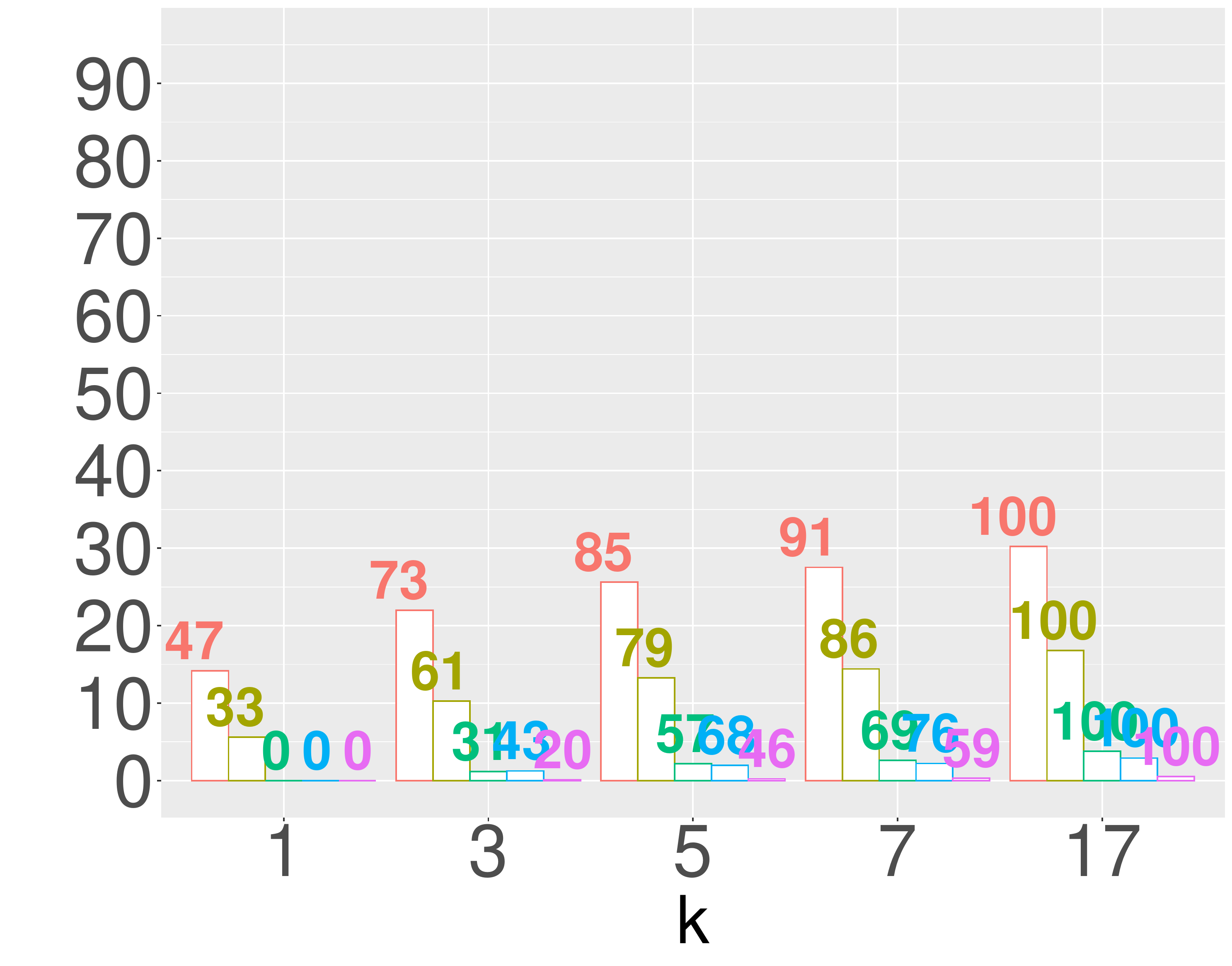} &
				\includegraphics[width=.20\textwidth]{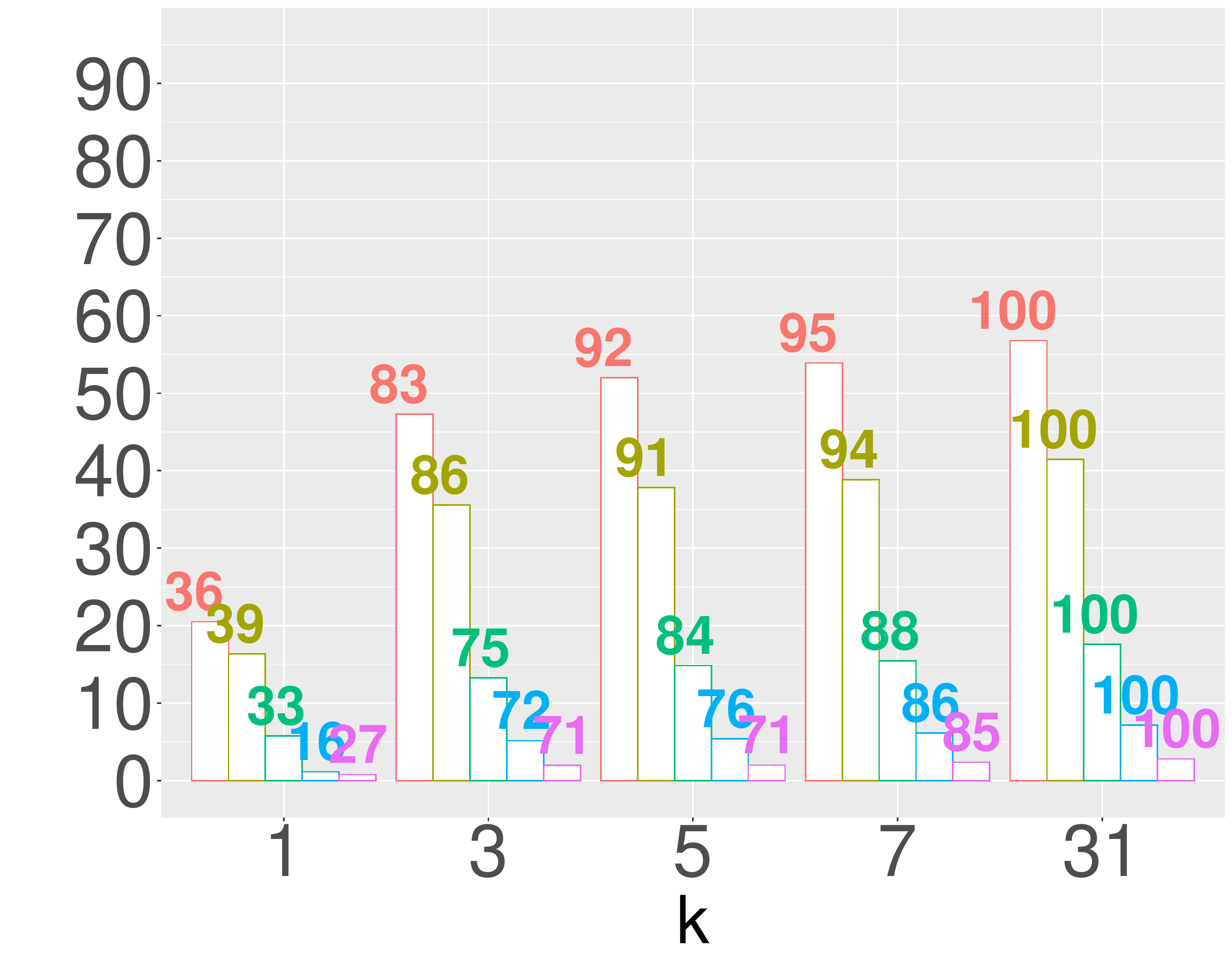} & 
				\includegraphics[width=.20\textwidth]{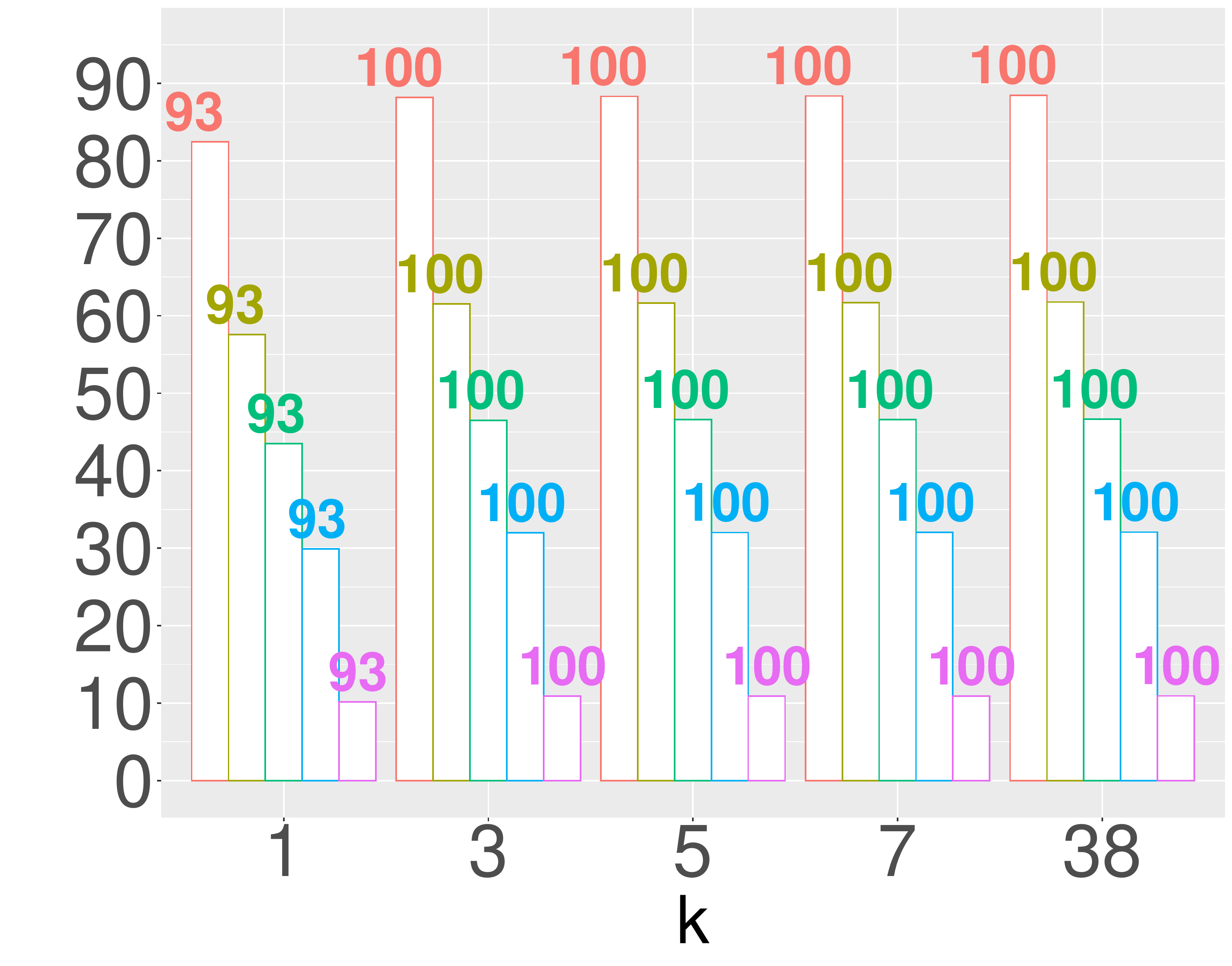} & 
				\includegraphics[width=.20\textwidth]{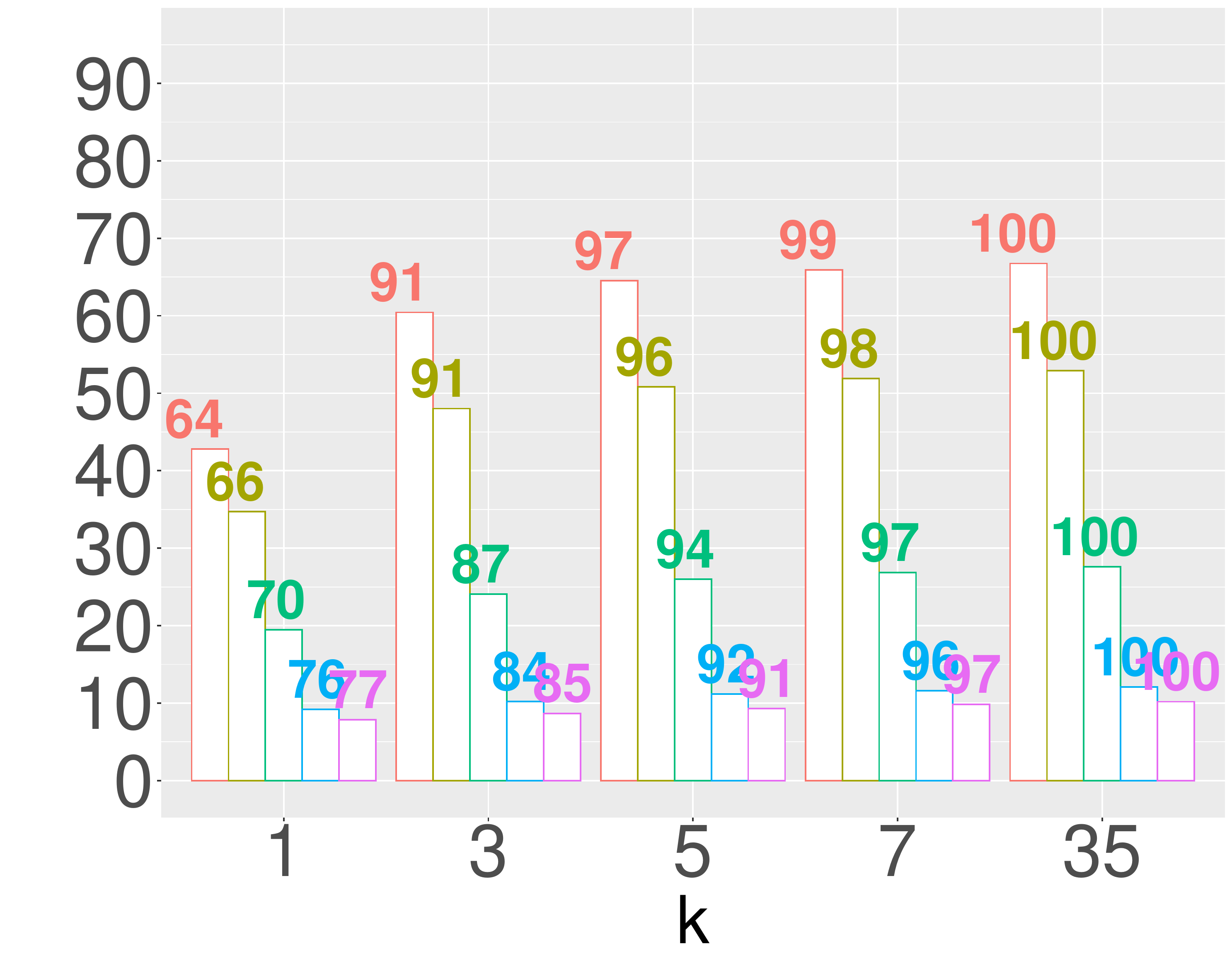}
				\\
				\revision{(i) \tpchsmall (\mw)}  & \revision{(j) \tpchmedium (\mw)} & \revision{(k) \tpchverylarge (\mw)}   & \revision{(l) \tpchlarge (\mw)}  \\ 
			\end{tabular}
			\caption{\label{fig:biasr_per_r}Cost savings per query size $\qsize$ in skewed-workload scheme.
				$x$-axis: number of materialized factors (budget \budget). $y$-axis: cost savings in query running time compared to no materialization. Numbers on the bars: the percentage of cost savings relative to the materialization of all factors.}
		\end{center}
	\end{figure*}
}

\ReviewOnly{
	\begin{figure*}[t]
		\begin{center}
			\begin{tabular}{cccc}
				\includegraphics[width=.20\textwidth]{plots/mildew/unif_per_r}&
				\includegraphics[width=.20\textwidth]{plots/pathfinder/unif_per_r}&
				\includegraphics[width=.20\textwidth]{plots/munin1/unif_per_r}&
				\includegraphics[width=.20\textwidth]{plots/andes/unif_per_r}\\
				(a) \mildew (\mf) & (b) \bnpathfinder (\mf)  & (c) \munins (\wmf) & (d) \andes (\mf)   \\
				\includegraphics[width=.20\textwidth]{plots/diabetes/unif_per_r} &
				\includegraphics[width=.20\textwidth]{plots/link/unif_per_r} & 
				\includegraphics[width=.20\textwidth]{plots/munin2/unif_per_r} & 
				\includegraphics[width=.20\textwidth]{plots/munin/unif_per_r}
				\\
				(e) \diabetes (\mf)  & (f) \link (\mf) & (g)  \muninm (\mf)   & (h) \muninb (\wmf)  \\ 
			\end{tabular}
			\caption{Cost savings per query size $\qsize$ in uniform-workload scheme.
				$x$-axis: number of materialized factors (budget \budget). $y$-axis: cost savings in query running time compared to no materialization. Numbers on the bars: the percentage of cost savings relative to the materialization of all factors.}
		\end{center}
		\label{fig:unif_per_r}
	\end{figure*}
}

\begin{table}[t]
\setlength\tabcolsep{4pt}
\fontsize{8}{9}\selectfont
\begin{center}
\caption{Average query processing times in seconds in uniform-workload scheme, when no materialization is used.}
\label{table:runtimes}
\begin{tabular}{lrrrrrr}
\toprule
Network & \qsize\!=\,1 & \qsize\!=\,2 & \qsize\!=\,3 & \qsize\!=\,4  & \qsize\!=\,5 & all   \\ \midrule
\mildew &   11.2    & 33.5 & 97.5 & 122.7 &177.3 & 77.0 \\
\bnpathfinder &  0.2 & 0.2 & 0.2 & 0.3 & 0.4  & 0.2  \\
\munins &  319.4  & 408.4  & 474.7 & 656.6  & 767.2 & 438.2  \\
\andes  &  1.1 &  1.6  & 2.6 & 4.8     & 7.6 & 3.5 \\
\diabetes &  18.0 &  95.7 & 332.3 & 621.0 & 801.9 & 162.3  \\
\link &  119.4  &  215.8 &   313.8 & 391.1 & 532.0 & 287.1 \\
\muninm &  7.1  & 11.6 &     14.5   & 12.9 &     25.8   & 14.4  \\
\muninb &   15.1 & 16.7 & 25.8 & 30.3 & 38.5 & 25.3  \\
\revision{\tpchsmall} & \revision{0.1} & \revision{0.1} & \revision{0.1} & \revision{0.2} & \revision{2.1} & \revision{0.5}  \\ 
\revision{\tpchmedium} & \revision{0.1} & \revision{0.1} & \revision{0.6} & \revision{2.9} & \revision{15.5} & \revision{3.8}  \\
\revision{\tpchverylarge} & \revision{3.7} & \revision{5.9} & \revision{37.3} & \revision{81.9} & \revision{183.7} & \revision{50.9} \\ 
\revision{\tpchlarge} & \revision{0.2} & \revision{0.7} & \revision{2.9} & \revision{25.5} & \revision{28.2} & \revision{8.7} \\ 
\bottomrule
\end{tabular}
\end{center}
\end{table}

\ReviewOnly{
\begin{figure*}[t]
\begin{center}
\begin{tabular}{cccc}
    \includegraphics[width=.20\textwidth]{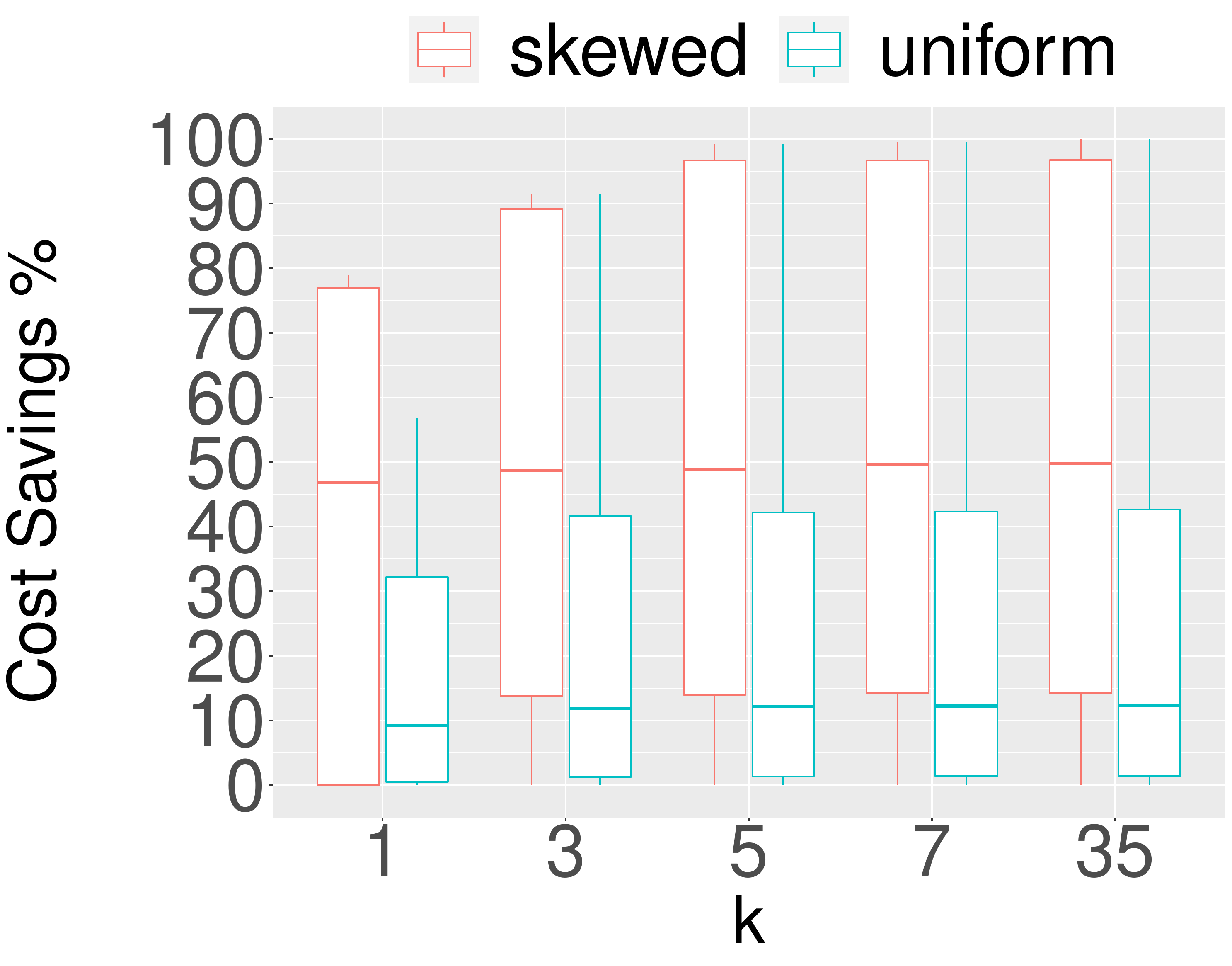}&
    \hspace{-0mm} \includegraphics[width=.20\textwidth]{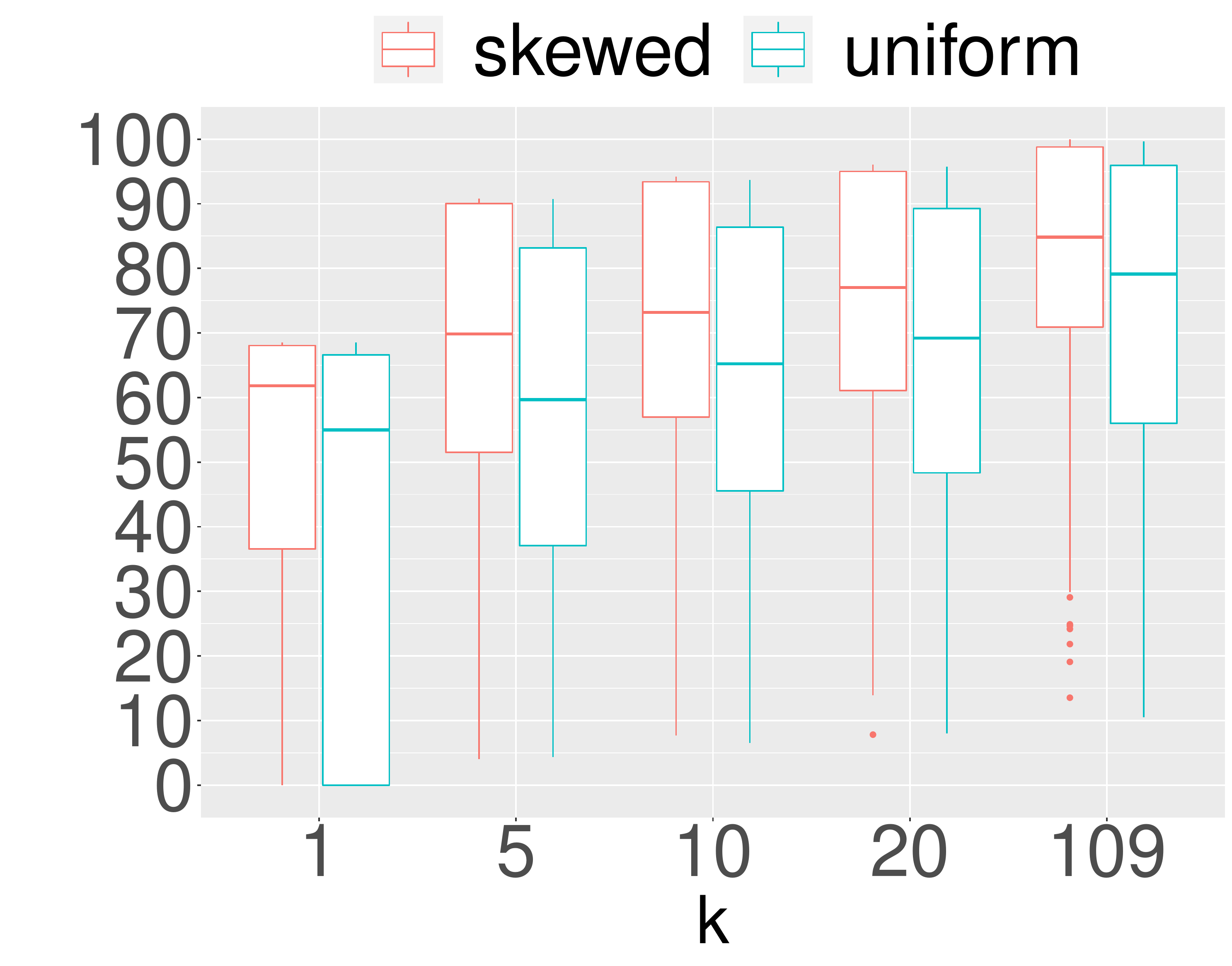}&
    \hspace{-0mm} \includegraphics[width=.20\textwidth]{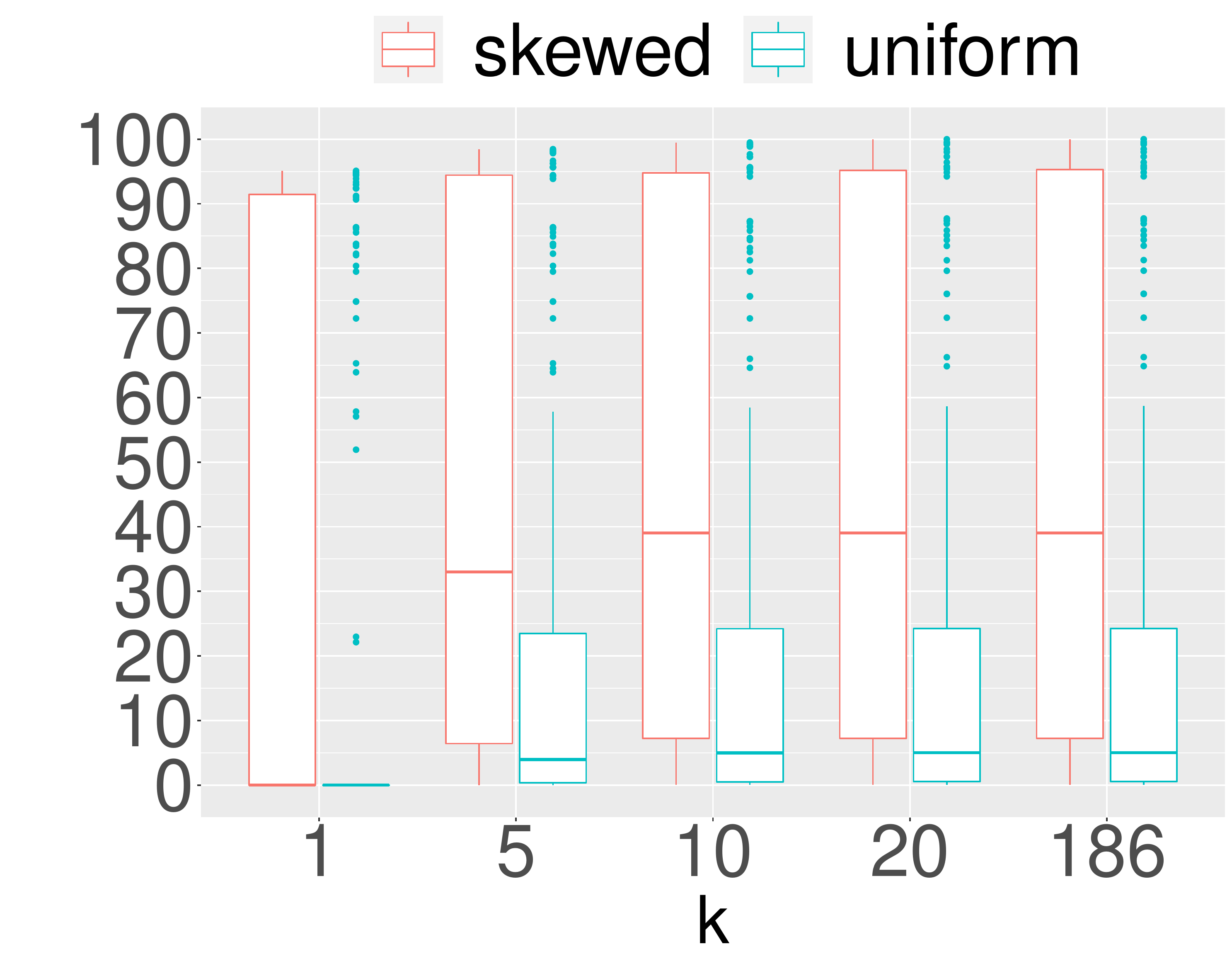}&
    \hspace{-0mm} \includegraphics[width=.20\textwidth]{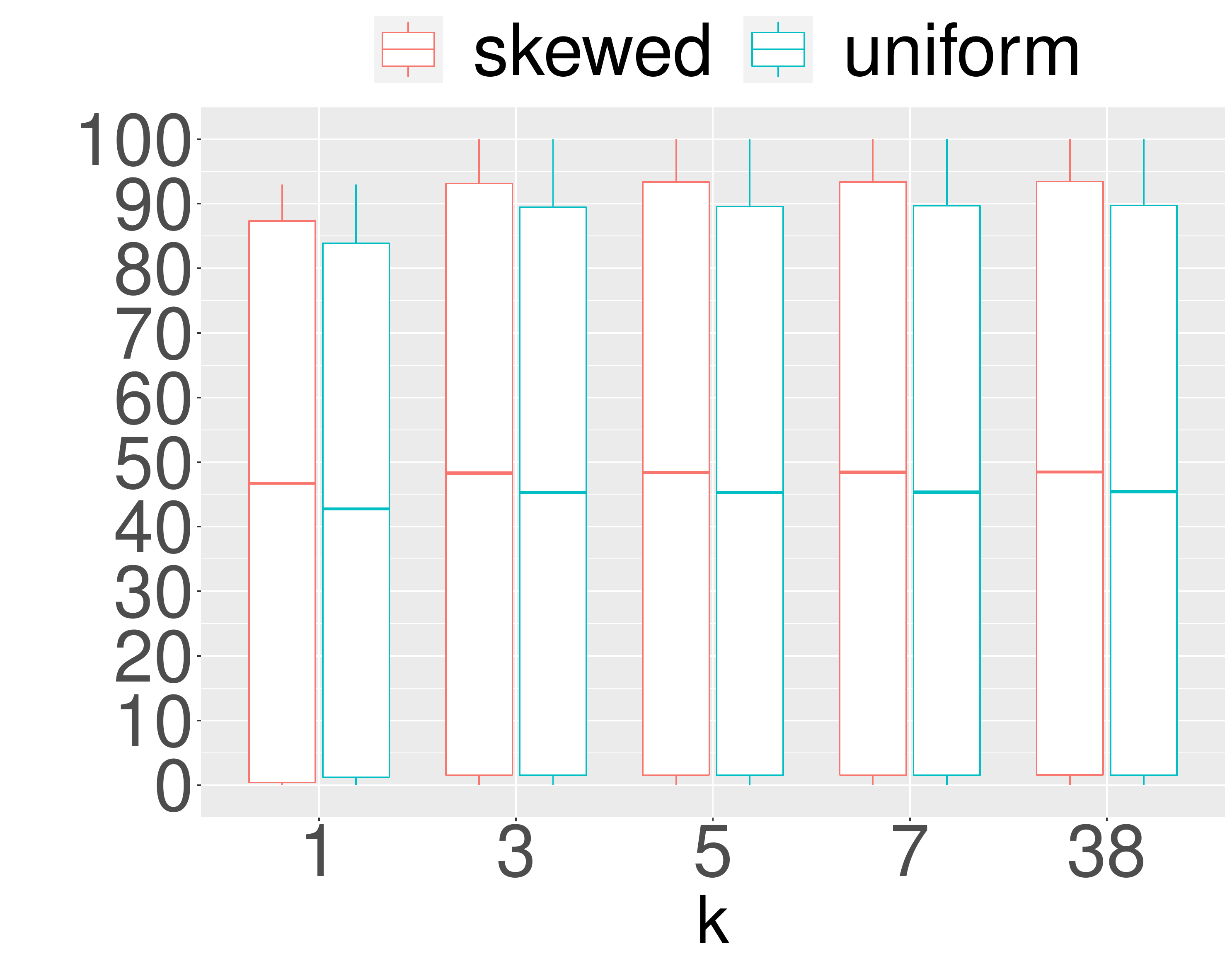}\\
  (a) \mildew (\mf) & (b) \bnpathfinder (\mf)  & (c) \munins (\wmf) & \revision{(d) \tpchverylarge (\mw)}   \\
  \includegraphics[width=.20\textwidth]{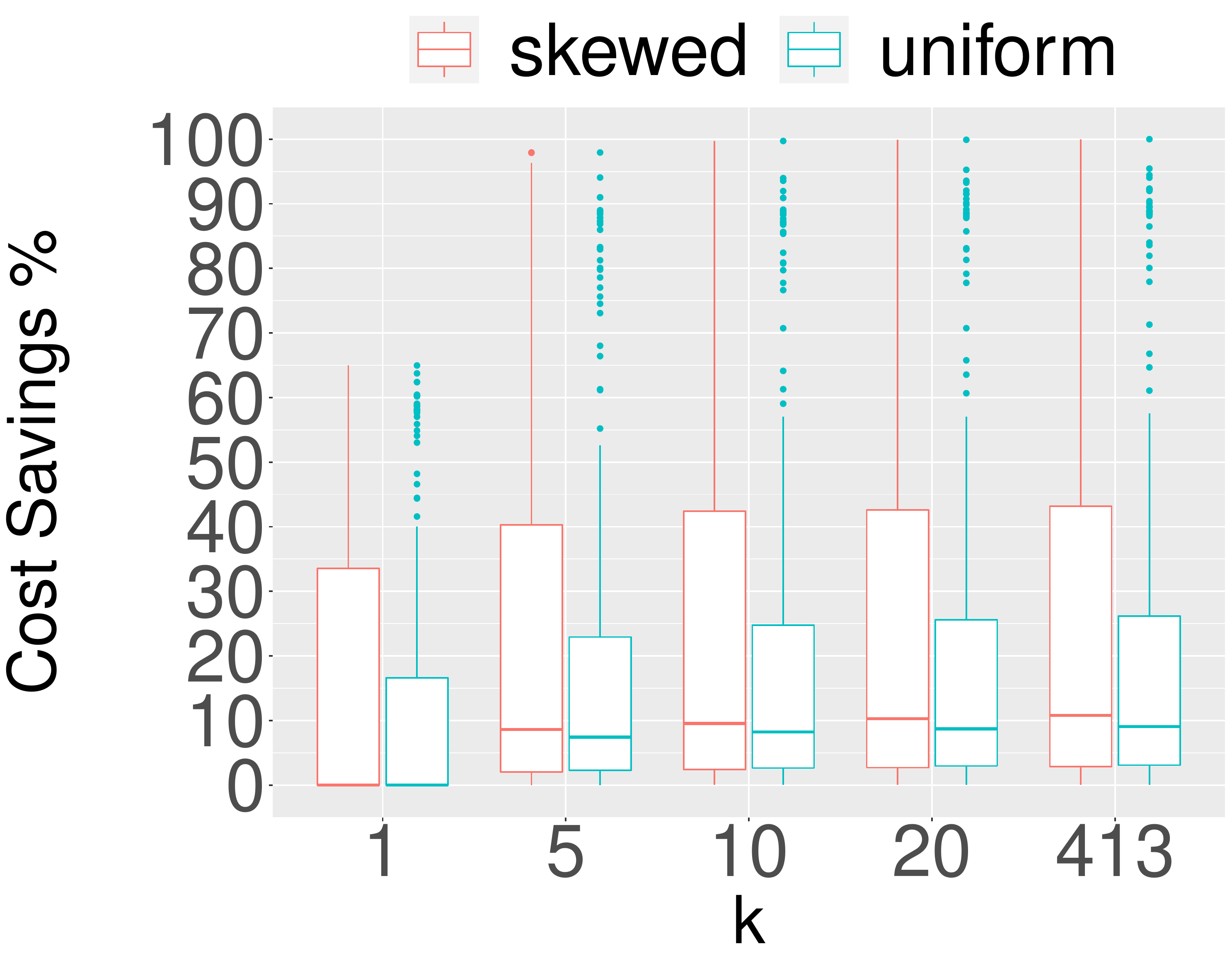} &
    \hspace{-0mm} \includegraphics[width=.20\textwidth]{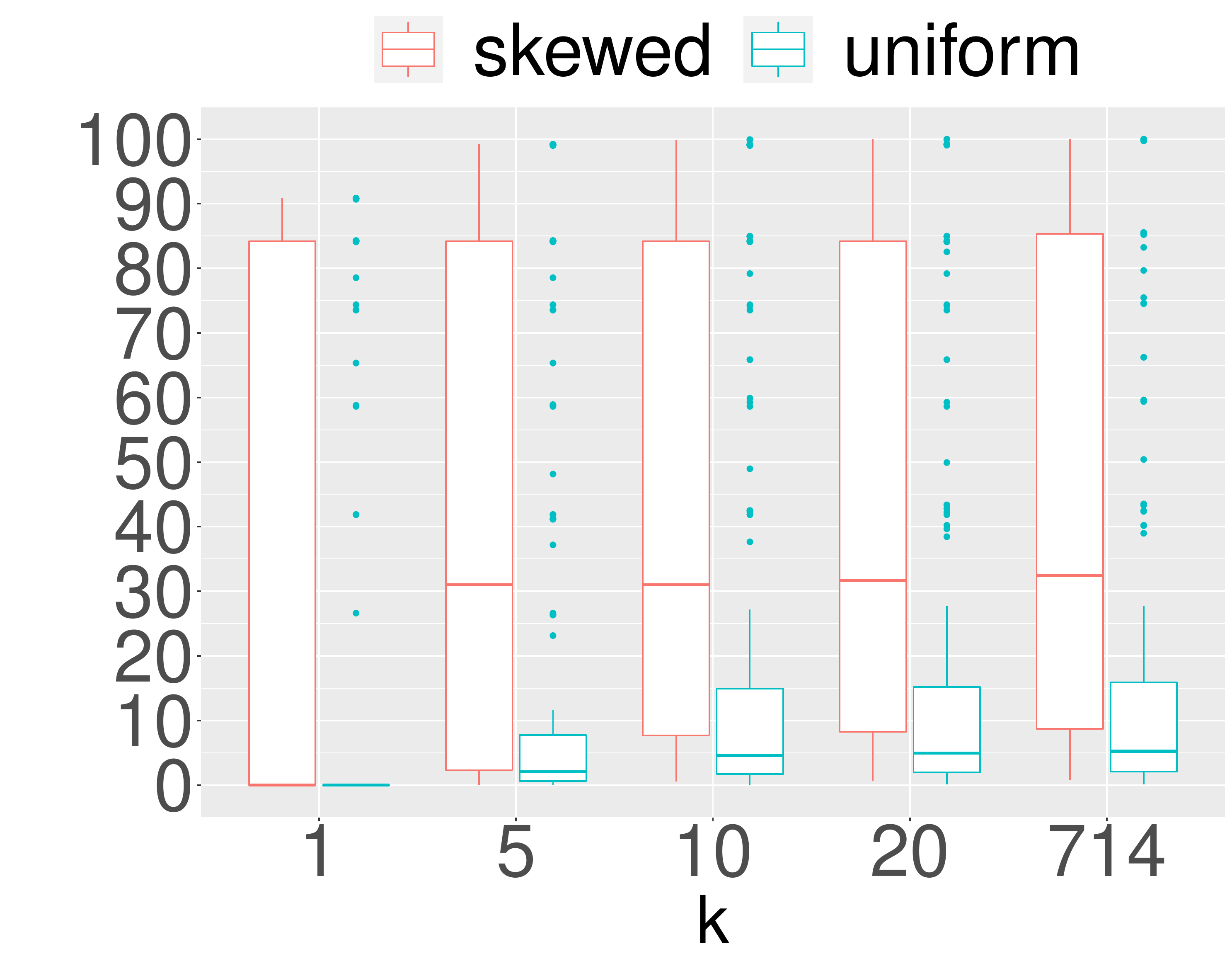} & 
    \hspace{-0mm} \includegraphics[width=.20\textwidth]{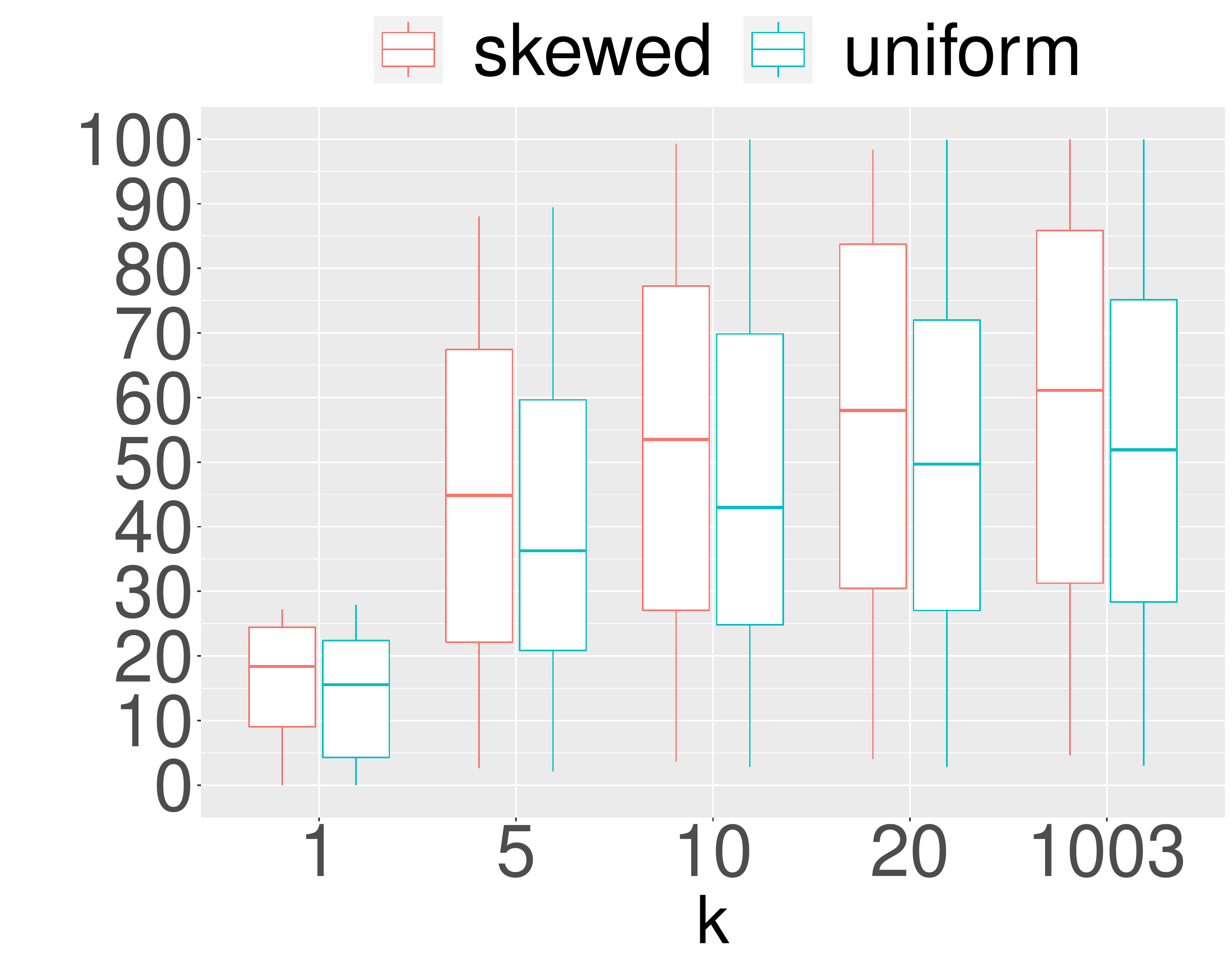} & 
    \hspace{-0mm} \includegraphics[width=.20\textwidth]{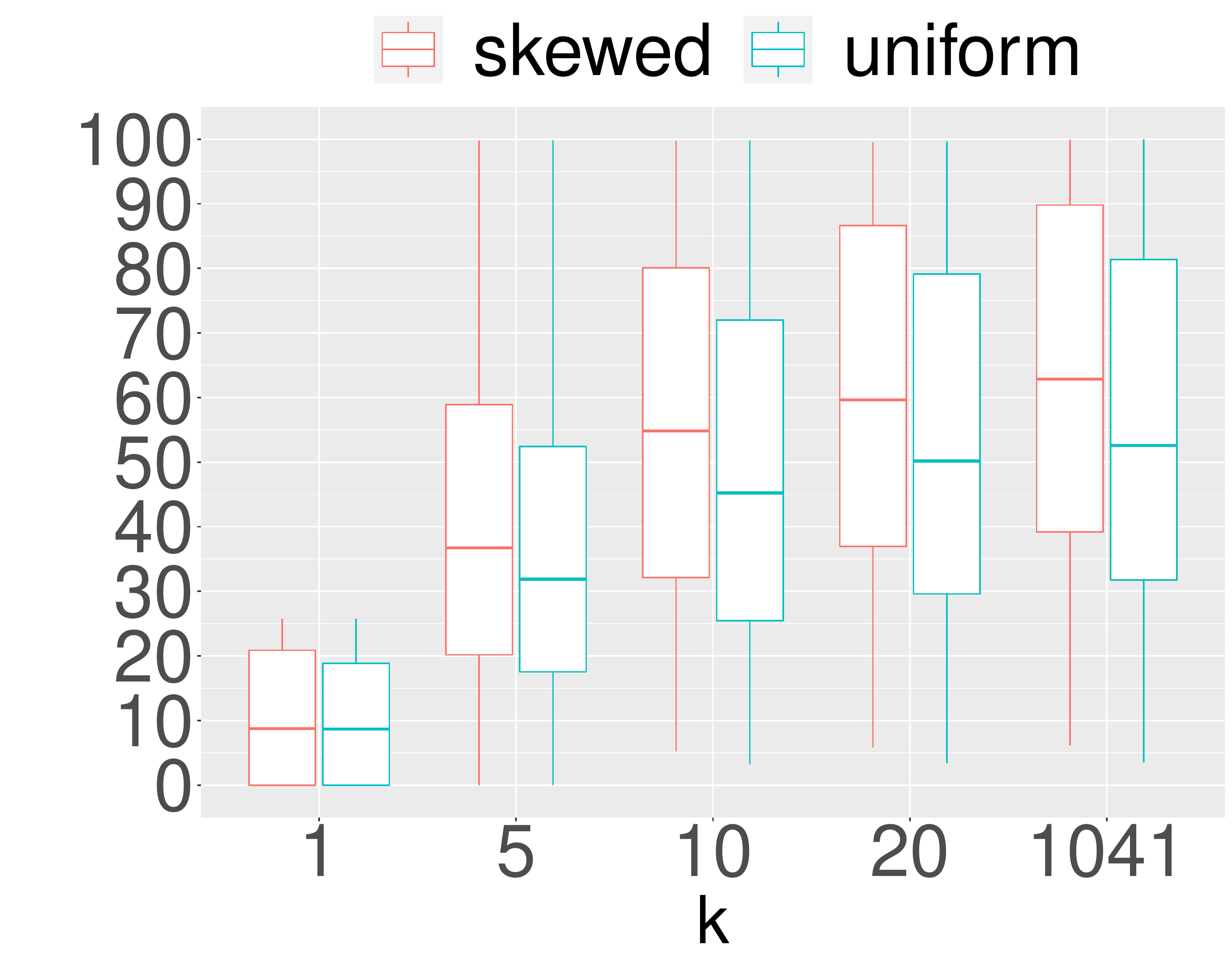}
    \\
   (e) \diabetes (\mf)  & (f) \link (\mf) & (g)  \muninm (\mf)   & (h) \muninb (\wmf)
\end{tabular}
\caption{Cost savings for uniform and skewed workloads. $x$-axis: budget \budget. $y$-axis: cost savings in query running time compared to no materialization.}
\end{center}
\label{fig:workloadMixed}
\end{figure*}
}

\FullOnly{
\begin{figure*}[t]
\begin{tabular}{cccc}
    \includegraphics[width=.225\textwidth]{plots/mildew/workload_mixed}&
    \hspace{-0mm} \includegraphics[width=.225\textwidth]{plots/pathfinder/workload_mixed}&
    \hspace{-0mm} \includegraphics[width=.225\textwidth]{plots/munin1/workload_mixed}&
    \hspace{-0mm} \includegraphics[width=.225\textwidth]{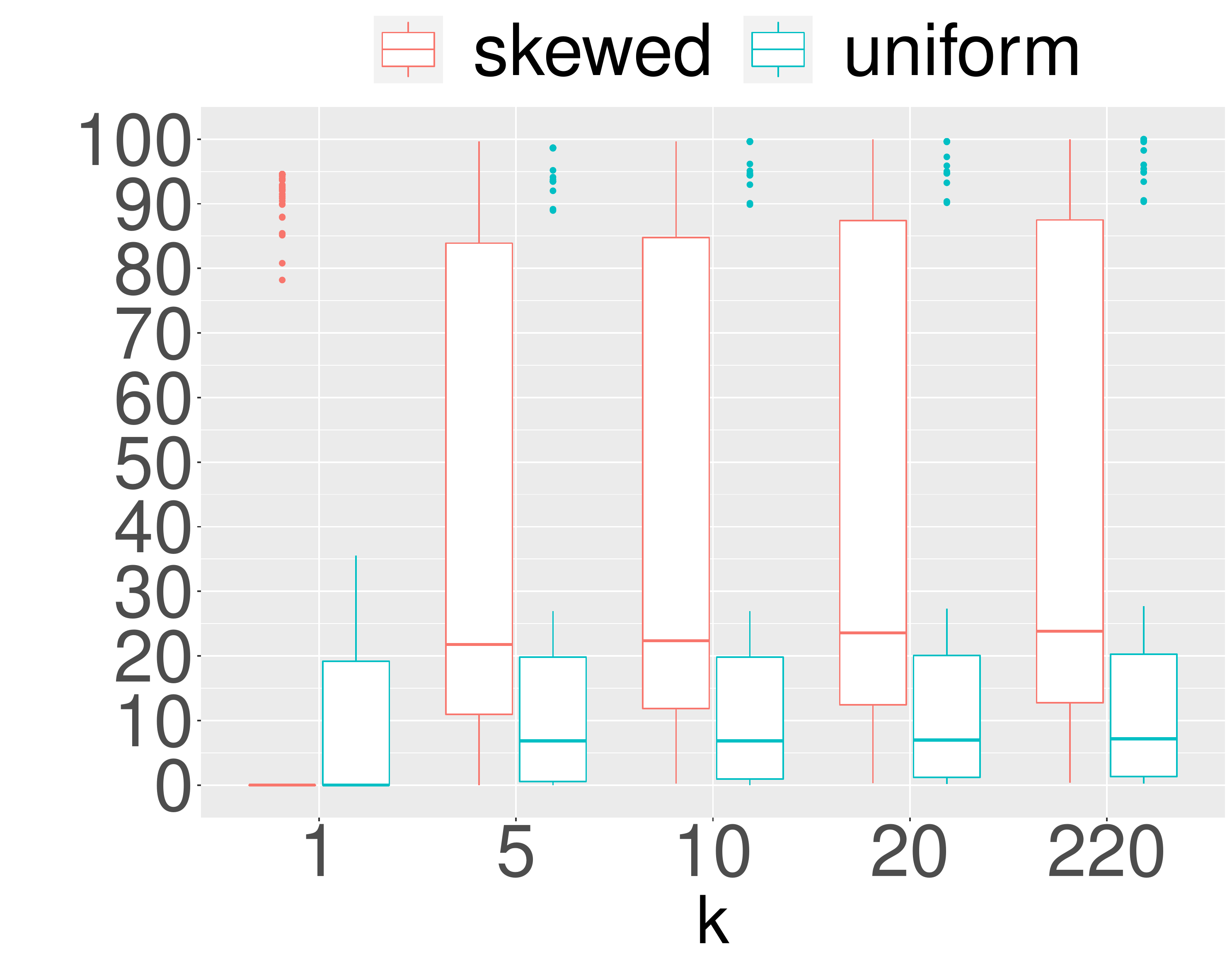}\\
  (a) \mildew (\mf) & (b) \bnpathfinder (\mf)  & (c) \munins (\wmf) & (d) \andes (\mf)   \\
  \includegraphics[width=.225\textwidth]{plots/diabetes/workload_mixed} &
    \hspace{-0mm} \includegraphics[width=.225\textwidth]{plots/link/workload_mixed} & 
    \hspace{-0mm} \includegraphics[width=.225\textwidth]{plots/munin2/workload_mixed} & 
    \hspace{-0mm} \includegraphics[width=.225\textwidth]{plots/munin/workload_mixed}
    \\
   (e) \diabetes (\mf)  & (f) \link (\mf) & (g)  \muninm (\mf)   & (h) \muninb (\wmf) \\
   \includegraphics[width=.225\textwidth]{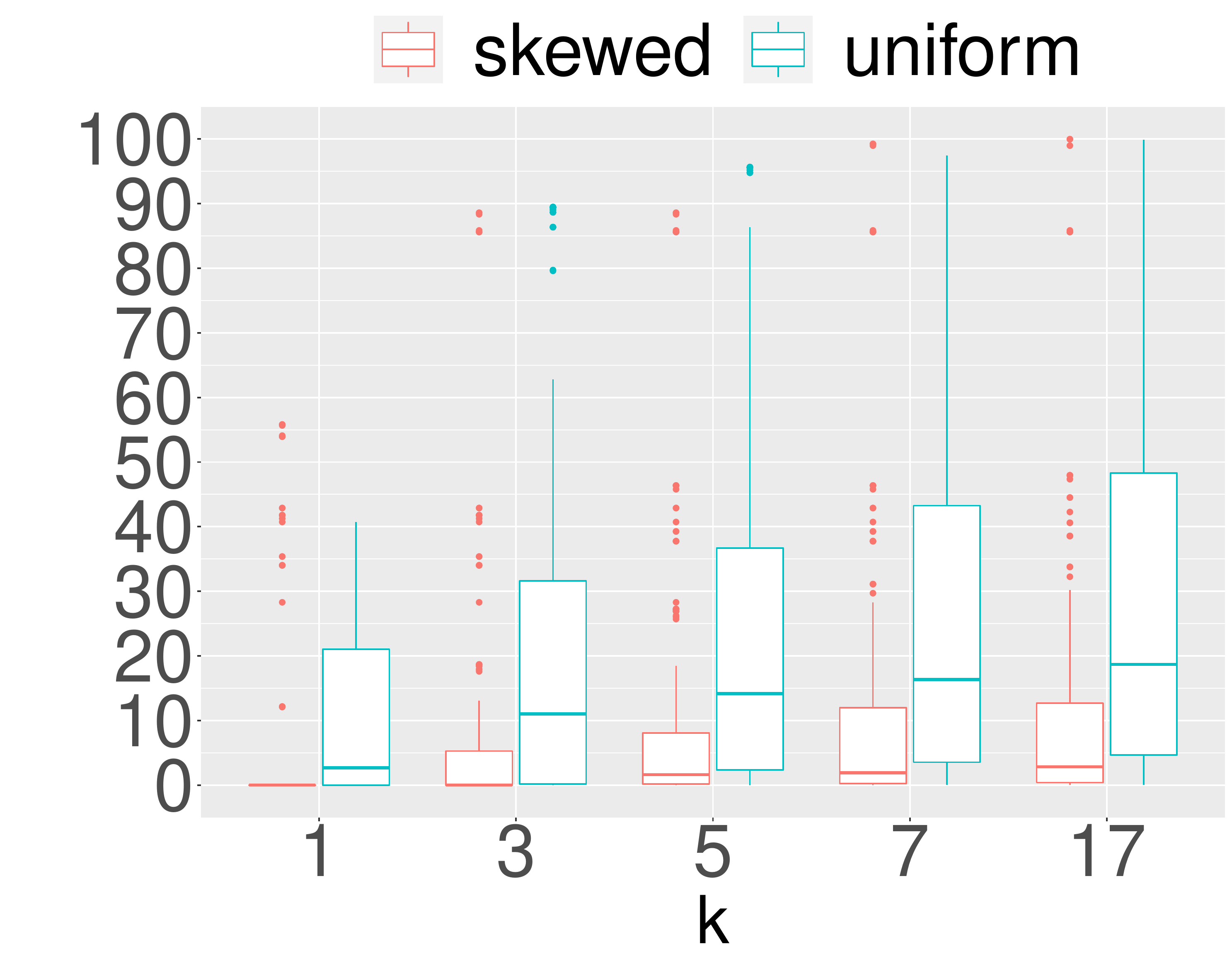} &
    \hspace{-0mm} \includegraphics[width=.225\textwidth]{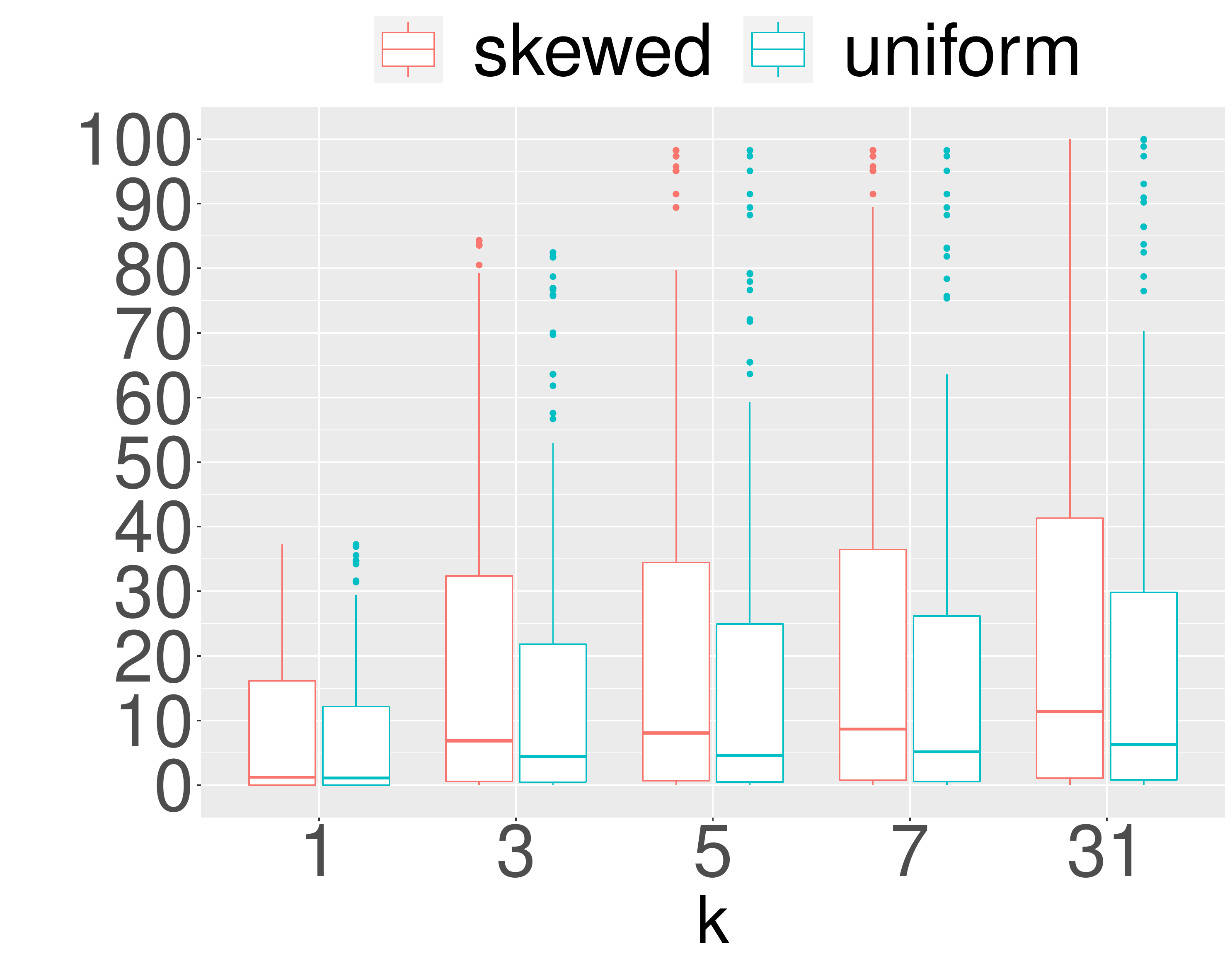} & 
    \hspace{-0mm} \includegraphics[width=.225\textwidth]{plots/tpch3/workload_mixed} & 
    \hspace{-0mm} \includegraphics[width=.225\textwidth]{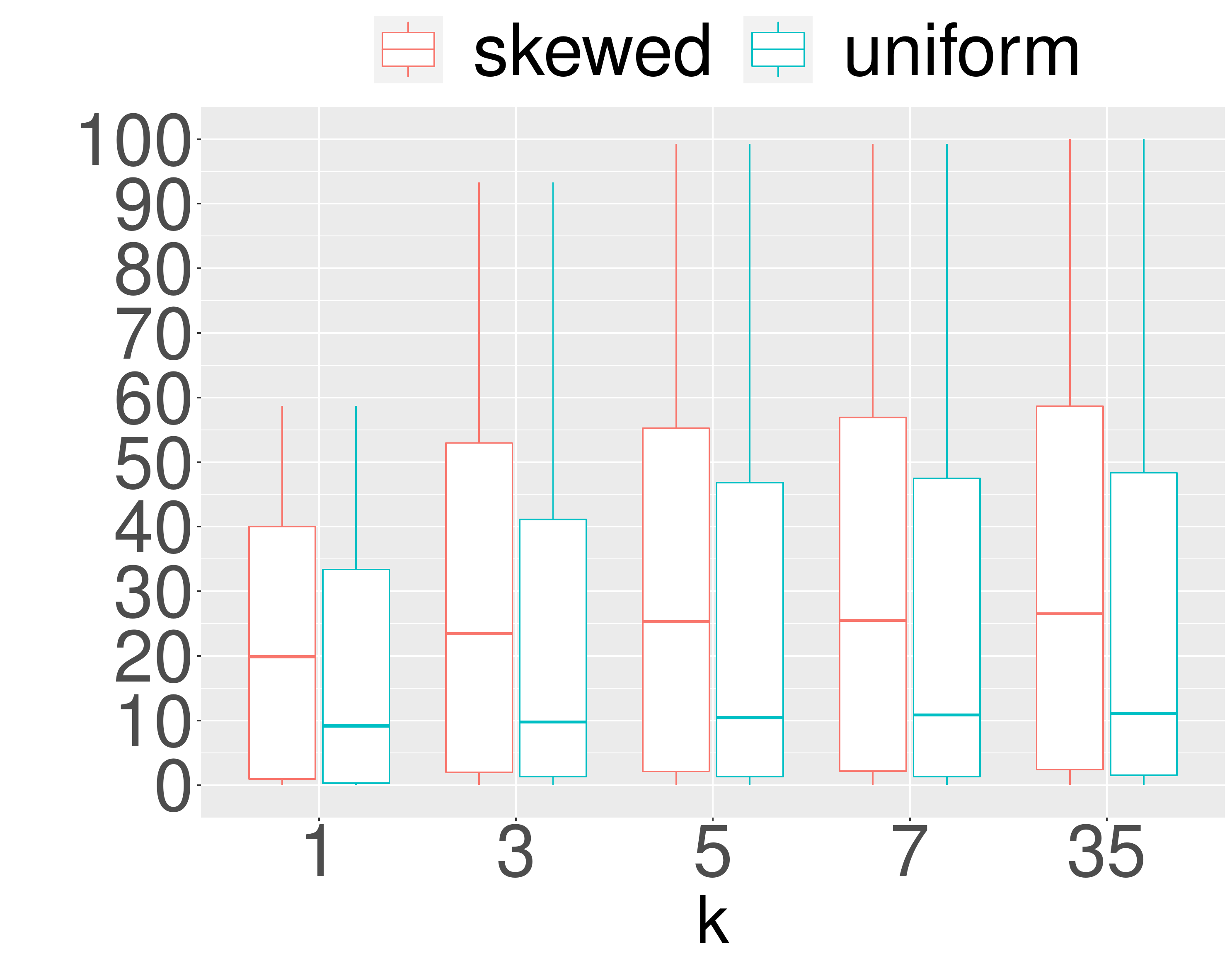}
    \\
   \revision{(i) \tpchsmall (\mw)}  & \revision{(j) \tpchmedium (\mw)} & \revision{(k)  \tpchverylarge (\mw)}   & \revision{(l) \tpchlarge (\mw)}  
\end{tabular}
\caption{Cost savings for uniform and skewed workloads. $x$-axis: budget \budget. $y$-axis: cost savings in query running time compared to no materialization. }
\label{fig:workloadMixed}
\end{figure*}
}

\ReviewOnly{
\begin{figure*}[t]
\begin{center}
\begin{tabular}{cccc}
    \includegraphics[width=.20\textwidth]{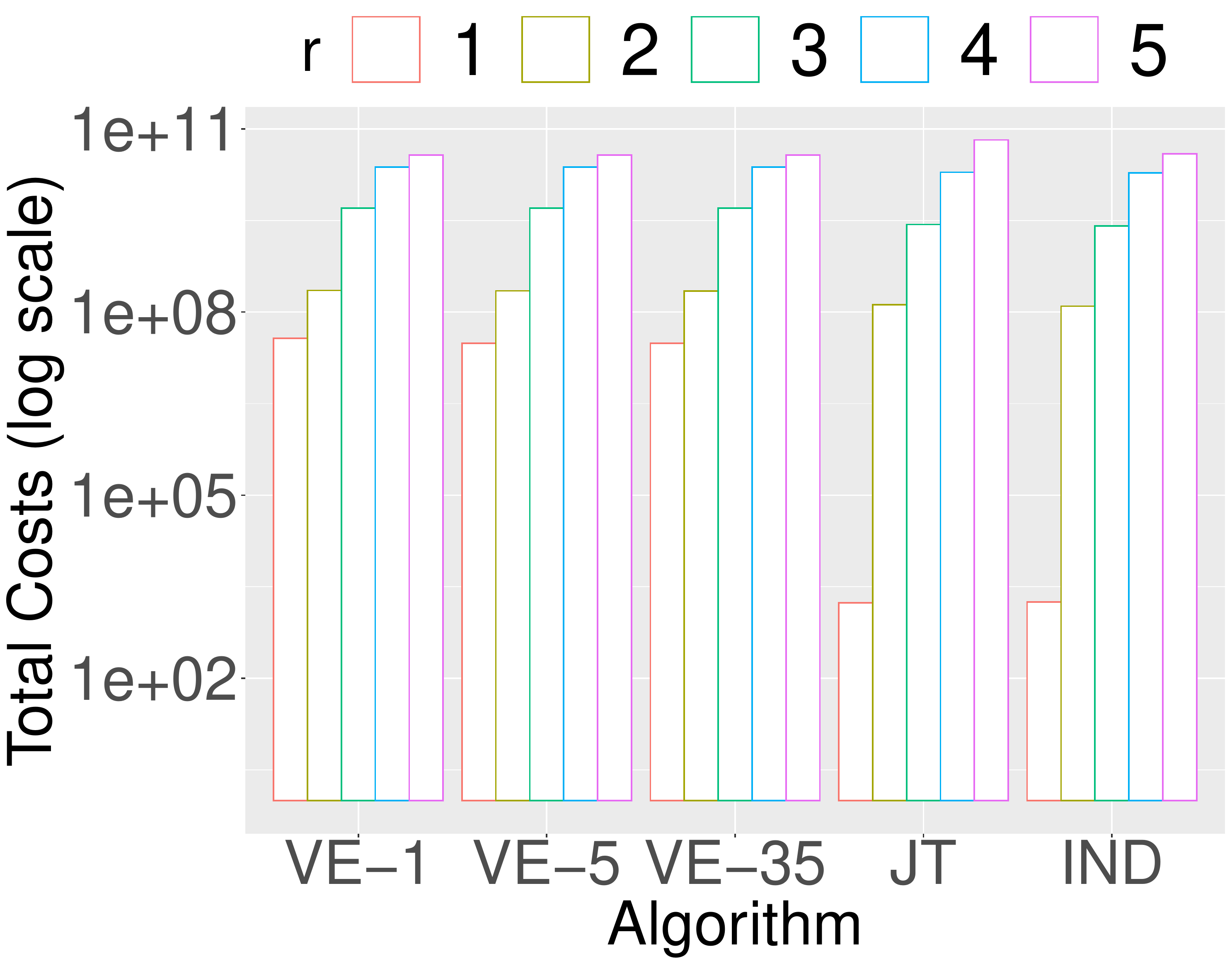}&
    \includegraphics[width=.20\textwidth]{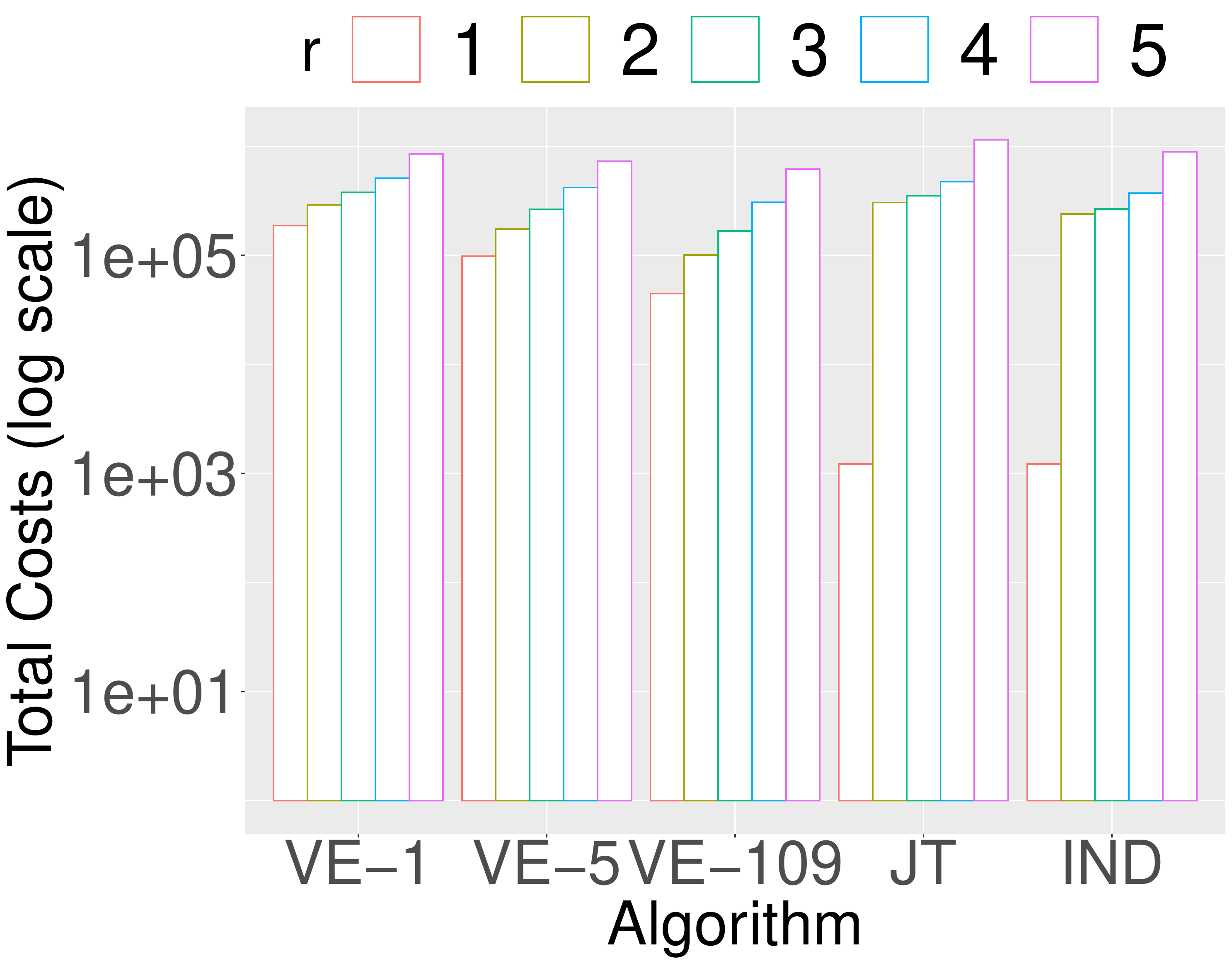}&
    \includegraphics[width=.20\textwidth]{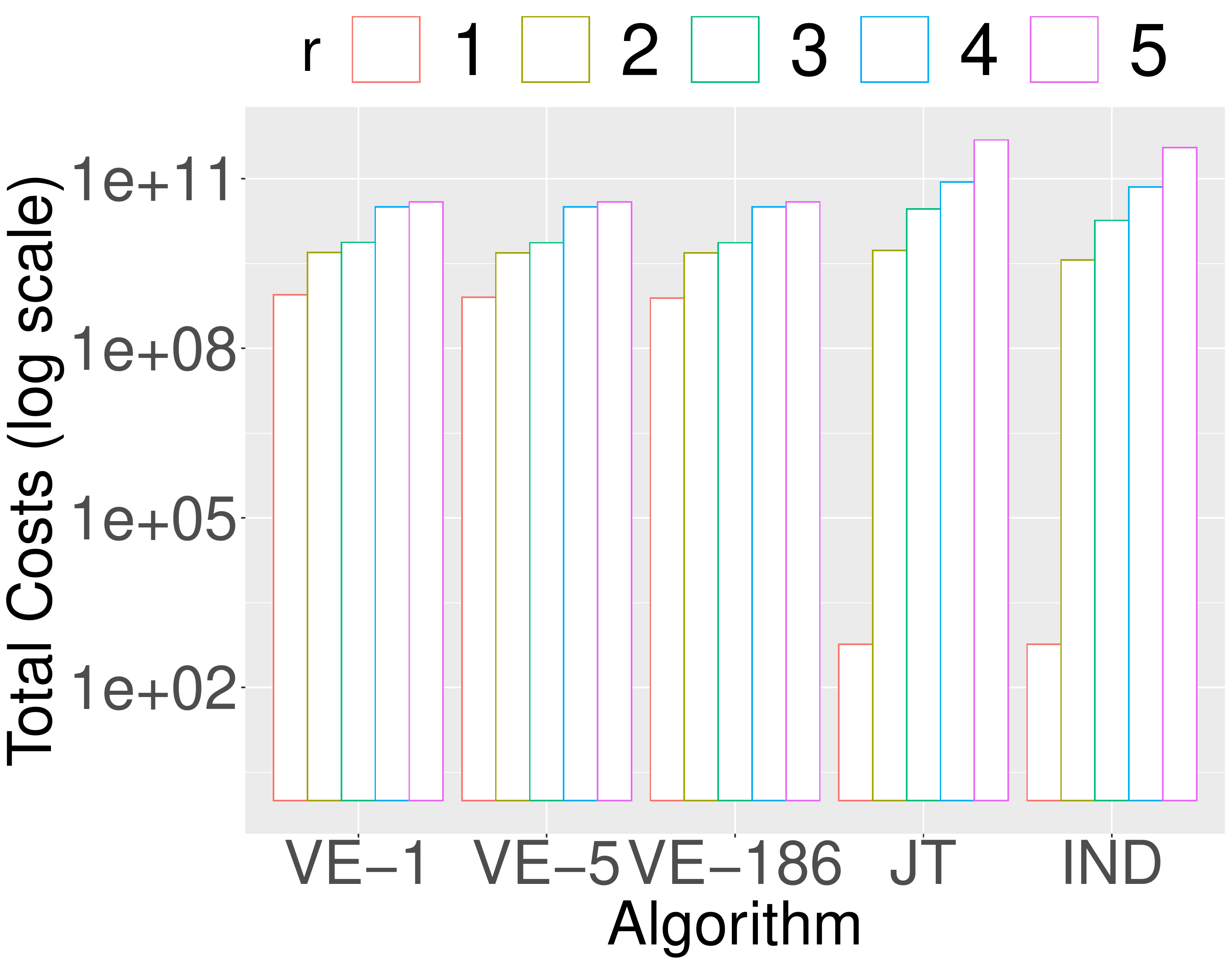}&
    \includegraphics[width=.20\textwidth]{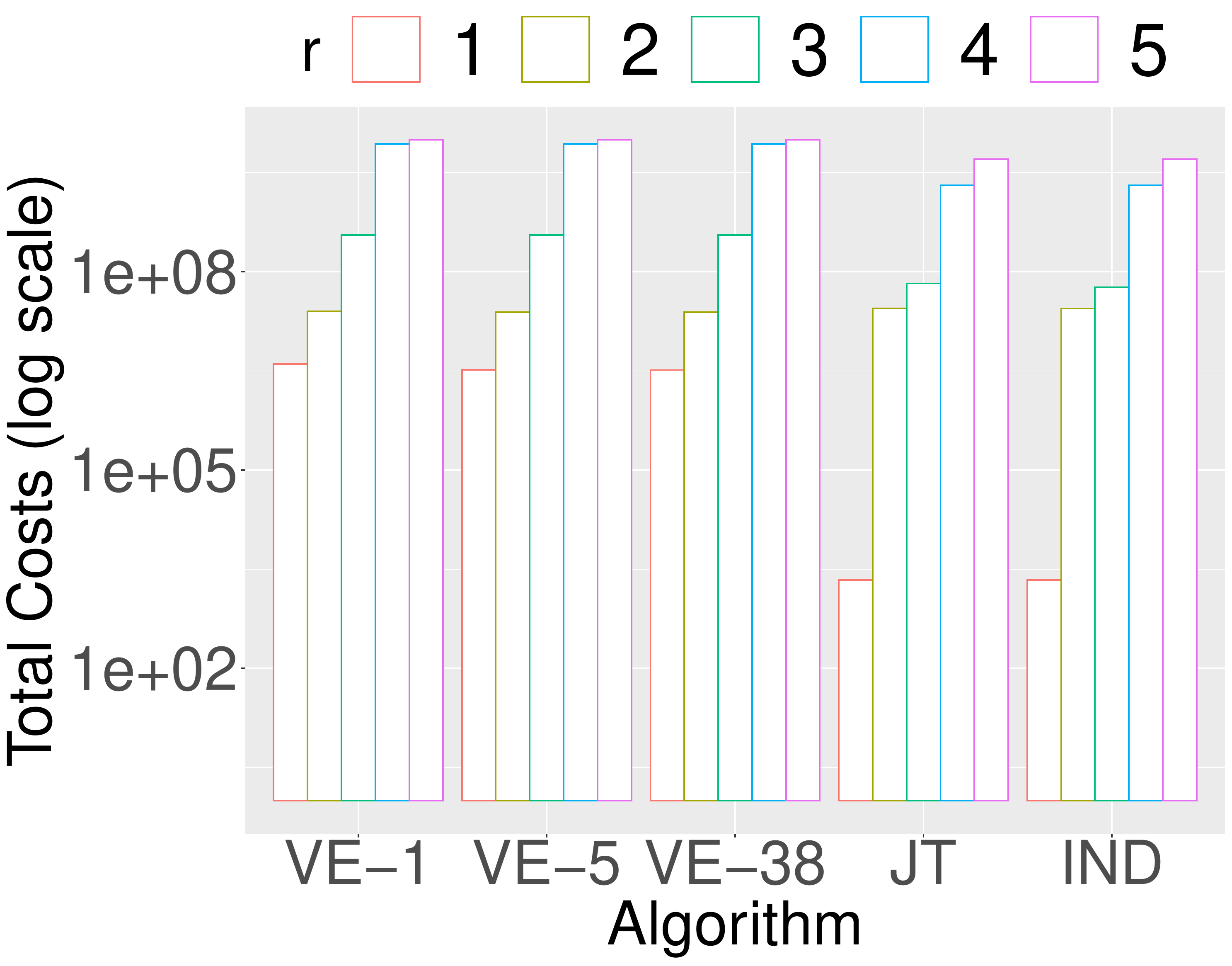}\\
  (a) \mildew (\mf) & (b) \bnpathfinder (\mf)  & (c) \munins (\wmf) & \revision{(d) \tpchverylarge (\mw)}   \\
  \includegraphics[width=.20\textwidth]{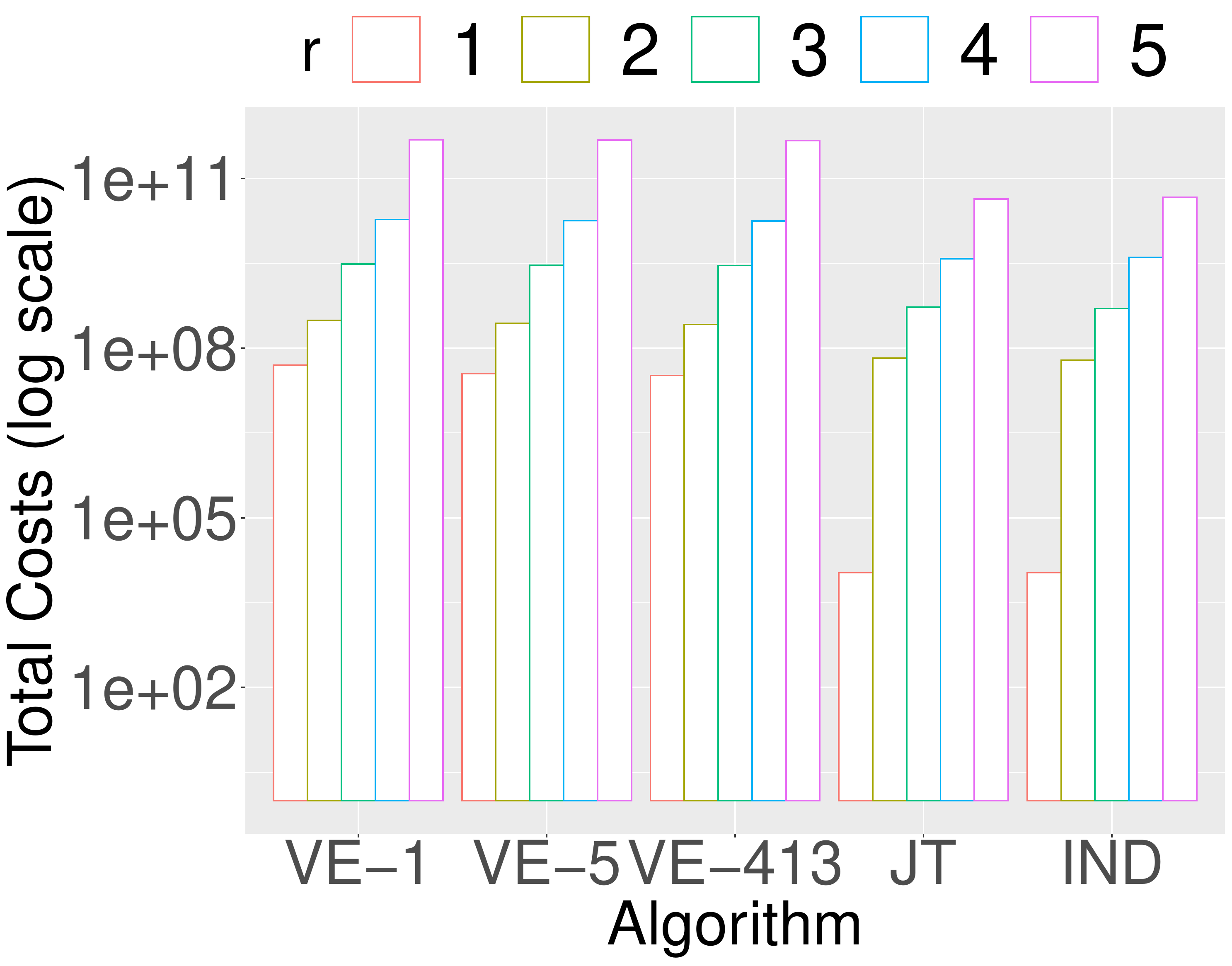} &
    \includegraphics[width=.20\textwidth]{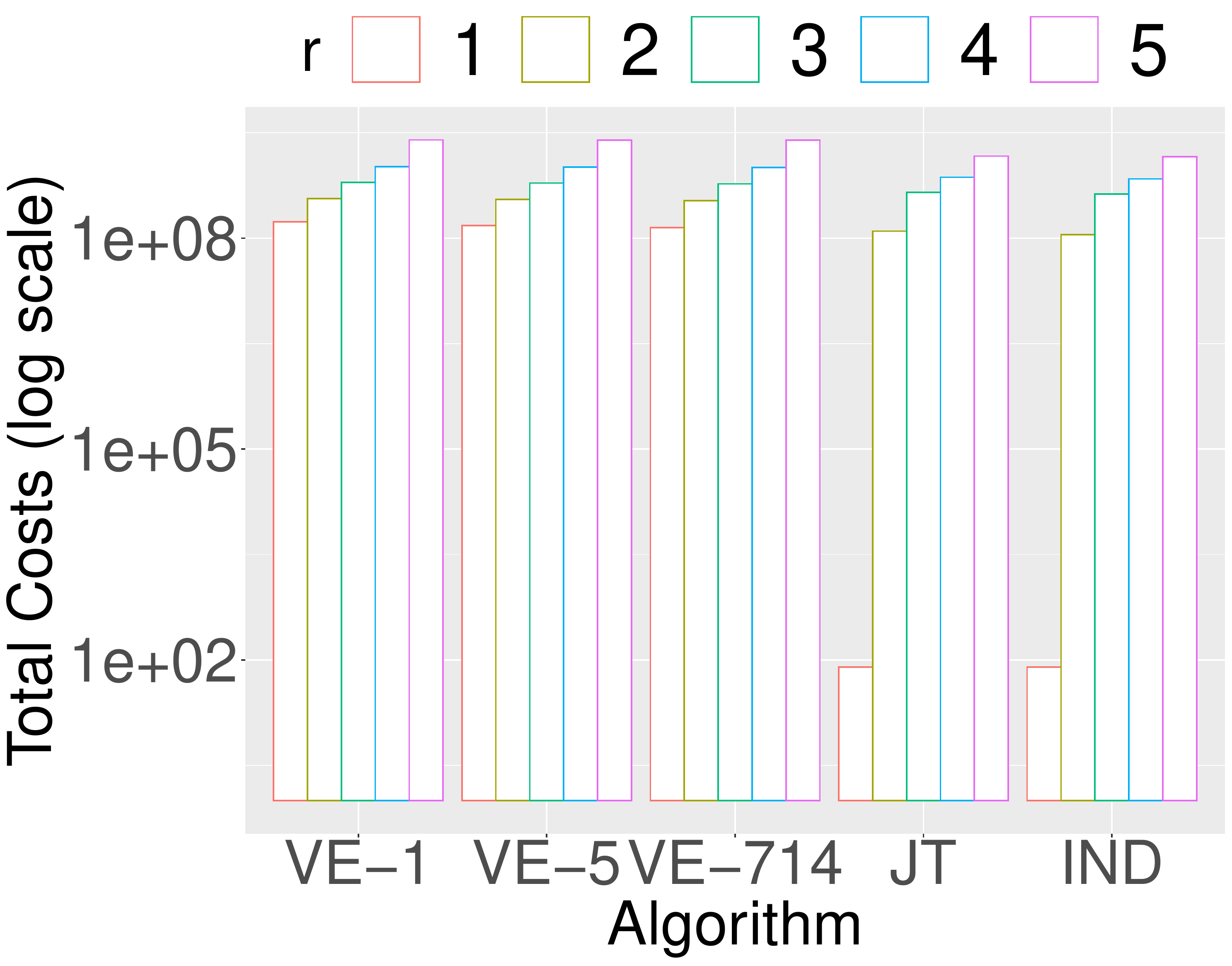} & 
    \includegraphics[width=.20\textwidth]{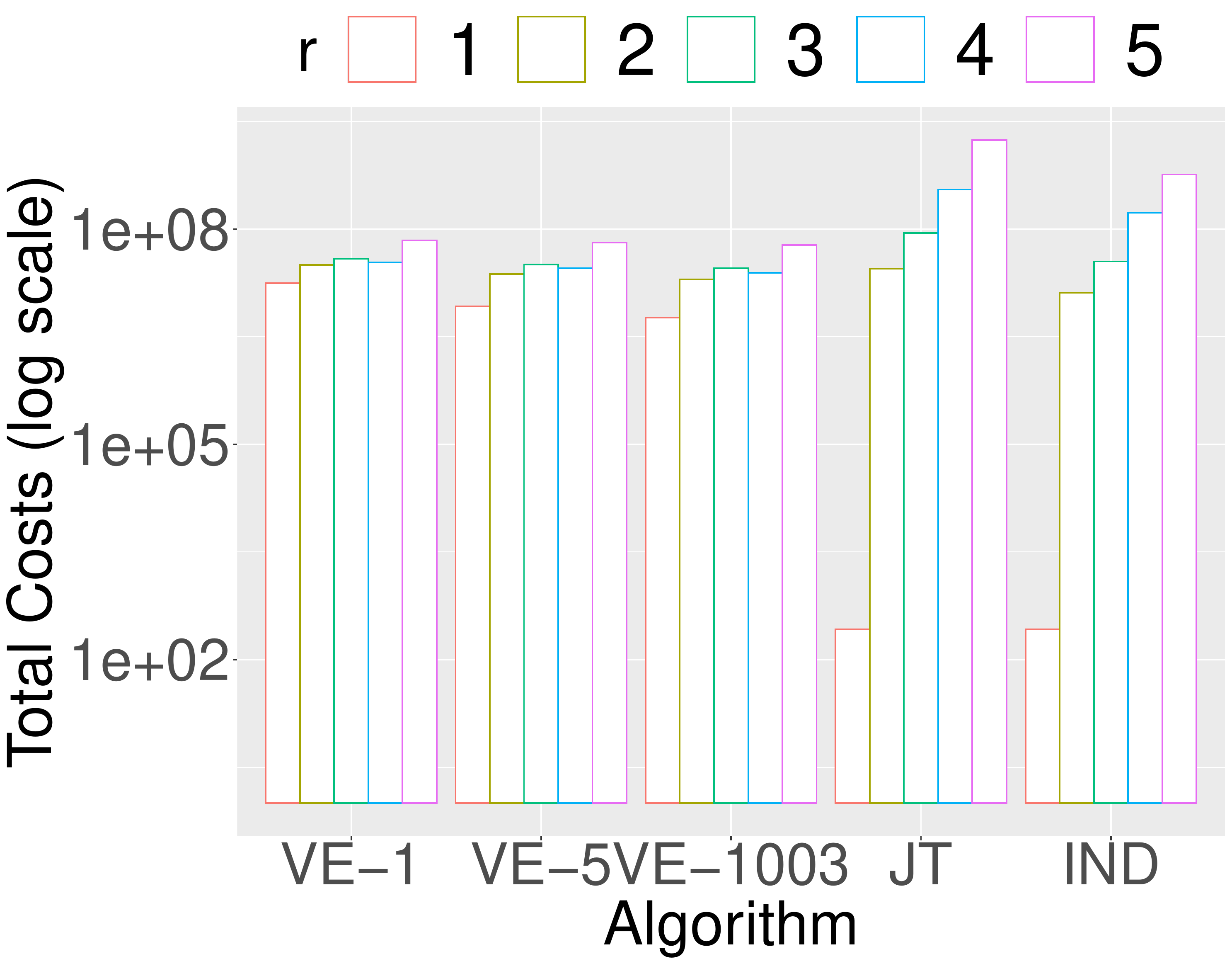} & 
    \includegraphics[width=.20\textwidth]{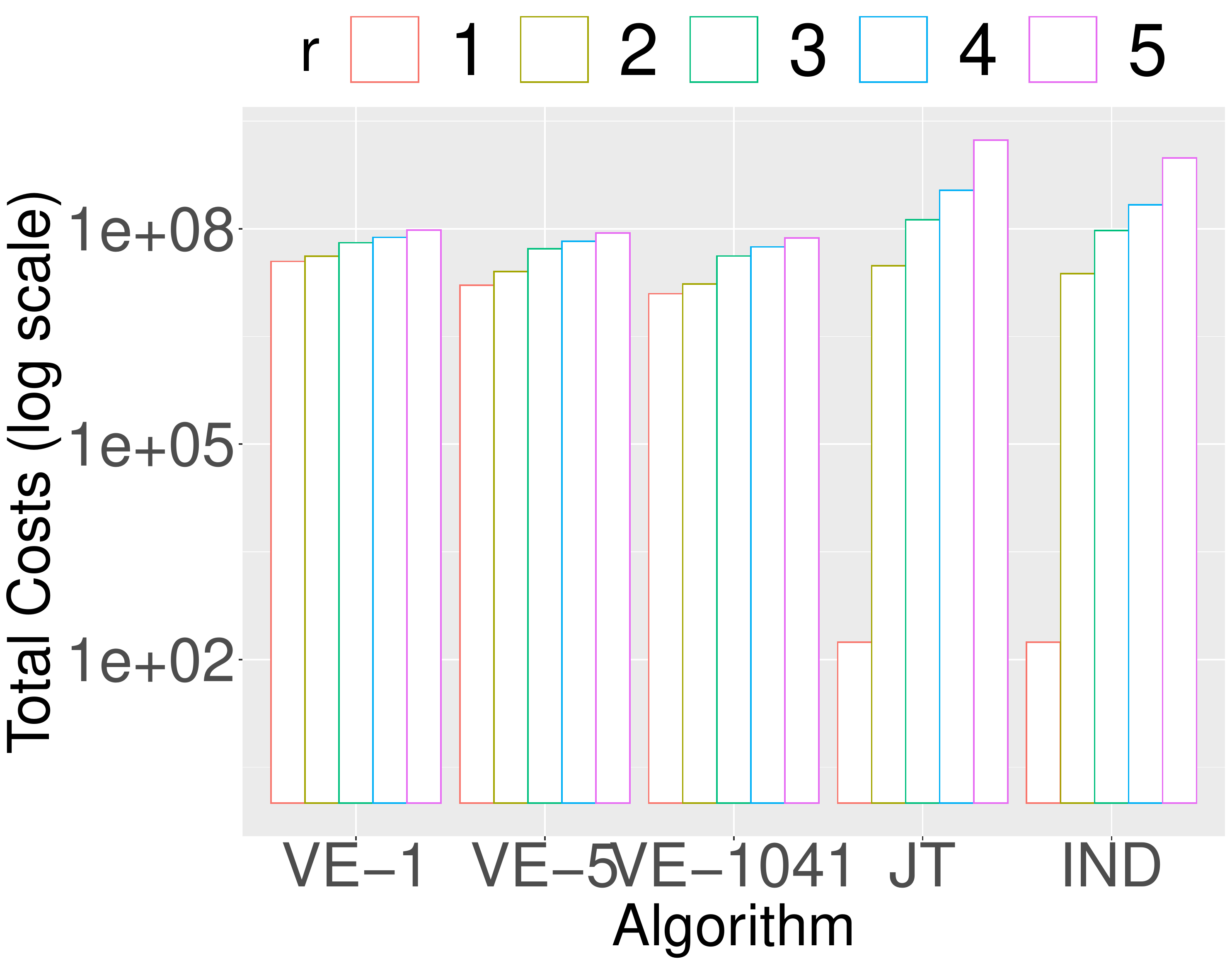} \\
    (e) \diabetes (\mf)  & (f) \link (\mf) & (g)  \muninm (\mf)   & (h) \muninb (\wmf)  \\ 
\end{tabular}
\caption{\label{fig:comp_unif_per_r}Total costs per query size $\qsize$ in uniform-workload scheme for different algorithms. }
\end{center}
\end{figure*}
}

\FullOnly{
\begin{figure*}[t]
\begin{tabular}{cccc}
    \includegraphics[width=.235\textwidth]{comparison_plots/mildew/uniform_per_r}&
    \includegraphics[width=.235\textwidth]{comparison_plots/pathfinder/uniform_per_r}&
    \includegraphics[width=.235\textwidth]{comparison_plots/munin1/uniform_per_r}&
    \includegraphics[width=.235\textwidth]{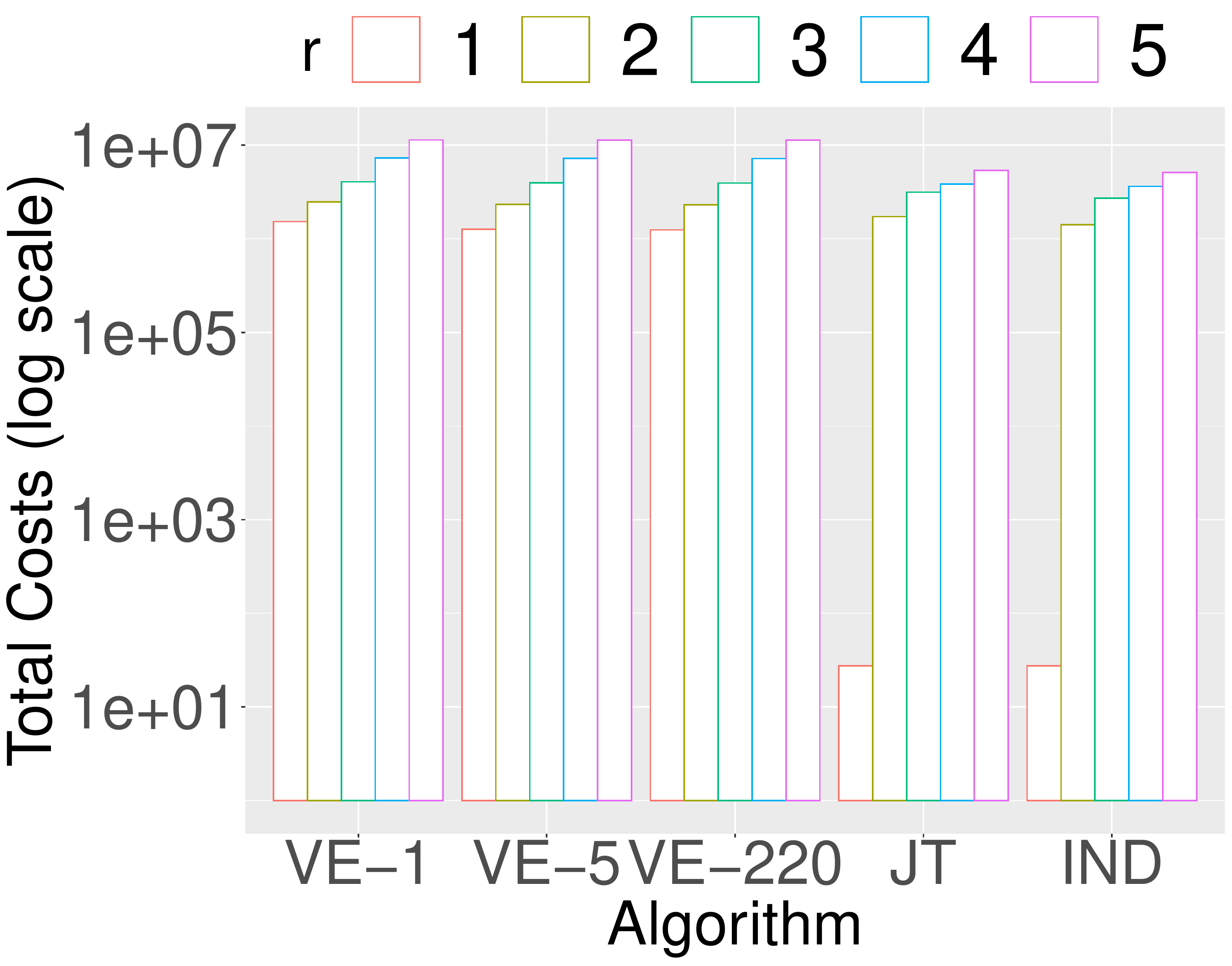}\\
  (a) \mildew (\mf) & (b) \bnpathfinder (\mf)  & (c) \munins (\wmf) & (d) \andes (\mf)   \\
  \includegraphics[width=.235\textwidth]{comparison_plots/diabetes/uniform_per_r} &
    \includegraphics[width=.235\textwidth]{comparison_plots/link/uniform_per_r} & 
    \includegraphics[width=.235\textwidth]{comparison_plots/munin2/uniform_per_r} & 
    \includegraphics[width=.235\textwidth]{comparison_plots/munin/uniform_per_r}
    \\
   (e) \diabetes (\mf)  & (f) \link (\mf) & (g)  \muninm (\mf)   & (h) \muninb (\wmf)  \\ 
   \includegraphics[width=.225\textwidth]{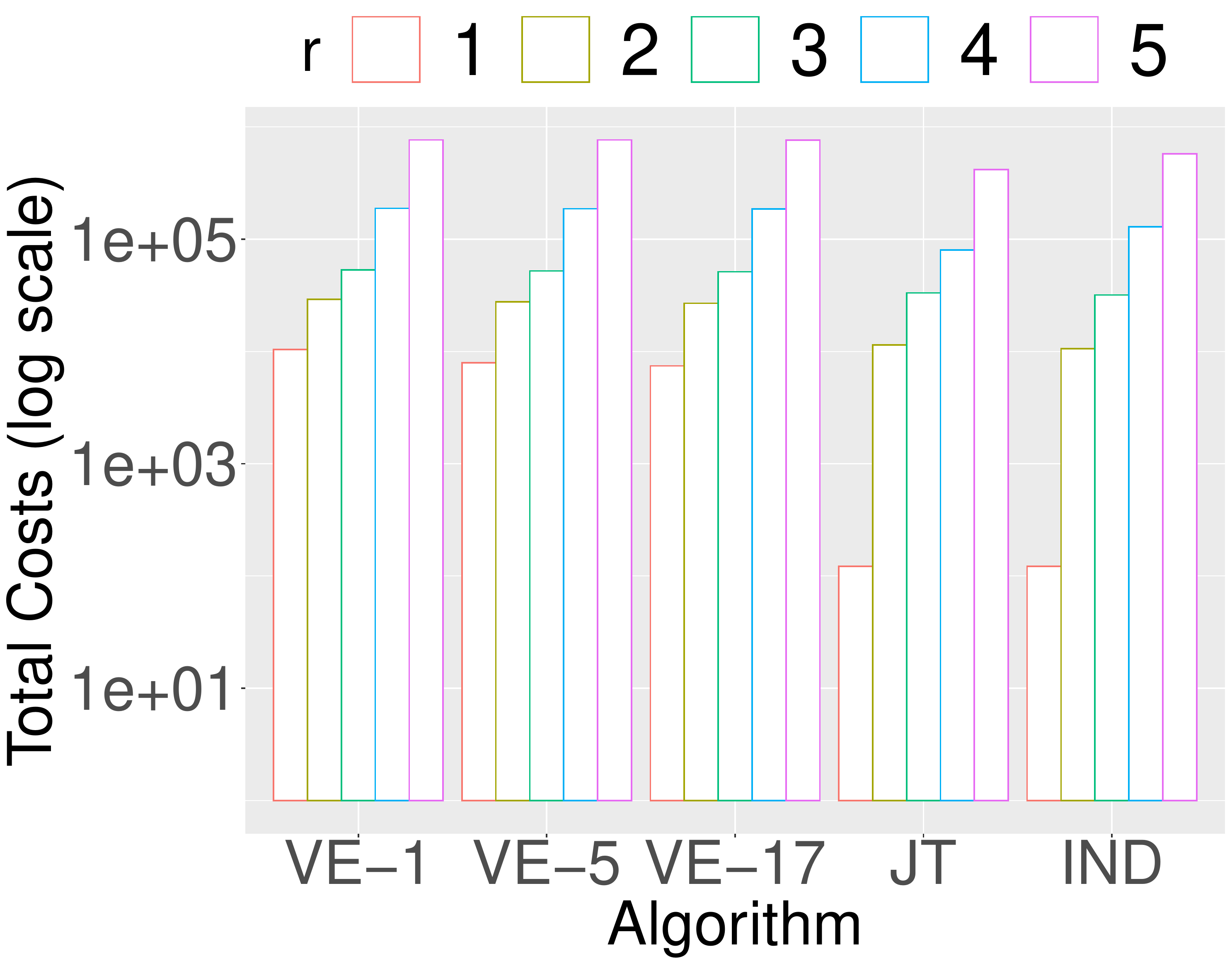} &
    \hspace{-0mm} \includegraphics[width=.225\textwidth]{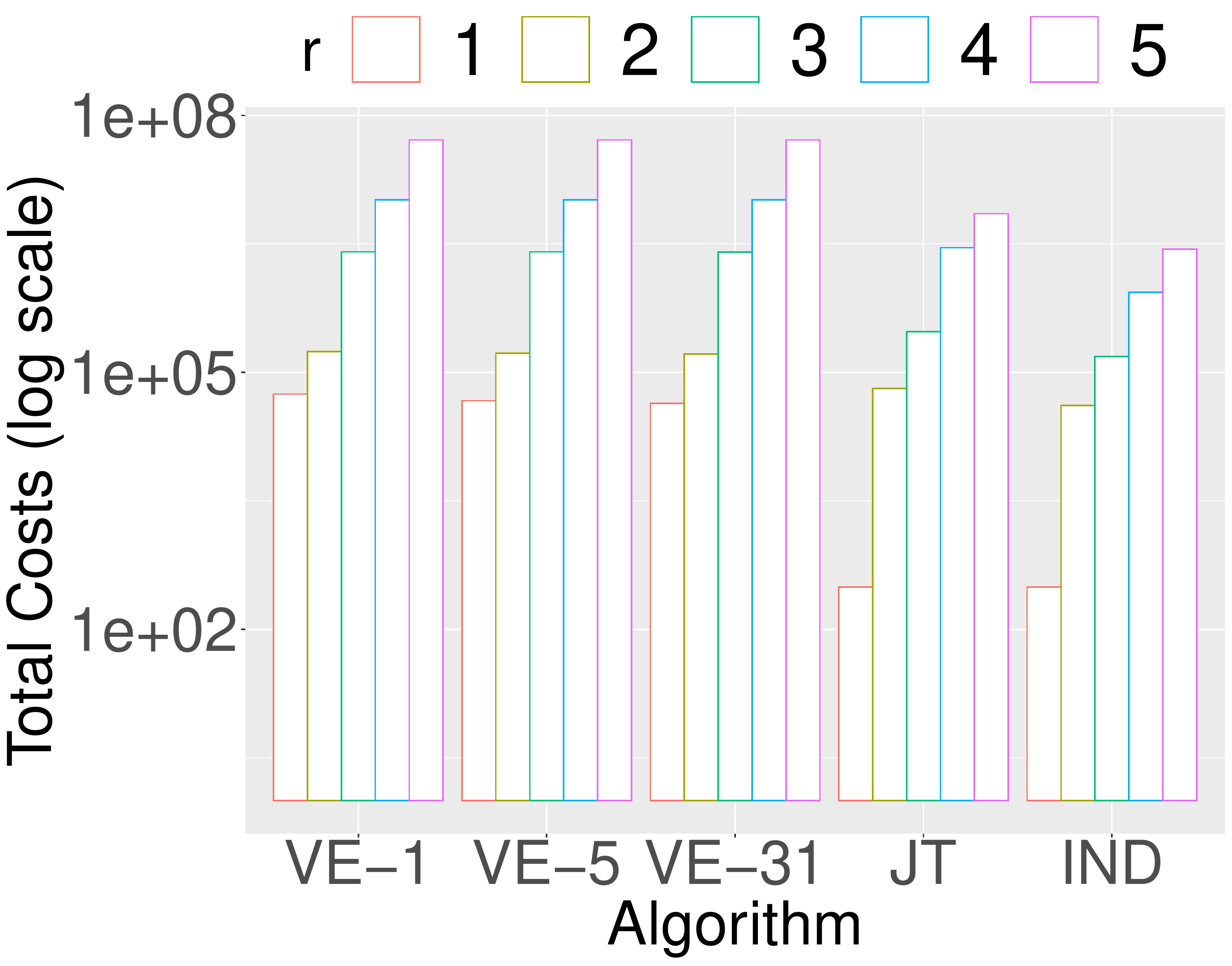} & 
    \hspace{-0mm} \includegraphics[width=.225\textwidth]{comparison_plots/tpch3/uniform_per_r} & 
    \hspace{-0mm} \includegraphics[width=.225\textwidth]{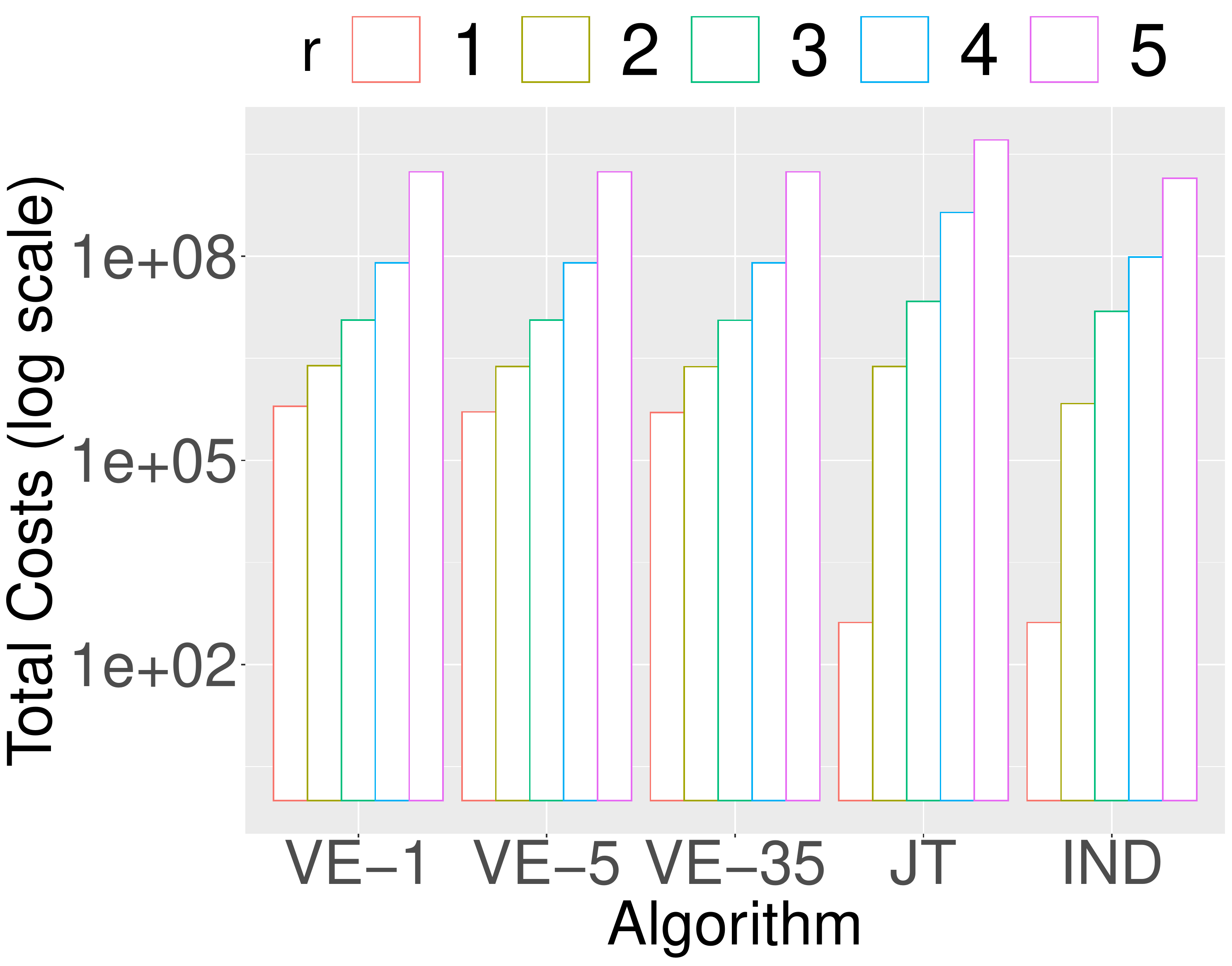}
    \\
   \revision{(i) \tpchsmall (\mw)}  & \revision{(j) \tpchmedium (\mw)} & \revision{(k)  \tpchverylarge (\mw)}   & \revision{(l) \tpchlarge (\mw)}
\end{tabular}
\caption{\label{fig:comp_unif_per_r} Total costs per query size $\qsize$ in uniform-workload scheme for different algorithms. }
\end{figure*}
}

\FullOnly{
\begin{figure*}[t]
\begin{tabular}{cccc}
    \includegraphics[width=.235\textwidth]{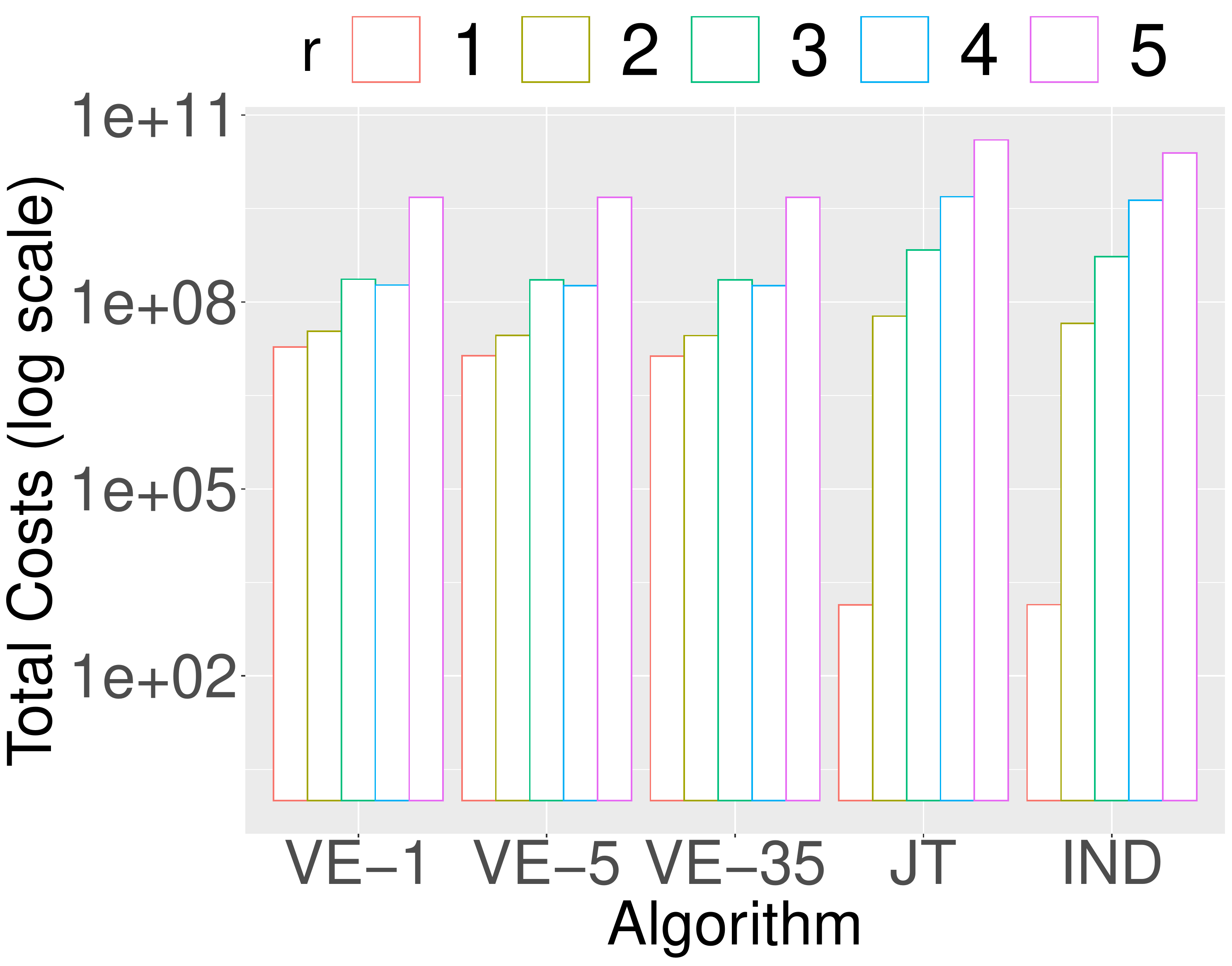}&
     \hspace{-0mm} \includegraphics[width=.235\textwidth]{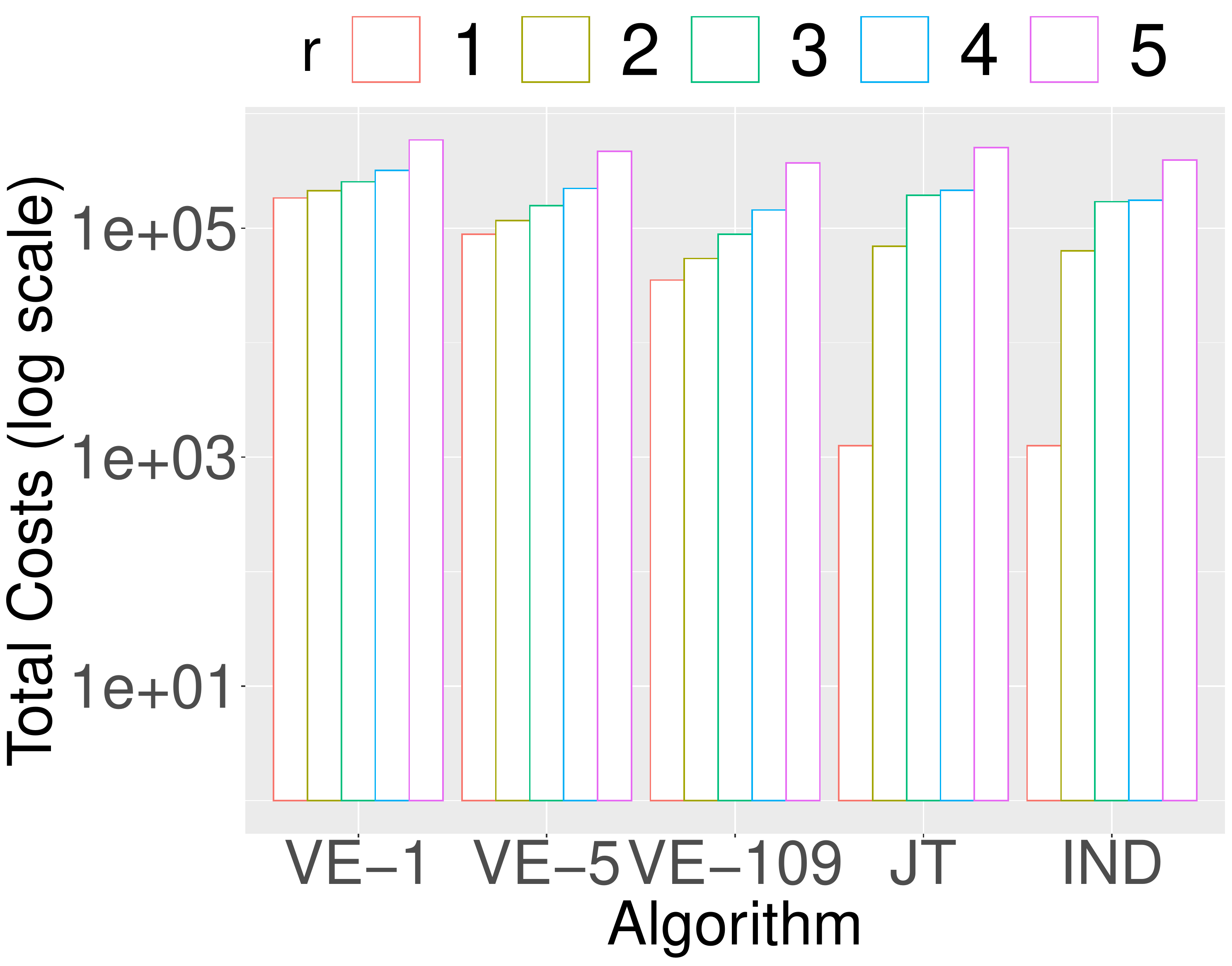}&
    \hspace{-0mm}\includegraphics[width=.235\textwidth]{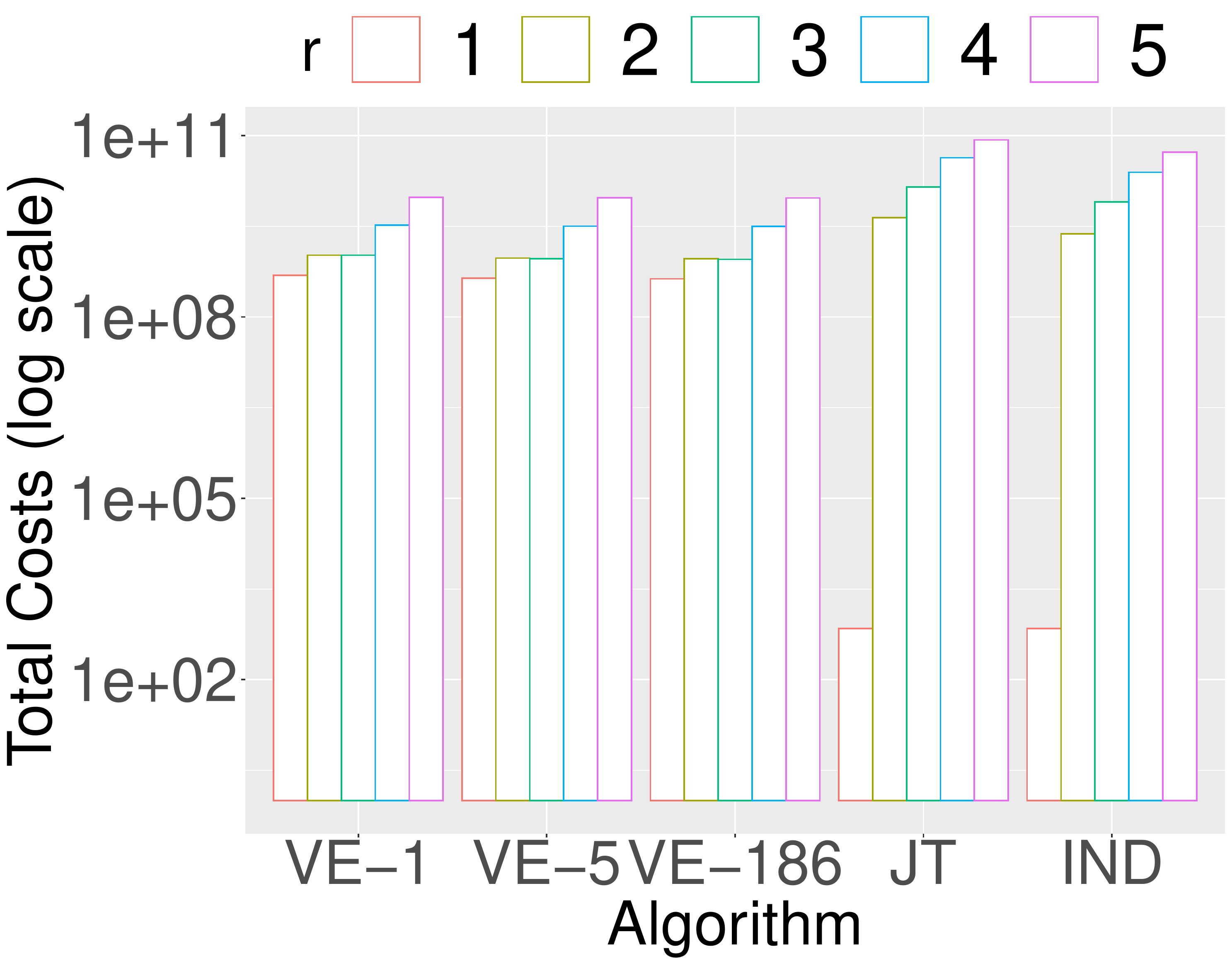}&
    \hspace{-0mm}\includegraphics[width=.235\textwidth]{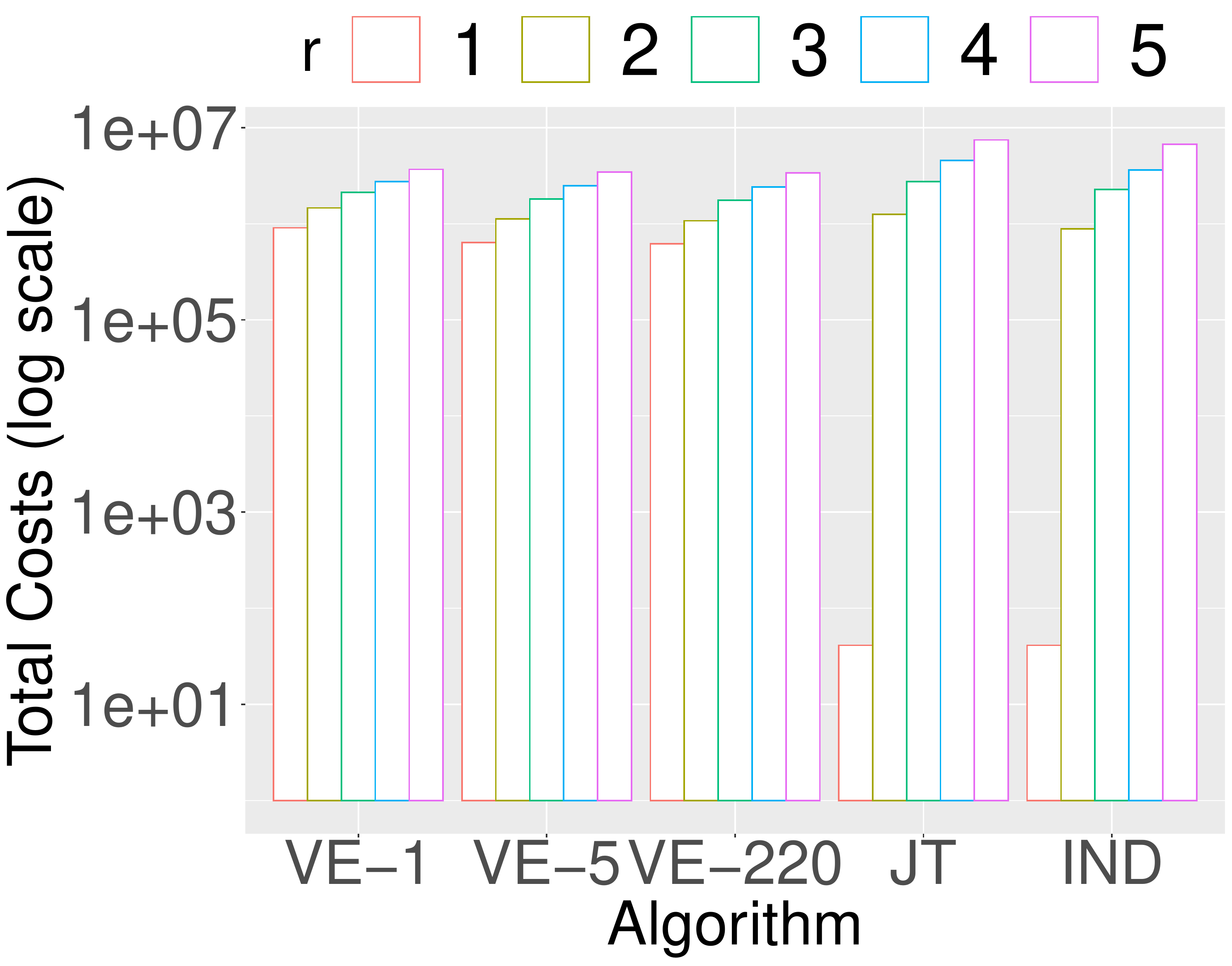}\\
  (a) \mildew (\mf) & (b) \bnpathfinder (\mf)  & (c) \munins (\wmf) & (d) \andes (\mf)   \\
     \includegraphics[width=.235\textwidth]{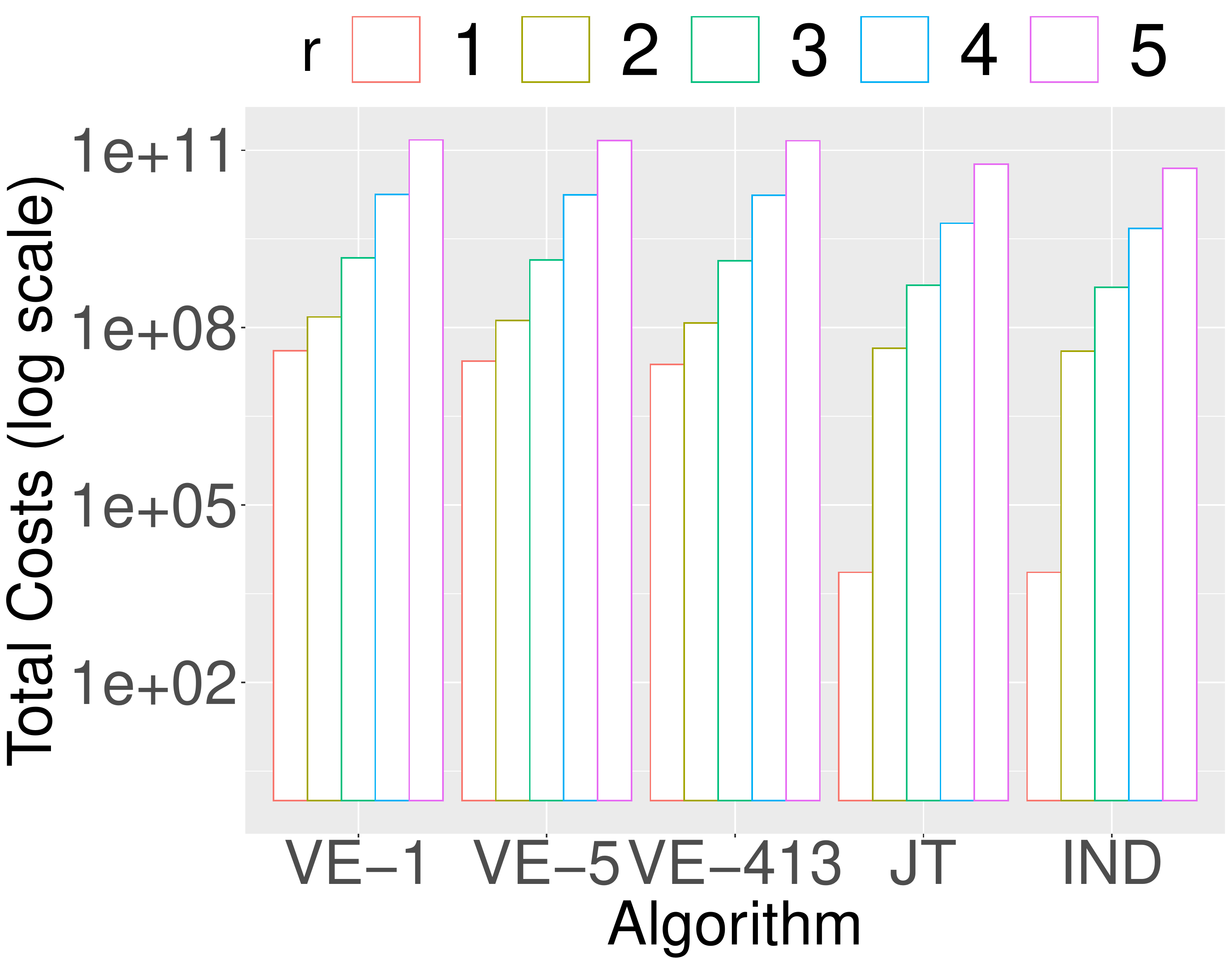} &
    \hspace{-0mm}\includegraphics[width=.235\textwidth]{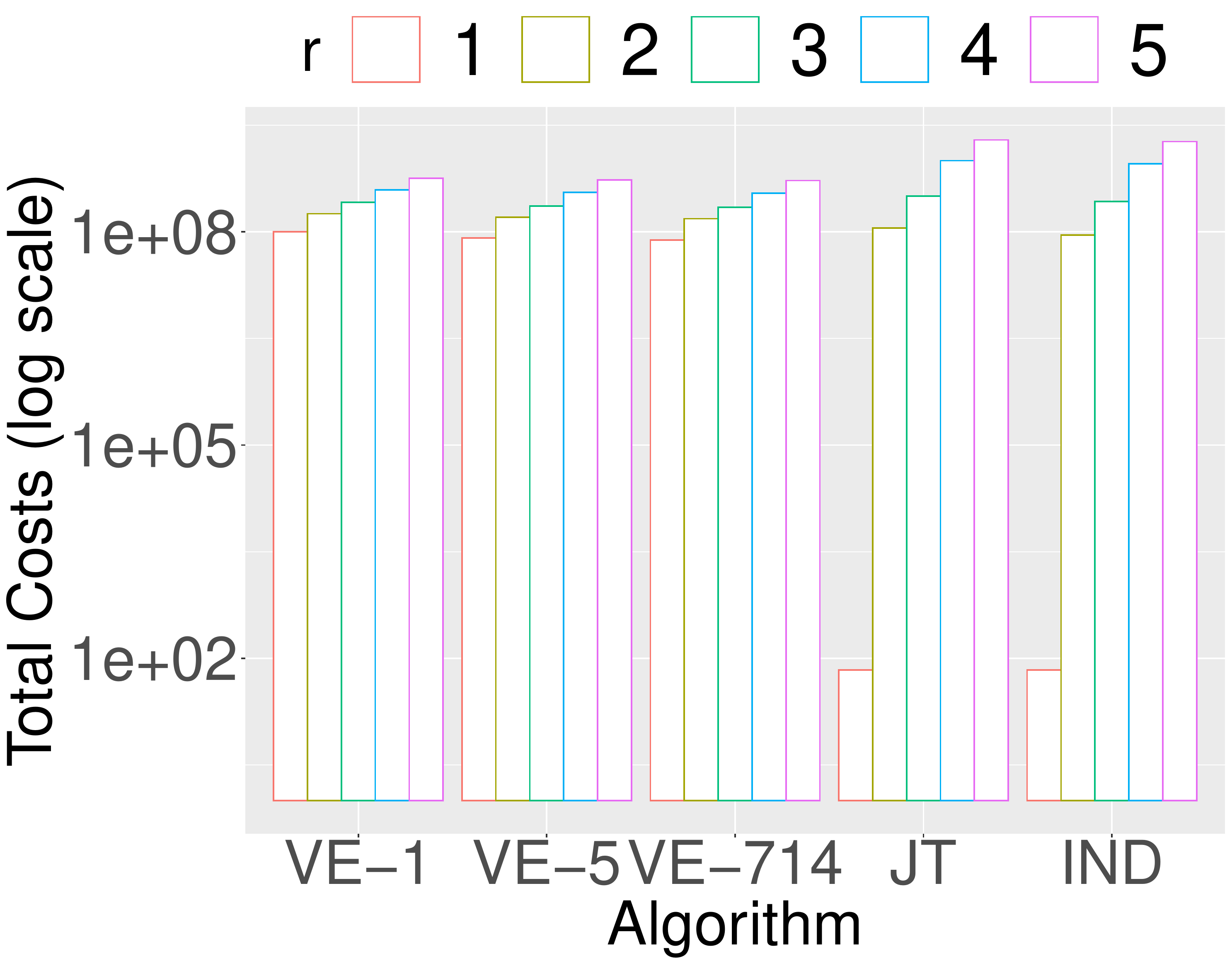} & 
    \hspace{-0mm}\includegraphics[width=.235\textwidth]{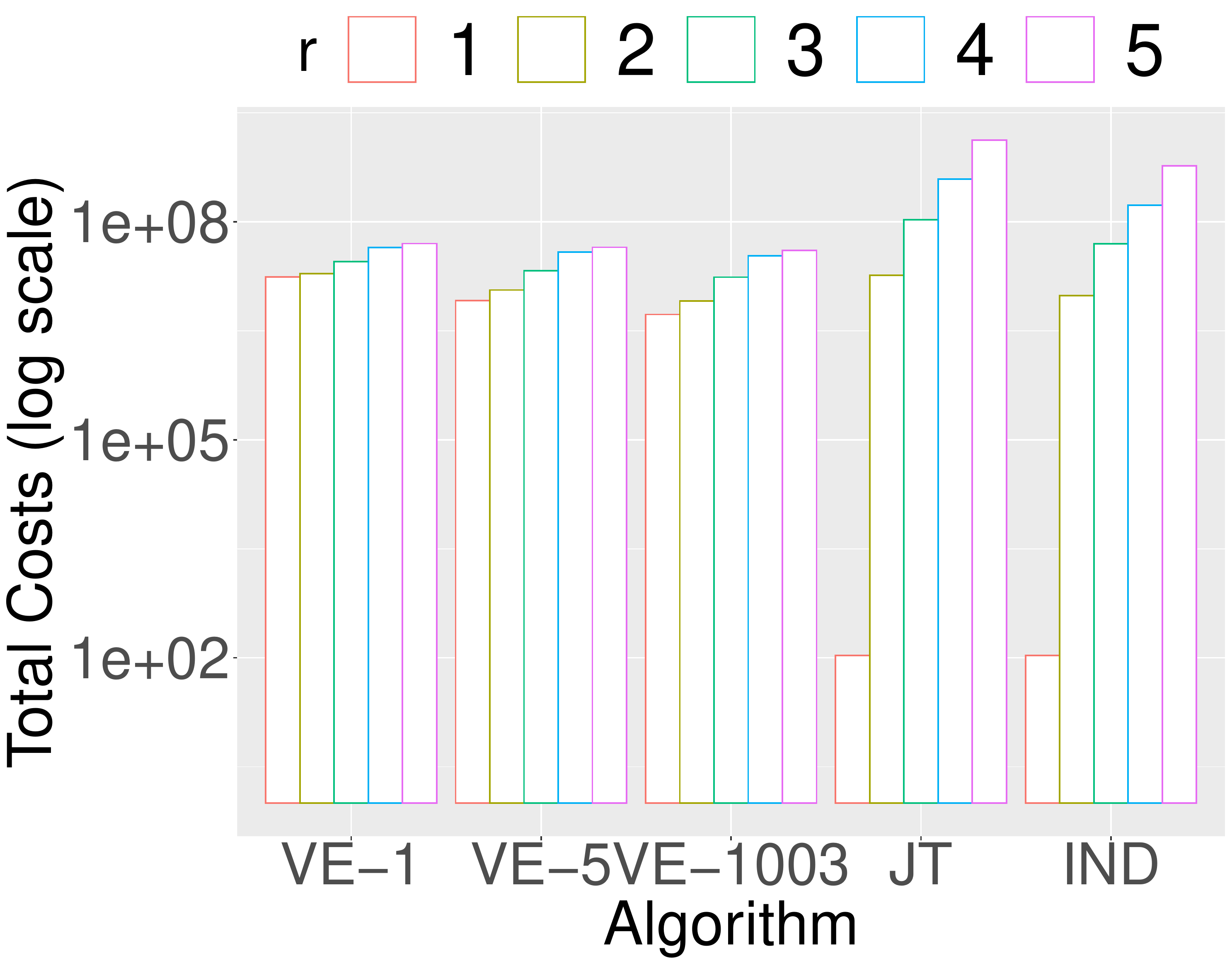} & 
    \hspace{-0mm}\includegraphics[width=.235\textwidth]{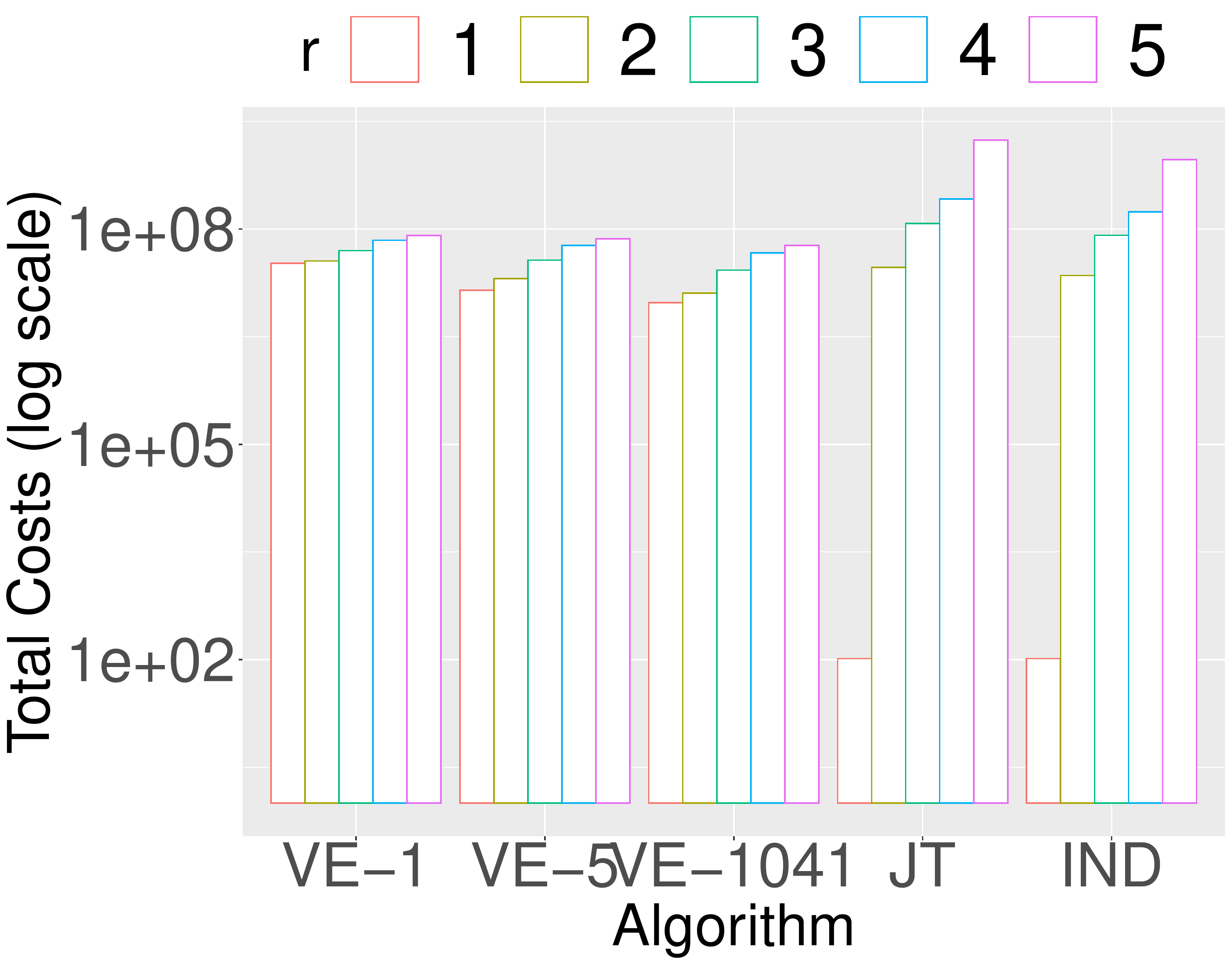}
    \\
   (e) \diabetes (\mf)  & (f) \link (\mf) & (g)  \muninm (\mf)   & (h) \muninb (\wmf)  \\ 
    \includegraphics[width=.225\textwidth]{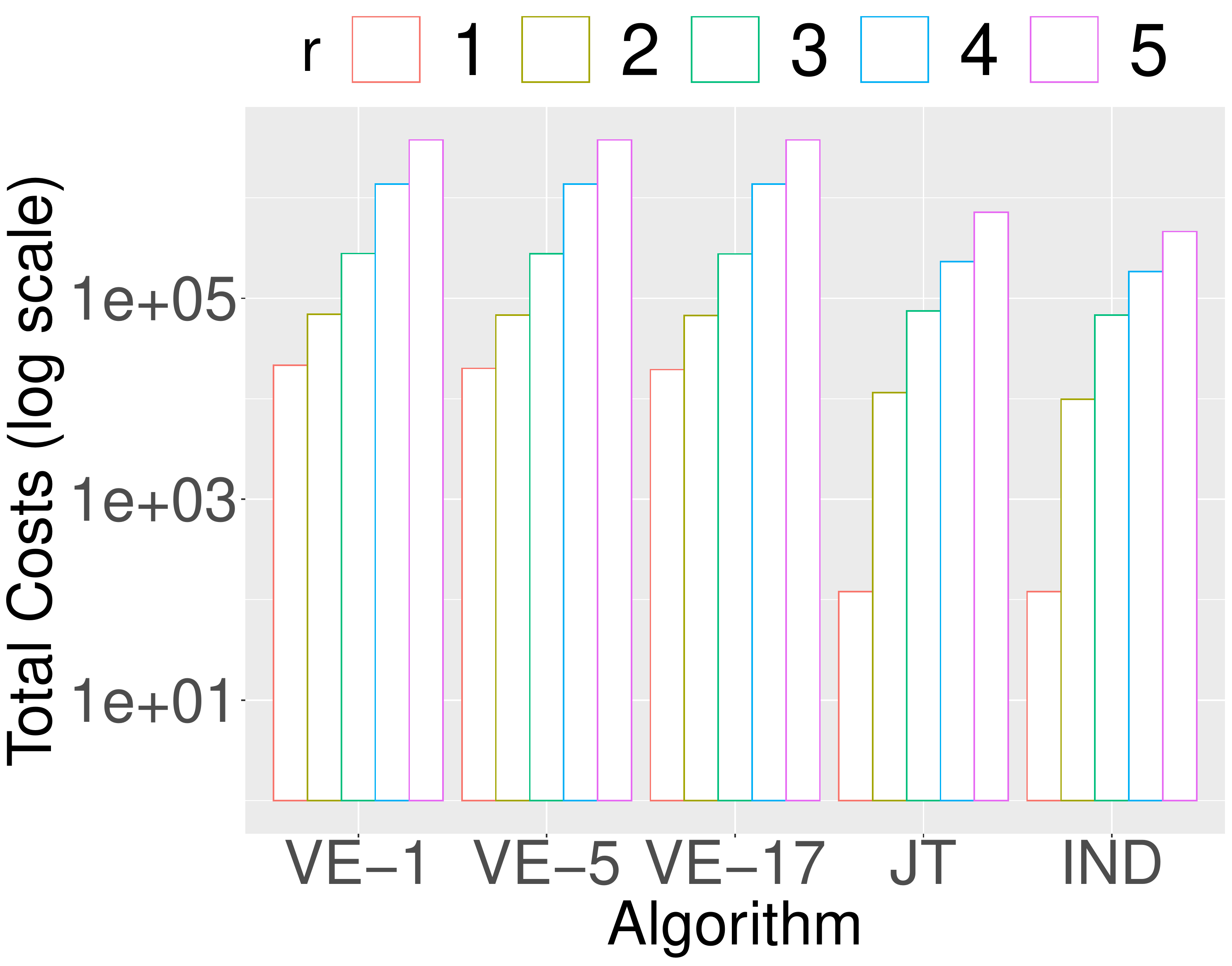} &
   \hspace{-0mm} \includegraphics[width=.225\textwidth]{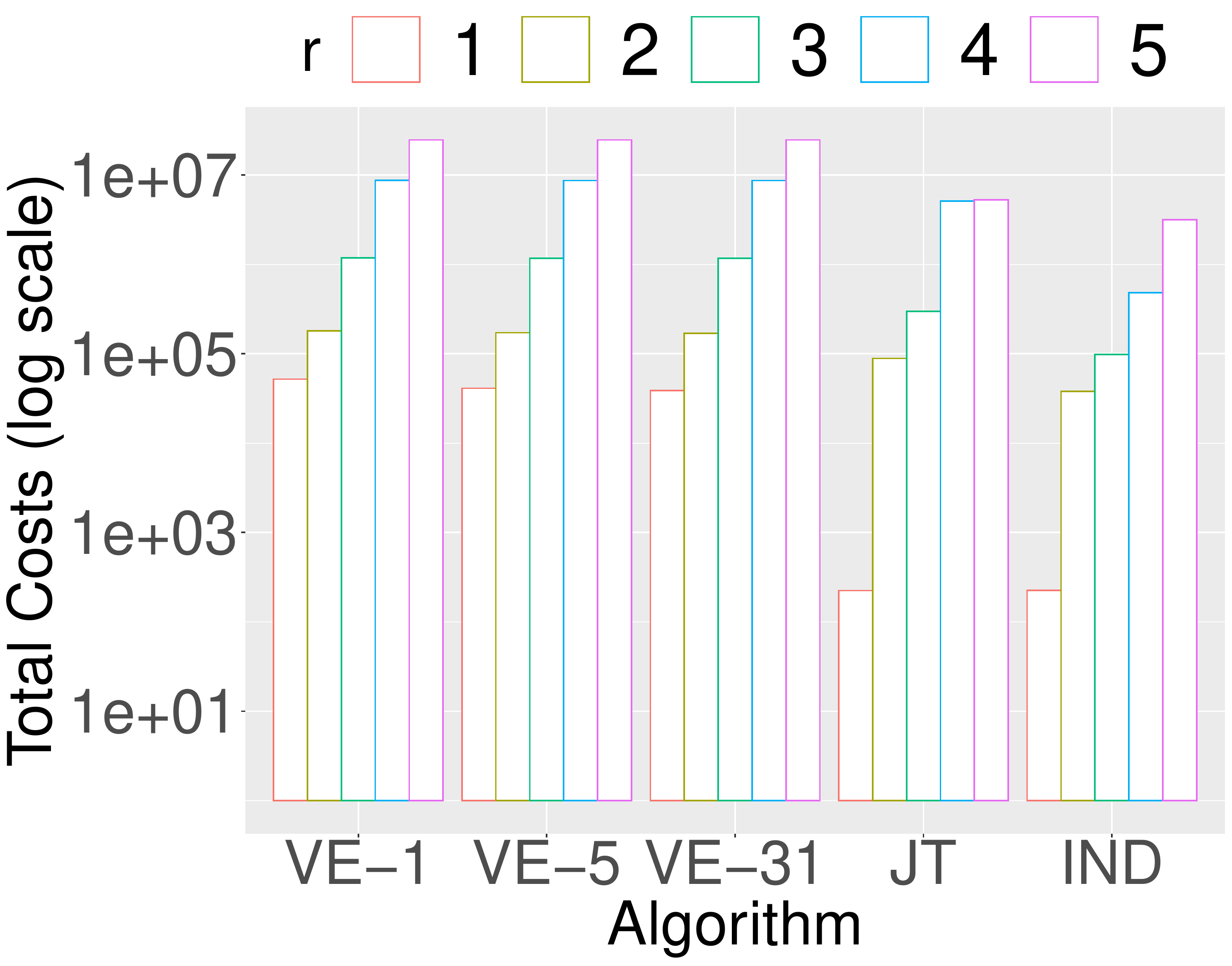} & 
   \hspace{-0mm} \includegraphics[width=.225\textwidth]{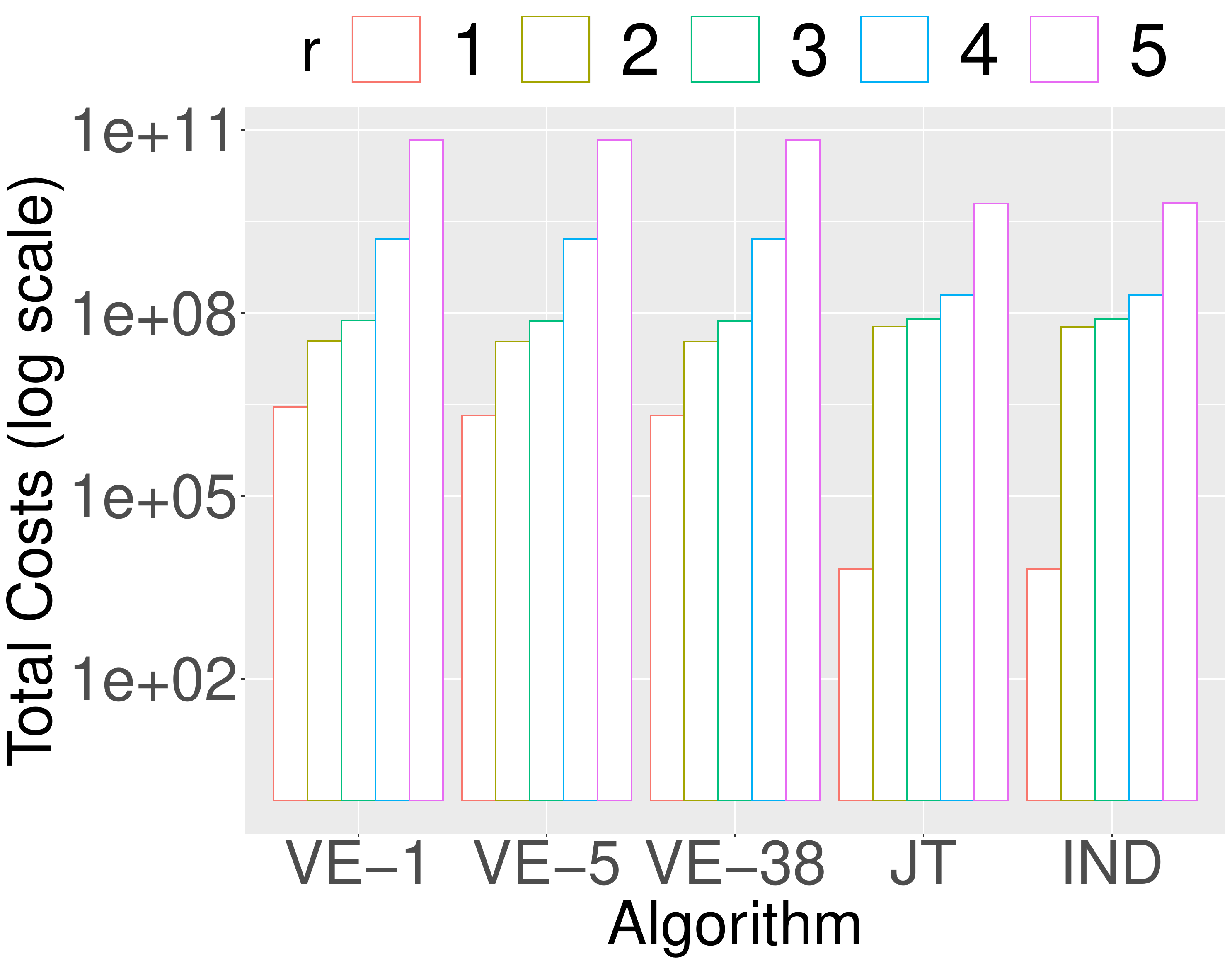} & 
   \hspace{-0mm} \includegraphics[width=.225\textwidth]{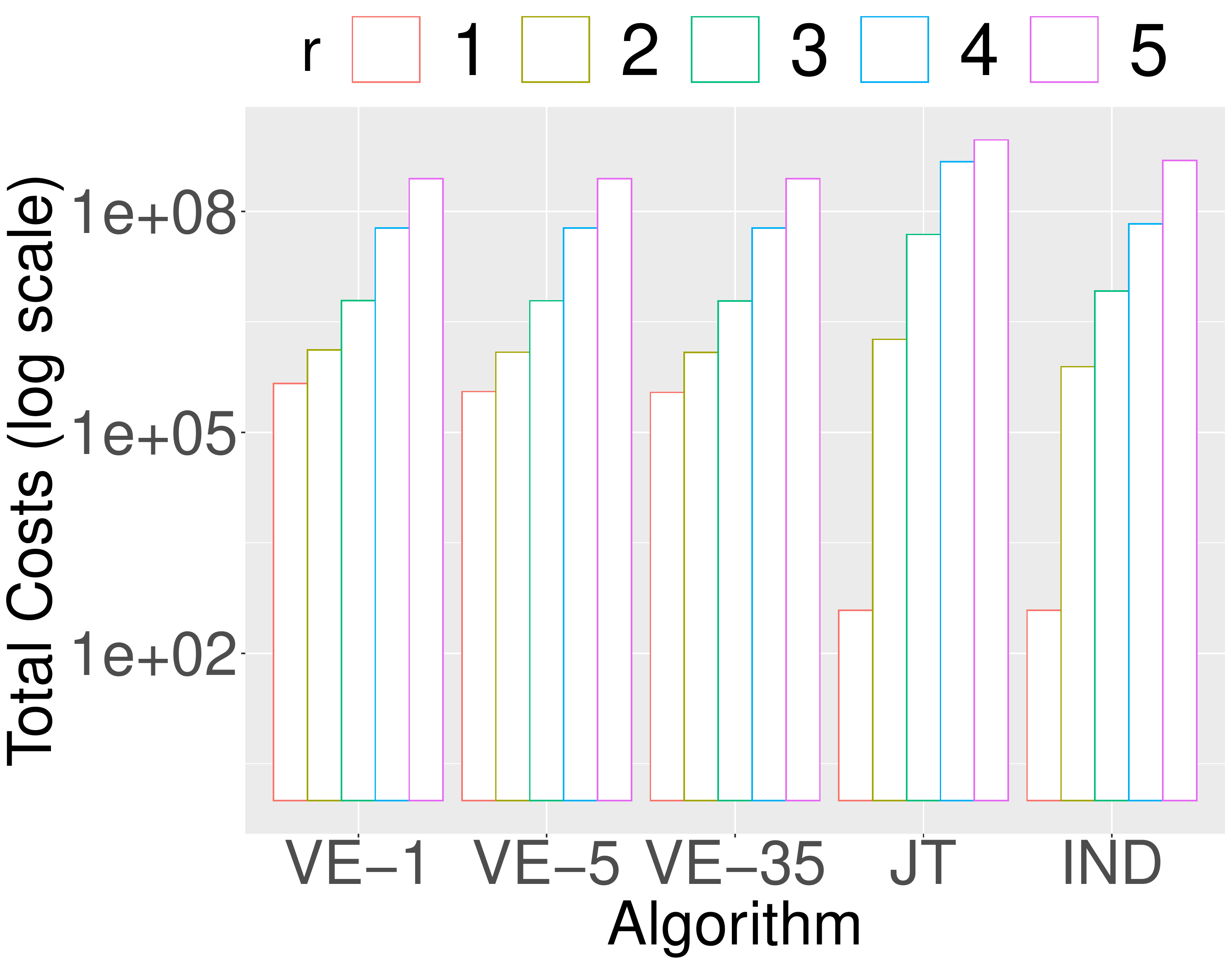}
   \\
   \revision{(i) \tpchsmall (\mw)}  & \revision{(j) \tpchmedium (\mw)} & \revision{(k)  \tpchverylarge (\mw)}   & \revision{(l) \tpchlarge (\mw)}
\end{tabular}
\caption{\label{fig:comp_biasr_per_r} Total costs per query size $\qsize$ in skewed-workload scheme for different algorithms.}
\end{figure*}
}

\ReviewOnly{
\begin{figure*}[t]
\begin{center}
\begin{tabular}{cccc}
    \includegraphics[width=.20\textwidth]{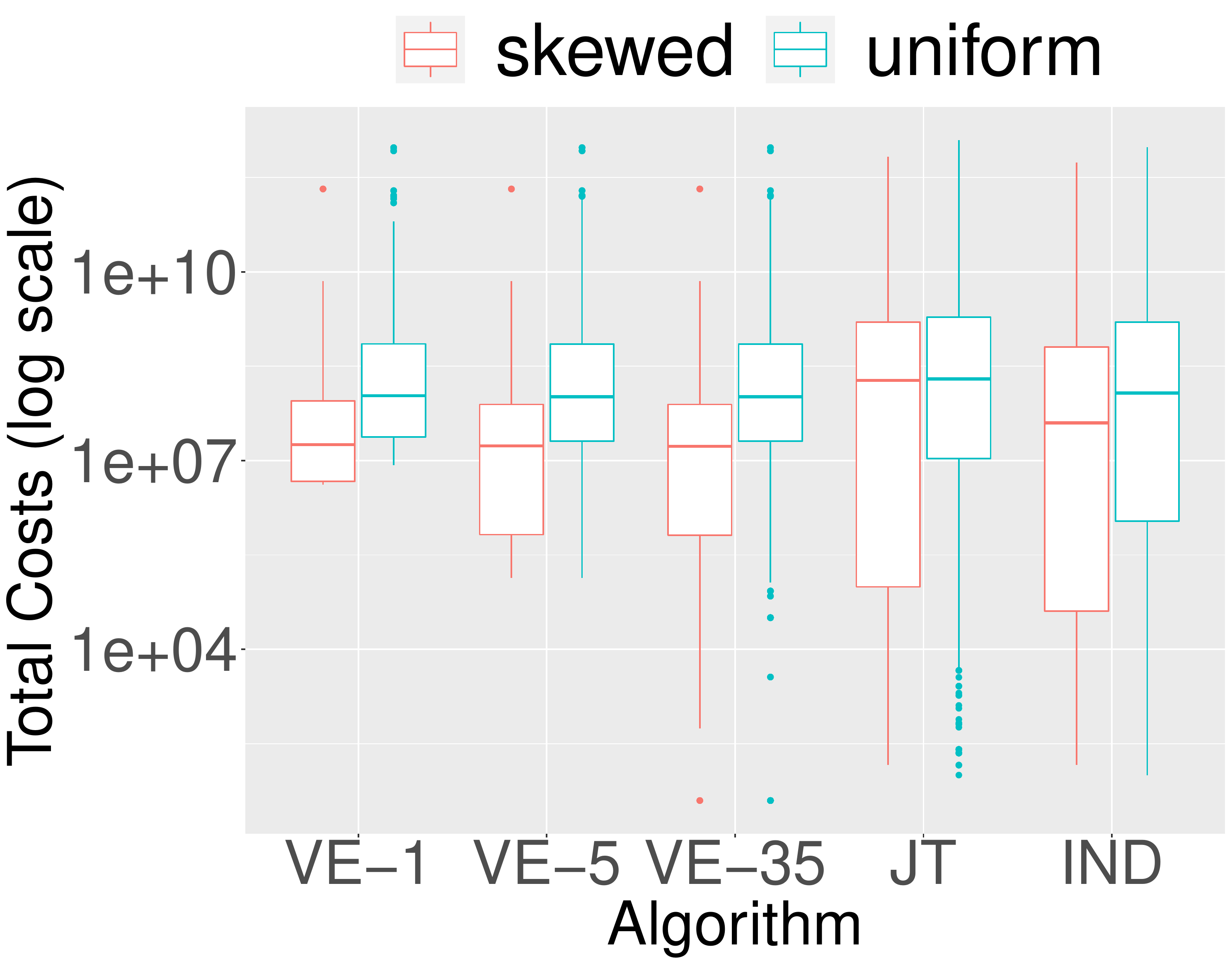}&
    \hspace{-0mm} \includegraphics[width=.20\textwidth]{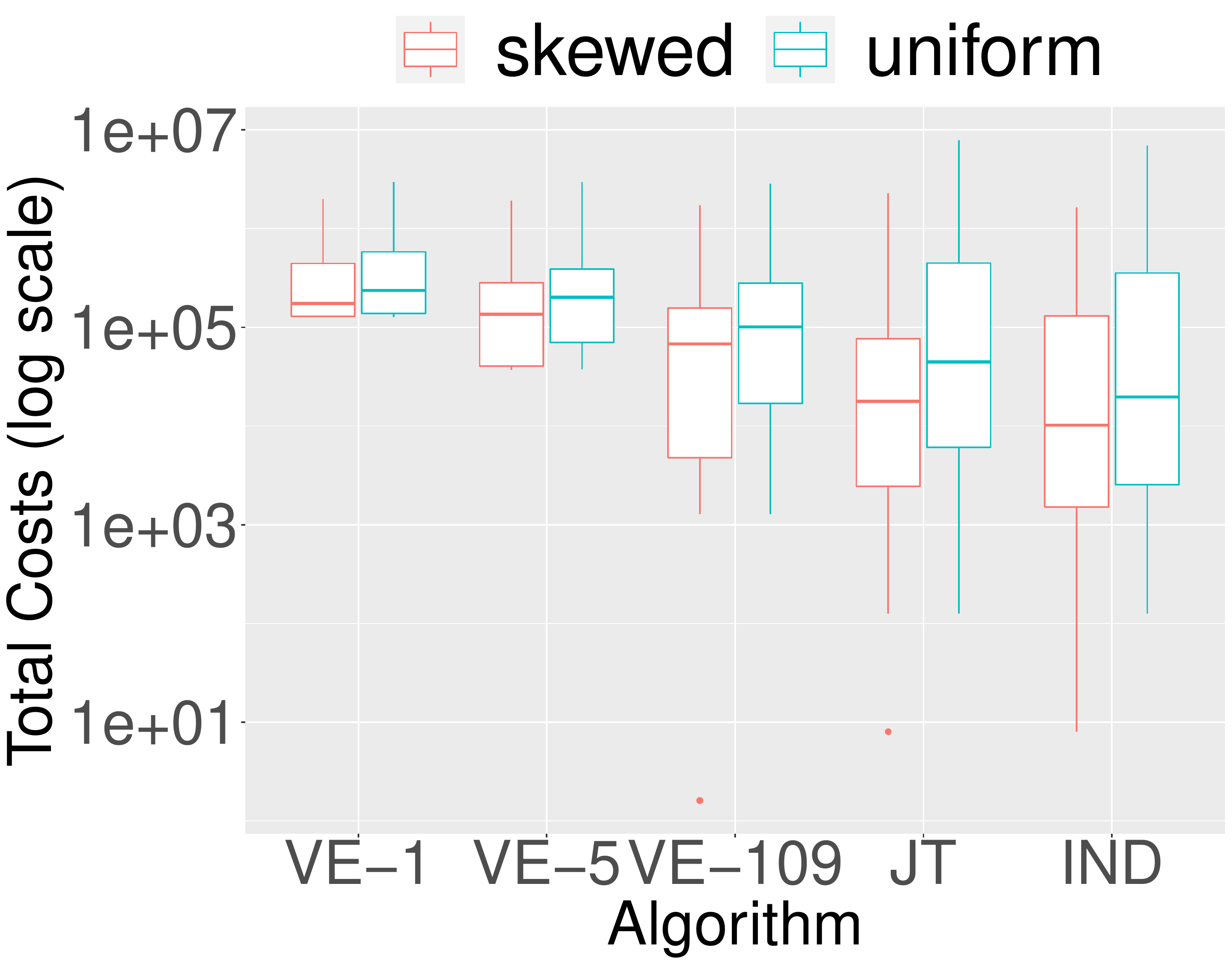}&
    \hspace{-0mm} \includegraphics[width=.20\textwidth]{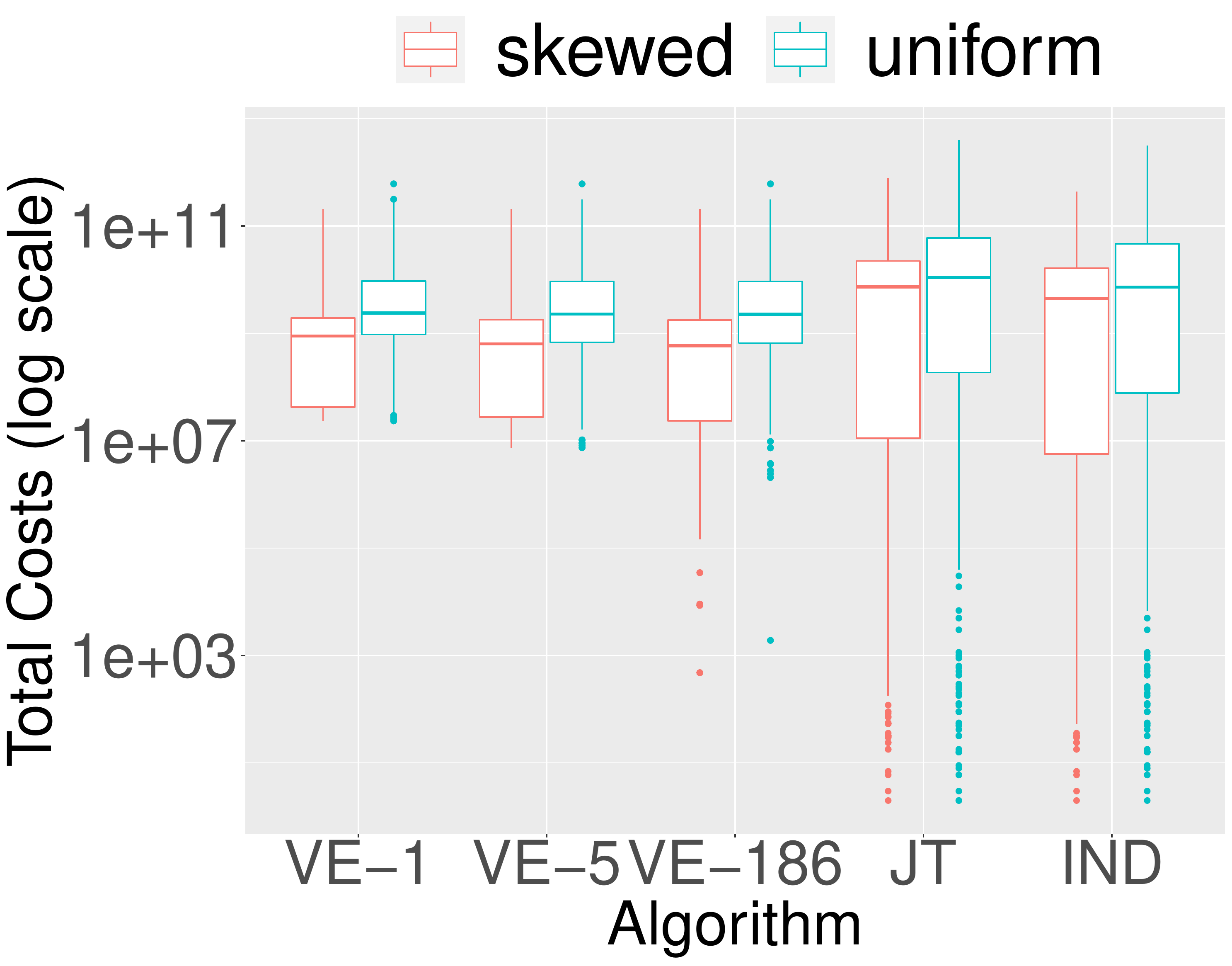}&
    \hspace{-0mm} \includegraphics[width=.20\textwidth]{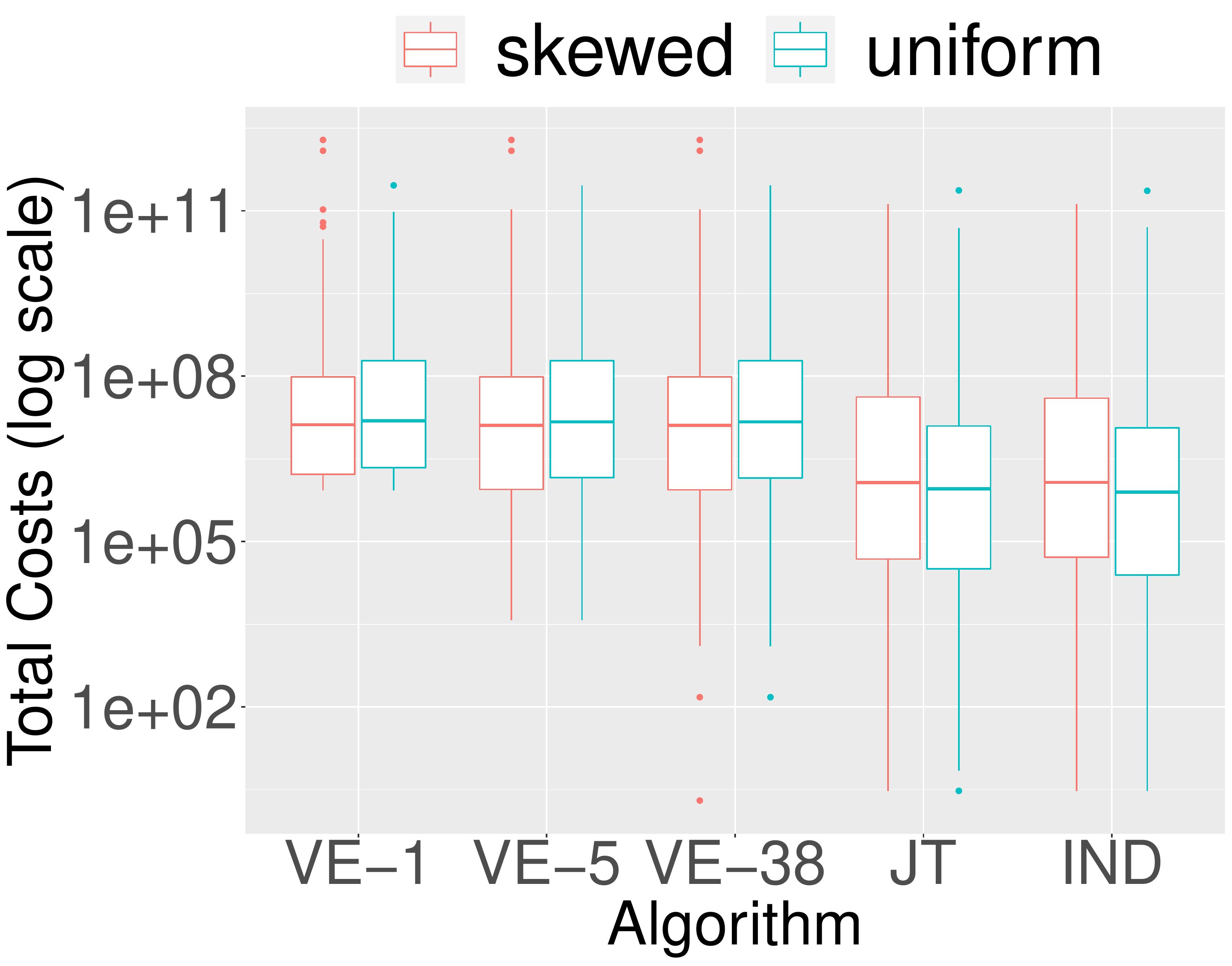}\\
  (a) \mildew (\mf) & (b) \bnpathfinder (\mf)  & (c) \munins (\wmf) & \revision{(d) \tpchverylarge (\mw)}   \\
  \includegraphics[width=.20\textwidth]{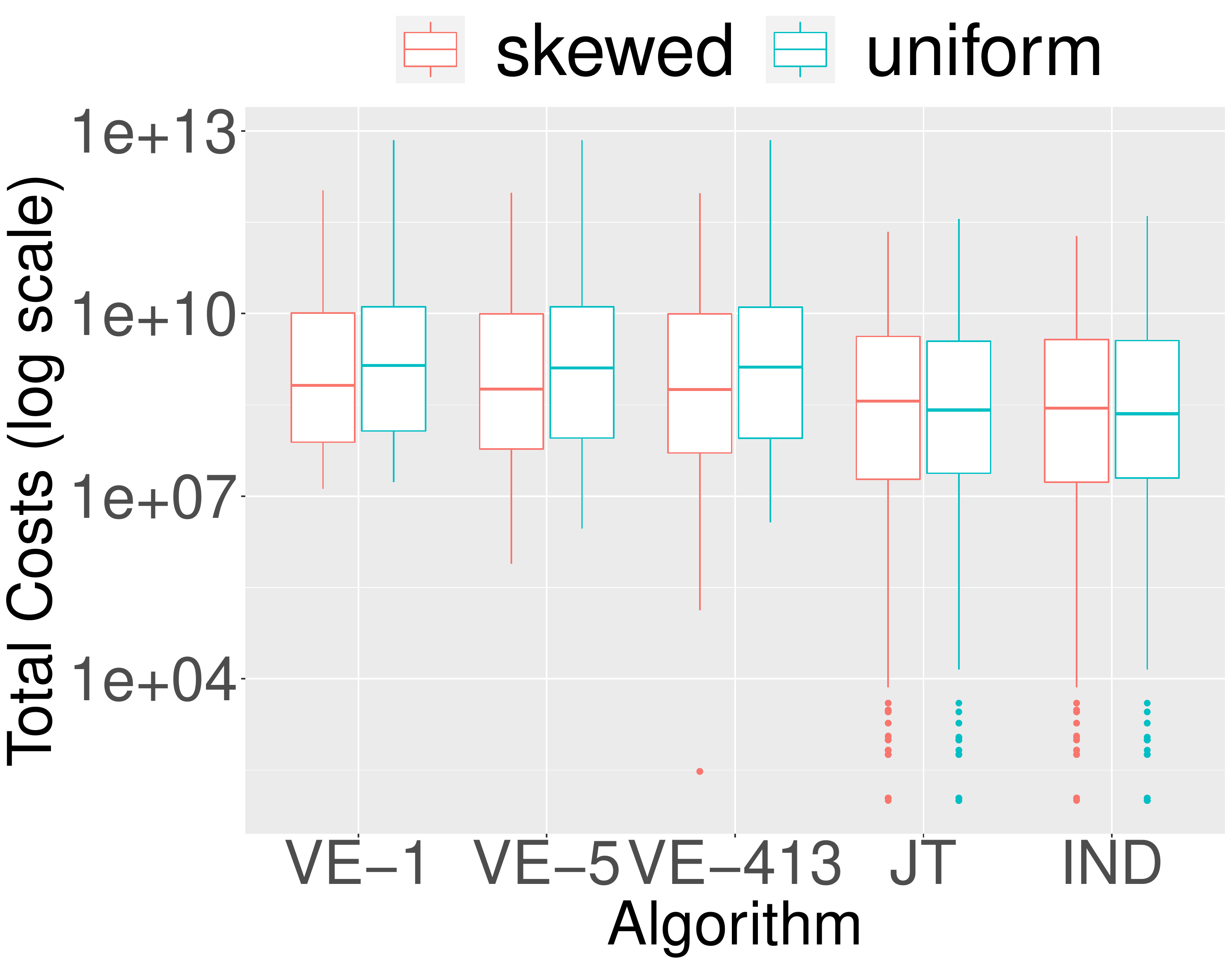} &
    \hspace{-0mm} \includegraphics[width=.20\textwidth]{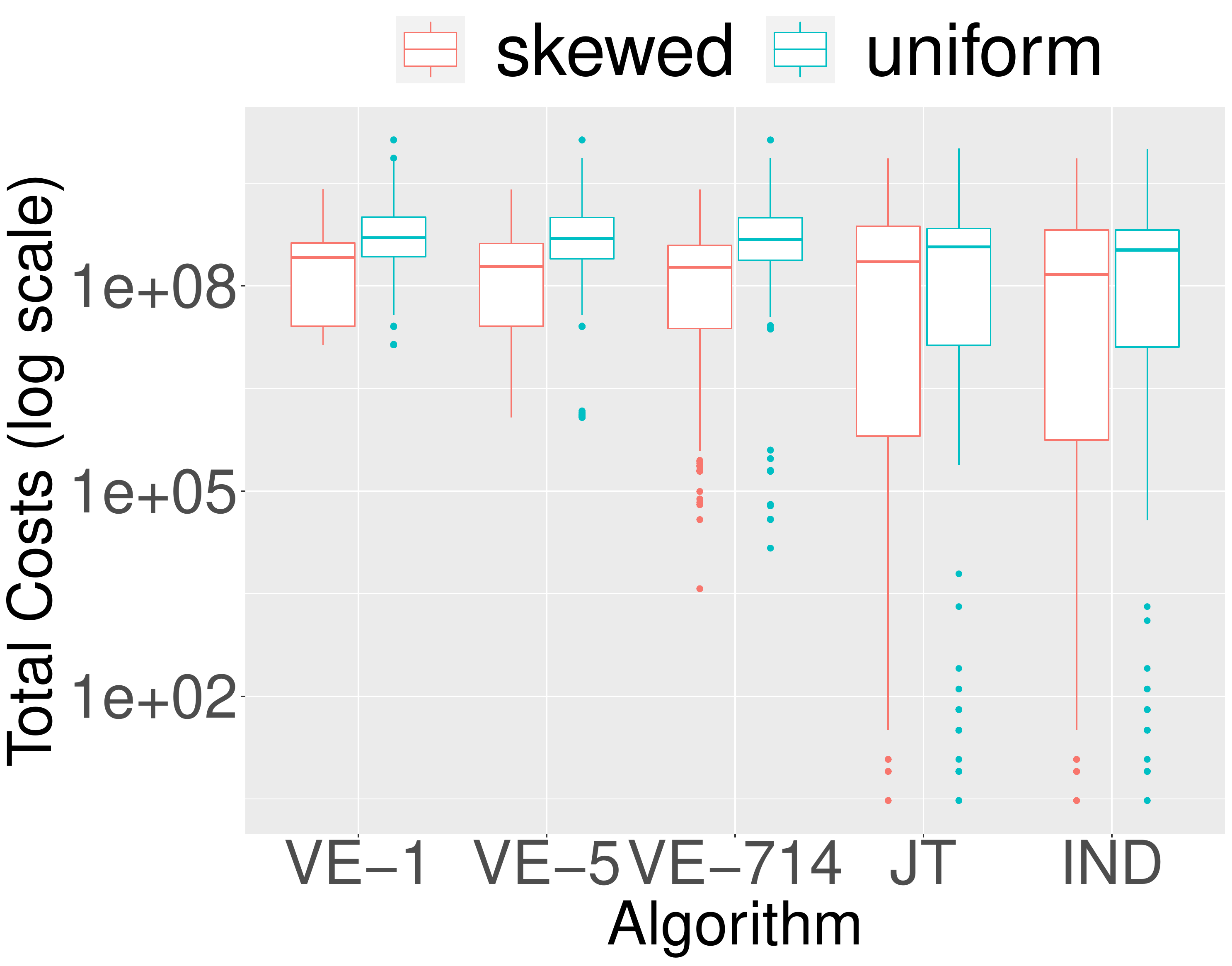} & 
    \hspace{-0mm} \includegraphics[width=.20\textwidth]{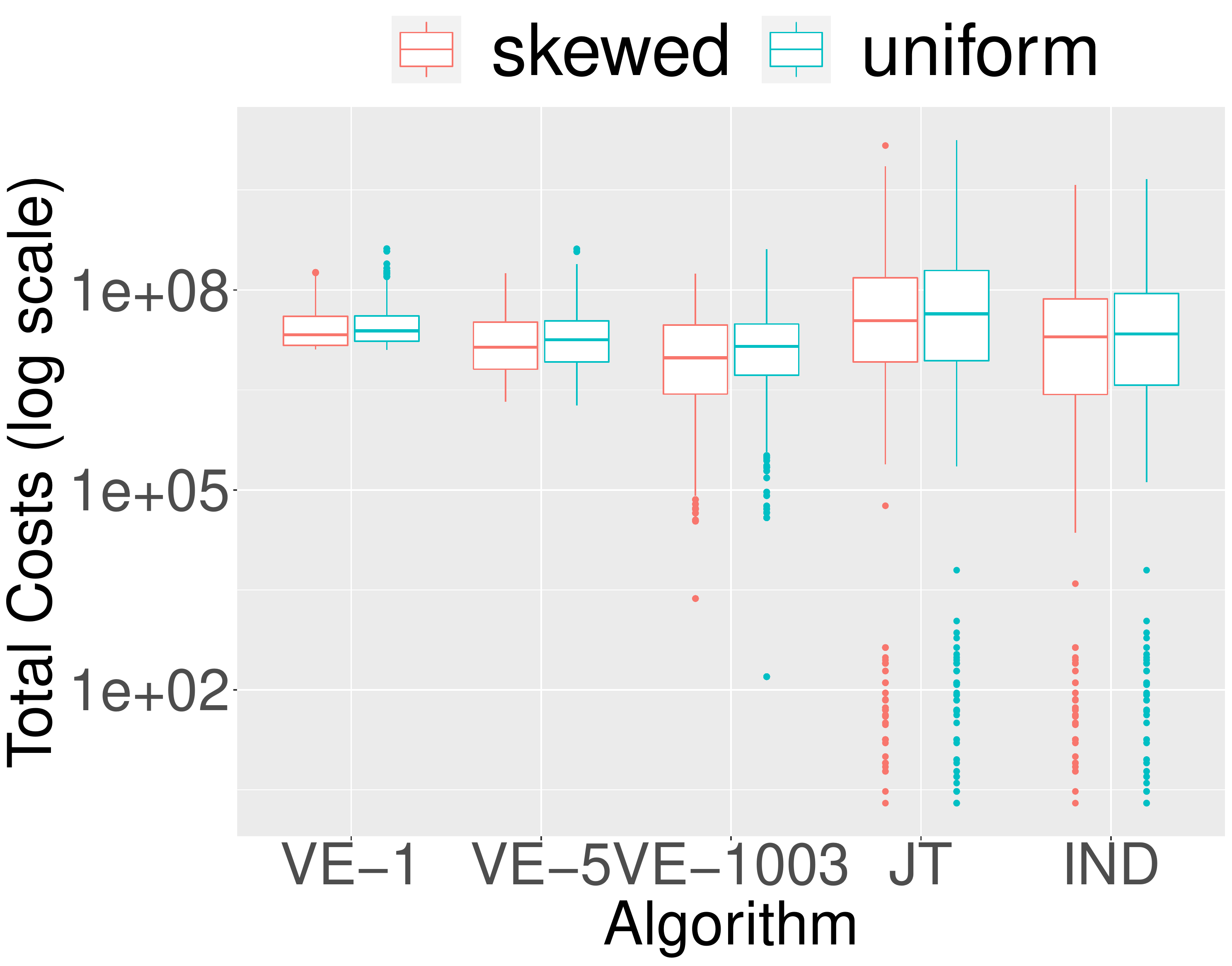} & 
    \hspace{-0mm} \includegraphics[width=.20\textwidth]{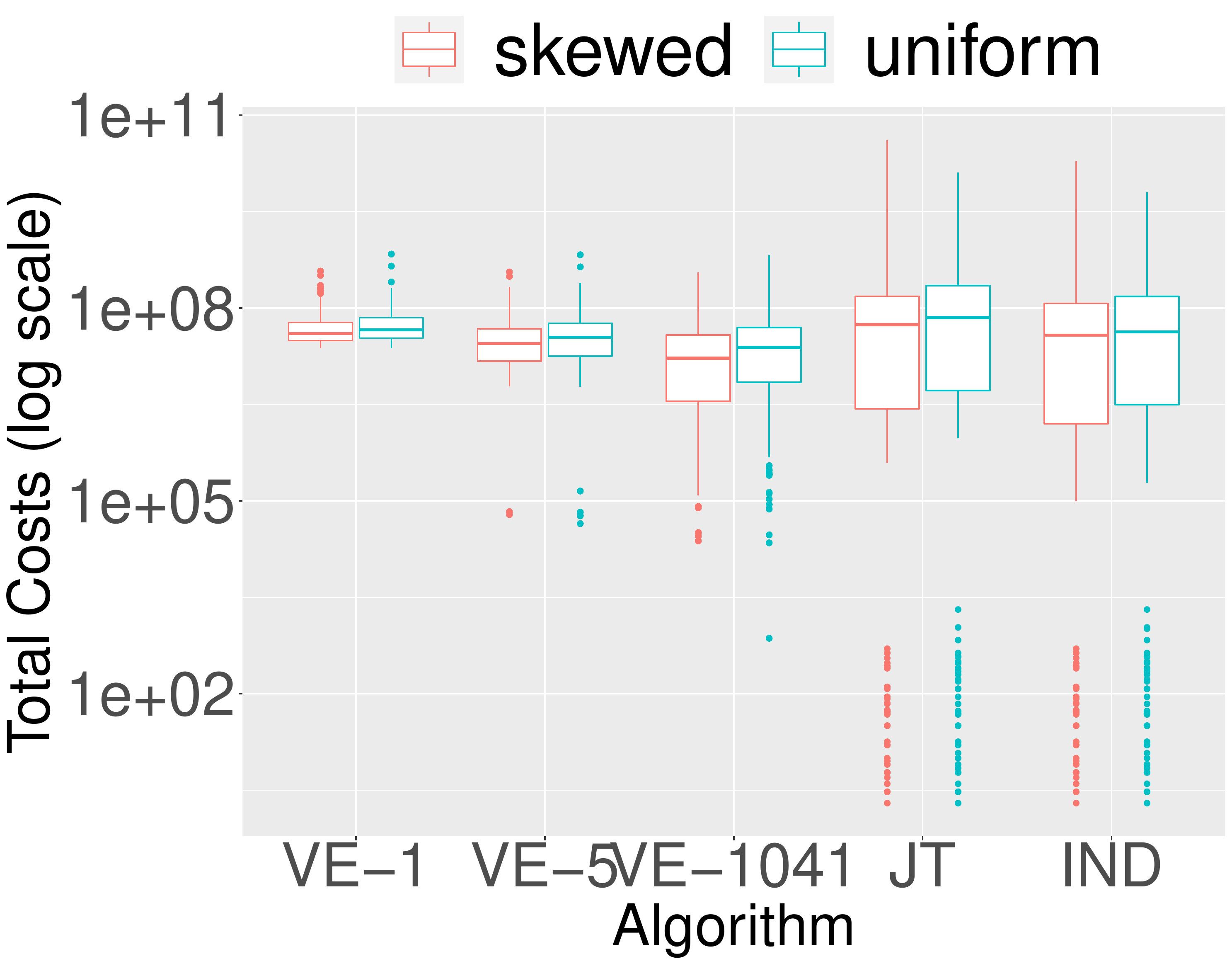}
    \\
   (e) \diabetes (\mf)  & (f) \link (\mf) & (g)  \muninm (\mf)   & (h) \muninb (\wmf)  \\ 
\end{tabular}
\caption{\label{fig:compMixed}Comparison of total costs under uniform and skewed workloads for different algorithms. }
\end{center}
\end{figure*}
}

\FullOnly{
\begin{figure*}[t]
\begin{tabular}{cccc}
    \includegraphics[width=.225\textwidth]{comparison_plots/mildew/workload_mixed}&
    \hspace{-0mm} \includegraphics[width=.225\textwidth]{comparison_plots/pathfinder/workload_mixed}&
    \hspace{-0mm} \includegraphics[width=.225\textwidth]{comparison_plots/munin1/workload_mixed}&
    \hspace{-0mm} \includegraphics[width=.225\textwidth]{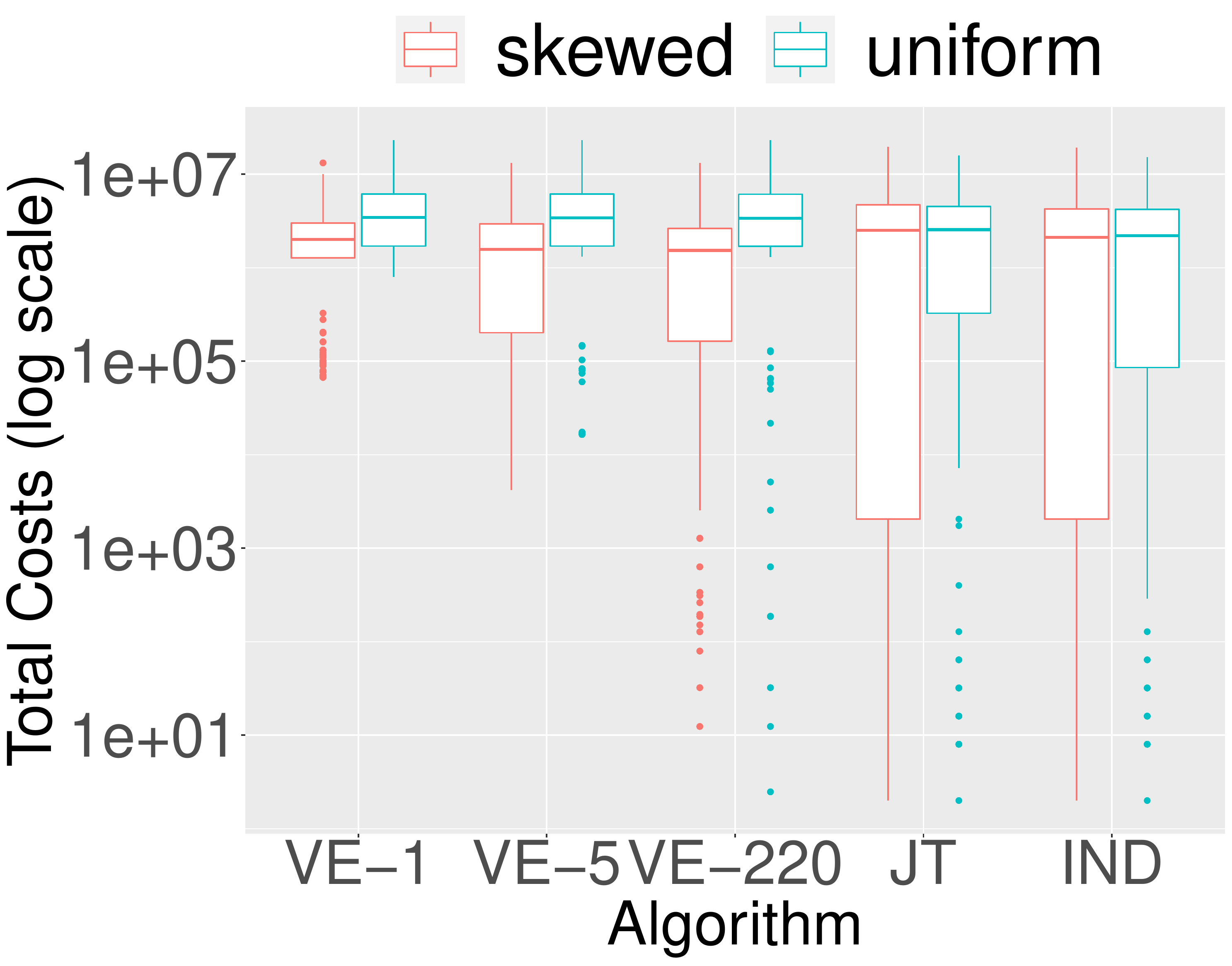}\\
  (a) \mildew (\mf) & (b) \bnpathfinder (\mf)  & (c) \munins (\wmf) & (d) \andes (\mf)   \\
  \includegraphics[width=.225\textwidth]{comparison_plots/diabetes/workload_mixed} &
    \hspace{-0mm} \includegraphics[width=.225\textwidth]{comparison_plots/link/workload_mixed} & 
    \hspace{-0mm} \includegraphics[width=.225\textwidth]{comparison_plots/munin2/workload_mixed} & 
    \hspace{-0mm} \includegraphics[width=.225\textwidth]{comparison_plots/munin/workload_mixed}
    \\
   (e) \diabetes (\mf)  & (f) \link (\mf) & (g)  \muninm (\mf)   & (h) \muninb (\wmf)  \\ 
   \includegraphics[width=.225\textwidth]{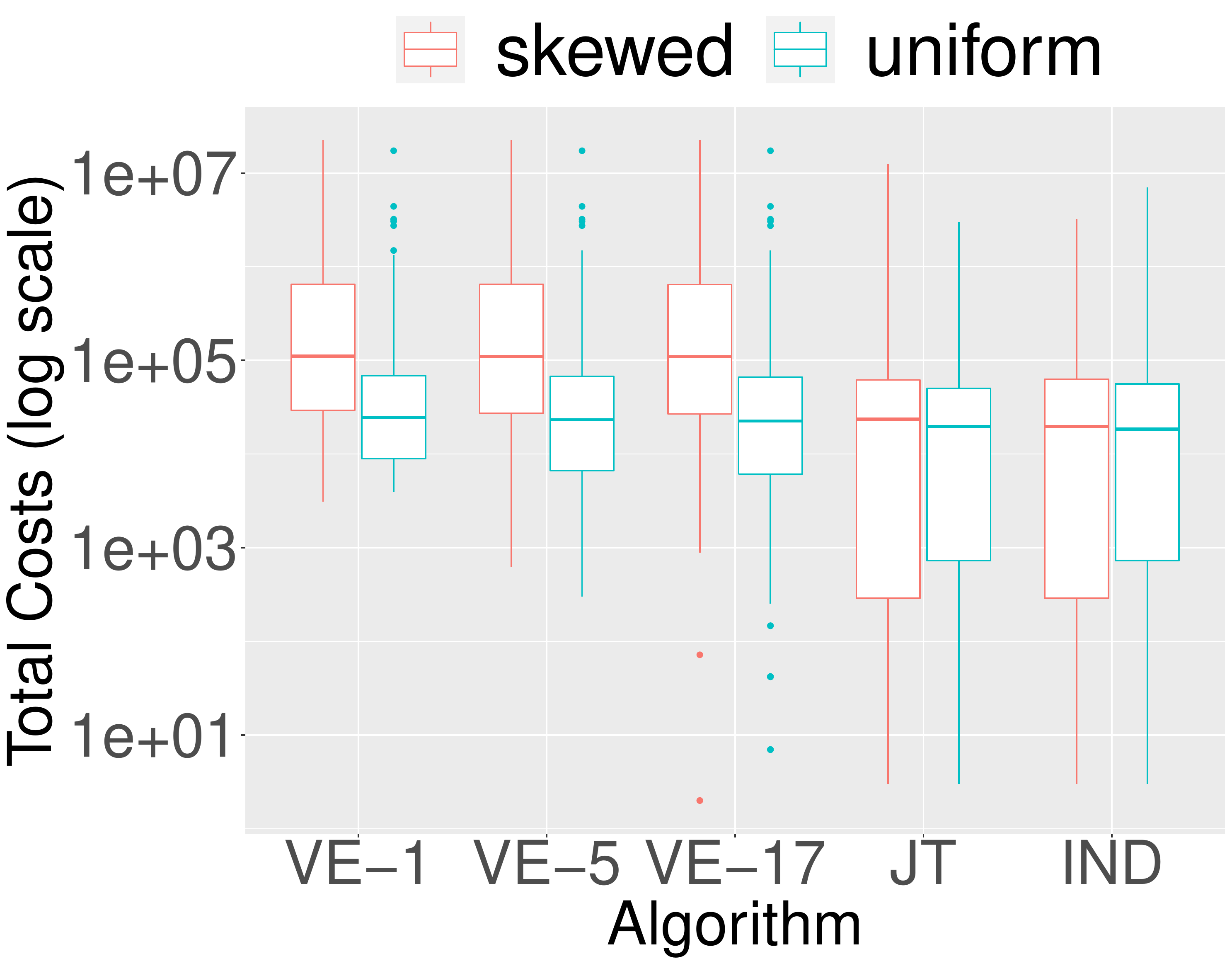} &
    \hspace{-0mm} \includegraphics[width=.225\textwidth]{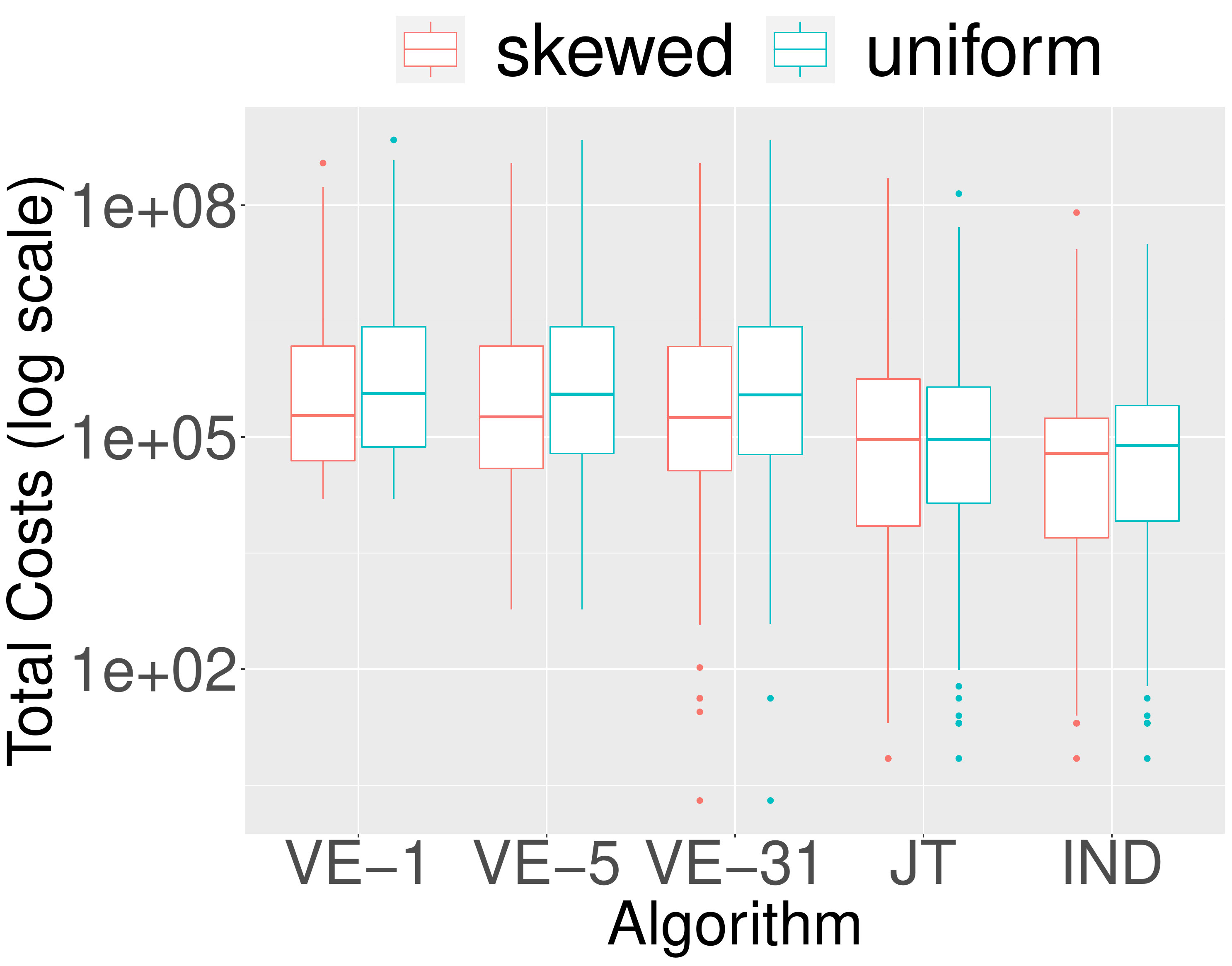} & 
    \hspace{-0mm} \includegraphics[width=.225\textwidth]{comparison_plots/tpch3/workload_mixed} & 
    \hspace{-0mm} \includegraphics[width=.225\textwidth]{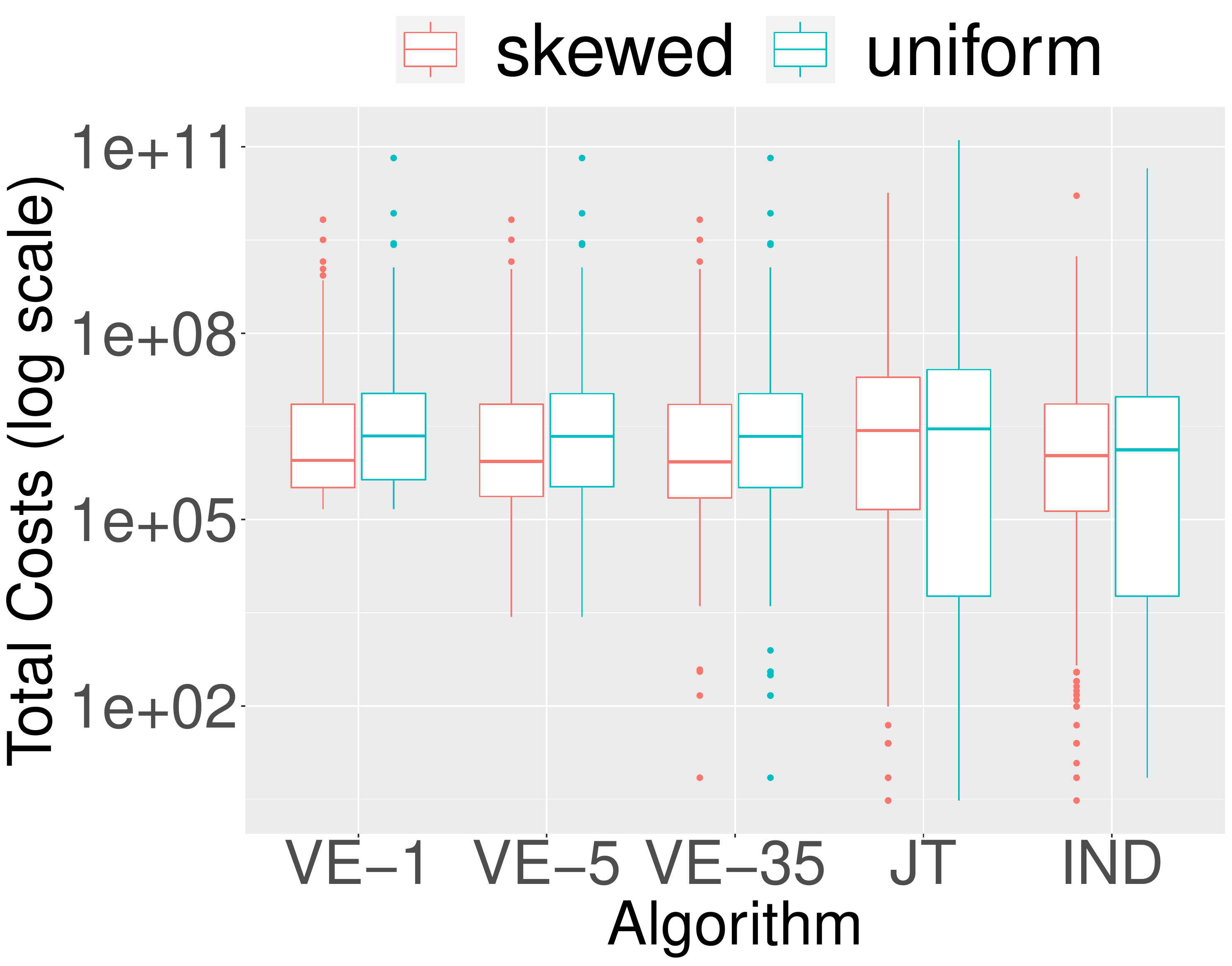}
    \\
   \revision{(i) \tpchsmall (\mw)}  & \revision{(j) \tpchmedium (\mw)} & \revision{(k)  \tpchverylarge (\mw)}   & \revision{(l) \tpchlarge (\mw)}
\end{tabular}
\caption{\label{fig:compMixed} Comparison of total costs under uniform and skewed workloads for different algorithms.}
\end{figure*}
}

\spara{Execution system.} 
Experiments were executed on a 64-bit SUSE Linux Enterprise Server with Intel Xeon 2.90\,GHz CPU and 264\,GB memory. 
Our implementation is online.\footnote{\url{https://github.com/aslayci/qtm}}

\subsection{Results}
\label{sec:results}

\spara{Improvement over Variable Elimination.} 
We first report the performance gains that materialization brings to variable elimination. Results for the uniform scheme are shown in Figure~\ref{fig:unif_per_r}. Each plot corresponds to one dataset, with the $x$-axis showing the number of factors that are materialized (budget \budget) and
the $y$-axis showing the cost savings in query time, 
expressed as a percentage of the query time when no materialization is used.
The reported savings are averages over the query workload and each bar within each plot corresponds to a different query size $\qsize$.
The numbers on the bars indicate the percentage of cost savings relative to the materialization of all the factors in the tree in the uniform-workload scheme. 

We observe in Figure~\ref{fig:unif_per_r} that, consistently in all datasets, a small number of materialized factors can achieve cost savings almost as high as in the case of materializing all factors. 
This result is expected as the submodularity property of the benefit function implies a diminishing-returns behavior. This result is also desirable as it shows that we can achieve significant benefit by ma\-te\-ri\-al\-iz\-ing only a small number of factors. Another observation, common to all datasets, is that, as the number \qsize of variables in a query increases, the savings from ma\-te\-ri\-al\-iza\-tion decrease. 
Given our choice of limiting the ma\-te\-ri\-al\-iza\-tion operation to factors that involve only joins and variable summation, this trend is expected: with higher \qsize, the probability that a materialized factor does not contain any free variable in its subtree decreases, limiting the benefit. 


The datasets where we observe considerably small savings are \munins, \andes, \link, and \diabetes. In all these datasets except \diabetes, we find that a small number of factors contribute to the largest part of the computational cost when $k = 0$, due to their large number of entries: we observe that 5 out of 372 factors in \munins and 6 out of 1428  factors in \link  contribute to almost $90\%$ of the computational cost, while for \andes, 5 out of 440 factors contribute to $75\%$. This suggests that the same computational burden of creating these large factor tables carries to the case of $k>0$ whenever none of the materialized factors 
are useful.
We indeed observe that the average cost savings, among the queries in which the variables associated to these large factors are summed out, is greater than $90\%$ in these datasets. On the other hand, for \diabetes, we find that the number of entries in the factor tables are almost uniformly distributed, however, the structure of the elimination tree has large chain components, as reflected by its larger height relative to other similar-sized trees, which makes it rare for any chosen factor to be useful. 
In Table~\ref{table:runtimes} we report the average query-processing times when no materialization is used (i.e., \budget = 0) under the uniform-workload scheme. We observe that the running time increases with the number \qsize of free variables in a query. This is expected as free variables lead to larger factor tables during variable elimination.
\ReviewOnly{We omit the results per query size for the skewed-workload scheme and simply note that we observed similar trends for it, although with even higher cost savings. Results are provided in the extended version~\cite{aslay2020query}. 
} 
\FullOnly{
Figure~\ref{fig:biasr_per_r} reports the average cost savings  per query size for varying number of materialized factors in the skewed-workload scheme.  As in the case of uniform-workload scheme, we observe that a small number of materialized factors can achieve cost savings almost as well as in the case of materializing all the factors of the elimination tree in the skewed-workload scheme. We also observe that, while the relative performance of materialization, indicated by the numbers on the bars, does not differ significantly between the uniform- and skewed-workload schemes, the savings over the case of no materialization significantly improves in the skewed-workload scheme. }

We report the overall comparison between the uniform- and skewed-workload schemes in Figure~\ref{fig:workloadMixed}.
\FullOnly{
We observe that the performance gains due to materialization in the skewed-workload scheme are significantly higher than in the uniform-workload scheme for all the datasets. 
This trend is especially visible for the datasets \munins, \andes, and \link, all of which contain only a small number of factors contributing to the majority of the computational cost as explained before: we observed that these large factors reside in middle layers of their elimination trees, hence, are less likely to be associated to free variables in the skewed-workload scheme.
Since a random set of free variables in the skewed-workload scheme are more likely to be associated to the ancestors of the materialized factors, it is expected that the skewed-workload scheme can obtain higher benefit from materialization. 
This observation suggests that constructing an elimination order that is tuned to a given query workload can provide significant boost to the query-processing performance.
}
For \mildew, average savings are slightly higher than $10\%$ under the uniform-workload scheme: 
when we drill down to savings specific to query size, as provided in Figure~\ref{fig:unif_per_r}, 
we see that for queries of size $1$, $2$, and $3$, the savings are around $70\%$, $35\%$, and $25\%$ respectively, 
and have a sharp de\-crease to under $10\%$ for queries of size $4$ and $5$, 
resulting to an average around $10\%$ over all queries. 
We remind that \mildew has only $35$ variables, which translate to a  small elimination tree, making it especially hard to find useful factors for large value of $\qsize$ under the uniform workload. 
Savings improve significantly to an average of $50\%$ in the skewed-workload scheme for \mildew. 
On the other hand, for \diabetes, average savings under both workload schemes remain around $10\%$, due to the elimination tree having large chain components, limiting the extent we can exploit materialization. 
For the rest of the datasets, \bnpathfinder, \muninm, and \muninb, we observe relatively high savings under both workload schemes, where average savings for $\budget = 20$ is $70\%$ for \bnpathfinder and $50\%$ for the other two Bayesian networks. 
Overall, we observe that it is possible to obtain up to $99\%$ savings 
with a small number of materialized factors in both schemes for each dataset.

\spara{Comparison with junction tree algorithms.} 
We begin by comparing the computational costs for inference, i.e., query answering. 
Results for the uniform scheme are shown in Figure~\ref{fig:comp_unif_per_r}, 
where variable elimination with different levels of materialization is indicated by \qtm{$\budget$}.
The reported costs are averages over the query workload per query size $\qsize$. 

Figure~\ref{fig:comp_unif_per_r} demonstrates that, for all the datasets, our proposed
variable elimination with different levels of materialization is competitive with the junction tree algorithms 
(\jt and \kanagal) for $\qsize > 1$, even with a very small materialization budget of $k=1$. 
When $\qsize = 1$, i.e., when there is only one query variable, \jt and \kanagal perform significantly better. 
This is expected, since the distribution of any single variable is readily available from the materialized junction tree.
%
%
However for $\qsize > 1$, \qtm{$\budget$} has comparable performance to \jt and \kanagal\ -- and, in fact, \qtm{$\budget$} significantly outperforms the other algorithms in half of the datasets, including the largest Bayesian networks \muninm and \muninb.
This is because in this case it is unlikely for the junction tree to have the joint distribution of $\qsize > 2$ variables readily available.
Specifically, it is unlikely for the junction tree to contain a materialized joint distribution of exactly the \qsize variables included in a given query.
In such cases, the junction tree algorithm \jt essentially performs variable elimination over the junction tree.

\ReviewOnly{
We omit the results for the skewed-workload scheme as the overall trend is similar, with slightly better performance for \qtm{$\budget$}. They are made available in the extended version~\cite{aslay2020query}. Instead, we present an aggregate comparison of the algorithms in the uniform and skewed-workload schemes in Figure~\ref{fig:compMixed}.}

\FullOnly{
Figure~\ref{fig:comp_biasr_per_r} reports the comparison of the costs in the skewed-workload scheme. As in the case of uniform-workload scheme,  we observe that for all the datasets, our proposed
variable elimination with different levels of materialization is competitive with the junction tree algorithms (\jt and \kanagal) for $\qsize > 1$, even with a very small materialization budget of $k=1$. Figure~\ref{fig:compMixed} reports aggregate comparison of the algorithms in the uniform- and skewed-workload schemes.}

Note that \jt and \kanagal are quite sensitive to the query variables, as indicated by the high variability in the costs for answering different queries. 
This is expected, as their performance depends on how far apart the query variables are located on the junction tree. On the other hand, \qtm{$\budget$} is more robust under both workload schemes.

Additionally, we report the computational costs for the off\-line materialization phase in Table~\ref{table:offlineStats}. 
In particular, we report the disk space occupied by the materialized structures, along with the running time required for materialization.
For our method, we report the results for \qtm{$n$}, i.e., for the maximum possible budget $\budget = n$, where all the factors are materialized. 
For \jt, the costs concern the pre\-computation of the junction tree (the ``calibration'' phase). 
For \kanagal, the costs concern both the junction tree and the additional index.

Table~\ref{table:offlineStats} demonstrates that \qtm{\budget} significantly outperforms \jt and \kanagal both in terms of precomputation time and materialization volume.
In fact, for \munins and \tpchverylarge, the junction tree algorithms did not terminate after a two-day-long execution (NA entries in Table~\ref{table:offlineStats}).
%
We conclude that, for settings that include large Bayesian networks and modest to large query sizes, a small level of materialization for variable elimination offers a significant advantage over junction tree based algorithms, as it provides efficient inference (Figures~\ref{fig:comp_unif_per_r}-\ref{fig:compMixed}), as well as faster and lighter pre\-computation (Table~\ref{table:offlineStats}). 

\eat{Additionally, we report statistics regarding the off\-line materialization phase. For each dataset, we report the results for maximum possible $\budget$ where all the factors are materialized. 
In all the datasets except \munins and \link, selection and computation of the factors to materialize took less than 10 seconds while for \munins and \link it took 270 and 100 seconds respectively. The maximum materialized table size on disk was found to be 240 MB for \munins, 54 MB \link, and less than 2MB for the rest of the datasets. We conclude that by materializing a small number of tables with modest memory requirements, we can obtain significant performance gains in the query-processing time, reaching up to an average gain of $70\%$ in the uniform-workload scheme and $80\%$ in the skewed-workload scheme.}

\begin{table}[t]
\setlength\tabcolsep{3pt}
\fontsize{8}{9}\selectfont
\centering
\caption{\label{table:offlineStats} Materialization phase statistics.}
\begin{tabular}
{lrrrrrrrr}
\toprule
 & \multicolumn{3}{c}{Disk Space (MB)} & \multicolumn{3}{c}{Time (seconds)}  \\
\cmidrule(lr{0.7em}){2-4}
\cmidrule(lr{0.7em}){5-7}
Network & \qtm{n} & \jt & \kanagal & \qtm{n} & \jt & \kanagal  \\ \midrule
\mildew & $1.7$ & $373$ & $1\,354$ & $5$  & $18\,360$ & $18\,360 $  \\
\bnpathfinder  & $<1$ & $17$ & $23$  & $<1$ & $302$ & $305$  \\
\munins  &  $317$ & NA & NA & $270$  & NA & NA  \\
\andes  &  $4.1$& $70$ & $78$ & $2$ & $3\,682$ & $3\,686 $  \\
\diabetes  &  $15$ & $945$ & $3\,286$ &  $2$ & $41\,228$  & $ 41\,247$  \\
\link   & $245$  & $3\,735$ &  $3\,824$ & $100$ & $98\,533$  & $98\,647 $  \\
\muninm    & $9$  & $480$ & $573$ & $8$ & $21\,348$  & $ 21\,635$  \\
\muninb  &  $14$ & $2\,866$ & $2\,972$ & $16$  & $110\,342$ & $110\,645$  \\
\revision{\tpchsmall} & \revision{$<1$} & \revision{$<1$} & \revision{$<1$} & \revision{0.01} & \revision{$0.306$} & \revision{$0.322$}  \\
\revision{\tpchmedium} & \revision{$<1$} & \revision{$<1$} & \revision{$<1$} & \revision{0.02} & \revision{$1.866$} & \revision{$1.882$}  \\
\revision{\tpchverylarge} & \revision{$<1$} & \revision{NA} & \revision{NA} & \revision{0.02} & \revision{NA} & \revision{NA}   \\
\revision{\tpchlarge} & \revision{$<1$} & \revision{$4.7$} & \revision{$6.9$} & \revision{0.02} & \revision{$106$} & \revision{$107$} \\
\bottomrule
\end{tabular}
\end{table}

\smallskip
\ReviewOnly{
\noindent
\revision{\bf Robustness.} 
We experiment with settings where materialization is optimized for a training query workload \traindistr, but the test queries are drawn from a different workload \testdistr. 
Specifically, we set \traindistr to be either the uniform or the skewed workload; and \testdistr to consist of queries from the uniform workload in proportion $\lambda$ and from the skewed workload in proportion $(1-\lambda)$, with varying $\lambda\in[0,1]$. The results are shown for one Bayesian network (\mildew) in Figure~\ref{fig:robustness}. In both cases, performance decreases smoothly for increasingly uniform workload \testdistr (larger values of $\lambda$). This happens because in both cases the materialized factors that benefit the queries  are the ones that dominate the skewed workload, i.e., queries with variables near the root of the elimination~tree.}

\ReviewOnly{
\begin{figure}
\begin{center}
\begin{tabular}{cc}
    \includegraphics[width=.2\textwidth]{./cde_experiments/rev-robustness_plots/rev-robustness_mildew}&
     \hspace{-0mm} \includegraphics[width=.2\textwidth]{./icde_experiments/robustness_plots/robustness_mildew}\\
  (a) \traindistr: Skewed & (b) \traindistr: Uniform 
\end{tabular}
\caption{\label{fig:robustness}\revision{Robustness with \mildew. We explore how performance of variable elimination (with and without materialization) changes when the query workload changes from skewed ($\lambda = 0$) to uniform ($\lambda = 1$).}
}
\end{center}
\end{figure}
}

\smallskip
\noindent
\FullOnly{{\bf Robustness.} 
We experiment with settings where materialization is optimized for a training query workload \traindistr, but the test queries are drawn from a different workload \testdistr. 
Specifically, we set \traindistr to be either ($i$) uniform or ($ii$) skewed workload; and \testdistr to consist of queries from the uniform workload in proportion $\lambda$ and from the skewed workload in proportion $(1-\lambda)$, with varying $\lambda\in[0,1]$. The results for both settings ($i$) and ($ii$) are shown in Figure~\ref{fig:robustness_lambda_uni} and \ref{fig:robustness_lambda_skewed}, respectively. In both cases, we see that the performance decreases smoothly for increasingly uniform workload \testdistr (larger values of $\lambda$). This happens because in both cases the materialized factors that benefit the queries are the ones that dominate the skewed workload, i.e., queries with variables near the root of the elimination~tree. Overall, the results suggest that our approach is fairly robust to changes in the query distribution
for both settings ($i$) and ($ii$). We see that the difference between the cost of \qtm{$0$} and  \qtm{$5$} does not deviate drastically across the five workloads corresponding to different values of $\lambda$. }

\FullOnly{Despite the robustness of the proposed method, it is certainly appropriate to recompute the optimal materialization whenever the query distribution changes significantly. 
Thus, implementing a query-distribution drift-detection mechanism, which automatically prompts an update in the materialization strategy, would be a valuable extension of our framework.}

\FullOnly{
\begin{figure*}[t]
	\begin{tabular}{cccc}
		\includegraphics[width=.235\textwidth]{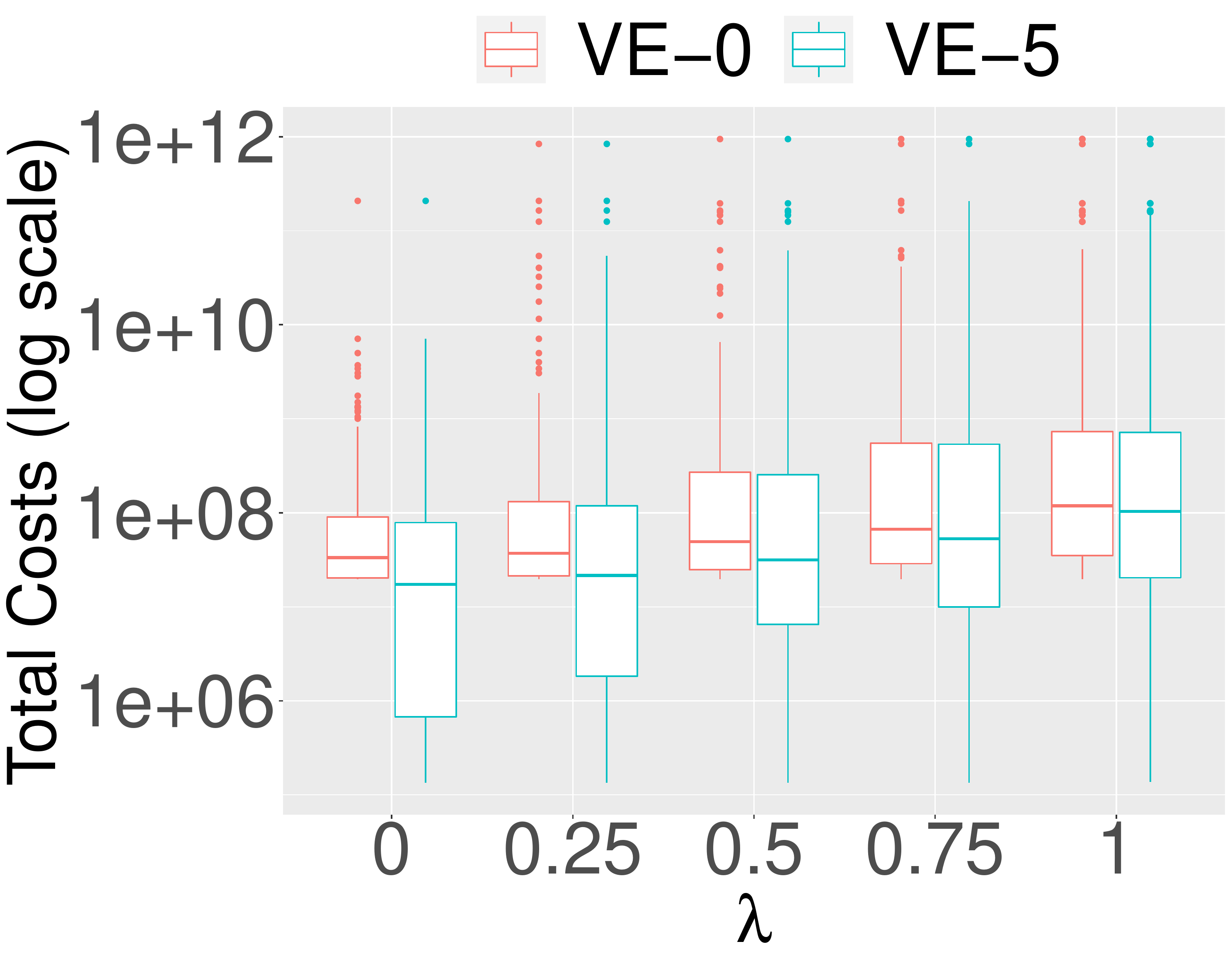}&
		\includegraphics[width=.235\textwidth]{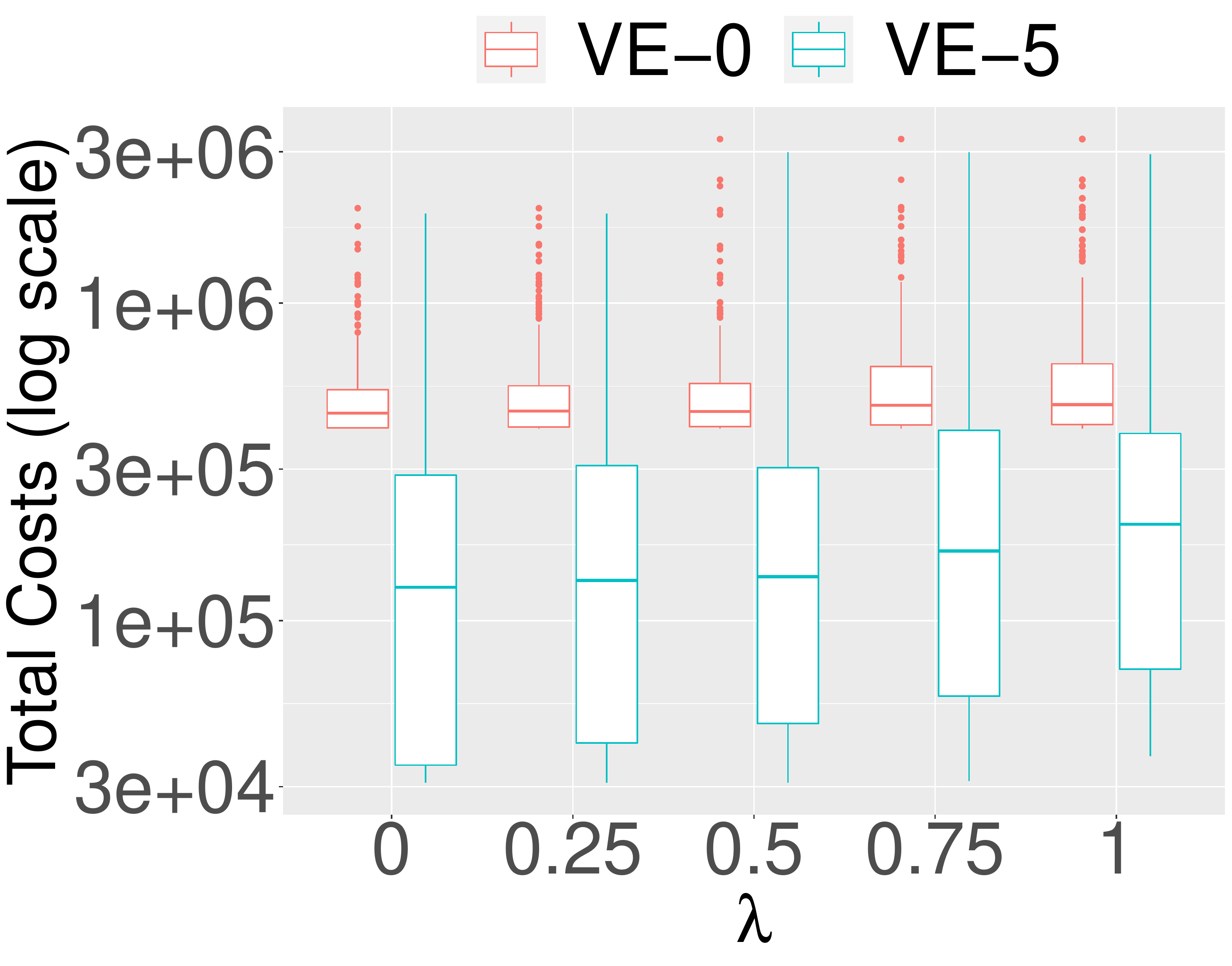}&
		\includegraphics[width=.235\textwidth]{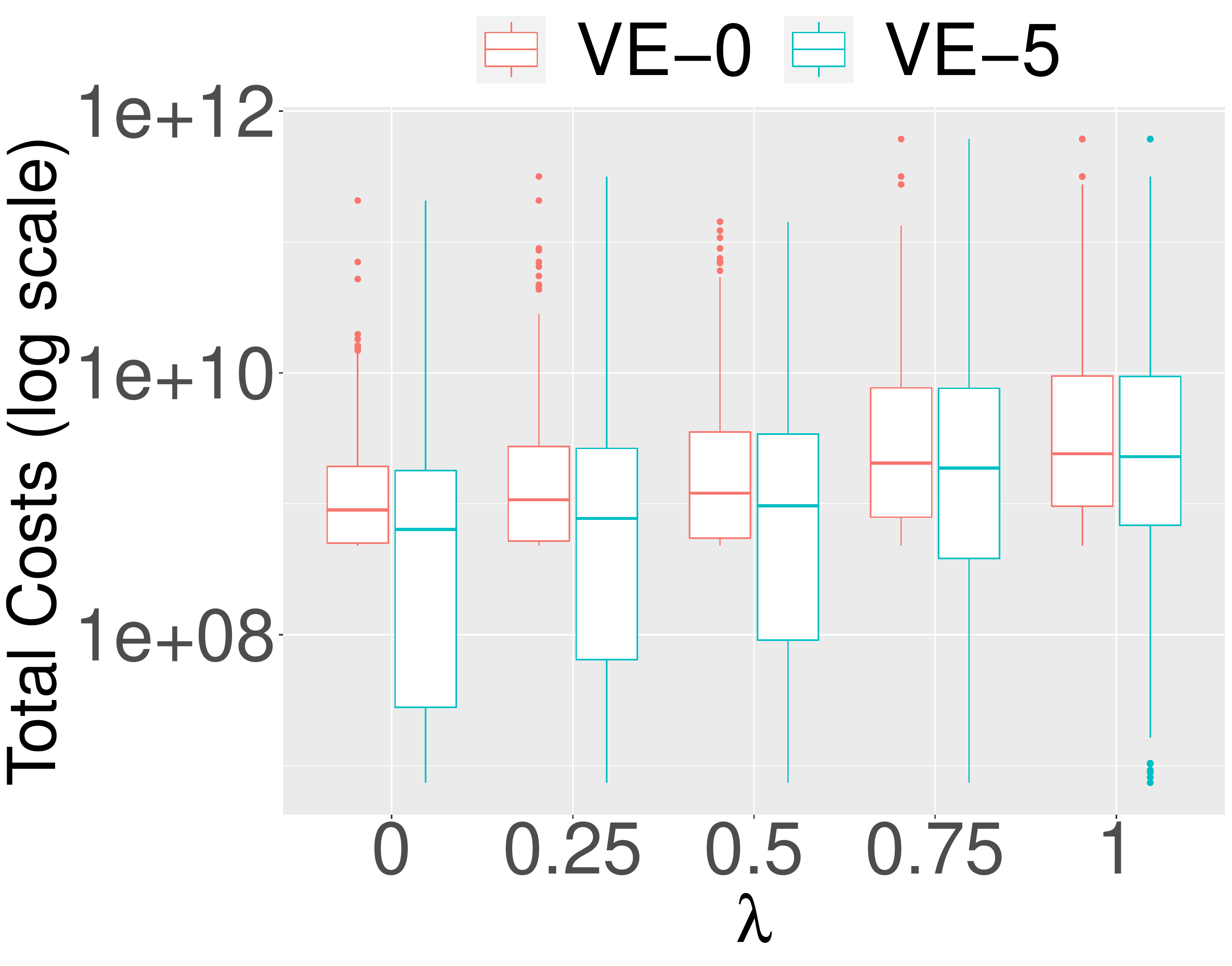}&
		\includegraphics[width=.235\textwidth]{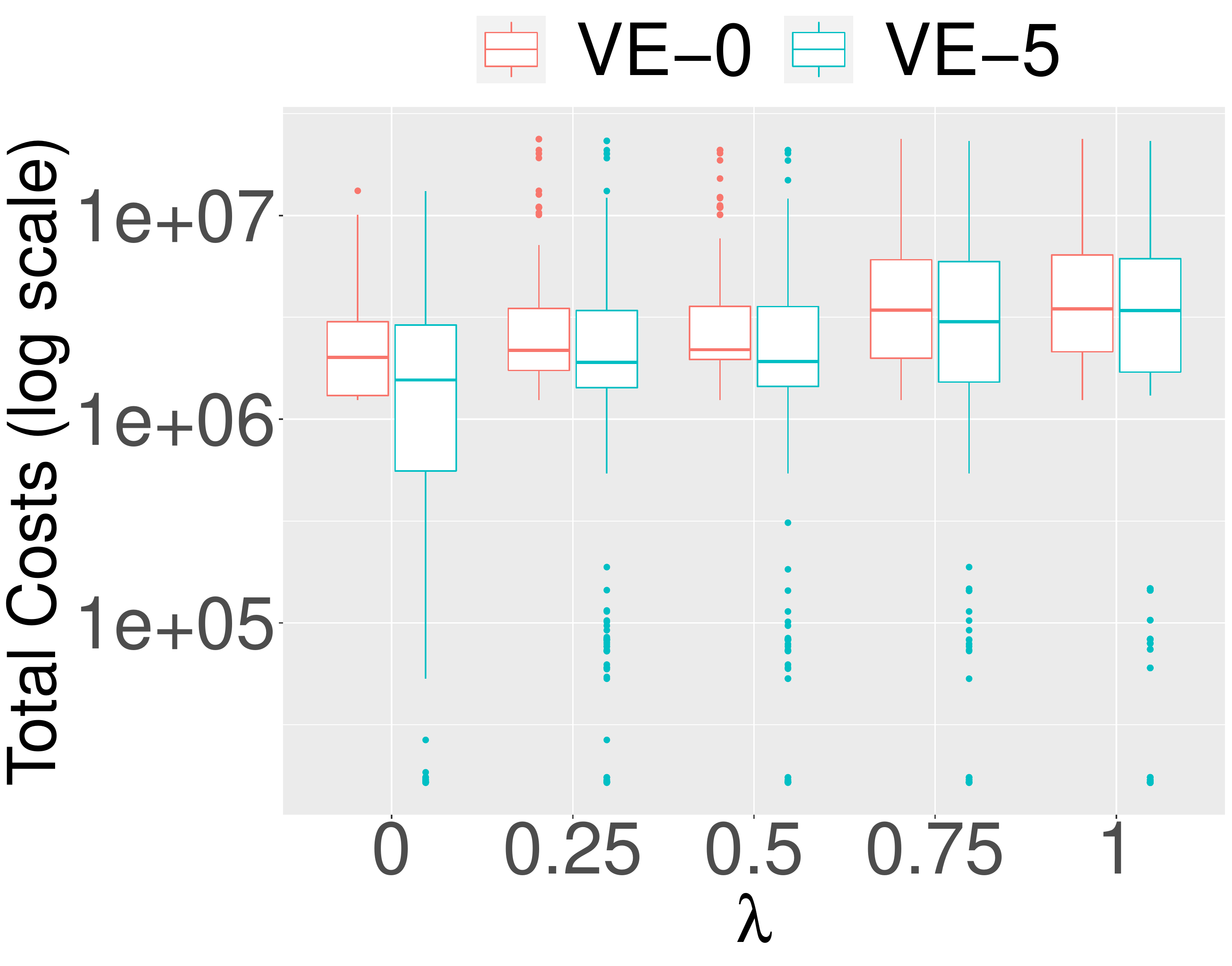}\\
		(a) \mildew  & (b) \bnpathfinder   & (c) \munins  & (d) \andes    \\
		\includegraphics[width=.235\textwidth]{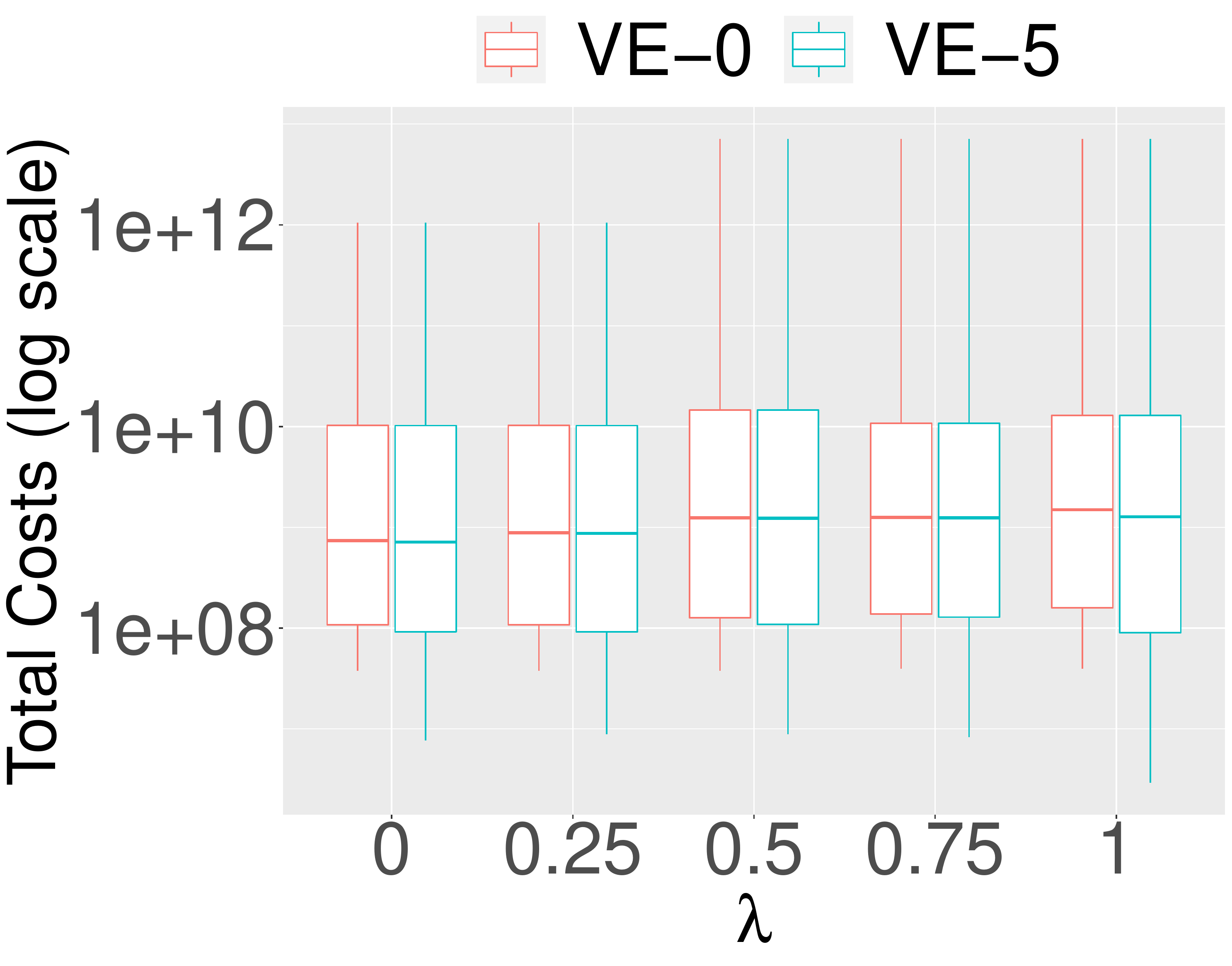} &
		\includegraphics[width=.235\textwidth]{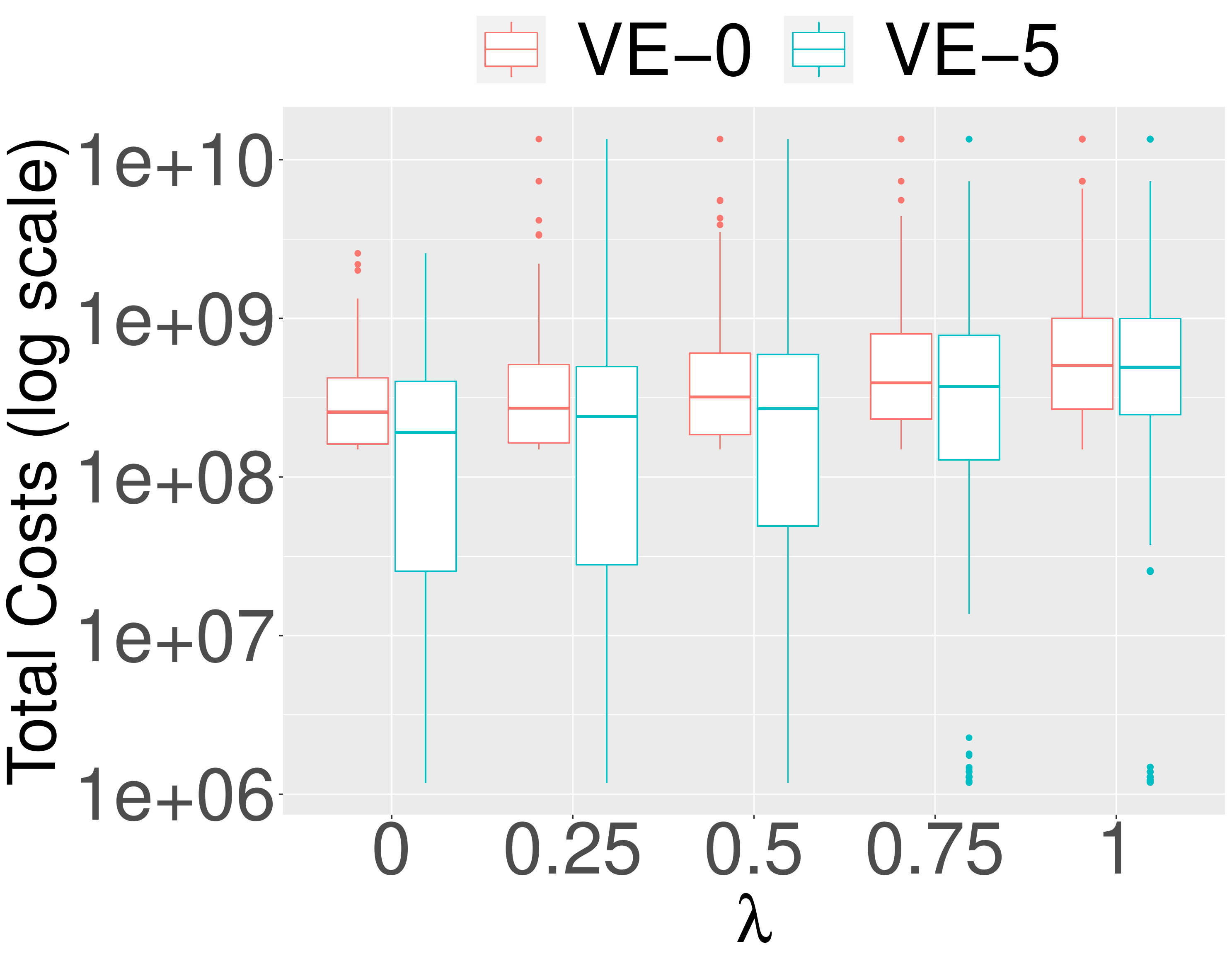} & 
		\includegraphics[width=.235\textwidth]{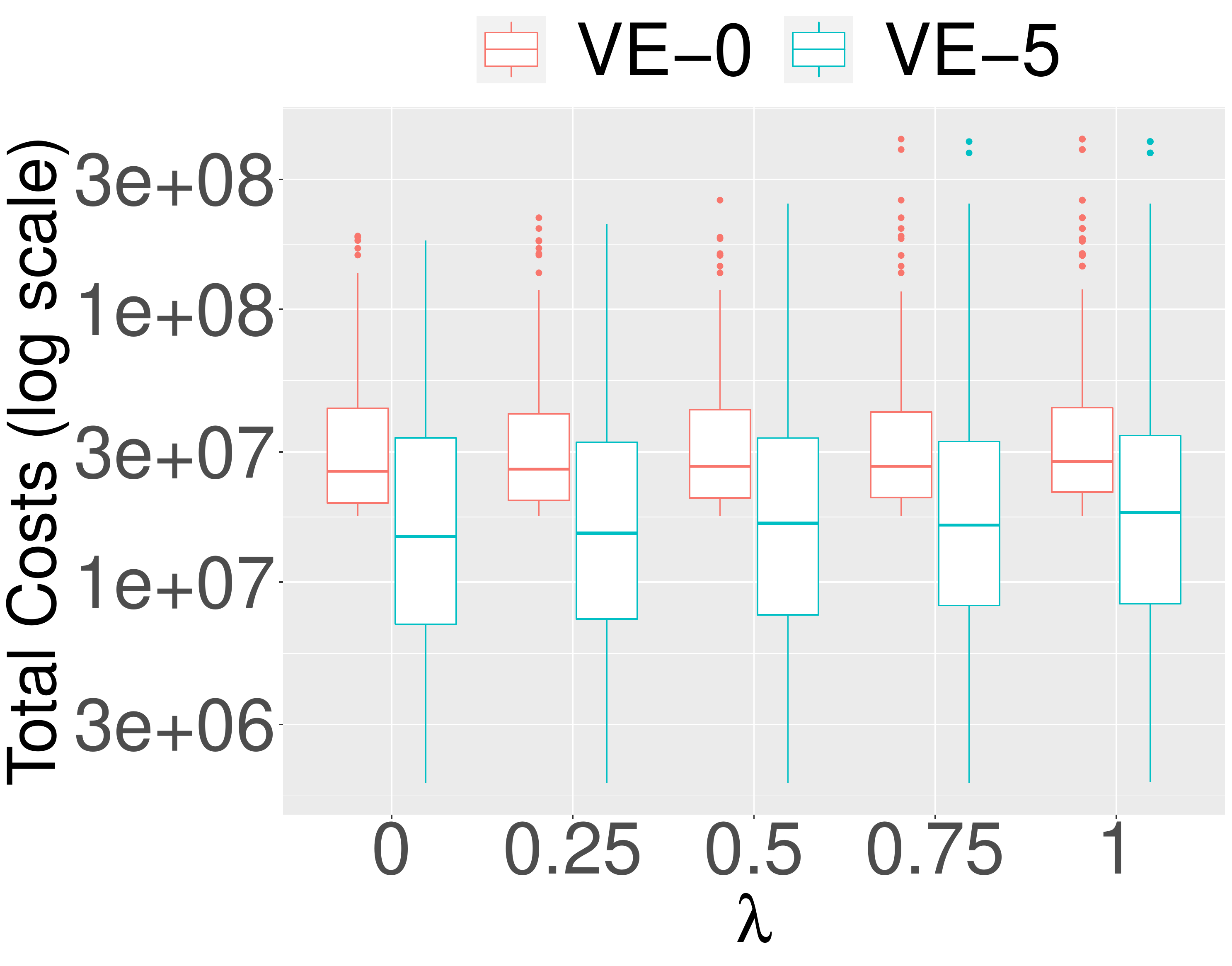} & 
		\includegraphics[width=.235\textwidth]{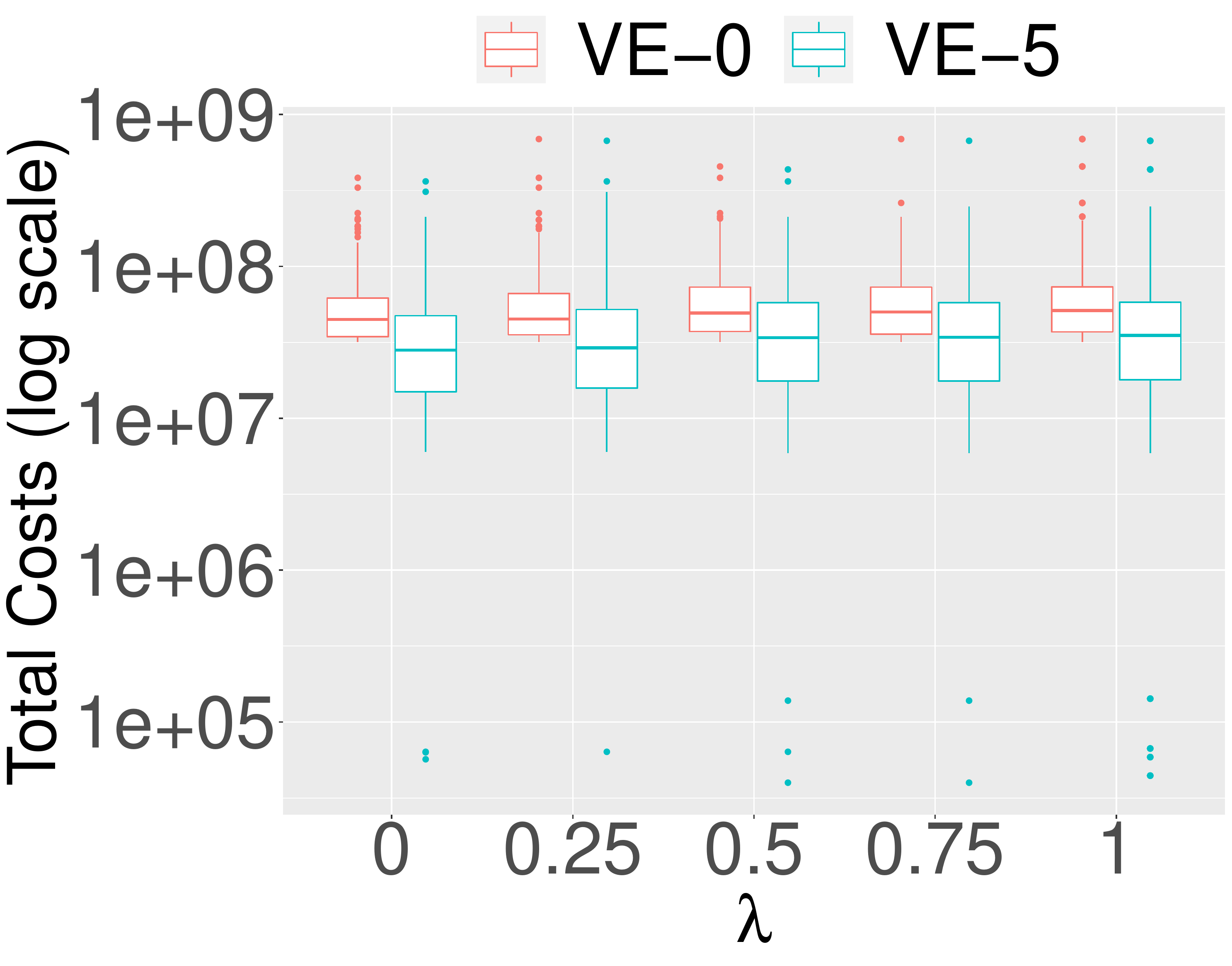}	\\
		(e) \diabetes   & (f) \link  & (g)  \muninm    & (h) \muninb   \\ 
		\includegraphics[width=.235\textwidth]{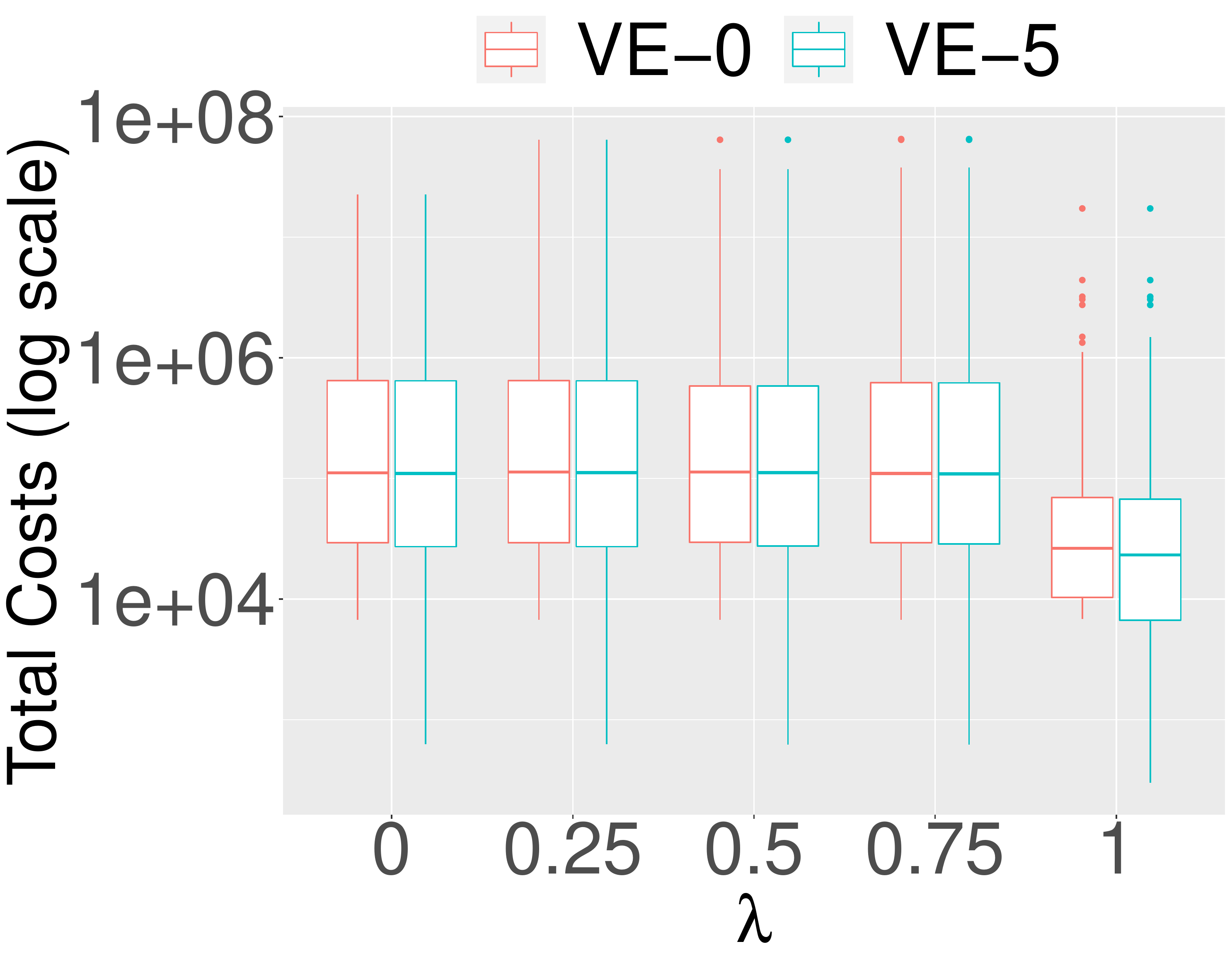} &
		\includegraphics[width=.235\textwidth]{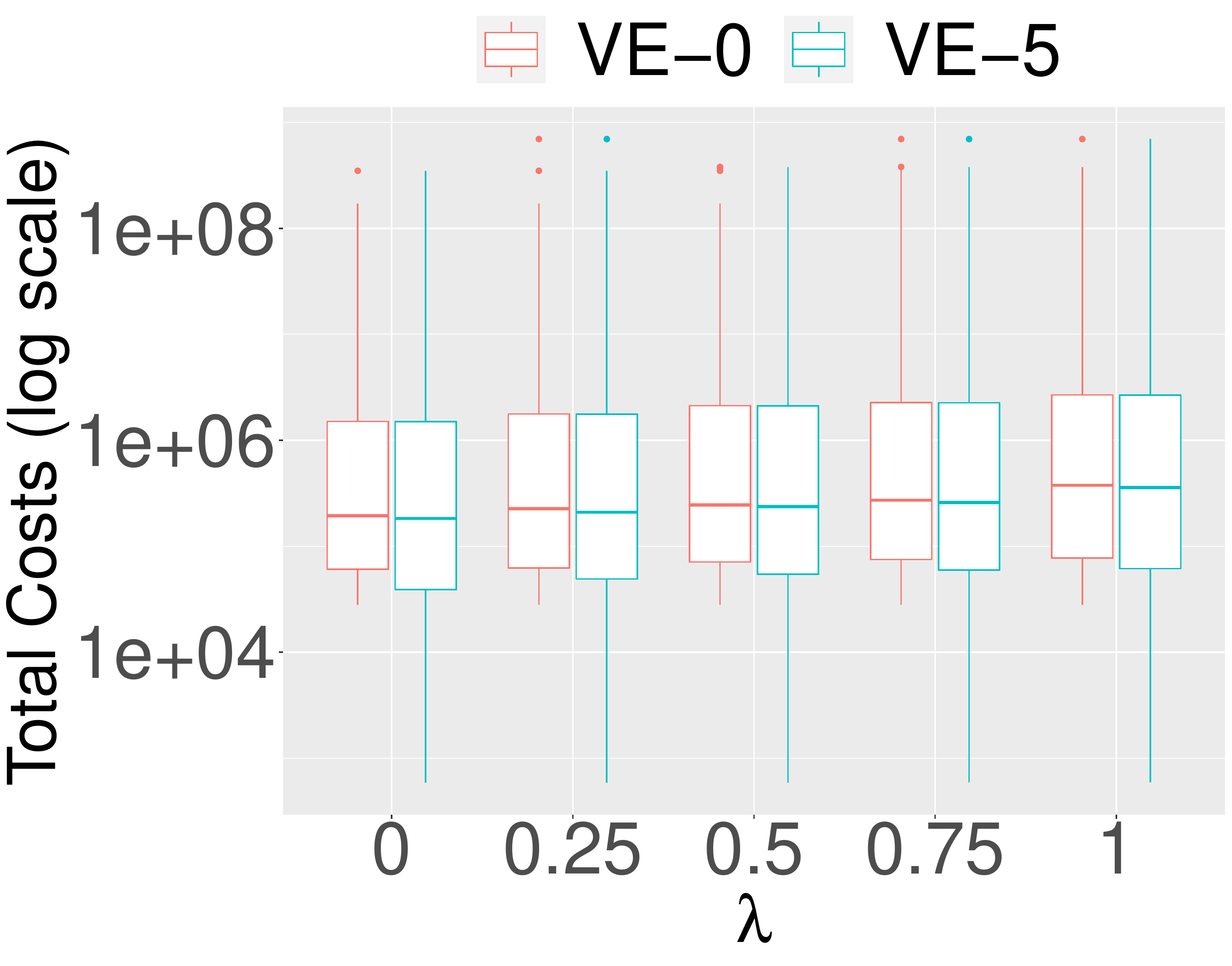} & 
		\includegraphics[width=.235\textwidth]{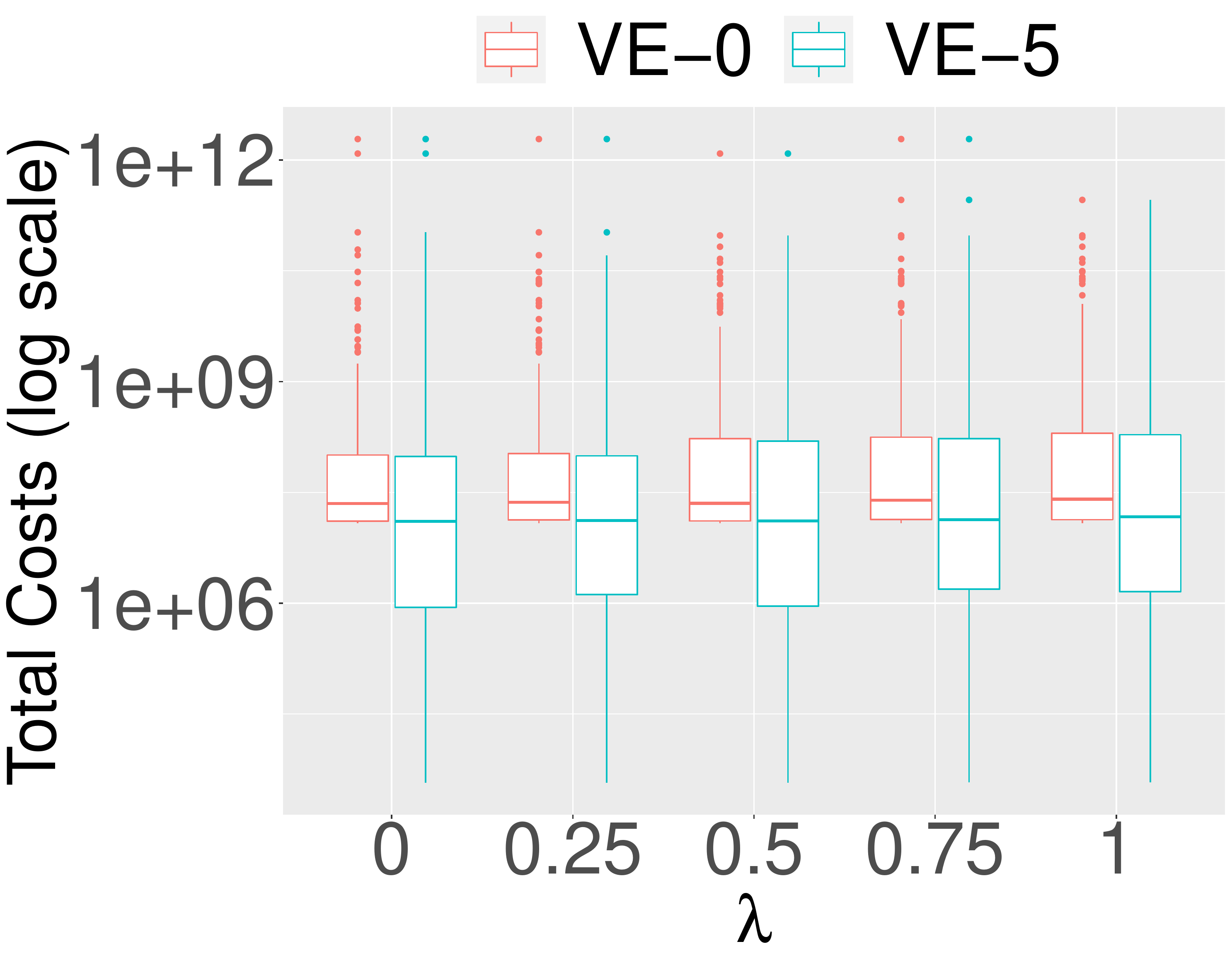} & 
		\includegraphics[width=.235\textwidth]{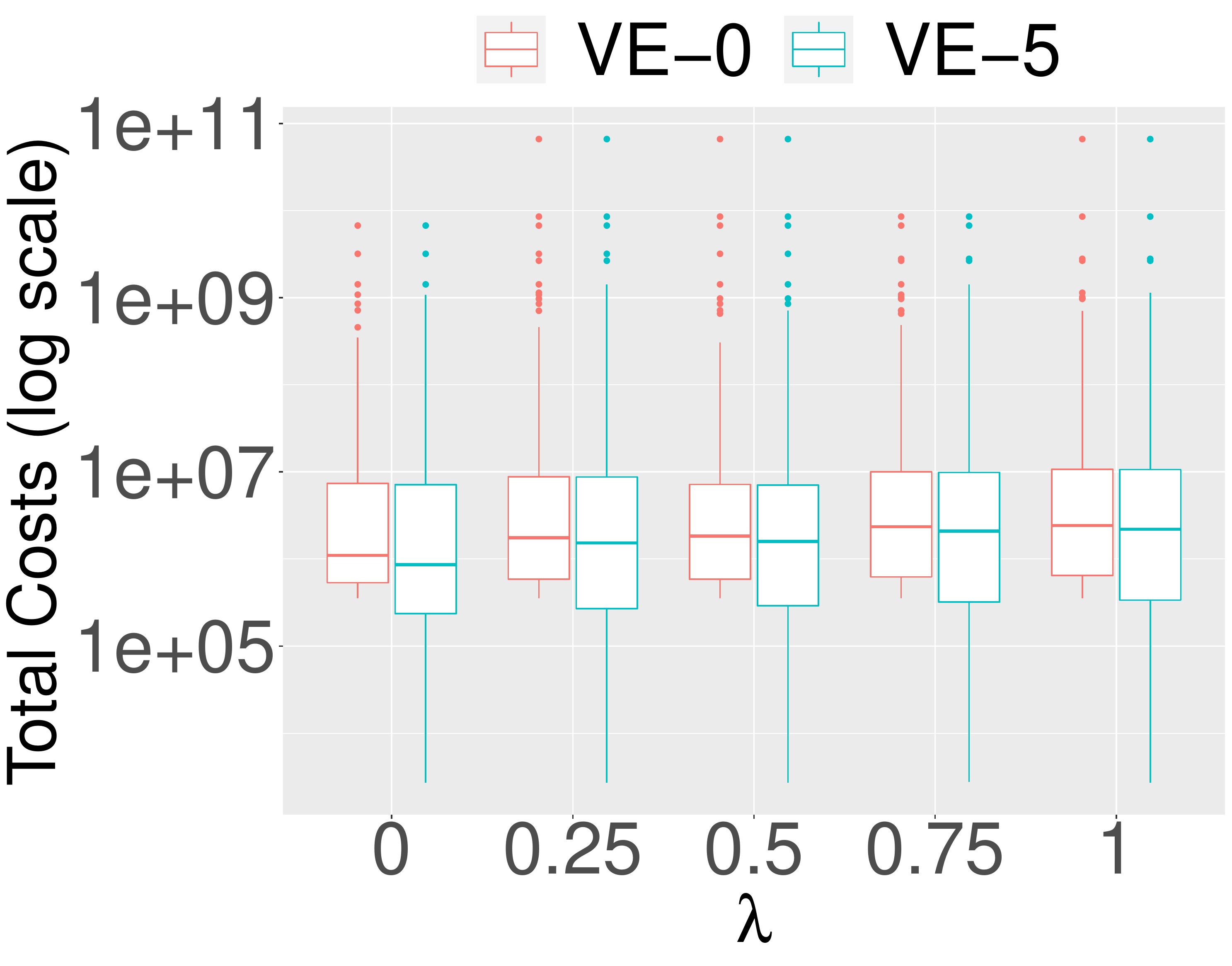}	\\
		(i) \tpchsmall   & (j) \tpchmedium  & (k)  \tpchverylarge    & (l) \tpchlarge   \\ 	
	\end{tabular}
	\caption{ {\color{black}   Robustness of materialization strategy with respect to drifts in query distribution when \traindistr is set to uniform workload. The value of $\lambda$ on the $x-$axis denotes the proportion of queries sampled from the uniform workload scheme, and $1-\lambda$ is the proportion of queries sampled from the skewed workload scheme. The $y-$axis shows on a logarithmic scale the total cost associated with \qtm{$0$} and  \qtm{$5$} in each workload.   }}
\label{fig:robustness_lambda_uni}
\end{figure*}
}

\FullOnly{
\begin{figure*}[t]
	\begin{tabular}{cccc}
		\includegraphics[width=.235\textwidth]{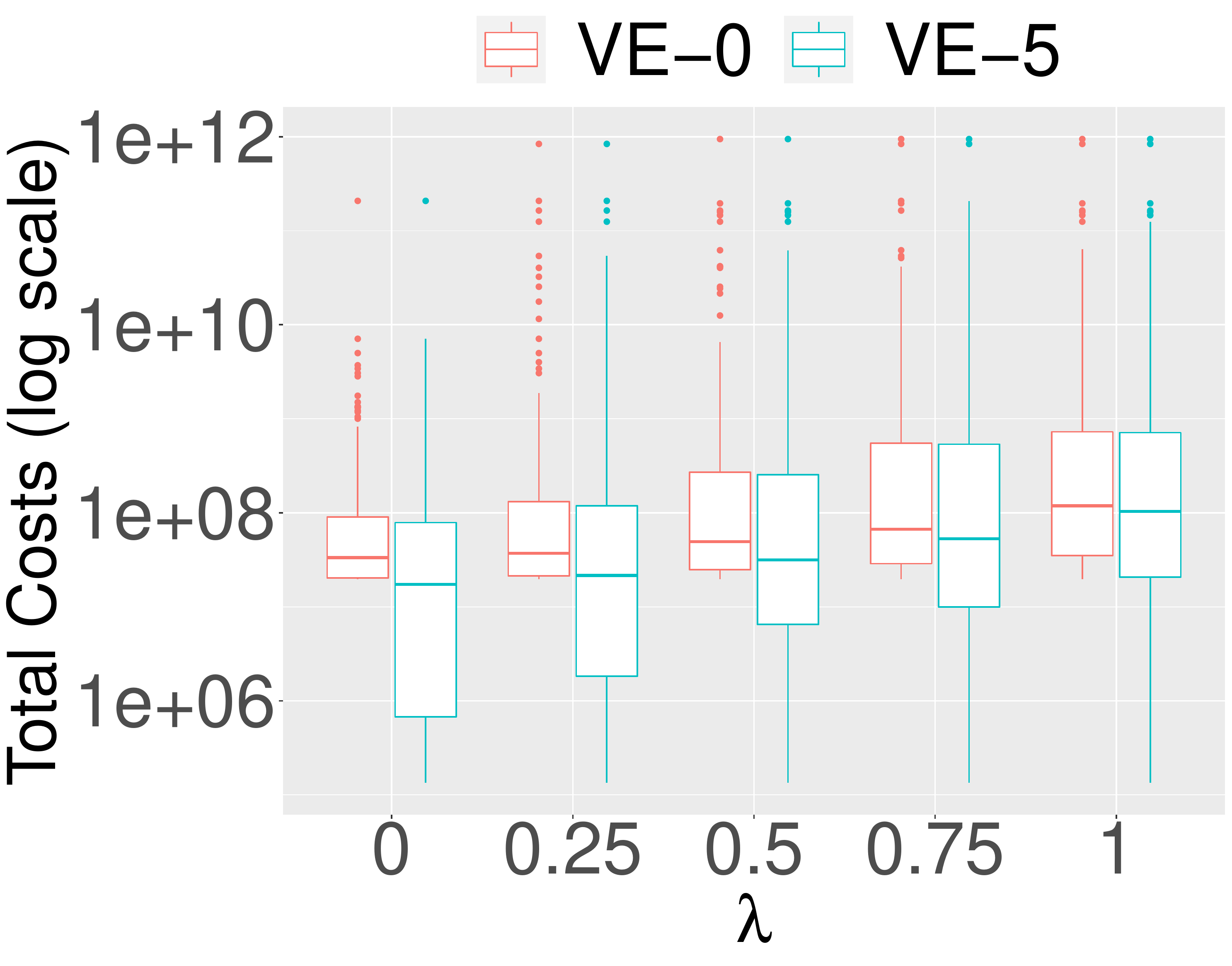}&
		\includegraphics[width=.235\textwidth]{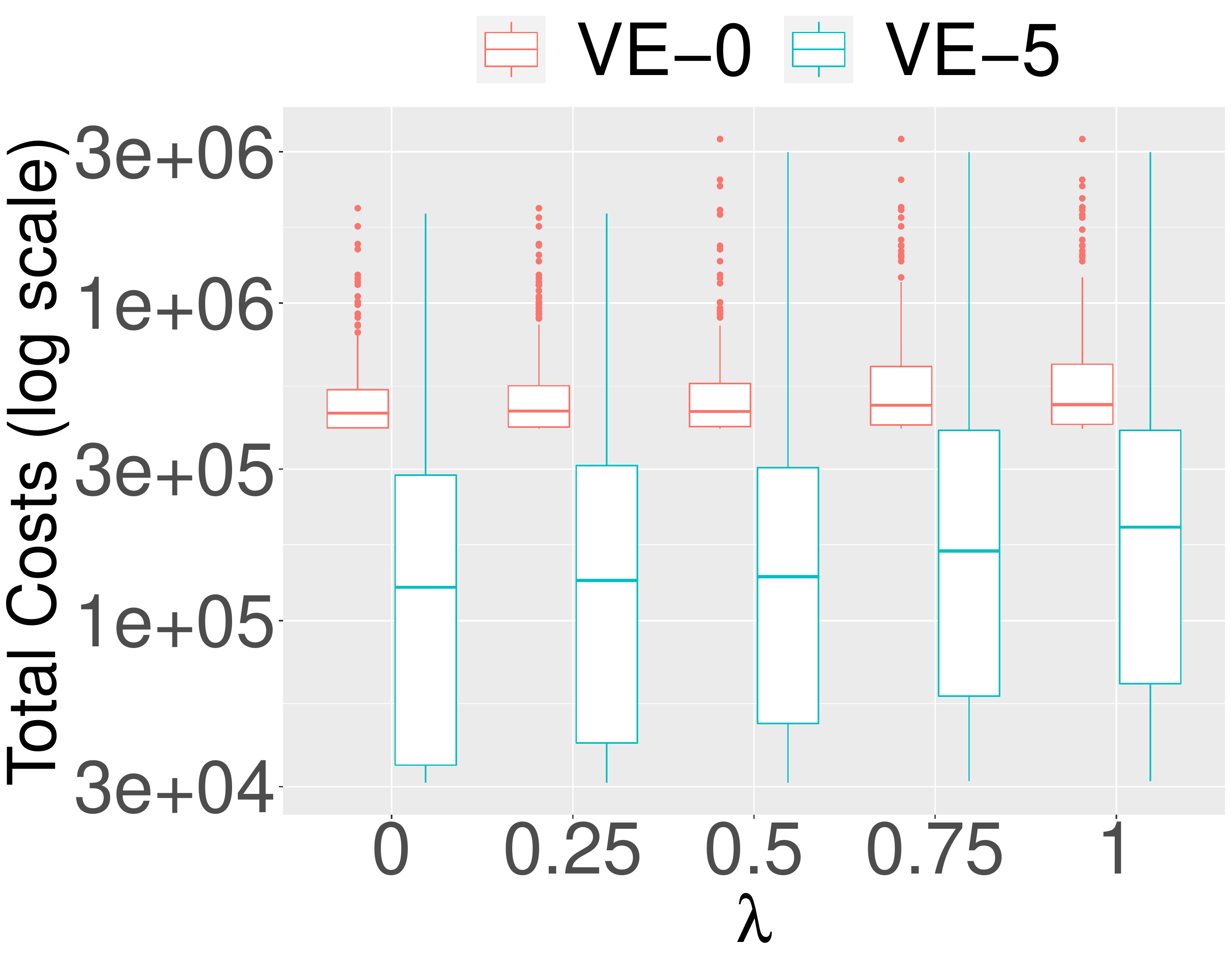}&
		\includegraphics[width=.235\textwidth]{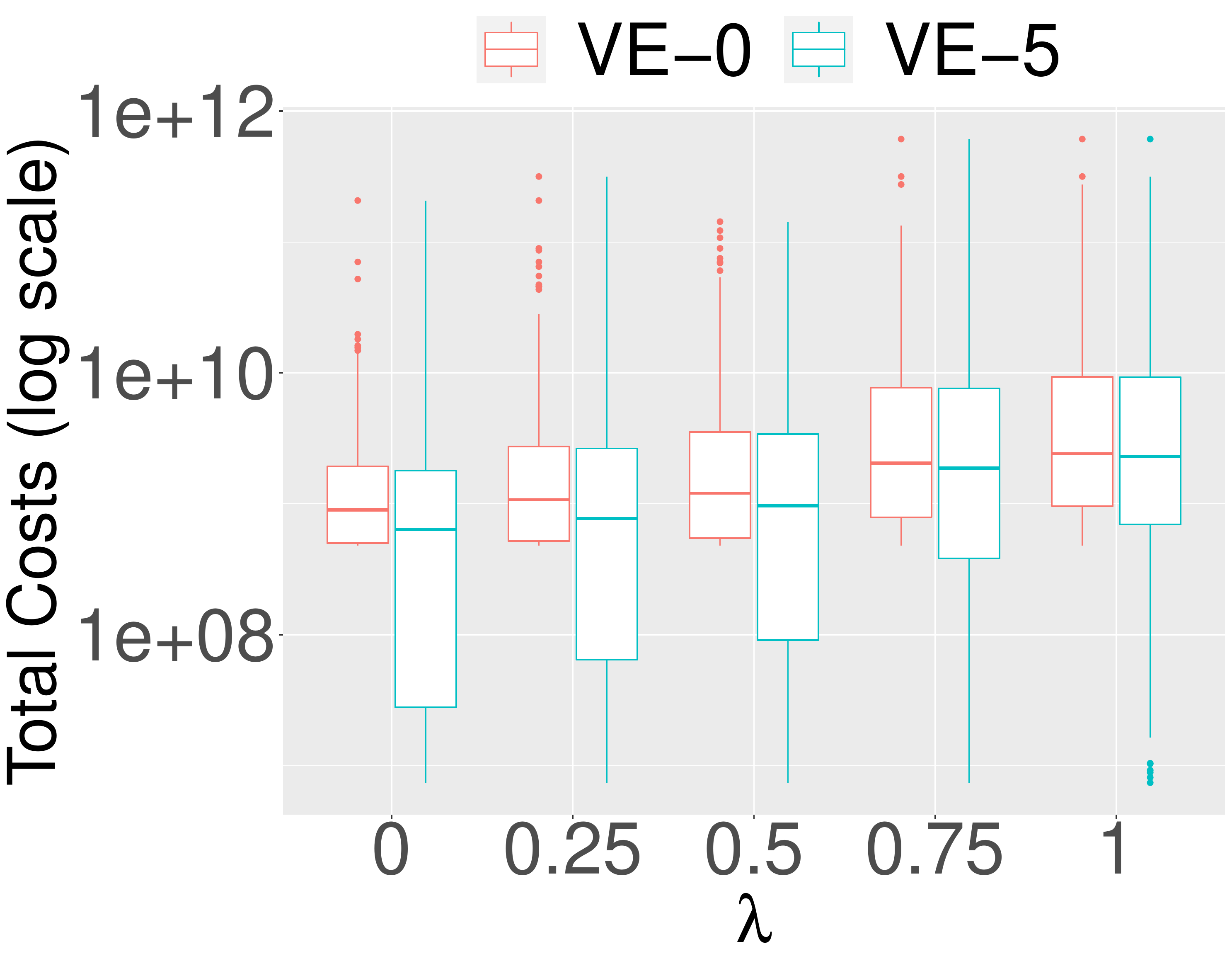}&
		\includegraphics[width=.235\textwidth]{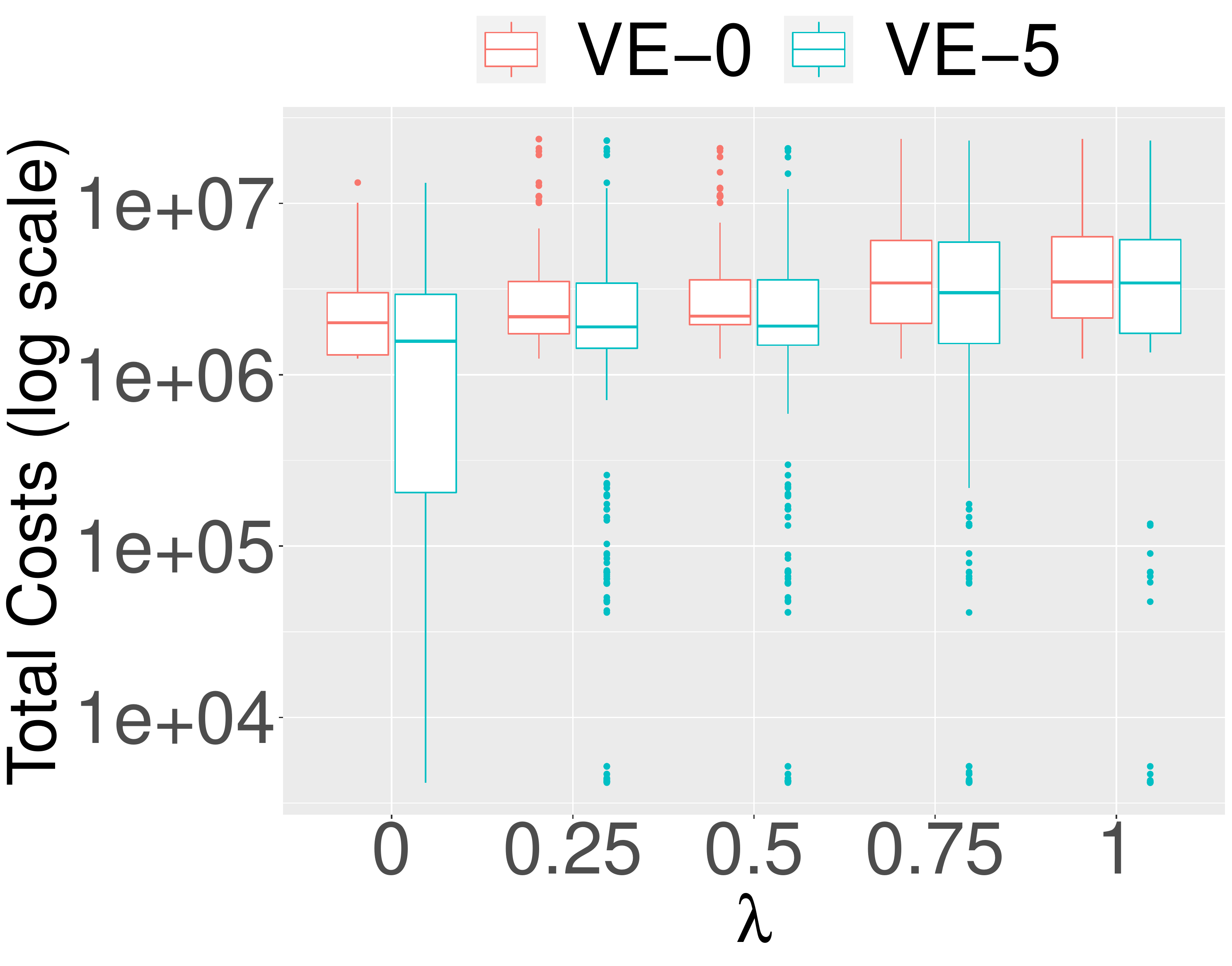}\\
		(a) \mildew  & (b) \bnpathfinder   & (c) \munins  & (d) \andes    \\
		\includegraphics[width=.235\textwidth]{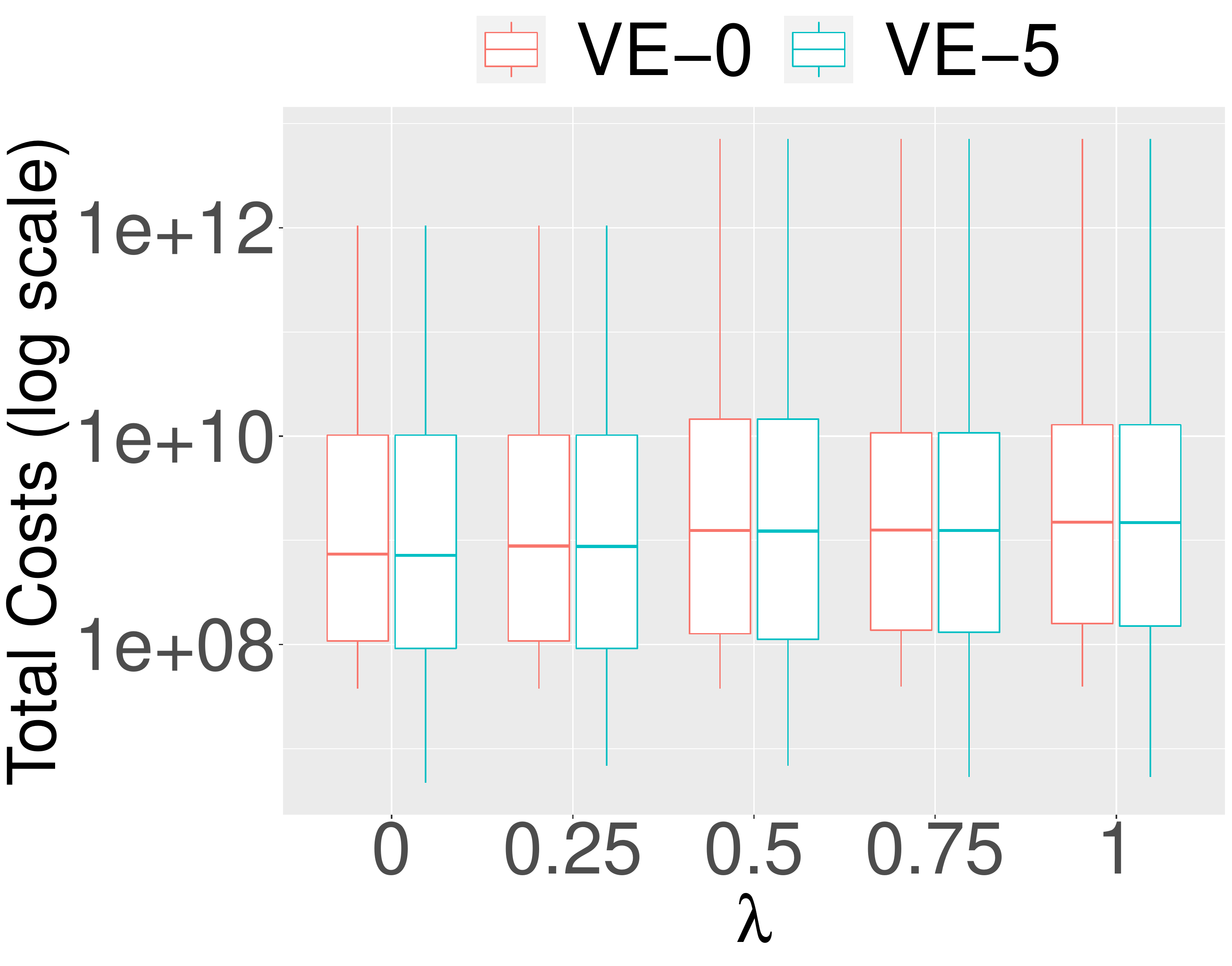} &
		\includegraphics[width=.235\textwidth]{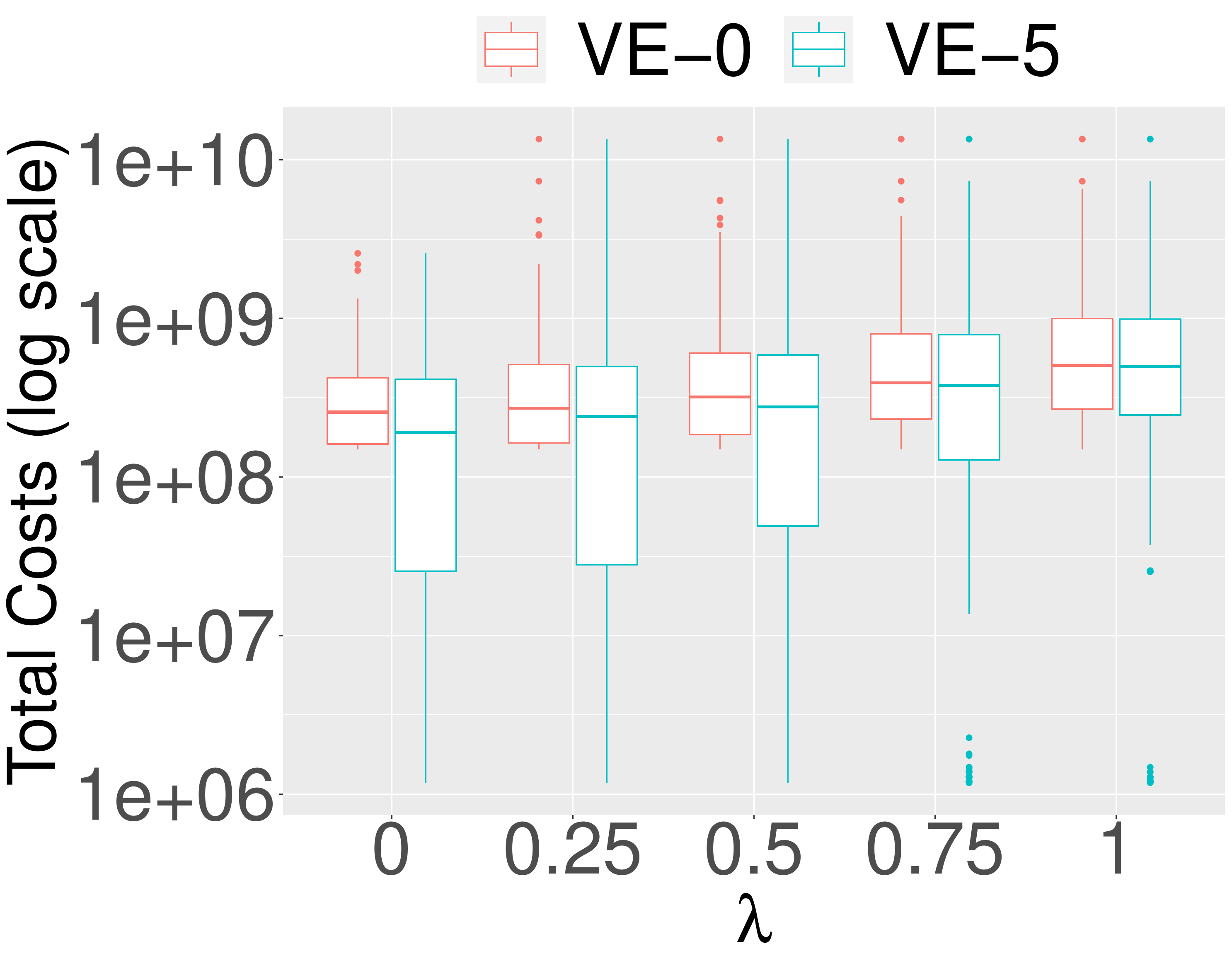} & 
		\includegraphics[width=.235\textwidth]{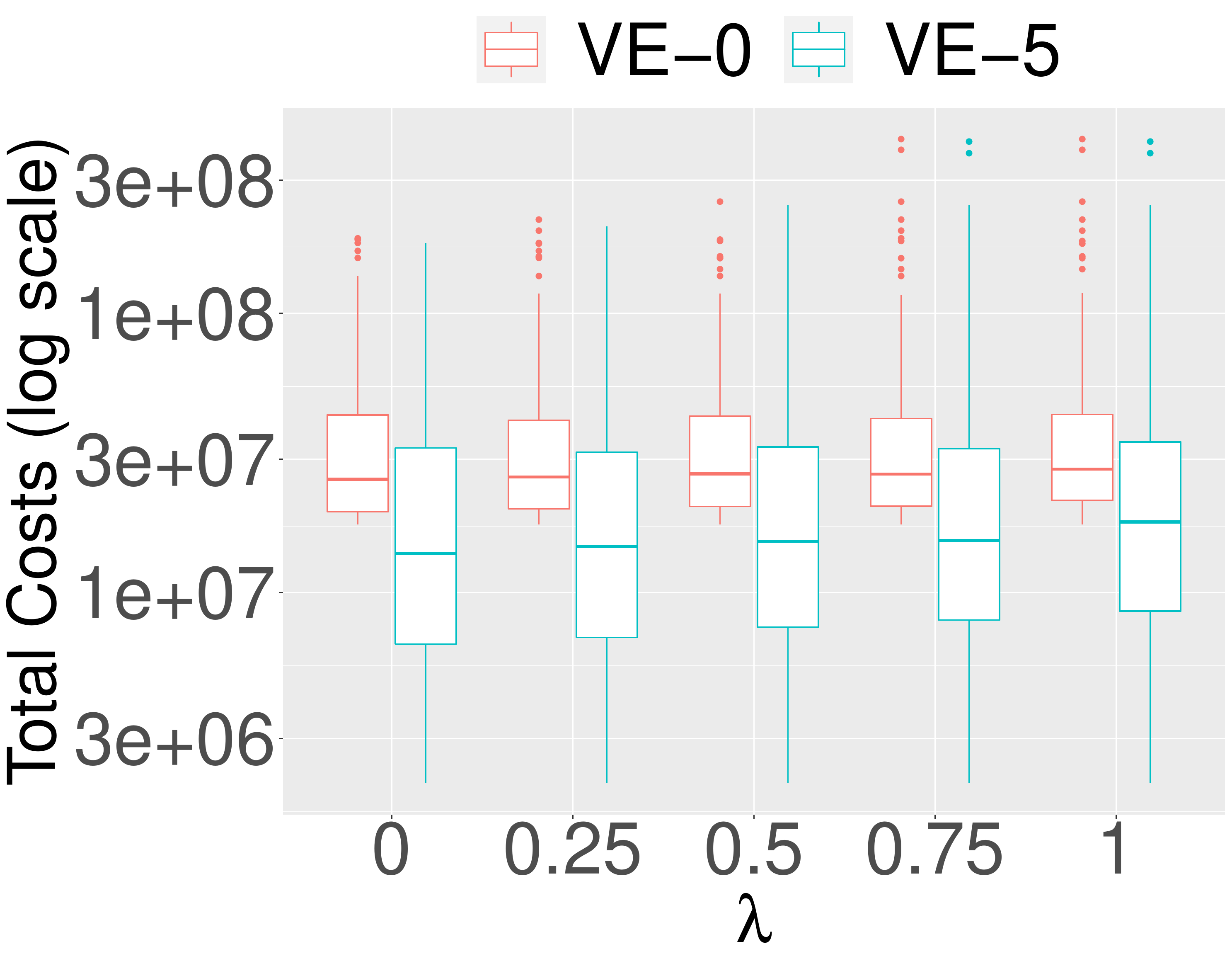} & 
		\includegraphics[width=.235\textwidth]{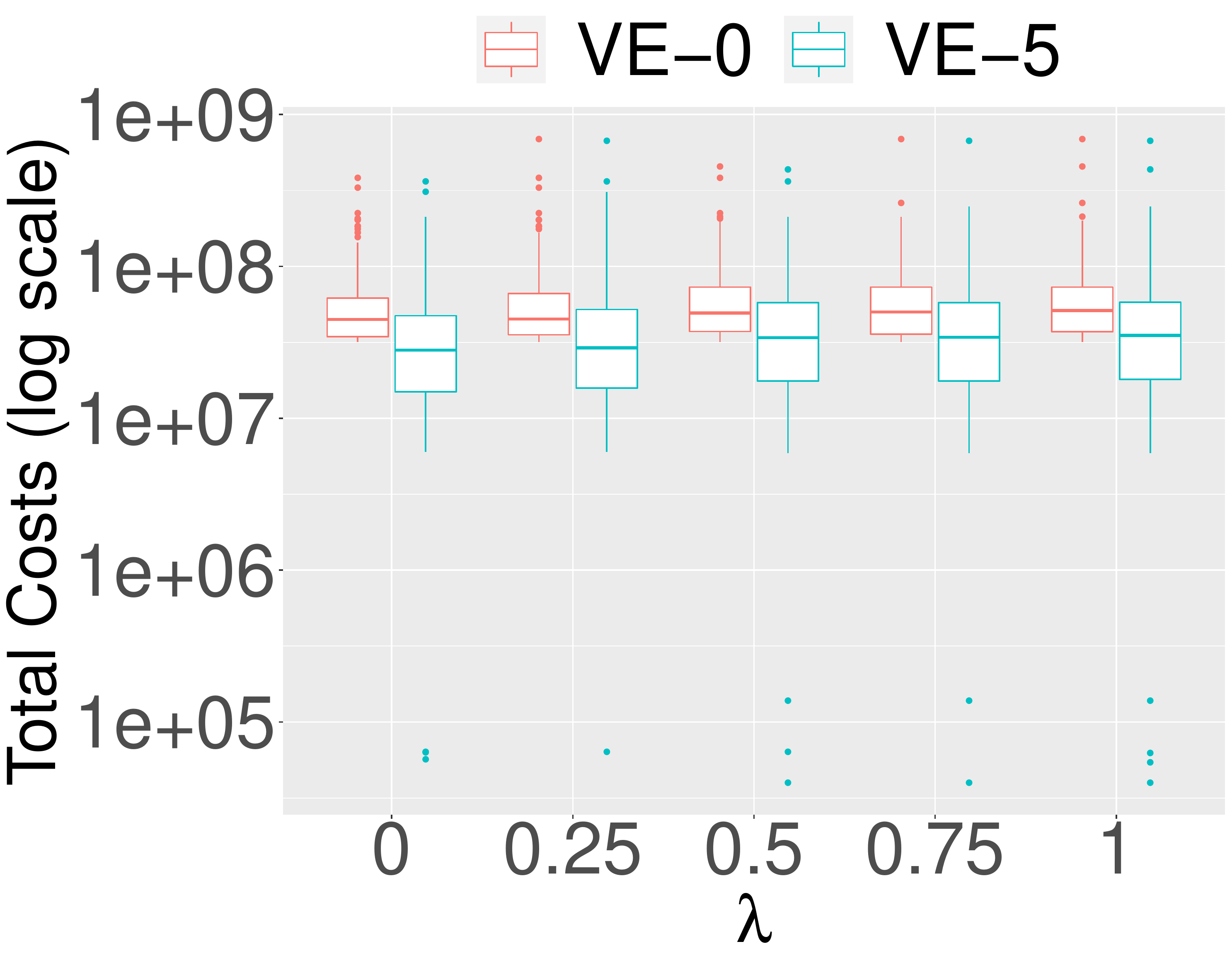}	\\
		(e) \diabetes   & (f) \link  & (g)  \muninm    & (h) \muninb   \\ 
		\includegraphics[width=.235\textwidth]{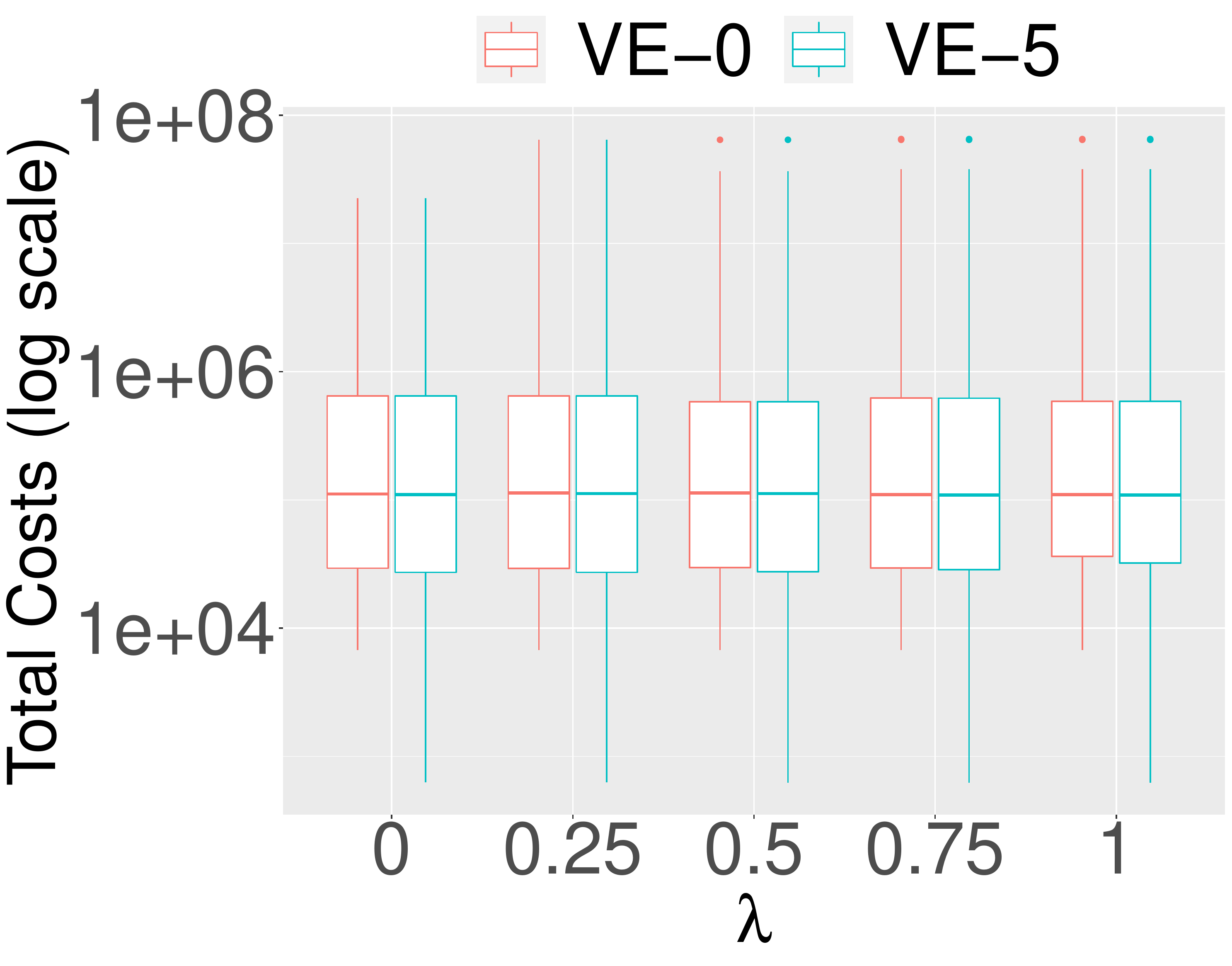} &
		\includegraphics[width=.235\textwidth]{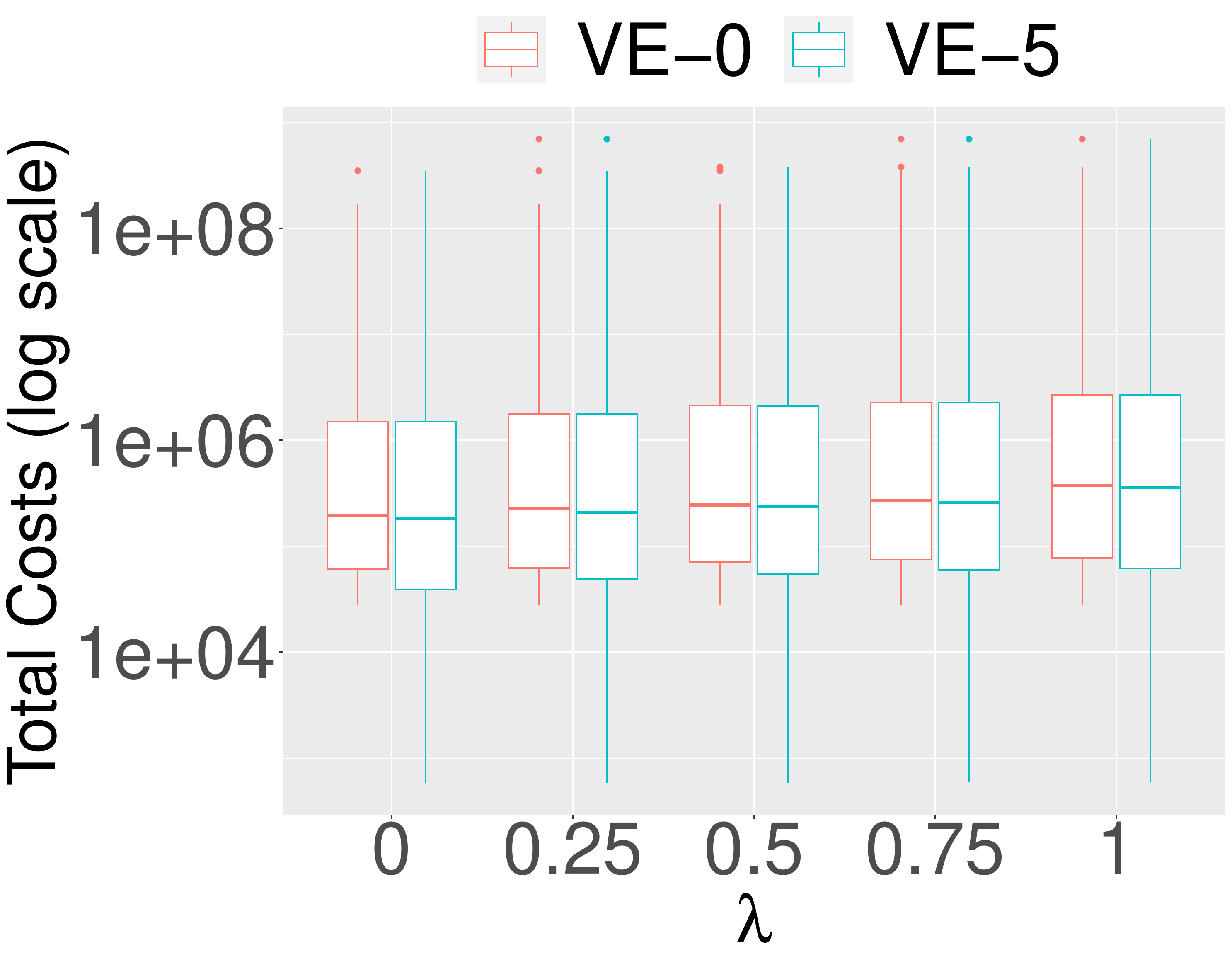} & 
		\includegraphics[width=.235\textwidth]{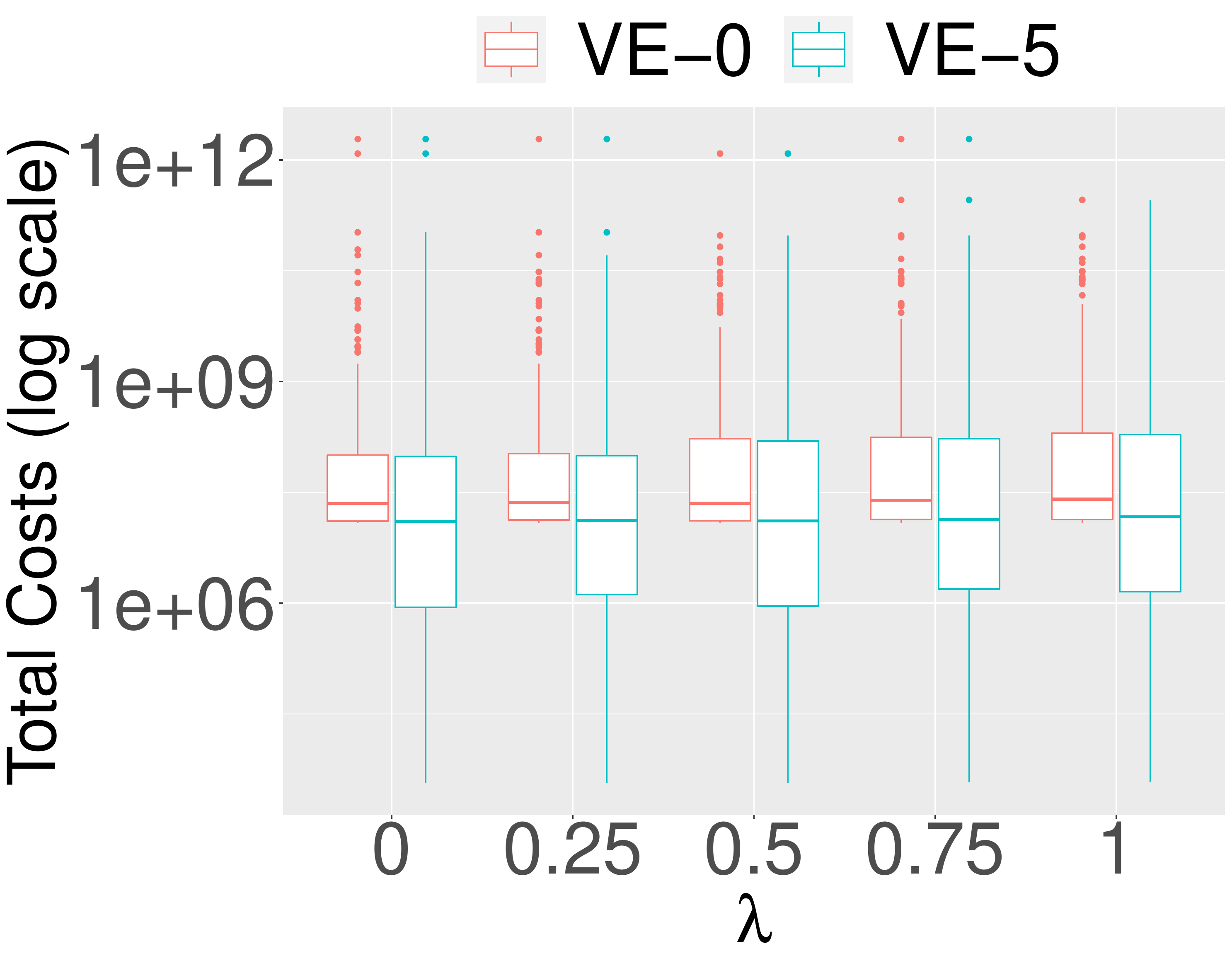} & 
		\includegraphics[width=.235\textwidth]{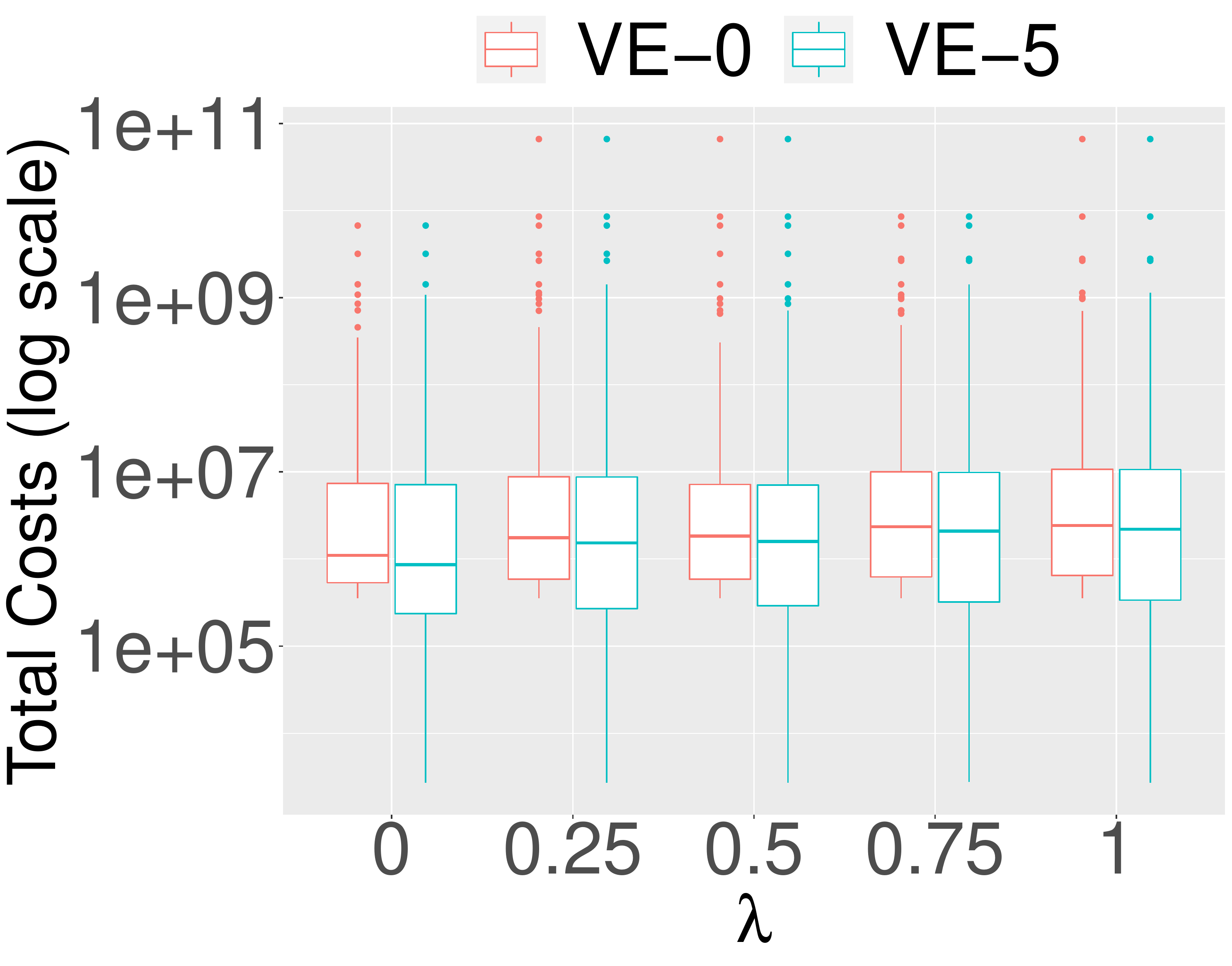}	\\
		(i) \tpchsmall   & (j) \tpchmedium  & (k)  \tpchverylarge    & (l) \tpchlarge   \\ 	
	\end{tabular}
	\caption{ {\color{black}  Robustness of materialization strategy with respect to drifts in query distribution under when \traindistr is set to skewed workload. The value of $\lambda$ on the $x-$axis denotes the proportion of queries sampled from the uniform workload scheme, and $1-\lambda$ is the proportion of queries sampled from the skewed workload scheme. The $y-$axis shows on a logarithmic scale the total cost associated with \qtm{$0$} and  \qtm{$5$} in each workload.   }}
\label{fig:robustness_lambda_skewed}
\end{figure*}
}

\FullOnly{
\smallskip
\noindent
{\bf Sensitivity to elimination order.} Finally, we demonstrate the impact of elimination order on the quality of materialization. We remind that the choice of elimination order is made heuristically based on factor sizes in the resulting elimination tree, i.e., before queries are actually executed.}
\FullOnly{Figure~\ref{fig:elim-order} shows the query-answering cost for different orders. Boxplots are shown for \qtm{$k$} with $k = 0$ (no materialization) and $k=5$. Our results suggest that the choice of elimination order has significant impact on the performance of the variable elimination algorithm, with or without materialization.}
\FullOnly{For example, \mw often provides the poorest performance, and in some cases results in factors with trillion parameters as given in Table~\ref{table:factorTableSizes}, thus is not shown in Figure~\ref{fig:elim-order}),  while in \tpch networks the same order works remarkably well. Figure~\ref{fig:elim-order} shows that our heuristic choice of order based on average parameter size works well: the chosen order is frequently the best performing or close to the best. This means that a good order can be identified and bad orders avoided before queries are actually executed.}
\FullOnly{Most importantly, notice that the \emph{relative} performance of our method is not significantly affected by such choice. For instance, in the \mildew dataset, \mf is the best order and the difference with the other orders is considerable, but the relative savings  do not vary considerably for different orders. A similar observation applies to other Bayesian networks.}

\FullOnly{
\begin{figure*}[t]
	\begin{tabular}{cccc}
		\includegraphics[width=.235\textwidth]{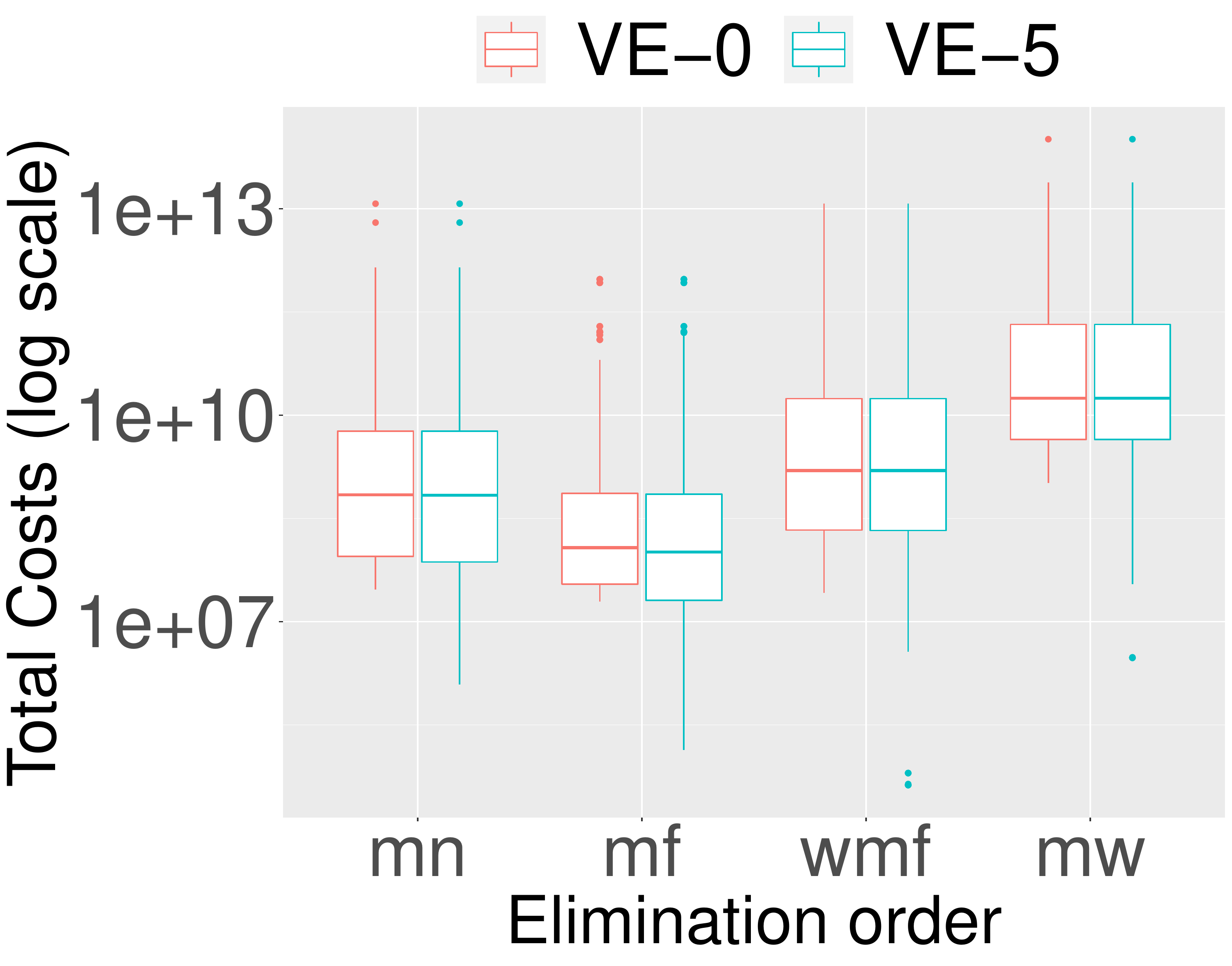}&
		\includegraphics[width=.235\textwidth]{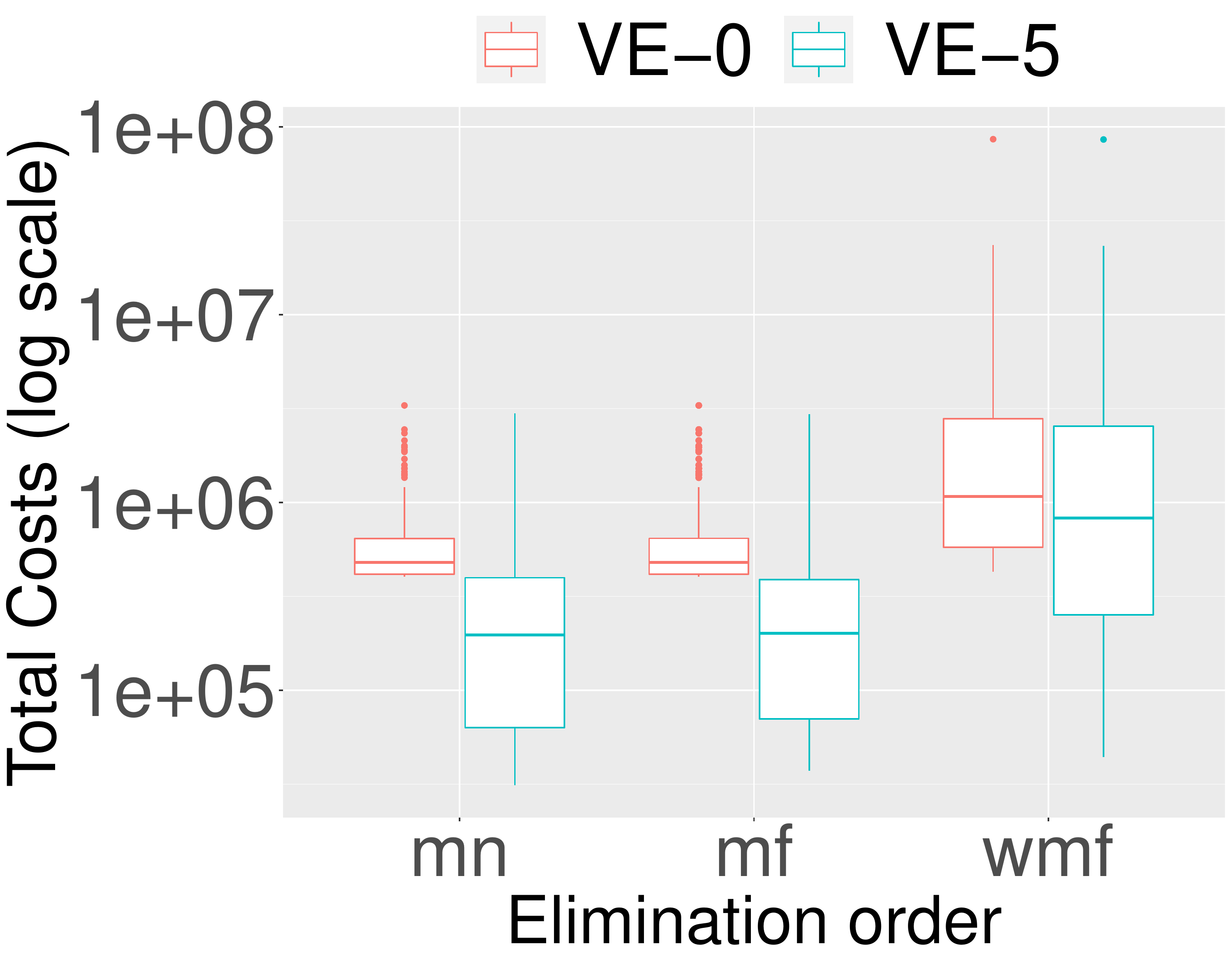}&
		\includegraphics[width=.235\textwidth]{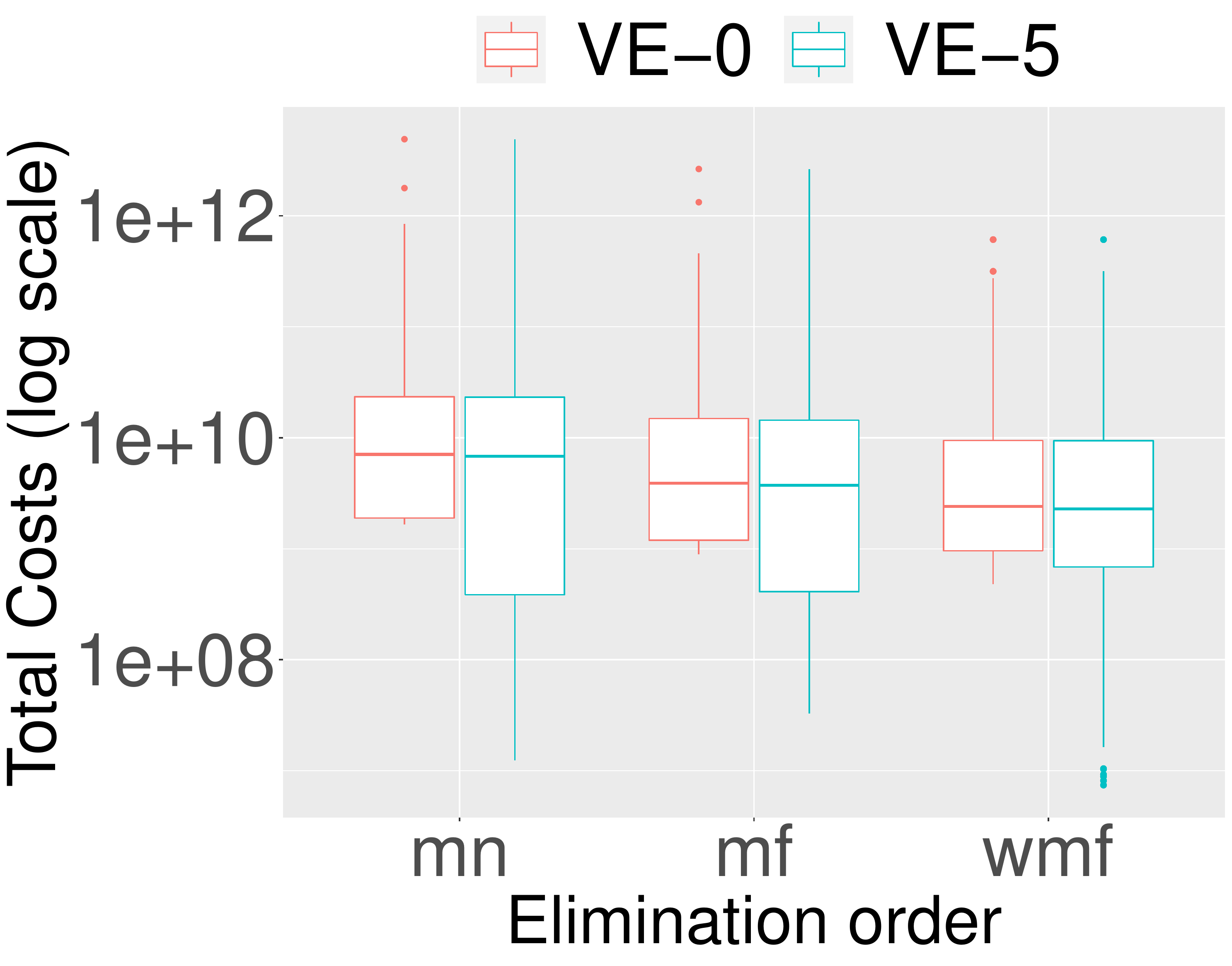}&
		\includegraphics[width=.235\textwidth]{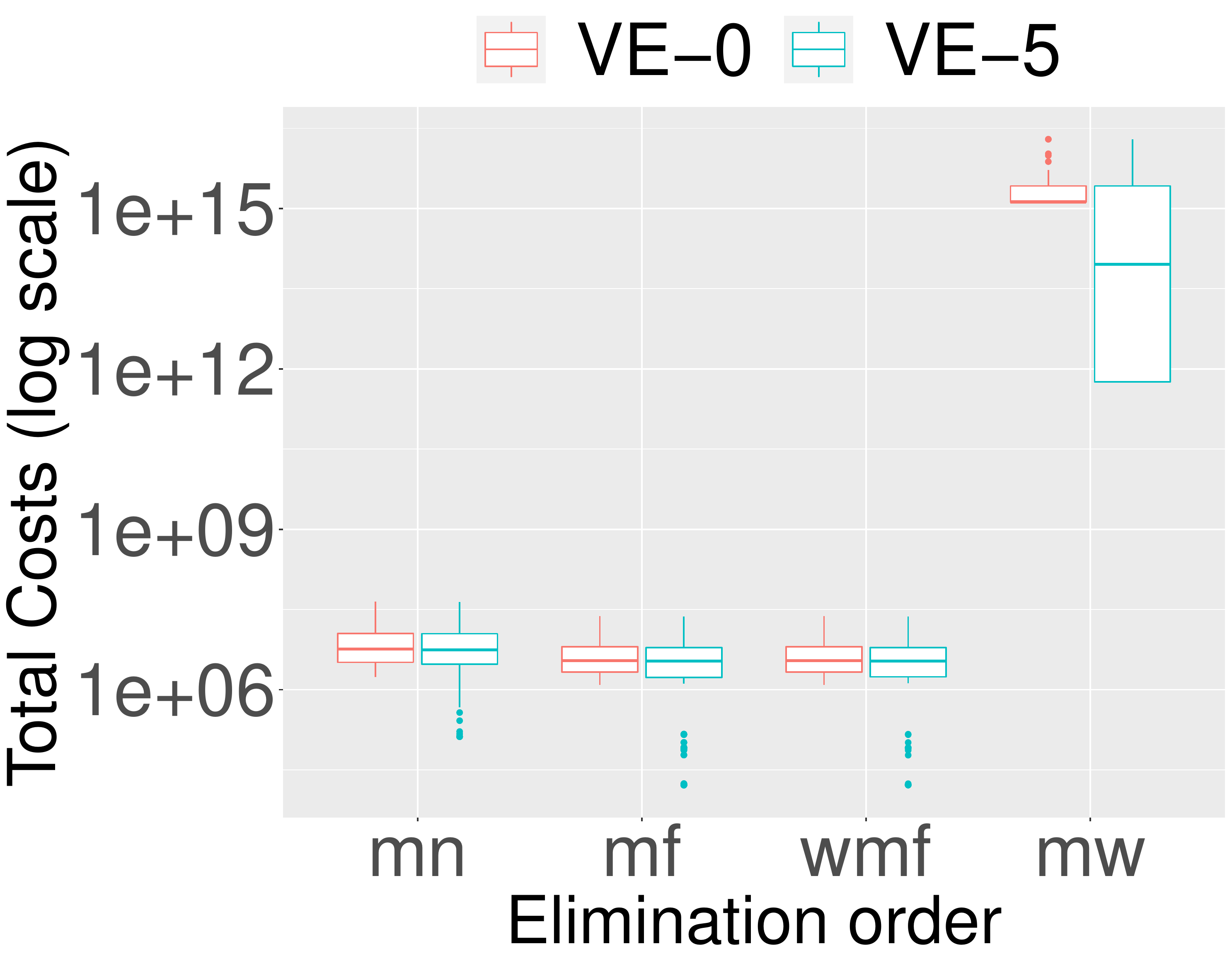}\\
		(a) \mildew (\mf)  & (b) \bnpathfinder (\mf)   & (c) \munins (\wmf)  & (d) \andes (\mf) \\
		\includegraphics[width=.235\textwidth]{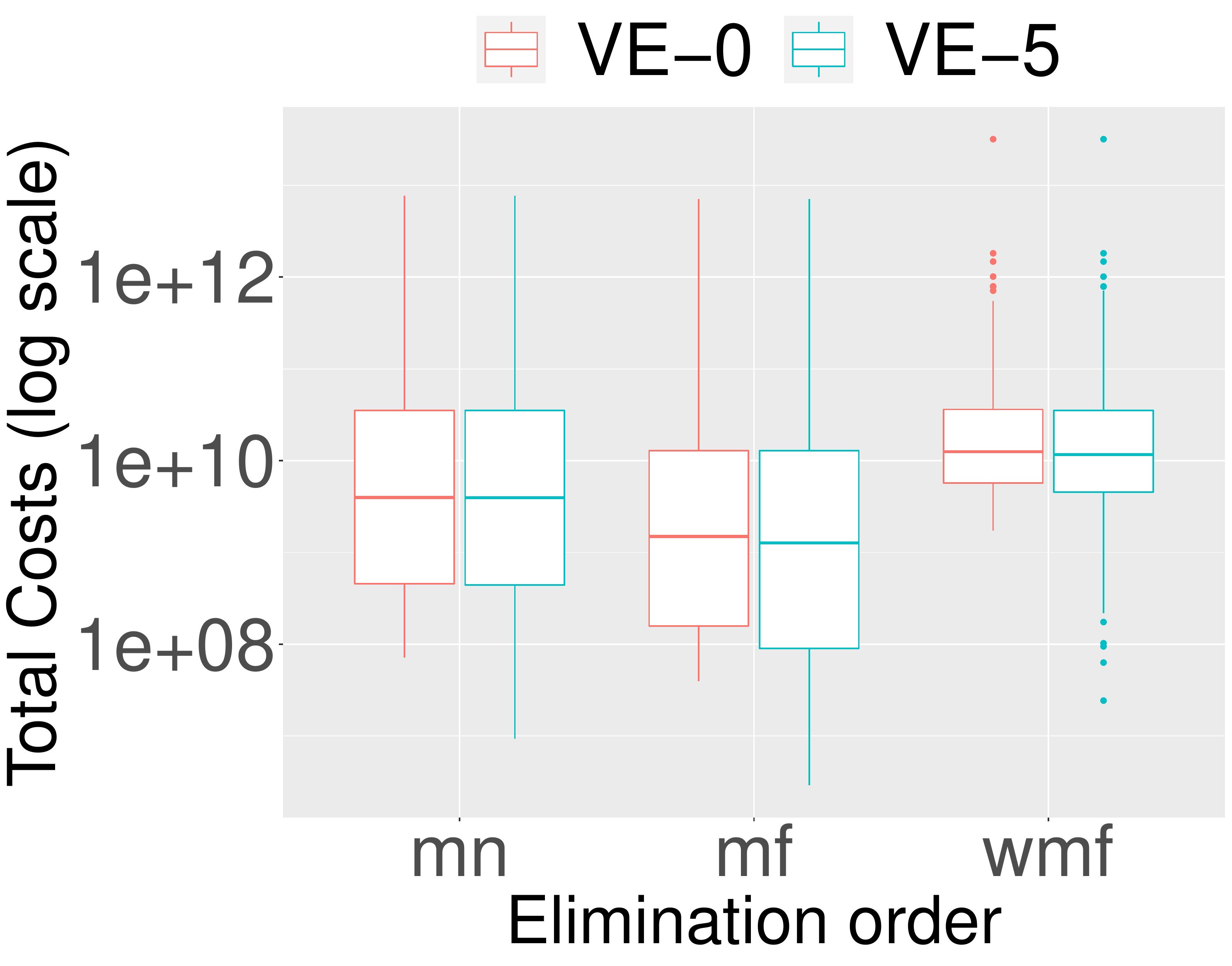} &
		\includegraphics[width=.235\textwidth]{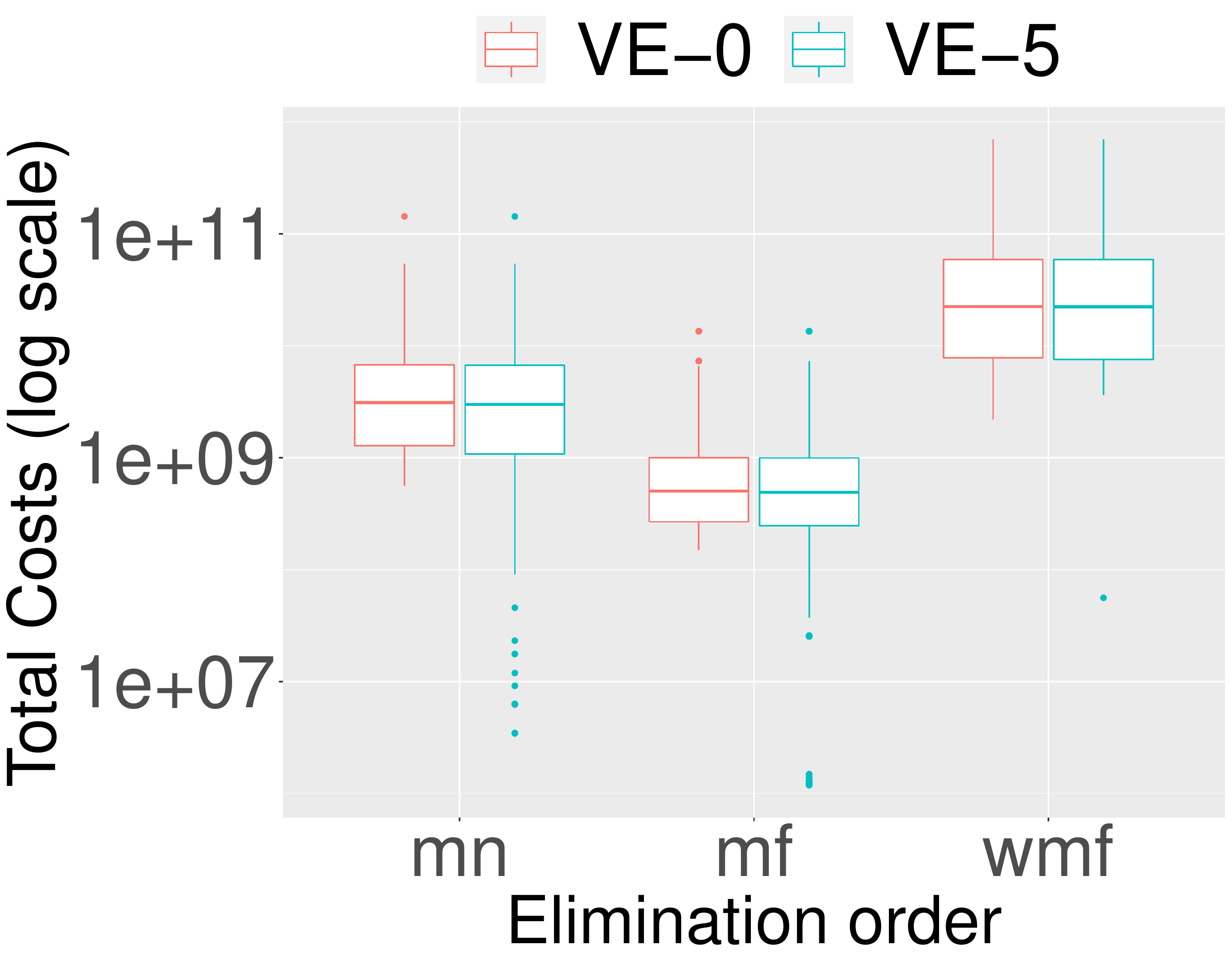} & 
		\includegraphics[width=.235\textwidth]{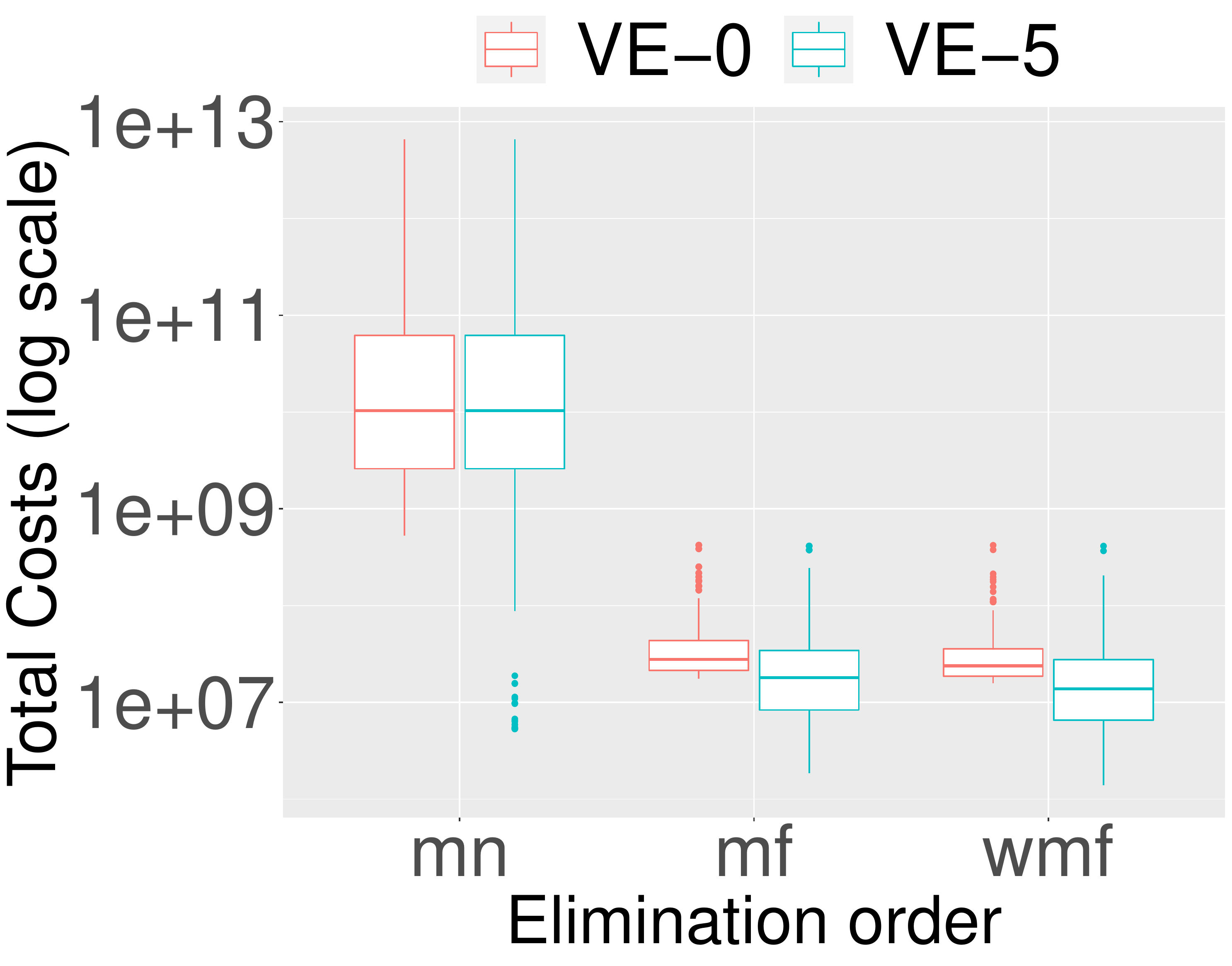} & 
		\includegraphics[width=.235\textwidth]{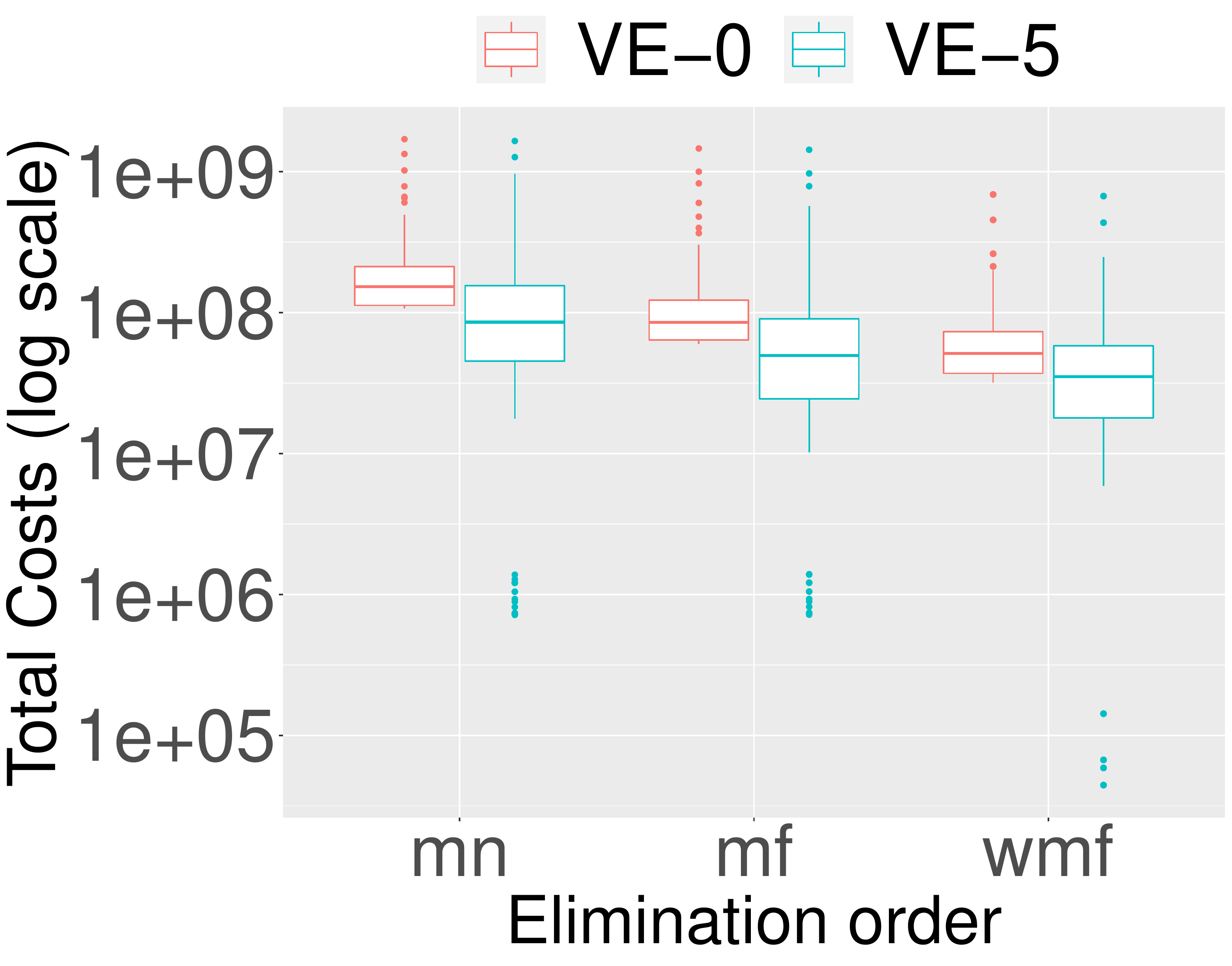}	\\
		(e) \diabetes (\mf)  & (f) \link (\mf)  & (g)  \muninm (\mf)    & (h) \muninb (\wmf)   \\ 
		\includegraphics[width=.235\textwidth]{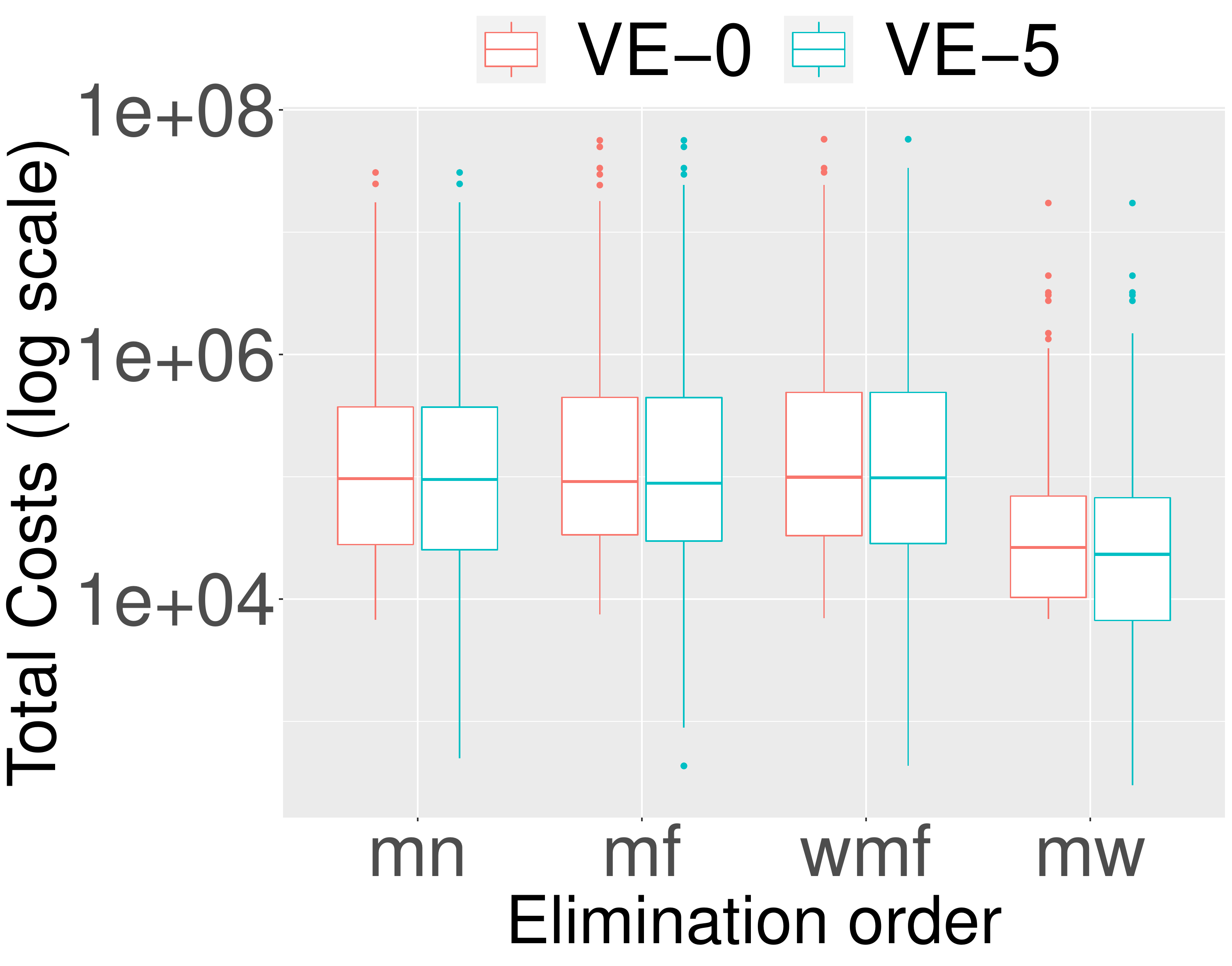} &
		\includegraphics[width=.235\textwidth]{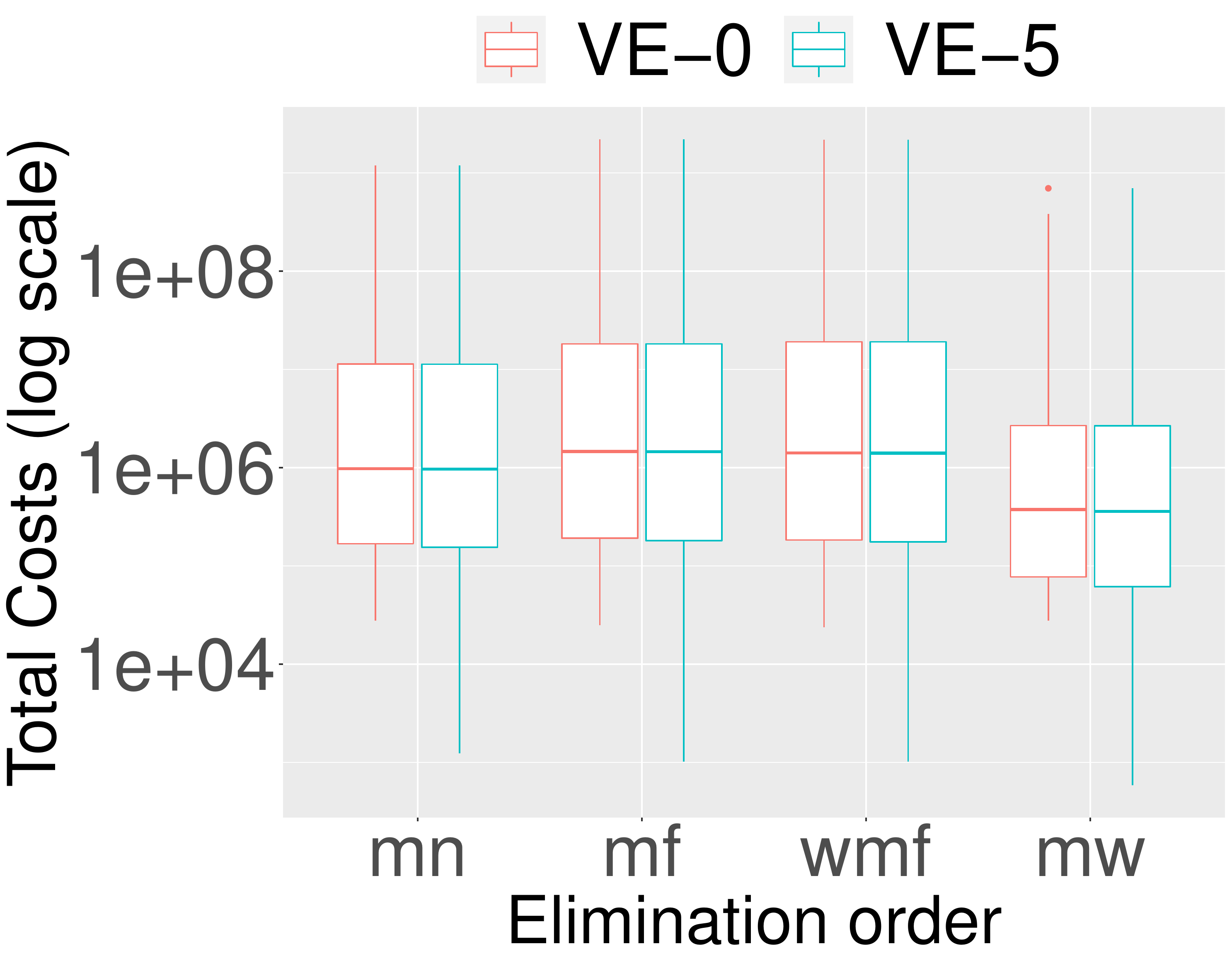} & 
		\includegraphics[width=.235\textwidth]{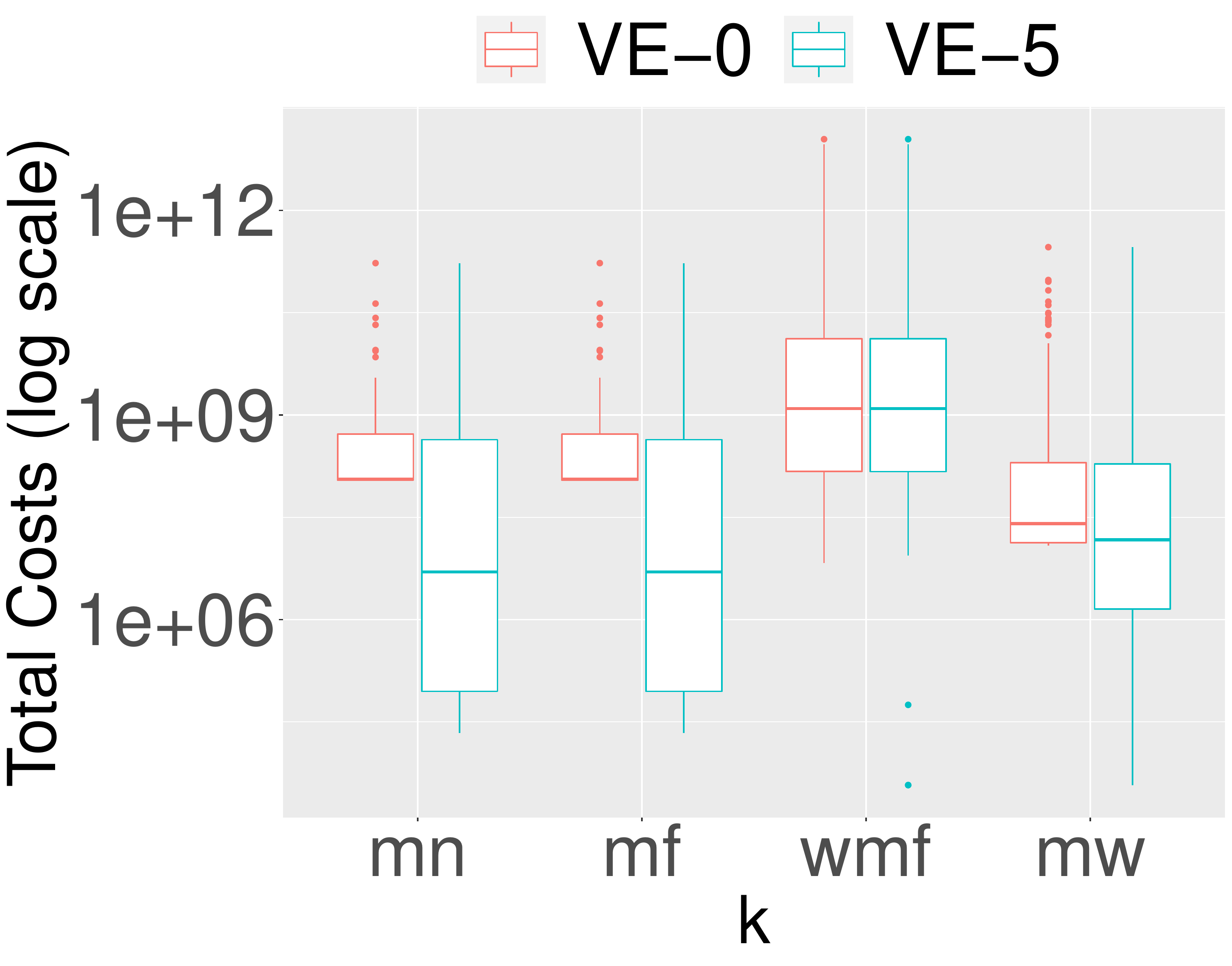} & 
		\includegraphics[width=.235\textwidth]{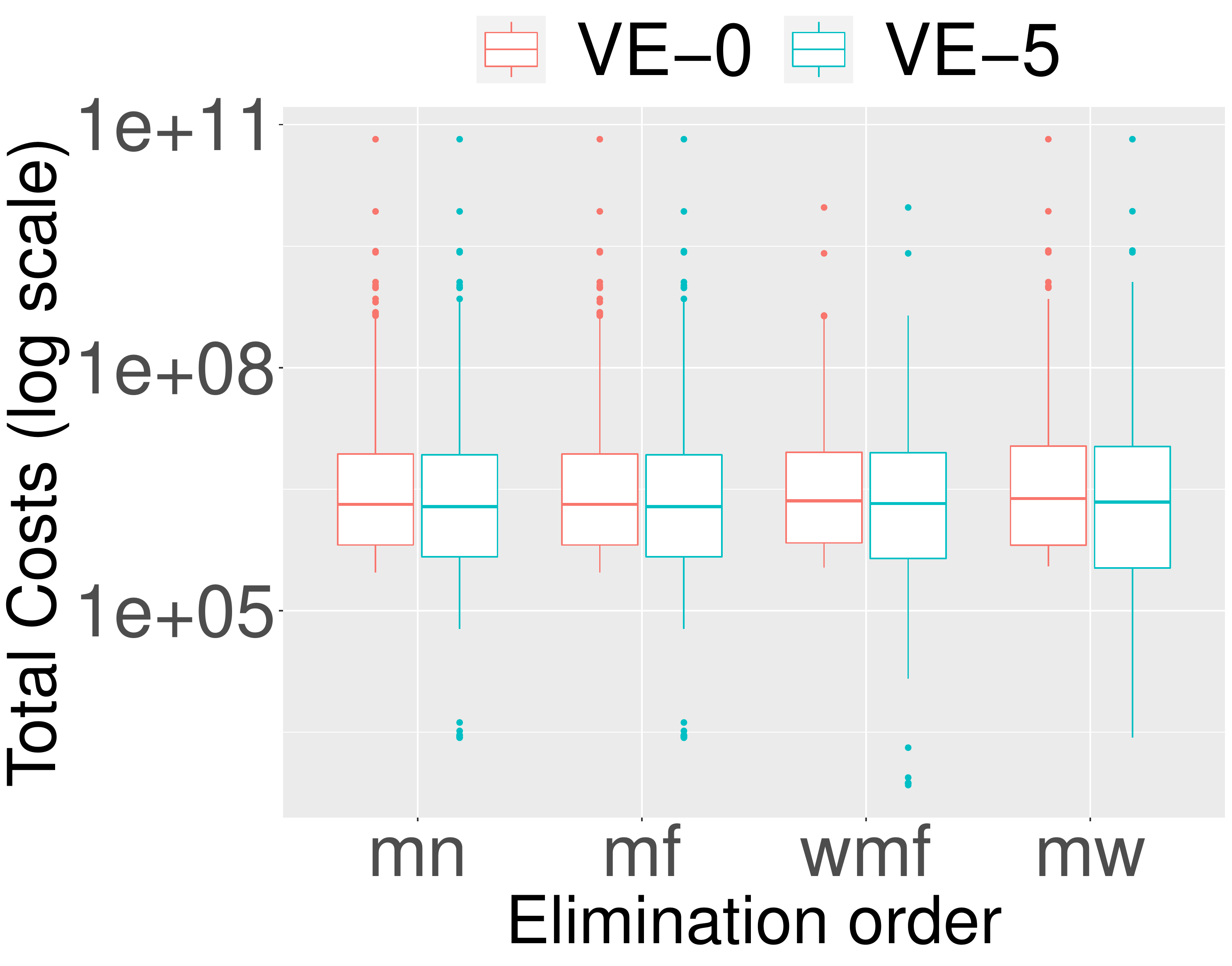}	\\
		(i) \tpchsmall ({\mw})  & (j) \tpchmedium ({\mw})  & (k)  \tpchverylarge ({\mw})    & (l) \tpchlarge ({\mw})   \\ 	
	\end{tabular}
	\caption{ {\color{black}  Impact of elimination order on materialization. Each pair of box-plots shows the total cost on the logarithmic scale associated with \qtm{$0$} and  \qtm{$5$} under a given elimination order. The chosen order is shown in parenthesis.}}
	\label{fig:elim-order}
\end{figure*}
}
\section{Conclusions}
\label{section:conclusion}


In this paper, 
we studied the problem of materializing intermediate relational tables created during the evaluation of queries over a Bayesian network using variable elimination. We presented efficient algorithms to choose the optimal tables under a budget constraint for a given query workload and variable-elimination order. 
Our experiments show that appropriate materialization can offer significant inference speed-up and
competitive advantages over junction-tree algorithms. 

Our work contributes to the growing research on data-management issues in machine-learning systems~\cite{hasani2018efficient}. 
One direction for future work is to adapt our framework to other concise factor representations, such as arithmetic circuits~\cite{chavira2005compiling, chavira2007compiling} and sum-product networks~\cite{poon2011sum} or junction trees~\cite{kanagal2009indexing}. 
\revision{
Moreover, in the context of specific applications such as AQP, it is crucial to investigate the maintenance of the materialized factors in the presence of updates.
Finally, while the proposed method is shown to be quite robust with respect to changes of the query distribution, 
it will be valuable to implement a drift-detection mechanism, 
which automatically prompts an update in the materialization strategy. 
}

{
\balance
\bibliographystyle{IEEEtran}
\bibliography{biblio-brief}
}

\cleardoublepage\makeatletter\makeatother
\FullOnly{
\begin{appendix}
\spara{Generality of Problem Setting.}

We formulated Problems~\ref{problem:qtm-space} and~\ref{problem:qtm} in terms of an elimination tree, without specifying any constraints on its structure. However, in the context where we're addressing Problem~\ref{problem:qtm}, elimination trees are defined by a Bayesian network and an elimination order. At this point, we should consider the question of whether the elimination trees that are input to Problem~\ref{problem:qtm-space} and~\ref{problem:qtm} are of general structure --- if they were not, then this would leave room for faster algorithms that take advantage of the special structure.

Lemma~\ref{lemma:generalStructure} states that the internal nodes of an elimination tree form a tree of general structure.
Note that we use the term ``internal subtree'' to refer to the subgraph of a tree that is indiced by internal (non-leaf) nodes.
For an elimination tree \tree, its internal subtree corresponds to the nodes that correspond to the variables of the Bayesian network.

\begin{lemma}
Consider any tree \tree and its internal subtree \coreTree. There is a Bayesian network and an elimination order that define an elimination tree \otherTree with $\otherCoreTree \simeq \coreTree$.
\label{lemma:generalStructure}
\end{lemma}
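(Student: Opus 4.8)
The plan is to reverse-engineer a Bayesian network and an elimination order whose elimination tree has $\coreTree$, up to isomorphism, as its internal subtree. Root $\coreTree$ at an arbitrary node $\rho$ and write $v_1,\dots,v_m$ for its nodes in an order in which every node precedes its parent (a post-order numbering), so that $v_m=\rho$; for $i<m$ let $p(i)$ be the index of the parent of $v_i$, noting $i<p(i)$. I would build a Bayesian network \bayesianNet on binary variables $x_1,\dots,x_m$ whose DAG is exactly $\coreTree$ with every edge oriented from parent to child: for each $i<m$ the unique parent of $x_i$ is $x_{p(i)}$, and $x_m$ is a root. The conditional-probability tables $\phi_i$ may be filled arbitrarily (e.g. uniform), since only their scopes matter; here $\phi_i$ has scope $\{x_i,x_{p(i)}\}$ for $i<m$ and scope $\{x_m\}$ for $i=m$. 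Every edge of this DAG runs from a larger to a smaller index, so it is acyclic; and since $\coreTree$ is connected, \bayesianNet is weakly connected, so by the discussion in Section~\ref{sec:setting} its elimination graph is a tree. I would take the elimination order $\eliminationOrder=\langle x_1,x_2,\dots,x_m\rangle$ (the structure of the elimination tree is query-independent, so the particular query is irrelevant).

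The core of the argument is to trace variable elimination under $\eliminationOrder$ and prove, by induction on $i$ in post-order, that the factor $\psi_i$ produced when $x_i$ is summed out has scope $\{x_{p(i)}\}$ for $i<m$, and empty scope for $i=m$. I would isolate three bookkeeping facts: (i) a CPT $\phi_j$ is consumed exactly at the first elimination step touching its scope, which is step $j$, because $x_j$ is the $\eliminationOrder$-minimal variable in $\mathrm{scope}(\phi_j)$; (ii) $x_i\in\mathrm{scope}(\phi_j)$ only for $j=i$ or for $j$ with $p(j)=i$ (i.e. $v_j$ a child of $v_i$), and in the latter case $\phi_j$ has already been consumed at step $j<i$; and (iii) by the induction hypothesis, each intermediate factor $\psi_j$ with $v_j$ a child of $v_i$ has scope $\{x_i\}$ and survives until step $i$ (no step between $j$ and $i$ eliminates $x_i$). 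Hence at step $i$ the factors mentioning $x_i$ are precisely $\phi_i$ together with the factors $\psi_j$ for the children $v_j$ of $v_i$; their product has scope $\{x_i,x_{p(i)}\}$, so summing out $x_i$ yields $\psi_i$ with scope $\{x_{p(i)}\}$ (and for $i=m$ the product has scope $\{x_m\}$, leaving $\psi_m$ a constant, the final factor). Consequently $\psi_i$ is consumed at step $p(i)$, i.e. the node carrying $\psi_i$ is a child of the node carrying $\psi_{p(i)}$ in the elimination tree \otherTree.

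It then follows immediately that the internal nodes of \otherTree are exactly $\psi_1,\dots,\psi_m$, the parent of $\psi_i$ is $\psi_{p(i)}$, the root is $\psi_m$, and the children of $\psi_i$ among the internal nodes are $\{\psi_j : v_j\text{ a child of }v_i\}$; therefore $v_i\mapsto\psi_i$ is an isomorphism from $\coreTree$ onto the internal subtree $\otherCoreTree$ of \otherTree, proving the lemma (and, since the elimination-tree structure is query-independent, the analogous claim underlying Problems~\ref{problem:qtm-space} and~\ref{problem:qtm}). I expect the only delicate point to be the induction in the second paragraph — specifically the assertion that no ``stray'' factor mentioning $x_i$ lingers at step $i$ — so I would prove facts (i)–(iii) as small separate claims straight from the definition of when a CPT or intermediate factor becomes relevant during variable elimination; acyclicity, weak connectedness, and reading off the tree structure are routine.
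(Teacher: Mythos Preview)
Your construction is correct and coincides with the paper's: both build \bayesianNet as the tree \coreTree with edges directed from the root downward (so the CPT at each non-root node has scope $\{x_i,x_{p(i)}\}$) and eliminate in post-order. Your induction on scopes makes rigorous what the paper only sketches, but the idea is the same.
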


\begin{proof}
Assume that \coreTree is rooted at node \varomega with subtrees rooted at nodes $\nodeu_\vara$, $\nodeu_\varb$, \ldots, $\nodeu_\varz$.
Consider a Bayesian Network \bayesianNet that has one corresponding variable \varomega, \vara, \varb, \ldots, \varz for each node of \coreTree, and conditional probabilities in the opposite direction compared to the respective edges of \coreTree --- i.e., if \coreTree contains a directed edge $(\nodeu_\vara, \nodeu_\varomega)$, the Bayesian network \bayesianNet contains the edge $(\varomega, \vara)$ with associated conditional probability $\prob{\vara \mid \varomega}$.
Let \eliminationOrder be the elimination ordering of variables corresponding to the post-order traversal of \bayesianNet. 
Consider the directed edges (\varomega, \vara), (\varomega, \varb), (\varomega, \varc) etc, coming out of \varomega (see Figure~\ref{fig:tree_gen_simple}).
The elimination ordering we chose ensures that variables \vara, \varb, \varc, etc, are eliminated 
before \varomega. It also ensures that {\it the factor that results from the elimination of 
variable \vara is a function of \varomega} {\it but of none of the other variables \varb, \varc, etc, that depend on \varomega}, by construction.
The elimination of \varomega happens after that of each variable \vara, \varb, \varc, etc, it connects to, leading to an elimination tree \otherTree with an edge from each of those variables to \varomega\ --- and therefore to an internal subtree \otherCoreTree that is isomorphic of \coreTree.
\end{proof}

\begin{figure}[t!]
\begin{center}
\includegraphics[width=0.90\columnwidth]{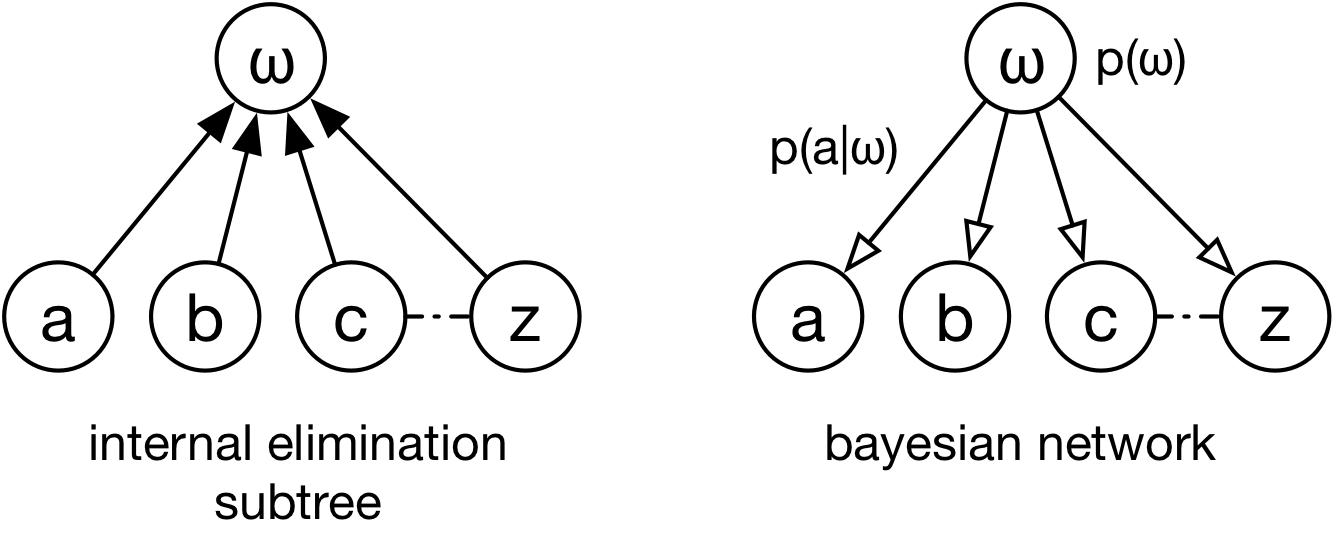}
\end{center}
\caption{The tree on the left represents \coreTree, the subtree of elimination tree \tree induced by its internal nodes. The Bayesian Network in the figure, constructed by reversing the direction of edges in \coreTree, along with its post-order traversal $(\vara, \varb, \ldots, \varz, \varomega)$ define \tree as the elimination tree.}
\label{fig:tree_gen_simple}
\end{figure}

We can also show, however, that if we limit ourselves to elimination orderings that follow a pre-order traversal (\preorderTraversal) of the Bayesian network, we end up with elimination trees of linear structure.
Note that the Bayesian network \bayesianNet we consider is generally a {\it directed acyclic graph} (\DAG) and not necessarily a tree.
It is therefore possible to have multiple roots in \bayesianNet --- i.e., multiple variables that 
do not depend on other variables.
For any variable \varomega, consider the set of variables \vara, \varb, \ldots, \varz that depend on \varomega.
A \preorderTraversal of \bayesianNet is any ordering of its variables such that \varomega precedes \vara, \varb, \ldots, \varz.
The breadth-first-search of \bayesianNet that starts from the roots of \bayesianNet is a \preorderTraversal.


\begin{figure}[t!]
\begin{center}
\includegraphics[width=0.80\columnwidth]{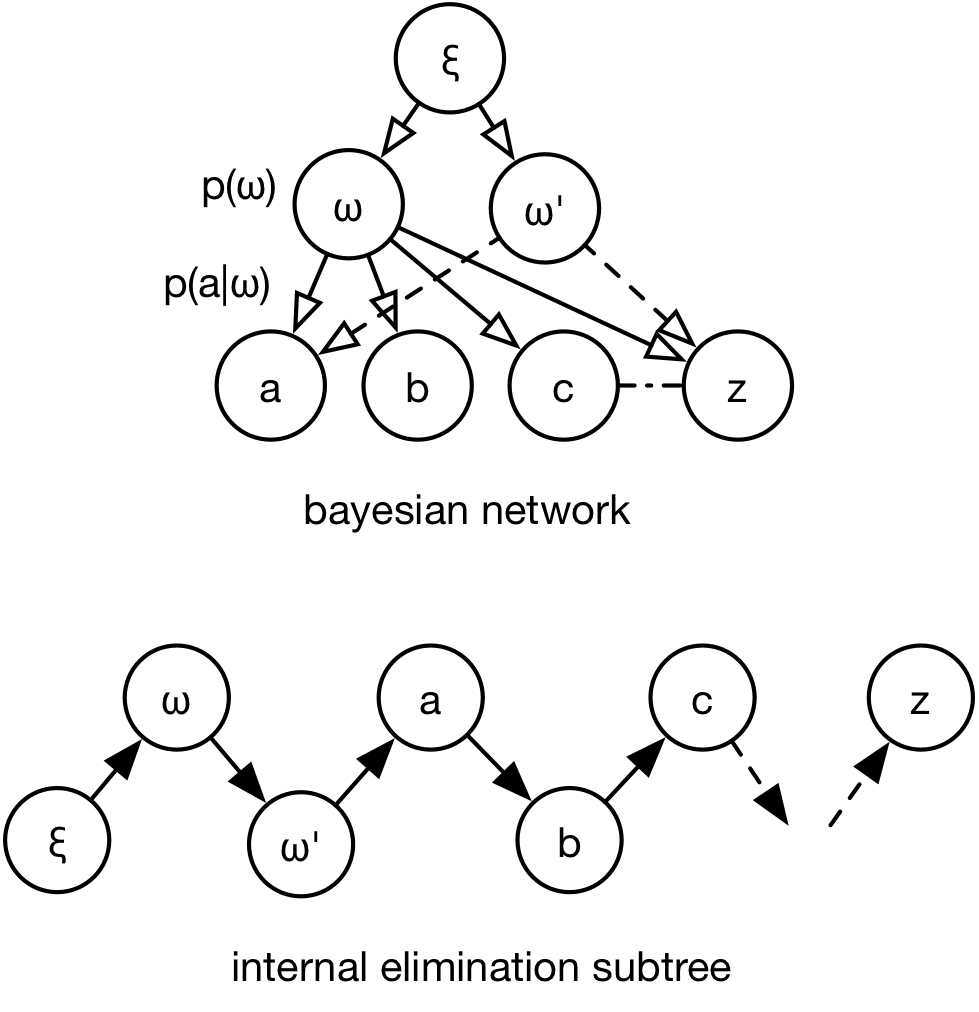}
\end{center}
\caption{Consider the Bayesian Network in the figure, along with its pre-order traversal $\eliminationOrder = (\varxi, \varomega, \varomega', \vara, \varb, \ldots, \varz)$. The two define the elimination tree shown in the figure, which includes a linear graph over the internal nodes.}
\label{fig:linear_tree_simple}
\end{figure}

\begin{lemma}
For a Bayesian Network \bayesianNet, let \tree be the elimination tree obtained for the elimination order defined by any pre-order traversal of \bayesianNet. The internal subgraph \coreTree of \tree is linear.
\label{lemma:linearStructure}
\end{lemma}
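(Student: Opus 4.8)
The claim is that when the elimination order is a pre-order traversal (BFS-like from roots) of the Bayesian network $\bayesianNet$, the internal subgraph $\coreTree$ of the resulting elimination tree is a \emph{path} (linear). The strategy is to analyze the factor produced when each variable is eliminated and show that exactly one ``live'' factor is carried forward at each step, which forces the internal nodes to form a chain. First I would set up notation: process the variables in the elimination order $\eliminationOrder = (v_1, v_2, \ldots, v_n)$, and maintain the running set of factors. Initially the factors are the conditional probability tables $\prob{v_i \mid \mathrm{Pa}(v_i)}$. When we eliminate $v_i$, we combine all current factors mentioning $v_i$ into one new factor on the remaining variables, creating an internal node $\nodeu_i$ of the elimination tree whose children are the nodes that produced those combined factors.

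The key structural lemma I would establish is an invariant: after eliminating $v_1, \ldots, v_i$, among all the factors currently alive, at most one of them is the result of a previous elimination step (i.e., corresponds to an already-created internal node); all others are original CPTs. To see this, observe that in a pre-order traversal, when we are about to eliminate $v_i$, every variable $v_j$ with $j<i$ has already been eliminated, and — crucially — every \emph{ancestor} of $v_i$ in $\bayesianNet$ has index $<i$ (since pre-order puts parents before children, hence ancestors before descendants). The CPT $\prob{v_i \mid \mathrm{Pa}(v_i)}$ mentions only $v_i$ and its parents; all parents are already eliminated, so this CPT has \emph{already} been absorbed into some earlier-created factor. Therefore, when we reach $v_i$, the factors mentioning $v_i$ are: possibly the single ``current aggregate'' factor from the previous step, plus original CPTs $\prob{v_j \mid \mathrm{Pa}(v_j)}$ for children $v_j$ of $v_i$ with $j > i$ (these have not yet been touched). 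I would argue that in fact $v_i$ occurs in at most one already-processed factor, and that after its elimination the newly created factor becomes the unique ``live internal factor,'' preserving the invariant.

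From this invariant the linearity follows quickly: each internal node $\nodeu_i$ (created when $v_i$ is eliminated) has at most one child that is itself an internal node — namely the node $\nodeu_{i'}$ holding the previous aggregate factor — and all its other children are leaves (original CPTs). A rooted tree in which every node has at most one non-leaf child is exactly a caterpillar whose spine is $\coreTree$; restricting to internal nodes gives a path. I would make this precise by induction on the elimination step, tracking the unique ``active internal node'' and showing $\nodeu_i$'s only possible internal child is that active node, then updating the active node to $\nodeu_i$. An illustrative example like Figure~\ref{fig:linear_tree_simple} can anchor intuition.

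\textbf{Main obstacle.} The delicate point is making the invariant airtight: I must carefully rule out the possibility that, at the moment $v_i$ is eliminated, \emph{two} distinct already-created internal factors both mention $v_i$. This requires arguing that the aggregation process keeps all previously-produced intermediate factors ``merged'' — i.e., there is genuinely only one live aggregate at any time — which in turn hinges on the pre-order property guaranteeing that whenever a new aggregate is formed it subsumes the old one (because the variables linking them are eliminated in the right order). Handling the case where $\bayesianNet$ has multiple roots, and thus the BFS may start several independent ``threads'' that later merge, is where I would need to be most careful: I would argue that the pre-order (BFS-from-all-roots) ordering still forces these threads to coalesce into a single aggregate before any of their common descendants is processed, so the single-active-internal-factor invariant is maintained throughout.
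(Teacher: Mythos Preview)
Your approach is essentially the paper's. The paper argues that in a pre-order traversal the factor produced by the latest elimination is always a function of the \emph{next} variable in the order, so each internal node has an edge to the next one, yielding a chain. Your ``single live internal factor'' invariant is an equivalent reformulation of that same claim: if the unique live aggregate mentions the next variable, the next internal node absorbs it as its only internal child.

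The obstacle you flag --- multiple roots spawning separate aggregates --- is real and is exactly where the paper's proof is thinnest. The paper handles it in one sentence via ``co-dependents'' (variables sharing a parent with $\omega$), asserting that only descendants and co-dependents of $\omega$ can follow it in the traversal; your proposed resolution (independent threads coalesce before any common descendant is processed) is the same assertion in different words. Neither version is worked out, and in fact for a connected DAG such as $A\to C$, $B\to D$, $C\to E$, $D\to E$ with BFS order $A,B,C,D,E$, eliminating $B$ creates a second live internal factor (since $\psi_A$ lives on $C$ and does not mention $B$), and the internal node for $D$ then acquires two internal children $u_B$ and $u_C$. So your caution is well placed: the invariant as you state it genuinely fails there, and the paper's own proof glosses over the same difficulty rather than resolving it.
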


\begin{proof}
For any variable \varomega, let us consider the set of variables \{\vara, \varb, \ldots, \varz\} 
that depend on \varomega\ --- as well as ``co-dependents'' of \varomega, i.e.,
variables $\{\varomega'\}$ that depend on variables that overlap with the variables that \varomega depends on (see Figure~\ref{fig:linear_tree_simple}).
Notice that, in a pre-order traversal of \bayesianNet, these are the only variables that can follow \varomega in the traversal.
During elimination, node \varomega passes along a potential that involves all its dependants 
\{\vara, \varb, \ldots, \varz\} and co-dependents $\{\varomega'\}$ that have not been eliminated yet. 
As a result, {\it at every transition in the pre-order traversal, the factor that results from the 
latest elimination is a function of the next variable in the traversal, until the traversal of the last 
variable}. This leads to a directed edge in \tree from the current node in the elimination tree to the 
node that corresponds to the next variable in the traversal. These directed edges form the subgraph of 
internal nodes of \tree --- which is thus a linear graph that follows the pre-order traversal 
of \bayesianNet exactly.
\end{proof}

\end{appendix}
}
\balance

\end{document}